\newtheorem{theorem}{Theorem}
\newtheorem{corollary}{Corollary} 
\newtheorem{lemma}{Lemma}
\newtheorem{definition}{Definition}
\newtheorem{proposition}{Proposition}
\newtheorem{assumption}{Assumption}
\newtheorem{theoremsup}{Supplemental Theorem}
\newcommand{\poly}{\operatorname{poly}}
\newcommand{\Ebb}{\mathbb{E}}
\newcommand{\AC}{\mathcal{A}}
\newcommand{\DC}{\mathcal{D}}
\newcommand{\EC}{\mathcal{E}}
\newcommand{\MC}{\mathcal{M}}
\newcommand{\OC}{\mathcal{O}}
\newcommand{\PC}{\mathcal{P}}
\newcommand{\LC}{f}
\newcommand{\SC}{\mathcal{S}}
\newcommand{\Var}{{\rm Var}}
\renewcommand{\geq}{\geqslant}
\renewcommand{\leq}{\leqslant}
\DeclareMathOperator*{\argmax}{arg\,max}
\renewcommand{\vec}[1]{\boldsymbol{#1}}
\newcommand{\al}{\alpha }
\newcommand{\om}{\omega }
\newcommand{\thv}{\vec{\alpha}}
\newcommand{\alv}{\vec{\alpha}}
\newcommand{\be}{\begin{equation}}
\newcommand{\ee}{\end{equation}}
\newcommand{\1}{\mathbbm{1}}
\renewcommand{\vec}[1]{\boldsymbol{#1}}
\newcommand{\fsur}{\ensuremath{\widetilde{f}}}
\newcommand{\lf}{\LC(\boldsymbol{\alpha})}
\newcommand{\uni}{\boldsymbol{\mathcal{D}}_{\nparams}} 
\newcommand{\vol}{\boldsymbol{\mathcal{V}}_{\nparams}} 
\newcommand{\gradlk}{\partial_{i_1,..,i_k} \LC (\boldsymbol{\alpha}^*)}
\newcommand{\dellk}{\Delta_{i_1,..,i_k}(\vec{\alpha},\vec{\alpha}^*)}
\renewcommand{\alv}{\vec{\alpha}}
\newcommand{\omv}{\vec{\omega}}
\newcommand{\idx}[2]{\omv_{(#1, #2)}}
\newcommand{\dlv}{\vec{\delta}}
\newcommand{\nparams}{\ensuremath{m} }
\newcommand{\expansionorder}{\ensuremath{\kappa} }
\definecolor{new}{RGB}{70, 142, 171}
\newcommand{\cen}{\al^*}
\newcommand{\cenv}{\alv^*}
\newcommand{\apath}{\upsilon}
\newcommand{\Apath}{\Upsilon}
\renewcommand{\bold}[1]{\mathbf{#1}}
\newcommand{\supt}{Supplemental Theorem}
\renewcommand\onecolumngrid{
\do@columngrid{one}{\@ne}
\def\set@footnotewidth{\onecolumngrid}
\def\footnoterule{\kern-6pt\hrule width 1.5in\kern6pt}
}
\renewcommand\twocolumngrid{
        \def\footnoterule{
        \dimen@\skip\footins\divide\dimen@\thr@@
        \kern-\dimen@\hrule width.5in\kern\dimen@}
        \do@columngrid{mlt}{\tw@}
}
\begin{document}

\doparttoc 
\faketableofcontents 
\part{}

\title{Efficient quantum-enhanced classical simulation for patches of quantum landscapes}

\date{\today}

\author{Sacha Lerch}
\thanks{The first three authors contributed equally to this work.}
\affiliation{Institute of Physics, Ecole Polytechnique F\'{e}d\'{e}rale de Lausanne (EPFL), CH-1015 Lausanne, Switzerland}
\affiliation{Centre for Quantum Science and Engineering, \'Ecole Polytechnique F\'{e}d\'{e}rale de Lausanne (EPFL),   Lausanne, Switzerland}

\author{Ricard Puig}
\thanks{The first three authors contributed equally to this work.}
\affiliation{Institute of Physics, Ecole Polytechnique F\'{e}d\'{e}rale de Lausanne (EPFL), CH-1015 Lausanne, Switzerland}
\affiliation{Centre for Quantum Science and Engineering, \'Ecole Polytechnique F\'{e}d\'{e}rale de Lausanne (EPFL),   Lausanne, Switzerland}

\author{Manuel S. Rudolph}
\thanks{The first three authors contributed equally to this work.}
\affiliation{Institute of Physics, Ecole Polytechnique F\'{e}d\'{e}rale de Lausanne (EPFL), CH-1015 Lausanne, Switzerland}
\affiliation{Centre for Quantum Science and Engineering, \'Ecole Polytechnique F\'{e}d\'{e}rale de Lausanne (EPFL),   Lausanne, Switzerland}

\author{\\Armando Angrisani}
\affiliation{Institute of Physics, Ecole Polytechnique F\'{e}d\'{e}rale de Lausanne (EPFL), CH-1015 Lausanne, Switzerland}
\affiliation{Centre for Quantum Science and Engineering, \'Ecole Polytechnique F\'{e}d\'{e}rale de Lausanne (EPFL),   Lausanne, Switzerland}

\author{Tyson Jones}
\affiliation{Institute of Physics, Ecole Polytechnique F\'{e}d\'{e}rale de Lausanne (EPFL), CH-1015 Lausanne, Switzerland}
\affiliation{Centre for Quantum Science and Engineering, \'Ecole Polytechnique F\'{e}d\'{e}rale de Lausanne (EPFL),   Lausanne, Switzerland}

\author{M. Cerezo}
\affiliation{Los Alamos National Laboratory, Los Alamos, NM 87545, USA}
\affiliation{Quantum Science Center, Oak Ridge, TN 37931, USA}

\author{Supanut Thanasilp}
\affiliation{Institute of Physics, Ecole Polytechnique F\'{e}d\'{e}rale de Lausanne (EPFL), CH-1015 Lausanne, Switzerland}
\affiliation{Centre for Quantum Science and Engineering, \'Ecole Polytechnique F\'{e}d\'{e}rale de Lausanne (EPFL),   Lausanne, Switzerland}
\affiliation{Chula Intelligent and Complex Systems Center of Excellence, Department of Physics, Faculty of Science, Chulalongkorn University, Bangkok, Thailand, 10330}
\affiliation{Siam Quantum Square, Faculty of Science, Chulalongkorn University, Bangkok, Thailand}

\author{Zo\"{e} Holmes}
\affiliation{Institute of Physics, Ecole Polytechnique F\'{e}d\'{e}rale de Lausanne (EPFL), CH-1015 Lausanne, Switzerland}
\affiliation{Centre for Quantum Science and Engineering, \'Ecole Polytechnique F\'{e}d\'{e}rale de Lausanne (EPFL),   Lausanne, Switzerland}

\begin{abstract}
Understanding the capabilities of classical simulation methods is key to identifying where quantum computers are advantageous. Not only does this ensure that quantum computers are used only where necessary, but also one can potentially identify subroutines that can be offloaded onto a classical device. In this work, we show that it is always possible to generate a \textit{classical surrogate} of a sub-region (dubbed a ``patch") of an expectation landscape produced by a parameterized quantum circuit. That is, we provide a quantum-enhanced classical algorithm which, after simple measurements on a quantum device, allows one to classically simulate approximate expectation values of a subregion of a landscape. We provide time and sample complexity guarantees for a range of families of circuits of interest, and further numerically demonstrate our simulation algorithms on an exactly verifiable simulation of a Hamiltonian variational ansatz and long-time dynamics simulation on a 127-qubit heavy-hex topology.

\end{abstract}

\maketitle

\section{Introduction}
In a well-spirited effort to utilize quantum devices to the best of their abilities, the common approach could be summarized as: \textit{Do as little as possible on the quantum computer by offloading work onto classical machines}. However, this does not relieve quantum computers from the necessary and sufficient conditions to run quantum circuits that are both i) classically hard and ii) practically useful. With classical algorithms still outperforming current quantum computers for dynamical simulation~\cite{kim2023evidence, rudolph2023classical, tindall2023efficient, beguvsic2024fast, torre2023dissipative, liao2023simulation, thomson2024unravelling}, recent developments in variational quantum computing~\cite{cerezo2023does} and no clear trajectory for making use of near-term quantum devices, we are left to wonder: \textit{should we offload more work to classical computers?}

In this work, we provide theoretical guarantees and practical implementations of fully classical and quantum-enhanced classical simulations of \textit{quantum landscape patches}. 
That is, for simulating a small region in the full expectation landscape spanned by the gate angles of a quantum circuit.
We consider a \textit{spectrum of classicality}, where quantum computers are used either i) not at all, ii) to provide measurements of the initial states, or iii) to estimate the most important contributions to a quantum landscape. In the former two cases, our results can be viewed as providing guarantees for the simulation of low-magic or near-Clifford circuits~\cite{dowling2024magic}. In the latter two cases, the quantum computer is effectively tasked with collecting data to build a so-called \textit{classical surrogate}~\cite{schreiber2022classical, jerbi2023shadows, landman2022classically, angrisani2023learning, fontana2023classical, rudolph2023classical} of the patch of a quantum landscape. As such, our work establishes a hybrid quantum-classical simulation framework that attempts to realize the original motivation of utilizing quantum resources only for the smallest possible but crucial component. 

Our results are applicable to variational quantum computing where the landscape corresponds to a loss function which one seeks to minimize to solve a given task~\cite{cerezo2020variationalreview, bharti2022noisy}. In this context, prior studies of classical surrogates include those for; shallow hardware efficient ans\"{a}tze~\cite{basheer2022alternating, jerbi2023power, basheer2023ansatz, shaffer2023surrogate}, circuits with polynomial dynamical Lie Algebra~\cite{somma2005quantum, somma2006efficient, galitski2011quantum, goh2023lie,anschuetz2022efficient}, unstructured parameterized quantum circuits~\cite{angrisani2024classically, rudolph2023classical}, quantum convolutional neural networks~\cite{bermejo2024quantum, angrisani2024classically}, quantum kernel-based circuits ~\cite{smith2023faster}, noisy circuits~\cite{fontana2023classical, mele2024noise} and quantum machine learning~\cite{schreiber2022classical, jerbi2023shadows, sweke2023potential, landman2022classically}. However, those proposals are to surrogate the entire landscape for relatively restricted families of circuits.  Surrogating more general circuits in restricted regions, except for some largely heuristic works~\cite{koczor2020quantum, gustafson2024surrogate}, has so far not been studied. In this paper, we show that it is possible to create a classical surrogate for previously proposed reduced angle range initialization strategies for avoiding barren plateaus~\cite{zhang2022escaping,park2023hamiltonian, wang2023trainability, park2024hardware, shi2024avoiding, puig2024variational, mhiri2024constrained,puig2026warm,mhiri2025unifying,lerch2026iqp}.

We emphasize that although these surrogates superficially have the flavor of dequantizations, they do not in general imply the irrelevance of quantum computers. In the generic setting, one still requires a quantum device to generate the data from which the surrogate is constructed, except in those cases where the relevant data can themselves be obtained by efficient classical simulation.
In this sense, the patch-surrogate approach enables a training paradigm of a similar spirit to Ref.~\cite{koczor2020quantum}, in which quantum resources are devoted solely to the construction and updating of the surrogates. We investigate this setting and compare its performance with that of the standard gradient-descent approach. Although our preliminary results do not yield definitive conclusions as there exist regimes in which each method outperforms the other, this observation points to a promising new direction for exploration.  In particular, surrogates may offer significant practical benefits given the effect of noise drift and in the context queuing times for cloud-based hardware. 

Beyond variational quantum computing, our results have implications in numerous domains including quantum sensing and quantum simulation. 
For quantum sensing, one could use our surrogated landscape for inference-based  techniques~\cite{huerta2022inference,ijaz2024more}. In the case of dynamical simulation, the landscape is that of physical expectation values spanned by the system parameters and can be used to analyze hardware data or extend coherence times. Our numerical simulations, and in particular the long-time simulation of a 127-qubit transverse field Ising model on a heavy-hex topology, demonstrate our algorithms' utility beyond the regimes in which we have strict analytic guarantees.

\begin{figure*}
    \centering
    \begin{tikzpicture}
    \pgftext{\includegraphics[width=.85\textwidth,right]{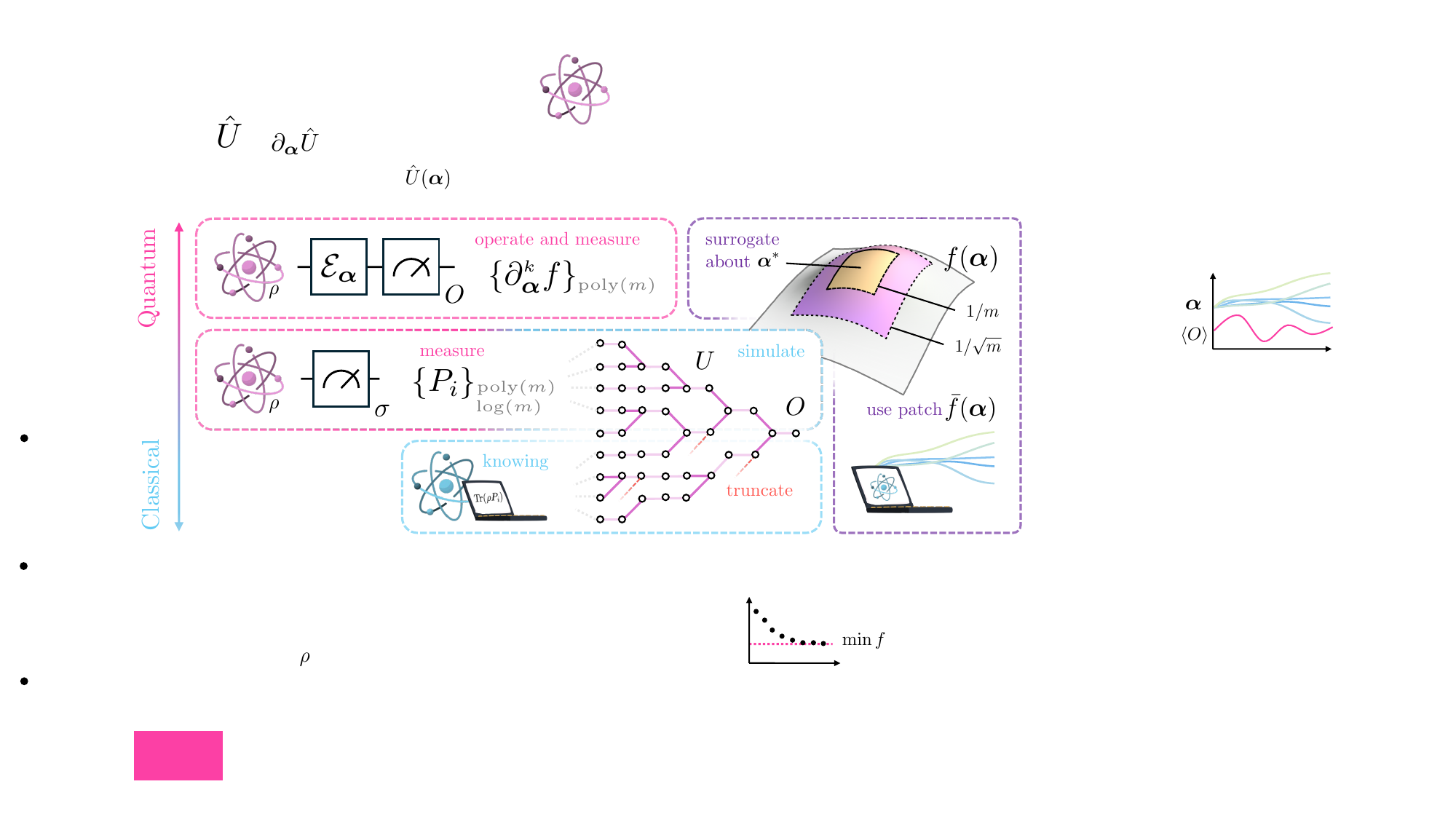}\hspace{2cm}} 
    \node at (-8.5,1.8) {$a)$ Theorem~\ref{thm:surrogate-average-general}};

    \node at (-8.5,.25) {$b1)$ Theorem~\ref{thm:paulisurrogate-average}};
    \node at (-8.5,-.25) {$b2)$ Theorem~\ref{thm:shadowsurrogate-average}};

    \node at (-8.5,-1.9) {$c)$ Theorem~\ref{thm:uncorrfull}};

    \end{tikzpicture}

    \caption{
        The process of surrogating the expectation function $f(\vec{\alpha})$ obtained as the expectation value of an observable $O$, measured over a state $\rho$ evolved under the $m$-parameter circuit $U(\vec{\alpha})$.
        Quantum measurements and/or classical simulations via truncated Pauli propagation are used to surrogate a patch $\overline{f}(\vec{\alpha})$ which can be used in lieu of further quantum operations for applications such as VQA.
    }
    \label{fig:overview_plot}
\end{figure*}

\section{Background}

We consider the task of simulating expectation values of the form 
\begin{equation}\label{eq:expectation}
    f(\vec{\alpha})=\Tr[\mathcal{E}_{\vec{\alpha}}(\rho) O]\,,
\end{equation}
where $\rho$ is an $n$-qubit input state, $O$ is a Hermitian operator, and $\mathcal{E}_{\vec{\alpha}}$ is a quantum channel parameterized by an \nparams\hphantom\negthickspace-component vector of real parameters $\vec{\alpha}$. 
By varying the choice in $\rho$, $O$ and $\mathcal{E}_{\vec{\alpha}}$ this expression can be used to capture observables and losses for a wide range of variational quantum algorithms, real-time simulations, and quantum machine learning tasks~\cite{cerezo2020variational, cirstoiu2020variational}.

In the following, we will briefly provide some background on the notions of classical simulation and classical surrogatability. We will then introduce a classical algorithm for classical-quantum surrogation. In particular, these  techniques constitute a new variant of LOWESA, introduced in Refs.~\cite{fontana2023classical,rudolph2023classical}, and more generally belongs to the family of Pauli propagation algorithms~\cite{rall2019simulation, aharonov2023polynomial, nemkov2023fourier, beguvsic2024fast, beguvsic2023simulating, fontana2023classical, shao2023simulating, rudolph2023classical, schuster2024polynomial, angrisani2024classically, gonzalez2024pauli, bermejo2024quantum, rudolph2026thermal, rudolph2025pauli}.

\subsection{Classical simulation and surrogation}

By now, it is well established that a large class of parameterized quantum circuits exhibit the so-called \textit{barren plateau} phenomenon where their parameter landscapes concentrate exponentially with system size~\cite{larocca2024review,mcclean2018barren}. On a barren plateau, the probability that any randomly chosen parameter vector $\vec{\alpha}$ corresponds to an observable expectation value which deviates significantly from its average vanishes exponentially with the number of qubits $n$ upon which the circuit acts~\cite{arrasmith2021equivalence}. This is widely believed to prohibit the training with random initializations for a broad family of variational quantum
algorithms~\cite{mcclean2018barren, ragone2023unified,fontana2023theadjoint, arrasmith2021equivalence, marrero2020entanglement,sharma2020trainability,patti2020entanglement,wang2020noise, larocca2021diagnosing, holmes2021connecting, cerezo2020cost,holmes2021barren,martin2022barren, crognaletti2024estimates, anschuetz2024unified,srimahajariyapong2025connecting, lerch2026iqp} and quantum machine learning models~\cite{rudolph2023trainability,kieferova2021quantum,tangpanitanon2020expressibility,thanaslip2021subtleties,thanasilp2022exponential, letcher2023tight, chang2024latent,xiong2023fundamental,xiong2025role,aghaei2026pitfalls}. Nevertheless, it leaves open the possibility that small sub-regions of the landscape exhibit substantial landscape variances.

These realizations have prompted recent efforts into variance lower bounds for sub-regions of quantum landscapes. Early work investigated small angle initializations for the hardware efficient variational ansatz~\cite{zhang2022escaping,shi2024avoiding,park2024hardware,cao2024exploiting, xin2024improve} and Hamiltonian variational ansatz~\cite{park2023hamiltonian, mele2022avoiding}. Subsequent work derived bounds for iterative learning schemes in the context of variational quantum simulation~\cite{puig2024variational}, as well as for ground state preparation~\cite{puig2026warm}. These approaches have recently been unified in Ref.~\cite{mhiri2025unifying} and further extended to generative modeling with a nonlinear loss in Ref.~\cite{lerch2026iqp}. Common to all these works is the observation that it is possible to guarantee landscape variances that vanish at worst polynomially in the number of trainable parameters in special sub-regions that have a radius that shrinks polynomially in the number of parameters. 

In parallel to this line of work, it has been observed that there is a strong correlation between being able to prove that a variational quantum landscape is barren plateau free (i.e,  proving that the variance of expectation values vanishes, at worst, polynomially) and the ability to create a classical surrogate for that landscape~\cite{cerezo2023does}. 
Here we demonstrate that in all the reduced parameter regimes where it is possible to provide polynomial (in $n$ or $\nparams$) lower bounds on the variance of the expectation value, it is possible to construct a classical surrogate of the expectation value landscape. 

To pin down what we mean by a classical surrogate in this context, let us start by defining what it means to simulate an expectation value. We will say that an algorithm can classically simulate an expectation value $f(\vec{\alpha})$ with high probability if a polynomial time classical algorithm can implement a function 
$\overline{f}(\vec{\alpha})$ approximating the expectation value up to error $\epsilon$, i.e., 
\begin{equation}\label{eq:error}
     \Delta f(\vec{\alpha}) := \left| \overline{f}(\vec{\alpha})- f(\vec{\alpha}) \right| \leq \epsilon\, ,
\end{equation}
with probability $1-\delta$ for random $\alv \sim\PC$ where $\PC$ is some distribution of interest and where $\epsilon \in \Theta(1)$ and $\delta \in \Theta(1)$ are arbitrarily small constants. 
We will also be interested in computing the function for \textit{any} parameter settings, i.e., the \textit{worst-case} error. Thus, we will also consider a stronger version where an algorithm can compute the expectation function if Eq.~\eqref{eq:error} holds for all $\alv $ in a given landscape region.

In contrast, preparing a \textit{classical surrogate} involves using a quantum computer for an initial (polynomial-time) data acquisition phase. Since this data can be utilized by subsequent classical simulation, the combination is more powerful than a purely classical algorithm. More concretely, we say it is possible to efficiently ``classically surrogate" a quantum expectation function $f(\vec{\alpha})$ over the distribution $\vec{\alpha}\sim\PC$ if, using data from an initial polynomial-time data acquisition phase on a quantum computer, it is subsequently possible to classically simulate expectation values over $\PC$.

Crucially, the practical benefits of classically surrogating a quantum landscape depend on the family of channels that need to be run on the quantum computer 
in the non-adaptive polynomial-time data acquisition phase. For our most general results on the surrogatability of quantum landscapes, we will allow arbitrary polynomial depth and width circuits to be run. In this case, shown in Fig.~\ref{fig:overview_plot}a), the main advantage in constructing the classical surrogate landscape is that after an initial overhead, any subsequent algorithm can be run fully classically. Such a scheme transforms a potentially adaptive hybrid quantum-classical algorithm into a non-adaptive one, opening up the possibility of using the power of classical optimization tools such as automatic
differentiation, while leaving the total quantum resources requirements  unchanged. However, we will also consider more specialized cases where one only needs to perform Pauli measurements on the  initial state. In this case, shown in Fig.~\ref{fig:overview_plot}b), the quantum resource requirements are substantially reduced as compared to running the original quantum channel to evaluate different points on the expectation landscape. This quantum resource reduction is achieved by offloading much of the work of simulating the circuit to a classical simulation algorithm.

It is possible to consider many different possible distributions $\mathcal{P}$ of parameters over sub-regions of a quantum landscape here. Here, in line with the small-angle initialization literature~\cite{puig2024variational, mhiri2024constrained, park2023hamiltonian, park2024hardware}, we will focus on uniform distributions over hypercubes of width $2r$. More concretely, let us define
\begin{equation}\label{eq:hypercube}
	\vol(\cenv, r) := \{ \vec{\alpha} \} \; \; \text{such that} \; \; \alpha_i \in [\cen_i -r , \cen_i + r ]  \, \, \forall \, \, i ,
\end{equation}
as the hypercube of parameter space centered around the point $\cenv$, and 
\begin{equation}\label{eq:uniformdist}
    \uni(\cenv, r) := \text{Unif}[\vol(\cenv, r)]
\end{equation}
as a uniform distribution over the hypercube $\vol(\cenv, r)$. We expect that our conclusions straightforwardly generalize to other reduced parameter range distributions including Gaussian distributions~\cite{zhang2022escaping, shi2024avoiding} or to allow for independent perturbation radii for each parameter.

\subsection{Pauli Propagation}\label{sec:PP}

There exist numerous classical techniques to simulate quantum circuits and evaluate the expectation function in Eq.~\eqref{eq:expectation}.
In this work, we employ Pauli propagation methods~\cite{nemkov2023fourier, beguvsic2024fast, beguvsic2023simulating, fontana2023classical, rudolph2023classical, angrisani2024classically,bermejo2024quantum, rudolph2025pauli, miller2025simulation, angrisani2025simulating}. 
In this context, it is most natural (and most efficient) to consider quantum circuits written as parameterized perturbations from Clifford circuits. That is, unitaries of the form
\begin{equation}\label{eq:CliffordVQAcircuit}
	U_{\rm CP}\left( \alv \right) = \prod_{i=1}^\nparams C_i U_i(\alpha_i)\,,
\end{equation}
where $\left\{ C_i \right\}_{i=1}^\nparams$ are a set of fixed but arbitrary Clifford unitaries and $\left\{ U_i(\alpha_i) = e^{-i\alpha_i P_i/2} \right\}_{i=1}^\nparams$ are parameterized rotations generated by Pauli operators $\{ P_i \}_{i=1}^\nparams$.
This architecture can be universal even when $U_i$ are single-qubit phase rotations, though we allow $U_i$ to be Pauli rotations acting on any number of qubits. In our analytics we will consider both the case where the parameters $\alpha_i$ are uncorrelated and cases where they are correlated. 
We further emphasise however that this particular circuit form is inessential to our most general complexity bounds in Section~\ref{sec:general_case} which can, for example, also quantify expectation landscapes for parameterized channels, and for circuit which are not near-Clifford. 

Pauli propagation methods typically work in the Heisenberg picture where the initial operator (here the observable $O$) is often sparse in the Pauli basis. In this basis, the operator is back-propagated through the circuit and finally overlapped with the initial state $\rho$. In particular, since the adjoint action of any Clifford operation on any Pauli observable $P_j$ simply produces another Pauli $P_j^\prime$, 
\begin{equation}
    C_i^\dagger \, P_j \, C_i = s_{ij} P_j^\prime, \;\;\;\; s_{ij} \in \{-1,1\},
\end{equation}
which we will notate as $P_j \xrightarrow[]{C_i} \pm P_j^\prime$ .
These transformations can either be computed in $\OC(n^2)$ time using the Gottesman-Knill theorem~\cite{aaronson2004improved} or in $\OC(1)$ time for any \textit{a priori} known $C_i$ acting on few qubits by use of look-up tables.
Applying a non-Clifford Pauli rotation $e^{-i\alpha_i P_i/2}$ to $P_j$ leaves the operator invariant if $[P_i,P_j]=0$, but otherwise produces a real-valued weighted sum of two Pauli operators,
\begin{equation}
    P_j \xrightarrow[]{e^{-i\alpha_i P_i/2}} \cos(\alpha_i)P_j +\sin(\alpha_i) P_j^{\prime\prime}\, ,
\end{equation}
where $P_j^{\prime\prime} = i[P_i, P_j]/2$ is a Pauli operator. We note that the sign of the sine term is flipped in the Schrödinger picture. By iterating this procedure, one ends up with a collection of Pauli operators and corresponding trigonometric prefactors.

After Pauli propagation, the expectation function in Eq.~\eqref{eq:expectation} takes the form
\begin{equation}\label{eq:path_expectation}
    f\left(\alv\right) = \sum_{P} c_{P}\left(\alv\right) \Tr[\rho P] \,,
\end{equation}
where $c_{P}$ are the coefficients of the back-propagated Pauli operators $P$. These coefficients capture both the initial weight of each of the relevant Paulis in the target observable $O$ and the sine and cosine coefficients that have been picked up during the propagation. 

One approach, embracing the concept of expectation function surrogates~\cite{fontana2023classical,rudolph2023classical,bermejo2024quantum}, is not to compute the coefficients $c_{P}$ of the Pauli paths from scratch every time the parameters $\bm\alpha$ are changed. Instead, one can symbolically or graphically represent the coefficients for each final Pauli $P$. When it becomes clear which $P$ contribute significantly to the expectation in Eq.~\eqref{eq:path_expectation}, only their corresponding coefficients $c_{P}$ can be re-evaluated. The approach has a substantial initial time and especially memory overhead, but once performed, enables rapid re-evaluation of the expectation value $f(\alv)$ for any $\alv$. The implementation used in this work can be seen as an improved variant of the original proposal of LOWESA. We give a clarifying illustration of the technique in Fig.~\ref{fig:example_pp_qft}. 

In its raw form, the number of Pauli operators in the backpropagated observable (what we call the ``\textit{Pauli paths}"), or equivalently the number of terms in the sum in Eq.~\eqref{eq:path_expectation}, grows exponentially with the number $m$ of Pauli rotations in the circuit. To make the algorithm efficient for larger $m$ values, one must therefore apply truncation schemes. Truncating higher weight Pauli operators has been shown to be efficient for both noisy circuits~\cite{aharonov2023polynomial, fontana2023classical, schuster2024polynomial} and random circuits~\cite{angrisani2024classically}. Here we will provide guarantees for a \textit{small-angle} truncation scheme for patches of expectation landscapes corresponding to perturbations away from a Clifford circuit. We note that this approach is related to the techniques numerically demonstrated in Ref.~\cite{beguvsic2023simulating}.

\begin{figure}
    \centering
    \begin{tikzpicture}
    \pgftext{\includegraphics[width=\columnwidth]{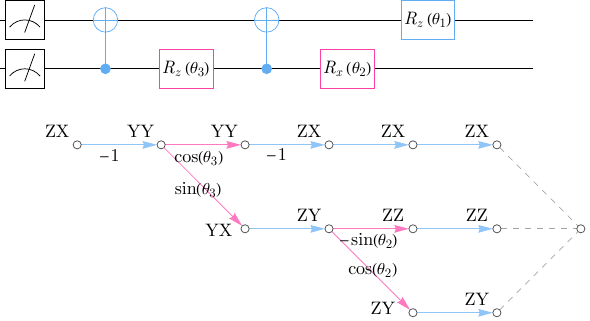}}
    \node at (3.8,-.35) {$\Tr(\rho \, \circ)$};
    \draw[draw=white,fill=white] (3,1.3) rectangle ++ (.5,1);
    \node at (3.5,1.7) {$\rho$};
    \end{tikzpicture}
    \caption{
        A visualisation of Pauli propagation computing the expectation value of observable $O = ZX$ resulting from the pictured two-qubit circuit upon input state $\rho$. Evaluation proceeds left-to-right, simulating the circuit in reverse; graph edges indicate the Pauli strings produced by the gate aligned above them, and their labels indicate the induced coefficient; elements of the gate's Pauli transfer matrix. 
        The Clifford CNOT gates merely multiply Pauli strings by $\pm 1$ (blue edges), while the non-Clifford rotations $R_P(\theta) = e^{-i \theta P/2 }$ can sometimes produce \textit{two} Pauli strings (pink edges). Notice the state $\rho$ need not be explicitly instantiated; instead, only its trace with respect to the final Pauli strings are evaluated.
    }
    \label{fig:example_pp_qft}
\end{figure}

\section{General patches}\label{sec:general_case}

\subsection{Error guarantees}

Our first result is a general bound which shows that it is always possible to create a classical surrogate for small sub-regions of a quantum landscape generated by a general parametrized quantum channel. Concretely, we show that for any function that depends on ${\nparams}$ independent parameters, it is possible to create a surrogate of a patch of width $\sim 1/\sqrt{\nparams}$, provided that some assumptions on the magnitude of the higher order derivatives are satisfied. This result is captured by Theorem~\ref{thm:surrogate-average-general} below, which we prove and state more formally in Appendix~\ref{app:generalsurrogatetheorem} (Supplemental Theorem~\ref{coro:surrogate_quantum_quantum}). 

\begin{theorem}[General Surrogation Guarantee, Informal]\label{thm:surrogate-average-general}
Consider an expectation function $f(\vec{\alpha})$ of the form in Eq.~\eqref{eq:expectation}, where the measurement operator satisfies  $\|O\|_{\infty} \in \OC(1)$, and where the parameters of the quantum channel are taken from a hypercube region $\vol(\alv^*,r)$ around an arbitrary point $\alv^*$ of width $2r$.
Assume that the partial derivatives within this region scale at most exponentially in the derivative order $k$, i.e.,
\begin{align}
    |\partial_{i_1, i_2, ... i_k}\LC(\alv)| \leq  \gamma^k\norm{O}_\infty \;\; \;,\;\; \forall \alv \in \vol(\alv^*, r)\;,
\end{align}
for some constant $\gamma \in \mathcal{O}(1)$. Furthermore, assume that the partial derivatives for $k \in \mathcal{O}(1)$ can be estimated efficiently on a quantum computer. 
Then, it is possible to efficiently classically surrogate $f(\vec{\alpha})$ up to an arbitrary small error $\epsilon \in \Theta(1)$ with high probability for random $\alv\sim\uni(\vec{\alpha}^*, r)$ for  
\begin{equation}
    r\in\OC\left( \frac{1}{\sqrt{\nparams}}\right) \, ,
\end{equation}
and for all $\alv$ in $\vol(\vec{\alpha}^*, r)$ for 
\begin{equation}
    r\in\OC\left( \frac{1}{\nparams}\right) \, ,
\end{equation}
where in both cases there is an arbitrarily small but constant failure probability from shot noise in the data collection phase. 

\end{theorem}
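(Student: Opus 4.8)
The plan is to take the surrogate $\overline{f}$ to be the degree-$K$ multivariate Taylor polynomial of $f$ about the patch centre $\cenv$, namely $\overline{f}(\vec\alpha) = \sum_{|\beta|\le K}\tfrac{1}{\beta!}\,(\partial^\beta f)(\cenv)\,(\vec\alpha-\cenv)^\beta$, the sum running over multi-indices $\beta$. Since the number of such coefficients is $\binom{\nparams+K}{K}\in\mathcal O(\nparams^K)$, which is polynomial for any constant $K$, and since by hypothesis every partial derivative of constant order $k\le K$ can be estimated efficiently on the quantum device, the data-acquisition phase is polynomial time, and all subsequent evaluations of $\overline{f}$ are purely classical. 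So the whole problem reduces to controlling the Taylor tail $f-\overline{f}$ on the patch as a function of $r$; the two claimed scalings of $r$ will come from a worst-case and an average-case estimate of that tail.

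For the worst-case claim I would bound the remainder crudely. The hypothesis $|\partial_{i_1,\dots,i_k}f|\le\gamma^k\|O\|_\infty$ together with $r$ small forces absolute convergence of the Taylor series on the whole patch, so the remainder equals the tail of the series, and the order-$k$ term obeys $|T_k(\vec\alpha)| = \big|\sum_{|\beta|=k}\tfrac{1}{\beta!}(\partial^\beta f)(\cenv)(\vec\alpha-\cenv)^\beta\big| \le \gamma^k\|O\|_\infty\,\tfrac{(\nparams r)^k}{k!}$, using $|\alpha_i-\alpha_i^*|\le r$ and the multinomial identity $\sum_{|\beta|=k}\tfrac{1}{\beta!}=\tfrac{\nparams^k}{k!}$. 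Hence $|f-\overline{f}|\le\|O\|_\infty\sum_{k>K}\tfrac{(\nparams r\gamma)^k}{k!}$, which is $\le\epsilon$ once $\nparams r\gamma\le\tfrac12$, i.e.\ $r\in\mathcal O(1/\nparams)$, and $K$ is a large enough constant (constant because $\epsilon$ is constant).

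The $r\in\mathcal O(1/\sqrt{\nparams})$ claim for random $\vec\alpha\sim\uni(\cenv,r)$ is where the work is, and I would obtain it from a second-moment/Chebyshev argument. Writing $\delta\alpha_i:=\alpha_i-\alpha_i^*$, these are i.i.d.\ uniform on $[-r,r]$, so $\mathbb E[\delta\alpha^\beta\delta\alpha^{\beta'}]=\prod_i\mathbb E[\delta\alpha_i^{\beta_i+\beta_i'}]$ vanishes unless $\beta_i+\beta_i'$ is even for every $i$, and is otherwise at most $r^{|\beta|+|\beta'|}$. Substituting this into $\mathbb E[T_k^2]$ and evaluating the parity-constrained sum $\sum_{|\beta|=|\beta'|=k,\ \beta+\beta'\,\text{even}}\tfrac{1}{\beta!\beta'!}$ — which one can read off from the generating function $\cosh(2\sqrt x)^{\nparams}$ and which, since $\cosh(2\sqrt x)\le e^{2x}$, is bounded by $(2\nparams)^k/k!$ rather than the naive $(\nparams^k/k!)^2$ — yields $\mathbb E[T_k^2]\le\|O\|_\infty^2\,(2\gamma^2\nparams r^2)^k/k!$. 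This is the crucial ``square-root improvement'': the typical size of $T_k$ scales like $(\gamma\sqrt{\nparams}\,r)^k$ rather than the worst-case $(\gamma\nparams r)^k$. Then $\|f-\overline{f}\|_2\le\sum_{k>K}\|T_k\|_2\le\|O\|_\infty\sum_{k>K}(2\gamma^2\nparams r^2)^{k/2}/\sqrt{k!}$, which is $\le\epsilon\sqrt{\delta}$ once $\gamma^2\nparams r^2$ is a small enough constant, i.e.\ $r\in\mathcal O(1/\sqrt{\nparams})$, with $K$ a large constant; Chebyshev's inequality then gives $\Pr_{\vec\alpha}[|f-\overline{f}|>\epsilon]\le\delta$.

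Finally I would fold in shot noise: replace each exact coefficient by an unbiased estimate with additive error $\eta_\beta$, bounded in absolute value by $\eta_{\max}$ with probability $1-\delta'/\binom{\nparams+K}{K}$ using $\operatorname{poly}(\log(1/\delta'),1/\eta_{\max})$ samples, and union bound over the polynomially many coefficients. In the worst-case regime the propagated perturbation is $\le\eta_{\max}\sum_{k\le K}(\nparams r)^k/k!\le\eta_{\max}e^{\nparams r\gamma}\in\mathcal O(\eta_{\max})$, so $\eta_{\max}\in\Theta(\epsilon)$ suffices; in the average-case regime $e^{\nparams r}$ is no longer $\mathcal O(1)$, so I would instead run the same parity-based second-moment argument on $\sum_{\beta}\tfrac{\eta_\beta}{\beta!}\delta\alpha^\beta$, conditioned on the $\eta_\beta$, to see its $L^2(\delta\alpha)$ norm is $\mathcal O(\eta_{\max})$ whenever $\gamma^2\nparams r^2\in\mathcal O(1)$, and then apply Chebyshev once more. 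The main obstacle throughout is the combinatorial second-moment estimate of the third paragraph — pinning down that the parity constraint produces exactly the $(\nparams r^2)^k$ (rather than $(\nparams r)^{2k}$) scaling — together with the preliminary verification that the derivative-growth hypothesis genuinely guarantees absolute convergence of the Taylor series on the entire patch, so that identifying the remainder with the tail of the series is legitimate.
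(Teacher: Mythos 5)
Your proposal is correct and shares the paper's overall architecture --- a truncated multivariate Taylor polynomial about $\cenv$, a worst-case bound at $r\in\OC(1/\nparams)$ from the remainder, an average-case second-moment bound at $r\in\OC(1/\sqrt{\nparams})$, and a final shot-noise layer --- but the two substantive steps are carried out by genuinely different means. For the average-case tail the paper does not compute $\Ebb[T_k^2]$ combinatorially: it bounds the cross moments $\Ebb\big[(\sum_i\delta_i)^{k+l}\big]$ by the Gaussian moments $(k+l-1)!!$ via the central limit theorem plus a monotonicity-in-$\nparams$ argument (this is where their condition $r\in\OC(1/\sqrt{\nparams})$ enters, through $\sigma_{\nparams}/\sigma_{\nparams+1}\geq 1$). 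Your parity-constrained multinomial sum reaches the same $(\gamma^2\nparams r^2)^k/k!$ scaling more directly and without the CLT; the one justification to tighten is that $\cosh(2\sqrt{x})\leq e^{2x}$ as a pointwise inequality of functions does not by itself give coefficient domination --- you need the term-by-term inequality $4^j/(2j)!\leq 2^j/j!$, which does hold, so your conclusion stands. For shot noise, the paper assumes each derivative is a linear combination of expectation-function evaluations (parameter-shift style), pools everything into one parametrized observable, and allocates measurements by an effective 1-norm; you instead estimate each Taylor coefficient to additive precision and union-bound over the polynomially many coefficients, re-running your second-moment argument on the error polynomial in the average case --- both give polynomial sample complexity, with the paper's allocation being somewhat more shot-efficient in principle. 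Finally, your worst-case bound sums the entire Taylor tail (so you must, and correctly do, note that the derivative hypothesis forces the series to converge to $f$ on the patch because the Lagrange remainder $(\gamma r\nparams)^{k+1}/(k+1)!$ vanishes as $k\to\infty$), whereas the paper bounds only the single order-$(\expansionorder+1)$ remainder term; this is slightly cleaner but yields the same scaling.
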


The proof of Theorem~\ref{thm:surrogate-average-general} relies on expressing the expectation value $f(\alv)$ with Taylor's expansion around the point of interest $\alv^*$ and then showing that with some mild assumptions regarding the partial derivatives (see Assumptions~\ref{assumption:bounded-derivatives}-\ref{assumption:loss-evaluations-expectations}  in Appendix~\ref{app:generalsurrogatetheorem}), a low-order truncation of the expansion gives a sufficiently good surrogate for a small region\footnote{ We note that surrogation beyond $r \in \mathcal{O}(1/\sqrt{m})$ is also achievable with low error, by extending the truncation order $\kappa$ to be polynomial. This, in turn, increases the computational resources required to be super-polynomial or even exponential with $m$. Detailed derivations for various truncations are provided in Appendix~\ref{sec:extension-larger-patch}. Furthermore, these techniques could be adapted to other geometries. Although a rigorous generalization requires more technical work, Fig.~\ref{fig:training-distance-patch} offers evidence that the results of Theorem~\ref{thm:surrogate-average-general} remain valid.}. These assumptions are in fact met by most parameterized quantum circuits considered currently by the community (see Appendix~\ref{sec:guarantee-assumptions-theorem1}) and could also be applied to parameterized channels. In particular, it includes not just the uncorrelated Pauli rotation ans\"{a}tze in Eq.~\eqref{eq:CliffordVQAcircuit} but also arbitrarily deep circuits where the ${\nparams}$ independent parameters can appear in multiple gates. To name just one example, the theorem also holds for gates using Givens Rotations, which are crucial elementary building blocks for quantum chemistry~\cite{arrazola2022universal}.

It is noteworthy that our average case surrogacy guarantee aligns with the growing body of prior work on absence of barren plateau proofs for patches of a quantum expectation function landscape that scale as $1/\sqrt{\nparams}$. In particular, it has been shown in Refs.~\cite{puig2024variational, mhiri2024constrained} that a region with $r \in \OC( 1/\sqrt{\nparams})$ around some minimum $\alv^*$ has substantial gradients
\begin{align}
\Var_{\alv\sim\uni(\vec{\alpha}^*, r)}\left[ f(\alv)\right] \in \Omega\left( \frac{1}{\poly(n)}\right) \; .
\end{align}
Theorem~\ref{thm:surrogate-average-general} hence implies that, in all of the sub-regions where it is possible to prove absence of barren plateaus, it is not necessary to run a variational quantum algorithm adaptively on a quantum computer but rather one could first quantumly prepare a surrogate which is then optimised fully classically. 

Beyond the average case surrogatability guarantee, we additionally prove the worst-case error incurred by our algorithm in a smaller patch of width $\OC( 1/m)$. Inside this smaller patch, there exists no parameter vector $\bm\alpha$ which is classically simulated with an error worse than the constant, but arbitrarily small, chosen error. As discussed further in the Appendices below  Theorem~\ref{thm:time-complexity-worst-PP-classical}, this implies that in the context of dynamical simulation Trotter errors can be reduced with only a polynomial overhead.

Theorem~\ref{thm:surrogate-average-general} leaves open the nature of the measurements required on the quantum computer, and in general these might require running highly non-trivial circuits. Namely, one will generally need to run circuits to evaluate the loss in order to compute gradients via the generalized parameter-shift rule~\cite{wierichs2022general}.
Concretely, for general circuits where the patch is expanded around an arbitrary point $\bm\alpha^*$, the quantum device can be used to estimate the function $f(\vec{\alpha})$ at different choices in $\vec{\alpha}$ and the surrogate is generated by reconstructing a multi-variate Taylor expansion of $f(\vec{\alpha})$. In this case, the surrogate bears similarities to applying higher-order gradient optimization methods because generating the surrogate requires effectively measuring higher order derivatives of the loss. In this sense the distinction between running a variational quantum algorithm explicitly (by say updating by one step per iteration using a higher order gradient descent method) and generating the surrogate (by first measuring higher order gradients and then training classically for a few steps) can become blurred. 

Nonetheless, Theorem~\ref{thm:surrogate-average-general} suggests
it may be worthwhile to forego experimentally re-evaluating a quantum circuit with varying parameters \textit{at each step}, and instead spend the quantum resources preparing a surrogate. The constructed surrogate can then be trained entirely classically for a few iterations within the patch. Nevertheless, for a whole training process to remain faithful, the surrogate will need to be reconstructed once the training trajectory leaves the patch.

In Appendix~\ref{sec:iterative-training-taylor-surrogate}, we present a preliminary exploration of this aspect. Specifically, we investigate the performance of an iterative surrogate training strategy and compare its quantum resource requirements—namely, the number of measurement shots—with those of standard gradient descent. We emphasize that, in this case, the surrogate is constructed directly at the level of the loss gradient, in contrast to the approaches discussed in the main text, where the loss function itself is surrogated. In addition, a shot-noise analysis of this gradient-based surrogate, including a practical approximation of the optimal shot-allocation strategy for reducing shot-noise, is provided in Appendix~\ref{sec:analytical-guarantee-second-order}.
In the next subsection we will explore one key component of this approach -- namely, we investigate numerically how far it is possible to train in practice using the data obtained to build a surrogate.

\subsection{Numerical tests}

In this section we numerically investigate the maximum distance we can move in the parameter space before the surrogate becomes unusable. That is, the maximum distance such that training on the surrogate continues to decrease the true loss.

For concreteness, we consider a surrogate of an expectation loss function of the form of Eq.~\eqref{eq:expectation}, where the observable is a Heisenberg Hamiltonian
\begin{equation}\label{eq:heisenberg_obs}
    O = \frac{1}{3(n-1)} \sum_{i=1}^{n-1} (X_i X_{i+1} + Y_i Y_{i+1} + Z_i Z_{i+1}).
\end{equation}
The initial state is the zero state, $\rho=\ketbra{0}^{\otimes n}$, and $\EC_{\alv}(\rho)$ is a variational ansatz with $m=3nL$ parameters composed of \( L \)-layers of single-qubit rotations followed by two-qubit \( ZZ \) gates of the form
\begin{equation}\label{eq:ansatz_taylor}
    U(\alv) = \prod_{l=1}^L \left( \prod_{i=1}^n R_{ZZ}(\alpha_{i+2n,l}) \prod_{i=1}^n R_Z(\alpha_{i+n,l}) R_X(\alpha_{i,l}) \right)\,.
\end{equation}
We compute the surrogate of the gradient $\nabla\tilde{f}(\alv)$ around the randomly chosen initial point $\alv^*$ and use it to train the algorithm until $f(\alv)$, the test loss, stops decreasing. Let $\alv_0$ denote the corresponding parameters at that step. Finally, we compute $\|\alv^* - \alv_0\|_{x}$, where $x\in\{\infty, 2,1\}$, with respect to the number of parameters in the circuit $m$.

The results are presented in Fig.~\ref{fig:training-distance-patch}, where the diamonds are the numerical values of $\|\alv^* - \alv_0\|_{x}$ for $x = 1,2, \infty$  after averaging for 20 runs with different initial points (with the maximum and minimum shaded), the solid lines are a fit of those points and the dashed lines represent the scalings we would expect from the bound. Namely, we recall that according to Theorem~\ref{thm:surrogate-average-general}, $\|\alv^* - \alv_0\|_{\infty}\in\order{m^{-0.5}}$, $\|\alv^* - \alv_0\|_{2}\in\order{1}$, $\|\alv^* - \alv_0\|_{1}\in\order{m^{0.5}}$, compared to the $\|\alv^* - \alv_0\|_{\infty}\in\order{m^{-0.15}}$, $\|\alv^* - \alv_0\|_{2}\in\order{m^{0.2}}$, $\|\alv^* - \alv_0\|_{1}\in\order{m^{0.7}}$, found in the experiment. We thus see that the experimental results have, roughly, between a $m^{0.2}$ and $m^{0.3}$ better scaling (i.e., the solid lines are angled above the dashed lines). This implies that the optimization process can use the surrogate for significantly longer distances than our bounds predict.

\begin{figure}
    \centering
    \includegraphics[width=0.99\linewidth]{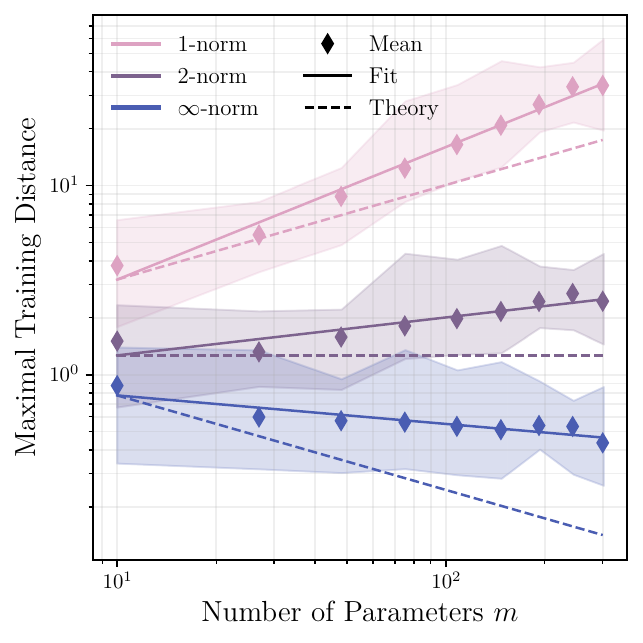}
    \caption{\textbf{Maximal training distance as function of the number of parameters.} We consider a loss of the form of Eq.~\eqref{eq:expectation}, where the observable is a Heisenberg Observable described in Eq.~\eqref{eq:heisenberg_obs}, the initial state is $\rho = \ketbra{0}^{\otimes n}$ and $\EC_{\alv}(\cdot)$ is a variational ansatz with $m = 3nL$ parameters described in Eq.~\eqref{eq:ansatz_taylor}. We compute the surrogate of the gradient around a randomly chosen initial point $\alv^*$ and use it to train the algorithm until the true loss $f(\alv)$ stops decreasing and starts to increase, at $\alv_0$. We compute the distance $\|\alv^*-\alv_0\|_{x}$, where $x\in\{\infty,2,1\}$, and plot it with respect to the number of parameters. We repeat the process 20 times and show the average in diamonds (with the maximum and minimum shaded), plot a fit as a solid line, and show the theoretical scalings according to Theorem~\ref{thm:surrogate-average-general} as a reference. The fitted lines correspond to the following functions $\|\alv^*-\alv_0\|_\infty =1.10 m^{-0.15} $, $\|\alv^*-\alv_0\|_2 = 0.80m^{0.20}$,$\|\alv^*-\alv_0\|_1 = 0.63m^{0.70}$. 
    }
    \label{fig:training-distance-patch}
\end{figure}

In Appendix~\ref{sec:iterative-training-taylor-surrogate}, we further provide a preliminary comparison between the quantum resources required for an iterative surrogate approach as compared to the standard hybrid quantum classical approach. As there are many different hyper-parameters for this benchmarking problem, it is hard to fairly draw definitive conclusions. In particular, the standard gradient-descent approach is operationally simpler, as it requires tuning fewer hyper-parameters than the iterative surrogate strategy, which additionally introduces choices such as the frequency of the surrogate reconstruction. Moreover, gradients estimated directly on quantum hardware are unbiased, rendering the method more resilient to shot noise. Nevertheless, our numerics suggest that the bias introduced by the Taylor approximation can, in some regimes, be balanced by the increased statistical noise associated with repeated gradient estimation in the standard approach, although we do not claim this as a general result. 

As such, we present both cases where the surrogate outperforms a standard variational quantum algorithm and vice versa. We therefore suggest that this Taylor surrogate should be viewed as a potential alternative to vanilla VQAs, rather than a replacement, with potential practical advantages given the problems associated with noise drift and/or queuing for cloud access.

\section{Near-Clifford patches}\label{sec:CliffordPert}

We will now specialise to particular families of problems for which the requirements of the quantum computer can be substantially reduced as compared to the Taylor surrogate approach. In particular, in this section, we consider a patch of a quantum landscape that is centered around a Clifford circuit.
We utilize the Pauli propagation approach described in Section~\ref{sec:PP} with a truncation method specialized for this near-Clifford region.

\subsection{Error guarantees}

Concretely, we consider the circuit composed of alternating layers of Clifford gates and uncorrelated Pauli rotations defined in Eq.~\eqref{eq:CliffordVQAcircuit}. We then consider a hypercube of width $2r$ around the all zero vector $\vec{0}$. In this region, the angles $\alpha_i$ are small and so we can leverage the fact that sine contributions to the coefficients $c_{P}$ in Eq.~\eqref{eq:path_expectation} are much smaller than cosine contributions. We use this observation to prune Pauli paths with many sine coefficients, similar to the approach in Ref.~\cite{beguvsic2023simulating}. Then, we show that this `small-angle' truncation strategy achieves an exponentially decreasing average error as a function of the sine expansion order $\expansionorder$ (i.e., the number of sine terms kept).

\subsubsection{Classical time complexity}
\label{sec:clifford_perturb_fully_classical}

We begin by presenting a theorem (derived in Appendix~\ref{apx:runtim}), which studies the time complexity of the classical simulation algorithm.

\setcounter{theorem}{1}
\begin{theorem}[Time complexity of small-angle Pauli propagation, Informal]\label{thm:uncorrfull}
Consider an expectation function $f(\vec{\alpha})$ of the form of Eq.~\eqref{eq:expectation} with an observable $O= \sum_{P\in\mathcal{P}_n} a_P P$, where at most  $N_{\rm Paulis} \in \mathcal{O}(\poly(n))$ coefficients $a_P$ are non zero, with $\norm{\vec{a}}_1 \in \OC(1)$, and a parametrized circuit of the form of Eq.~\eqref{eq:CliffordVQAcircuit} with $\nparams$ independent parameters. If the expectation values $\Tr[\rho P]$ of the initial state $\rho$ with any Pauli operator $P$ can be efficiently computed, a runtime of 
\begin{equation}
    t \in \OC\left(N_{\mathrm{Paulis}} \cdot \nparams^{\log{\left(\frac{1}{\epsilon^2\delta}\right)}}\right)\;,
\end{equation}
suffices to classically simulate $f(\alv)$ up to an error $\epsilon$ with probability at least $1-\delta$ for random $\alv\sim\uni(\vec{0}, r)$ with 
\begin{equation}
    r \in \OC\left(  \frac{1}{\sqrt{\nparams}} \right) \, .
\end{equation}
Similarly, with a runtime of
\begin{equation}
    t \in \OC\left(N_{\mathrm{Paulis}} \cdot \nparams^{\log{\left(\frac{1}{\epsilon}\right)}}\right)\;.
\end{equation}
we can classically simulate $f(\alv)$ up to an error $\epsilon$
for all $\vec{\alpha}$ in the hypercube $\vol(\vec{0}, r)$ with 
\begin{equation}
    r \in \OC\left(  \frac{1}{\nparams} \right) \, 
\end{equation}
\end{theorem}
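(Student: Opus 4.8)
The plan is to instantiate the Pauli propagation decomposition of Eq.~\eqref{eq:path_expectation} together with a \emph{small-angle truncation}: keep only those Pauli paths whose coefficient contains at most \expansionorder{} sine factors (call this the ``sine order''), discard the rest, and choose \expansionorder{} just large enough that the discarded weight falls below the target error. First I would fix the bookkeeping. Propagating $O=\sum_P a_P P$ in the Heisenberg picture through $U_{\rm CP}(\alv)=\prod_i C_i U_i(\alpha_i)$, a \emph{path} $\gamma$ is specified by a starting Pauli $Q$ (one of the $\le N_{\rm Paulis}$ Paulis in $O$) together with, at each of the $\nparams$ rotation gates whose generator anticommutes with the current Pauli, a binary choice to either ``stay'' (factor $\cos\alpha_i$, Pauli unchanged) or ``branch'' (factor $\pm\sin\alpha_i$, Pauli updated to $i[P_i,P_j]/2$); gates whose generator commutes contribute a trivial factor, and Clifford gates only insert $\pm1$. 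Writing $S_\gamma$ for the set of branch gates, $j(\gamma)=|S_\gamma|$, and $Q_\gamma$ for the starting Pauli, the coefficient obeys $|c_\gamma(\alv)|\le|a_{Q_\gamma}|\prod_{i\in S_\gamma}|\sin\alpha_i|$, and $|\Tr[\rho P_\gamma]|\le1$. Two further facts I would record: a path is \emph{uniquely} pinned down by the pair $(Q_\gamma,S_\gamma)$ (the trajectory is deterministic once the starting Pauli and the branch set are fixed), so distinct paths with a common branch set $S$ have distinct starting Paulis; and the number of paths with a fixed starting Pauli and $j(\gamma)=j$ is at most $\binom{\nparams}{j}$.

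For the runtime, applying the truncation greedily (discarding any branch the moment it reaches sine order $\expansionorder+1$) means only paths of sine order $\le\expansionorder$ are ever alive, so their number is at most $N_{\rm Paulis}\sum_{j\le\expansionorder}\binom{\nparams}{j}\in\OC(N_{\rm Paulis}\,\nparams^{\expansionorder})$; each path costs $\poly(n)$ to propagate (Clifford updates via Gottesman--Knill or lookup tables, commutator checks, and one evaluation of $\Tr[\rho P]$, which is efficient by hypothesis so no quantum device is needed). Hence the cost is $\OC(N_{\rm Paulis}\,\nparams^{\expansionorder}\,\poly(n))$, and it only remains to show that $\expansionorder\in\OC(\log(1/\epsilon^2\delta))$ suffices in the average case and $\expansionorder\in\OC(\log(1/\epsilon))$ in the worst case.

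The worst-case bound over $\vol(\vec{0},r)$ with $r\in\OC(1/\nparams)$ is the easy one: using $|\sin\alpha_i|\le|\alpha_i|\le r$ pointwise,
\begin{equation}
|\Delta f(\alv)|\le\sum_{j>\expansionorder}\sum_{\gamma:\,j(\gamma)=j}|a_{Q_\gamma}|\,r^{j}\le\norm{\vec a}_1\sum_{j>\expansionorder}\binom{\nparams}{j}r^{j}\le\norm{\vec a}_1\sum_{j>\expansionorder}\frac{(\nparams r)^{j}}{j!}\,,
\end{equation}
and since $\nparams r\in\OC(1)$ this is the tail of a convergent exponential series, so it drops below $\epsilon$ once $\expansionorder\in\OC(\log(1/\epsilon))$ (using $\norm{\vec a}_1\in\OC(1)$ from the hypothesis).

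The average-case bound is the crux and where I would spend the care. I would estimate the second moment $\Ebb_{\alv\sim\uni(\vec 0,r)}[\Delta f(\alv)^2]=\sum_{\gamma,\gamma'}\Ebb[c_\gamma c_{\gamma'}]\Tr[\rho P_\gamma]\Tr[\rho P_{\gamma'}]$ over discarded paths. The key observation is that the $\alpha_i$ are independent and $\Ebb[\sin\alpha_i]=\Ebb[\sin\alpha_i\cos\alpha_i]=0$ on the symmetric interval $[-r,r]$, so any cross term with $S_\gamma\ne S_{\gamma'}$ vanishes (pick a gate in the symmetric difference); the surviving terms have $S_\gamma=S_{\gamma'}=:S$, for which $|\Ebb[c_\gamma c_{\gamma'}]|\le|a_{Q_\gamma}||a_{Q_{\gamma'}}|\prod_{i\in S}\Ebb[\sin^2\alpha_i]\le|a_{Q_\gamma}||a_{Q_{\gamma'}}|(r^2/3)^{|S|}$ (using $\Ebb[\sin^2\alpha_i]\le\Ebb[\alpha_i^2]=r^2/3$ and $\Ebb[\cos^k\alpha_i]\le1$). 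Since distinct same-support paths have distinct starting Paulis, $\sum_{\gamma:\,S_\gamma=S}|a_{Q_\gamma}|\le\norm{\vec a}_1$, and the number of supports of size $j$ is at most $\binom{\nparams}{j}$, giving
\begin{equation}
\Ebb_{\alv\sim\uni(\vec 0,r)}[\Delta f(\alv)^2]\le\norm{\vec a}_1^2\sum_{j>\expansionorder}\binom{\nparams}{j}\Big(\frac{r^2}{3}\Big)^{j}\le\norm{\vec a}_1^2\sum_{j>\expansionorder}\frac{(\nparams r^2/3)^{j}}{j!}\,.
\end{equation}
With $r\in\OC(1/\sqrt{\nparams})$ we have $\nparams r^2\in\OC(1)$, so this is again an exponential tail and falls below $\epsilon^2\delta$ once $\expansionorder\in\OC(\log(1/\epsilon^2\delta))$; Chebyshev's inequality then yields $\Pro_{\alv}[|\Delta f(\alv)|>\epsilon]\le\Ebb[\Delta f(\alv)^2]/\epsilon^2\le\delta$, closing the argument. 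I expect the main obstacle to be exactly this average-case step: verifying the vanishing of the mismatched cross terms and the uniqueness of a path given $(Q_\gamma,S_\gamma)$, and then recognising that the combination of the $1/j!$ factor from $\binom{\nparams}{j}\le\nparams^j/j!$ with the scaling $\nparams r^2=\OC(1)$ is what upgrades a naive geometric estimate into the super-exponential decay required for the stated $\expansionorder$ (and hence the stated $\nparams^{\log(1/\epsilon^2\delta)}$ runtime).
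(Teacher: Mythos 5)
Your proposal is correct and follows essentially the same route as the paper's proof (Supplemental Theorems~\ref{thm:mse-avg-supp},~\ref{thm:worst-case-error-single-Pauli},~\ref{thm:time-complexity-PP-classical-avg} and~\ref{thm:time-complexity-worst-PP-classical}): sine-order truncation of the Pauli paths, orthogonality of paths with mismatched sine supports to reduce the second moment to a diagonal sum, the binomial tail bounds with $\nparams r^2\in\OC(1)$ (resp.\ $\nparams r\in\OC(1)$), Markov/Chebyshev for the probabilistic statement, and the path count $\sum_{j\le\expansionorder}\binom{\nparams}{j}$ for the runtime. The only cosmetic difference is that you absorb the general observable $O=\sum_P a_P P$ directly into the second-moment sum via the uniqueness of a path given $(Q_\gamma,S_\gamma)$, whereas the paper first reduces to single-Pauli observables via a triangle/Cauchy--Schwarz lemma (Lemma~\ref{lem:trans}); both yield the same $\norm{\vec a}_1^2$ prefactor.
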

A more formal version of the theorem is stated in Supplemental Theorems~\ref{thm:time-complexity-PP-classical-avg} and~\ref{thm:time-complexity-worst-PP-classical}.

Thus we see that Pauli propagation can be used to efficiently simulate Pauli expectation values for arbitrary circuits of the form Eq.~\eqref{eq:CliffordVQAcircuit} with high probability for all $\bm\alpha$ in a hypercube of width $\OC(1/\sqrt{\nparams})$ and \textit{for all} $\bm\alpha$ in a smaller hypercube of width $\OC( 1/\nparams)$. Notice that these results mirror the general result in Theorem~\ref{thm:surrogate-average-general}, but for a family of circuits that is itself classically tractable. It follows that as long as we have a classical representation of the initial state $\rho$ (analytically or, e.g., as a tensor network) from which we can efficiently compute expectation values of Paulis, then it is possible to classically simulate VQAs with high probability in the small angle regimes for which trainability guarantees have been derived~\cite{zhang2022escaping,park2023hamiltonian, shi2024avoiding,park2024hardware}. Moreover, our numerics in Section~\ref{sec:exact_numerics} suggest that the above bounds are somewhat loose and improved resource scalings can be obtained in practice. 

In Appendix~\ref{apx:average_correlated}, we further provide a bound for the average error of an observable with a circuit where all angles are correlated (i.e., where all the rotation gates are parametrized by the same $\alpha$). Intriguingly, we find that the average-case simulation of such maximally correlated angles is asymptotically as hard as the worst-case as a function of ${\nparams}$.
It is not clear whether the equal scaling is mostly due to our proof techniques, or whether simulating cases such as real-time dynamics of quantum systems is about as hard as the worst-case.

An interesting corollary of our worst-case analysis is that we can provably efficiently simulate quantum dynamics of Hamiltonians with $\OC(n)$ Pauli terms and $\OC(1)$-sized coefficients up to time $t = \OC\left(\frac{1}{n}\right)$ on any geometry. This is because, for $L$ Trotter layers with time step $dt$, we have $m = \Theta(nL)$ parameters and $dt = \OC\left(\frac{1}{m}\right)$ is the largest guaranteed time step. We thus have $dt = \OC\left(\frac{1}{nL}\right)$, and with $t = Ldt$ we find $t = \OC\left(\frac{1}{n}\right)$ is guaranteed to be efficient.

Given our significantly more favorable average-case errors for uncorrelated angles  it stands to reason that any independence of parameters interpolates between our two complexity bounds. In Section~\ref{sec:sexynumerics} we present a large-scale (127 qubits) numerical demonstration of Pauli propagation for a quantum simulation problem with a \textit{time-dependent} Hamiltonian which supports this hypothesis.

\subsubsection{Sample complexity}
\label{sec:clifford_perturb_quantum_enhanced}

In the previous section, we assumed that we could efficiently estimate  the overlap terms $\Tr[\rho P]$ in Eq.~\eqref{eq:path_expectation}. If these can be efficiently computed classically for any Pauli operator $P$, then the time complexity computed in the previous section fully quantifies the complexity of our small-angle patch simulation. If, instead, we do not have a classical representation of the state $\rho$,  then we need a quantum device to measure the expectation value $\Tr[\rho P]$. Here we quantify the number of measurements required, i.e., the sample complexity, in the case where $\rho$ is non-classically simulable. As we will see, this sampling step is efficient in the sense that it requires only a polynomial number of Pauli measurements on quantum hardware. 

There are several different tomographic methods that could be used to measure the necessary $\Tr[\rho P]$. The simplest is to take the total shot budget $N_s$ and divide them equally between each of the Pauli terms that need to be measured. In this case, analogously to Theorem~\ref{thm:uncorrfull}, the number of terms that need to be measured scales polynomially with the number of parameters for a hypercube region with $r \in \OC(1/\sqrt{\nparams})$. More concretely, the following theorem holds. 

\begin{theorem}[Surrogation via Direct Pauli Measurements, Informal]\label{thm:paulisurrogate-average}
Consider an expectation $f(\vec{\alpha})$ of the form of Eq.~\eqref{eq:expectation} with an observable $O= \sum_{P\in\mathcal{P}_n} a_P P$ containing at most $N_{\rm Paulis} \in \mathcal{O}(\poly(n))$ Pauli terms with $\norm{\vec{a}}_1 \in \OC(1)$ and a parametrized circuit of the form of Eq.~\eqref{eq:CliffordVQAcircuit} with $\nparams$ independent parameters. A shot budget of 
\begin{equation}
   N_s \in \OC\left(N_{\rm Paulis}\; m^{\log\left(\frac{1}{\epsilon^2\delta}\right)}\right) 
\end{equation}
suffices to classically surrogate $f(\alv)$ with an error $\Delta f(\alv)$ of at most $\epsilon$ with probability at least $1-\delta$ for any random $\alv\sim\uni(\vec{0}, r)$.

\end{theorem}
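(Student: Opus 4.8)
The plan is to reduce Theorem~\ref{thm:paulisurrogate-average} to the already-established classical time-complexity result, Theorem~\ref{thm:uncorrfull}, by accounting for the extra error incurred when the overlaps $\Tr[\rho P]$ are estimated from a finite shot budget rather than computed exactly. First I would recall the structure of the small-angle Pauli propagation surrogate: after the $\kappa$-th order sine truncation, the surrogate function takes the form $\overline{f}(\alv) = \sum_{P \in \SC} c_P(\alv)\, \Tr[\rho P]$, where $\SC$ is the set of Pauli strings surviving truncation, and $|\SC| \in \OC(N_{\rm Paulis}\cdot m^{\kappa})$ by the same counting argument underlying Theorem~\ref{thm:uncorrfull}. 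The key structural fact I would invoke is that the coefficients satisfy $\sum_{P \in \SC} |c_P(\alv)| \le \norm{\vec{a}}_1 \in \OC(1)$ uniformly in $\alv$, since each non-Clifford rotation contributes factors $\cos\alpha_i, \sin\alpha_i$ whose absolute values sum (along the two branches) to at most $|\cos\alpha_i| + |\sin\alpha_i| \le \sqrt 2$, and more carefully the $\ell_1$-norm of the coefficient vector is non-increasing under the full (untruncated) propagation and hence bounded by that of the initial observable; truncation only removes terms and so cannot increase it.

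The core step is then an error decomposition. Writing $\hat f(\alv)$ for the empirical surrogate obtained by replacing each $\Tr[\rho P]$ with an empirical mean $\widehat{\Tr[\rho P]}$ from $N_s/|\SC|$ shots, I would split
\begin{equation}
|\hat f(\alv) - f(\alv)| \le |\hat f(\alv) - \overline f(\alv)| + |\overline f(\alv) - f(\alv)|.
\end{equation}
The second term is the truncation error already controlled by Theorem~\ref{thm:uncorrfull} (and its formal version, Supplemental Theorem~\ref{thm:time-complexity-PP-classical-avg}): choosing $\kappa \in \OC(\log(1/(\epsilon^2\delta)))$ makes the average truncation error at most $\epsilon/2$ with probability $1-\delta/2$ over $\alv \sim \uni(\vec 0, r)$ with $r \in \OC(1/\sqrt m)$. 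The first term is bounded by $\sum_{P\in\SC} |c_P(\alv)|\,|\widehat{\Tr[\rho P]} - \Tr[\rho P]| \le \max_P |\widehat{\Tr[\rho P]} - \Tr[\rho P]| \cdot \OC(1)$ using the $\ell_1$ bound on the coefficients. By Hoeffding's inequality, a per-Pauli allocation of $\Theta\!\big(\tfrac{1}{\epsilon^2}\log(|\SC|/\delta)\big)$ shots guarantees that all $|\SC|$ estimates are simultaneously $\epsilon$-accurate (with $\epsilon$ rescaled by the $\OC(1)$ factor) with probability $1-\delta/2$; multiplying by $|\SC| \in \OC(N_{\rm Paulis} m^{\kappa})$ gives the total shot budget $N_s \in \OC(N_{\rm Paulis}\, m^{\log(1/(\epsilon^2\delta))})$, absorbing the logarithmic factor into the exponent. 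A union bound over the two failure events yields the claimed $1-\delta$ success probability.

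The main obstacle — and the step requiring the most care — is the interplay between the $\alv$-randomness and the shot-noise randomness. The truncation-error guarantee from Theorem~\ref{thm:uncorrfull} is a probabilistic statement over $\alv \sim \uni(\vec 0, r)$, whereas the shot-noise bound is a high-probability statement over the measurement outcomes that must hold uniformly for \emph{all} $\alv$ in the patch simultaneously (since the surrogate is built once and then queried at arbitrary $\alv$). Fortunately, the uniform-in-$\alv$ bound on $\sum_P |c_P(\alv)|$ is exactly what decouples these: the shot-noise error bound $\OC(1)\cdot\max_P|\widehat{\Tr[\rho P]} - \Tr[\rho P]|$ is independent of $\alv$, so a single Hoeffding union bound over the fixed set $\SC$ suffices and commutes freely with the expectation over $\alv$. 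I would also need to confirm that $\SC$ itself is determined by the truncation rule and the circuit structure alone (not by the measured values), so that it can be computed classically before any measurements — this is immediate from the definition of the small-angle truncation. Finally I would note that more sophisticated tomographic schemes (e.g. classical shadows, or variance-adapted shot allocation) could improve constants, but the equal-division argument already delivers the stated polynomial scaling, so I would defer those refinements to a remark.
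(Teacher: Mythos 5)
There is a genuine gap, and it sits at the load-bearing step of your argument: the claim that $\sum_{P\in\SC}|c_P(\alv)|\leq\norm{\vec{a}}_1\in\OC(1)$ uniformly in $\alv$. This is false. Under a Pauli rotation an anticommuting Pauli with coefficient $c$ splits into coefficients $c\cos\alpha_i$ and $c\sin\alpha_i$, so its contribution to the $1$-norm changes from $|c|$ to $|c|(|\cos\alpha_i|+|\sin\alpha_i|)\geq|c|$ — the $1$-norm is non-\emph{decreasing}, not non-increasing (it is the $2$-norm that is preserved, since unitary conjugation is a Hilbert--Schmidt isometry). Your own observation that $|\cos\alpha_i|+|\sin\alpha_i|\leq\sqrt{2}$ points the wrong way: it shows the $1$-norm can grow by up to $2^{m/2}$. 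In the truncated small-angle regime the paper bounds the worst-case $1$-norm of the propagated coefficients by $\sum_{k\le\kappa}\binom{m}{k}r^k\leq(emr/\kappa)^{\kappa}$ (Eq.~\eqref{eq:effective-1-norm-upperbound-single-Pauli} and Eq.~\eqref{eq:upperbound-effective-1-norms-multi-Paulis}), which for $r\in\OC(1/\sqrt{m})$ is $\Theta(m^{\kappa/2})$, not $\OC(1)$. With the corrected $1$-norm your shot-noise term becomes $\Theta(m^{\kappa/2})\cdot\max_P|\widehat{\Tr[\rho P]}-\Tr[\rho P]|$, so the per-Pauli Hoeffding accuracy must be tightened by a factor $m^{\kappa/2}$, inflating the total budget to roughly $N_{\rm Paulis}\,m^{2\kappa}$ — still polynomial, but not the stated $N_{\rm Paulis}\,m^{\kappa}$ scaling.

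The paper's proof (Supplemental Theorem~\ref{thm:avg-guarantee-PP-uncorr-quantum-even-allocation}) avoids this by never taking a worst-case triangle inequality over the coefficients. It computes the variance of the even-allocation estimator directly: because the parameter distribution is symmetric, distinct Pauli paths are uncorrelated, the cross terms vanish, and the empirical mean square error is $\frac{1}{n_s}\sum_P a_P^2\sum_{\omv}\Ebb_{\alv}[\Phi_{\omv}^2(\alv)]$ with $\sum_{\omv}\Ebb_{\alv}[\Phi_{\omv}^2(\alv)]\leq(emr^2/(3\kappa))^{\kappa}\leq 1$ — a $2$-norm-type quantity that stays bounded precisely where the $1$-norm does not. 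The MSE (over both $\alv$ and the measurement outcomes) is then converted to a probability statement via Markov's inequality, which is why the exponent carries $\log(1/(\epsilon^2\delta))$. Note also that you are proving something stronger than the theorem asserts — a shot-noise bound holding uniformly over all $\alv$ in the patch — and it is exactly this stronger goal that forces the worst-case $1$-norm into your bound; the theorem only claims a guarantee for random $\alv\sim\uni(\vec{0},r)$, which is what lets the average-case ($2$-norm) analysis go through.
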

A more formal version of the theorem together with the proof is provided in Appendix~\ref{sec:even-allocation-PP-surogate-guarantee-quantum} in the form of \supt~\ref{thm:avg-guarantee-PP-uncorr-quantum-even-allocation}. 

While the above strategy is simple, in general using the same number of shots to measure each of the Pauli terms is suboptimal. Indeed, if  some of the Paulis have larger coefficients than others (on average over the patch of interest) it is advantageous to use more shots to measure the terms that contribute more to the final expectation value (see warm-up example in Appendix~\ref{sec:warm-up} comparing different shot allocation strategies). In fact, it is precisely the exponentially decreasing contribution per sine-expansion order that we leverage for our efficient simulation guarantees. Hence, one should allocate the shots to each term in proportion to the average coefficient of that Pauli over the patch. In this case, the sample complexity becomes effectively independent of $N_{Paulis}$ and instead scales with the 1-norm of the averaged coefficients. That is, we show that the number of shots scales as 
\begin{equation}
    N_s \in \mathcal{O}(\norm{ \vec{c} }_{1,\mathrm{avg}}^2)\,,
\end{equation}
where we call $\norm{\boldsymbol{c}}_{1,\mathrm{avg}}:= \sum_{P\in\mathcal{P}_n}\sqrt{\mathbb{E}_{\boldsymbol{\alpha} \sim \mathcal{D}} [c_P(\alv)^2]}$ an ``effective'' 1-norm. We describe this quantity in more detail in Appendix~\ref{app:estimating-observables-parametrized} and Appendix~\ref{sec:truncated-propagated-observable}. The surrogation guarantees with the effective $1$-norm allocation strategy are further provided in Appendix~\ref{sec:uncorrelated-parameters-PP-surrogate-guarantee-quantum} (average-case) and Appendix~\ref{sec:worst-case-PP-surrogate-guarantee-quantum} (worst-case).

We note that a similar strategy can be employed if one is not interested in surrogating a whole patch but rather evaluating a single expectation value for a given $\alv_0$ without running $U(\alv_0)$ on the quantum computer. This is often true in the context of dynamical simulation, where one seeks $f(\alv)$ at only a handful of parameter values. In this case, by allocating the total shot budget to Pauli $P$ in proportion $|a_P|$ one obtains a sample complexity that scales as $N_s \propto \norm{ \vec{a} }_{1}^2$. 

\medskip

Another popular family of tomographic procedures are \textit{classical shadows}~\cite{huang2020predicting,elben2022randomized,west2024real}. Their variant for Pauli expectation values, which we here call Pauli classical shadows, enjoy tight accuracy guarantees but require more restricted families of circuits for optimal efficiency. Here we suppose that the Clifford layers consist exclusively of single-qubit Clifford gates and the Pauli rotation layers are $\OC(1)-$local. Now, each of the back-propagated Pauli operators have Pauli weight at most $\OC(\expansionorder)$,  where $\expansionorder$ is the sine-truncation order. Thus, for $\expansionorder \in \OC(\log(n))$, i.e., the regime in which small-angle Pauli Propagation enjoys polynomial run times, we can employ classical shadows.
This is captured by the following theorem. 

\begin{theorem}[Shadow Surrogation Guarantee, Informal]\label{thm:shadowsurrogate-average}
Consider an expectation function $f(\vec{\alpha})$ of the form of Eq.~\eqref{eq:expectation}, an observable $O= \sum_{P\in\mathcal{P}_n} a_P P$ composed of at most $N_{\rm Paulis} \in \mathcal{O}(\text{poly}(n))$ Pauli terms with $\norm{\vec{a}}_1 \in \OC(1)$, and a parameterized quantum circuit of the form of Eq.~\eqref{eq:CliffordVQAcircuit} with the additional constraints that $C_i$ are single-qubit Clifford operations and the Pauli rotations $U_i(\alpha_i)$ act on $\OC(1)$ qubits. A shot budget of 
\begin{align}
    N_s \in     \OC\left(\frac{\min\left( \log(n), \log(m \,N_{\mathrm{Paulis}})\right)}{\epsilon}\right),
\end{align}
random Pauli measurements suffices to achieve a mean square error at most $\epsilon$ with high probability over a uniformly sampled hypercube $\uni(\vec{0}, r)$ with $r \in \OC \left( \frac{1}{\sqrt{\nparams}}\right) $.
\end{theorem}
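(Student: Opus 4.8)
The plan is to merge small-angle Pauli propagation with the classical-shadows machinery for simultaneously predicting many low-weight observables. First I would expand the Heisenberg-evolved observable $\widetilde{O}(\vec{\alpha})=U_{\rm CP}(\vec{\alpha})^\dagger\, O\, U_{\rm CP}(\vec{\alpha})=\sum_P c_P(\vec{\alpha})P$ and truncate it at sine-order $\expansionorder$, keeping only the set $S_{\expansionorder}$ of Pauli paths that accumulated at most $\expansionorder$ sine factors, with truncated observable $\widetilde{O}_{\expansionorder}(\vec{\alpha}):=\sum_{P\in S_{\expansionorder}}c_P(\vec{\alpha})P$. By the analysis behind Theorem~\ref{thm:uncorrfull}, which establishes that the average truncation error decays exponentially in $\expansionorder$ throughout the patch $\uni(\vec{0},r)$ with $r\in\OC(1/\sqrt m)$, a sine-order $\expansionorder\in\OC(\log(1/\epsilon))$ --- hence $\OC(1)$ for a constant target $\epsilon$ --- already makes the truncation contribute at most $\epsilon/2$ to the mean-square error. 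Under the stated constraints --- single-qubit Cliffords $C_i$, which leave Pauli weight unchanged, and $\OC(1)$-local rotation generators $P_i$, so that each sine branching raises the weight by $\OC(1)$ --- every surviving $P\in S_{\expansionorder}$ has weight $w_P\in\OC(\expansionorder)$, and $|S_{\expansionorder}|\le N_{\rm Paulis}\cdot m^{\OC(\expansionorder)}$, so $\log|S_{\expansionorder}|\in\OC(\min(\log n,\log(m N_{\rm Paulis})))$.

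Next I would collect $N_s$ classical-shadow snapshots in random single-qubit Clifford bases and, from this single non-adaptive dataset, form the standard unbiased estimators $\hat o_P$ of $\Tr[\rho P]$ for all $P\in S_{\expansionorder}$ simultaneously; the surrogate is then $\bar f(\vec{\alpha}):=\sum_{P\in S_{\expansionorder}}c_P(\vec{\alpha})\hat o_P$, which by the bookkeeping of Theorem~\ref{thm:uncorrfull} is classically evaluable in polynomial time. Writing $\widetilde{f}_{\expansionorder}(\vec{\alpha}):=\sum_{P\in S_{\expansionorder}}c_P(\vec{\alpha})\Tr[\rho P]=\Tr[\rho\,\widetilde{O}_{\expansionorder}(\vec{\alpha})]$ and using $(a+b)^2\le2a^2+2b^2$, the mean-square error over the patch splits, for any fixed dataset, into $2\,\Ebb_{\vec{\alpha}}[(\bar f-\widetilde{f}_{\expansionorder})^2]+2\,\Ebb_{\vec{\alpha}}[(\widetilde{f}_{\expansionorder}-f)^2]$, the second term being $\le\epsilon/2$ by the truncation bound. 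For the first term I would use that, under random single-qubit Clifford measurements, distinct Pauli strings have zero covariance, so the single-snapshot variance of the estimator of $\Tr[\rho\,\widetilde{O}_{\expansionorder}(\vec{\alpha})]$ is controlled by $\sum_{P\in S_{\expansionorder}}3^{w_P}c_P(\vec{\alpha})^2\le 3^{\OC(\expansionorder)}\sum_{P}c_P(\vec{\alpha})^2$; and since $\widetilde{O}(\vec{\alpha})$ is a unitary conjugate of $O$, its Hilbert--Schmidt norm is preserved, giving the pointwise identity $\sum_{P}c_P(\vec{\alpha})^2=\norm{\vec{a}}_2^2\le\norm{\vec{a}}_1^2\in\OC(1)$. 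Hence, for $\expansionorder\in\OC(1)$, averaging over the $N_s$ snapshots gives $\Ebb_{\rm sh}\Ebb_{\vec{\alpha}}[(\bar f-\widetilde{f}_{\expansionorder})^2]\in\OC(1/N_s)$, and the standard median-of-means robustification over all $|S_{\expansionorder}|$ reconstructed expectations (so that one non-adaptive dataset is uniformly reliable across the patch) costs at most a $\log|S_{\expansionorder}|=\OC(\min(\log n,\log(m N_{\rm Paulis})))$ factor; choosing $N_s\in\OC(\min(\log n,\log(m N_{\rm Paulis}))/\epsilon)$ then makes the shadow contribution $\le\epsilon/2$, and a Markov step over the shadow randomness delivers the high-probability statement.

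I expect the main obstacle to be the interplay between the truncation order and the shadow cost. One must invoke the sharp small-angle error estimate behind Theorem~\ref{thm:uncorrfull} to be sure that $\expansionorder\in\OC(1)$ genuinely suffices on $\uni(\vec{0},r)$ with $r\in\OC(1/\sqrt m)$, so that the weight blow-up $3^{w_P}=3^{\OC(\expansionorder)}$ stays $\OC(1)$ (which also tacitly requires the observable Paulis themselves to be low-weight, as is standard for random-Pauli shadows), and one must genuinely exploit the exact vanishing of cross-Pauli covariances: without it the reconstruction variance would be governed by $(\sum_{P\in S_{\expansionorder}}|c_P(\vec{\alpha})|\,3^{w_P/2})^2$, which --- because at sine-order $k$ there can be $\sim\binom{m}{k}$ Paulis each of magnitude $\sim r^k$ --- grows super-polynomially in $m$ rather than being pinned down by the Hilbert--Schmidt identity. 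Threading the average over the patch together with the high-probability guarantee over the shadow randomness is the remaining bit of bookkeeping.
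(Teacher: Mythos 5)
Your skeleton matches the paper's: truncate at sine order $\expansionorder\in\OC(1)$, note that single-qubit Cliffords and $\OC(1)$-local rotations keep every surviving Pauli at weight $\OC(\expansionorder)$ so that only $\min(\poly(n),\,N_{\rm Paulis}\,\poly(\nparams))$ distinct low-weight Paulis need estimating, and pay only a logarithm of that count in the shadow budget. However, the step you lean on to beat the $1$-norm blow-up --- ``distinct Pauli strings have zero covariance'' under random Pauli measurements, combined with the ``pointwise identity'' $\sum_P c_P(\alv)^2=\norm{\vec a}_2^2$ --- does not hold. Classical-shadow estimators of distinct Paulis are \emph{not} uncorrelated when one support is nested in the other: for $P=Z_1$, $P'=Z_1Z_2$ one finds $\Ebb[\hat o_P\hat o_{P'}]=3\Tr[Z_2\rho]$, which generically differs from $\Tr[Z_1\rho]\Tr[Z_1Z_2\rho]$, so the single-snapshot variance of $\sum_P c_P\hat o_P$ is not $\sum_P 3^{w_P}c_P^2$. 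Moreover, unitary invariance of the Hilbert--Schmidt norm applies to the \emph{full} back-propagated observable, not to the truncated one, whose coefficients are partial sums over kept paths; pointwise in $\alv$ the truncated $2$-norm need not equal $\norm{\vec a}_2^2$ (only its average over the patch is controlled). You correctly identified the covariance issue as the crux, but the claim you use to dispose of it is false.

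The paper's proof of \supt~\ref{thm:shadow-surrogation-guarantee-formal} places the cancellation elsewhere. It keeps the estimate at the level of Pauli \emph{paths}, uses the Huang--Kueng--Preskill median-of-means guarantee to obtain a \emph{uniform} bound $\max_{\omv}(\hat o_{\omv}-\Tr[\rho\, P_{\omv}])^2\leq\epsilon$ over all kept paths at cost $N_s\in\OC(\log(\min(n,\nparams N_{\rm Paulis}))/\epsilon)$, and then exploits orthogonality of the path coefficients under the symmetric parameter distribution, $\Ebb_{\alv}[\Phi_{\omv}(\alv)\Phi_{\omv'}(\alv)]=0$ for $\omv\neq\omv'$, giving
\begin{equation*}
\Ebb_{\alv}\Bigl(\sum_{\omv}\Phi_{\omv}(\alv)\,e_{\omv}\Bigr)^{2}=\sum_{\omv}\Ebb_{\alv}\bigl[\Phi_{\omv}^{2}(\alv)\bigr]\,e_{\omv}^{2}\leq\max_{\omv}e_{\omv}^{2}\cdot\sum_{\omv}\Ebb_{\alv}\bigl[\Phi_{\omv}^{2}(\alv)\bigr]\leq\epsilon\,,
\end{equation*}
where $e_{\omv}$ is the (dataset-fixed) estimation error on path $\omv$ and $\sum_{\omv}\Ebb_{\alv}[\Phi_{\omv}^2]\leq1$ follows from Corollary~\ref{cor:MSE-uniform-uncorrelated}; general observables are then handled by Lemma~\ref{lem:trans} at the cost of $\norm{\vec a}_1^2\in\OC(1)$. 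In short, the cross-term cancellation lives in the average over $\alv$ (odd moments of $\sin\cdot\cos$ vanish), not in the measurement covariance. If you replace your variance computation by this path-orthogonality argument while keeping your median-of-means/union-bound step, the proof closes.
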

A more formal version of this theorem is provided in Appendix~\ref{sec:uncorrelated-parameters-PP-surrogate-guarantee-quantum}\ as \supt~\ref{thm:shadow-surrogation-guarantee-formal}. 
This tomographic procedure is well-studied and follows a ``measure first, ask questions later'' philosophy. Thus, one might wonder how this shot allocation strategy fairs against the seemingly more informed effective 1-norm allocation described above. The general scalings in Appendix~\ref{sec:uncorrelated-parameters-PP-surrogate-guarantee-quantum} do not allow for a clear comparison between the approaches, but we compare them numerically in Fig.~\ref{fig:error-bounds}. While circuits with multi-qubit Clifford gates would likely make Pauli classical shadows underperform, there is a significant upside in having collected a more general set of measurements that can be used to estimate other quantities with a similar error.

\begin{figure*}
    \centering
    \includegraphics[width=0.98\linewidth]{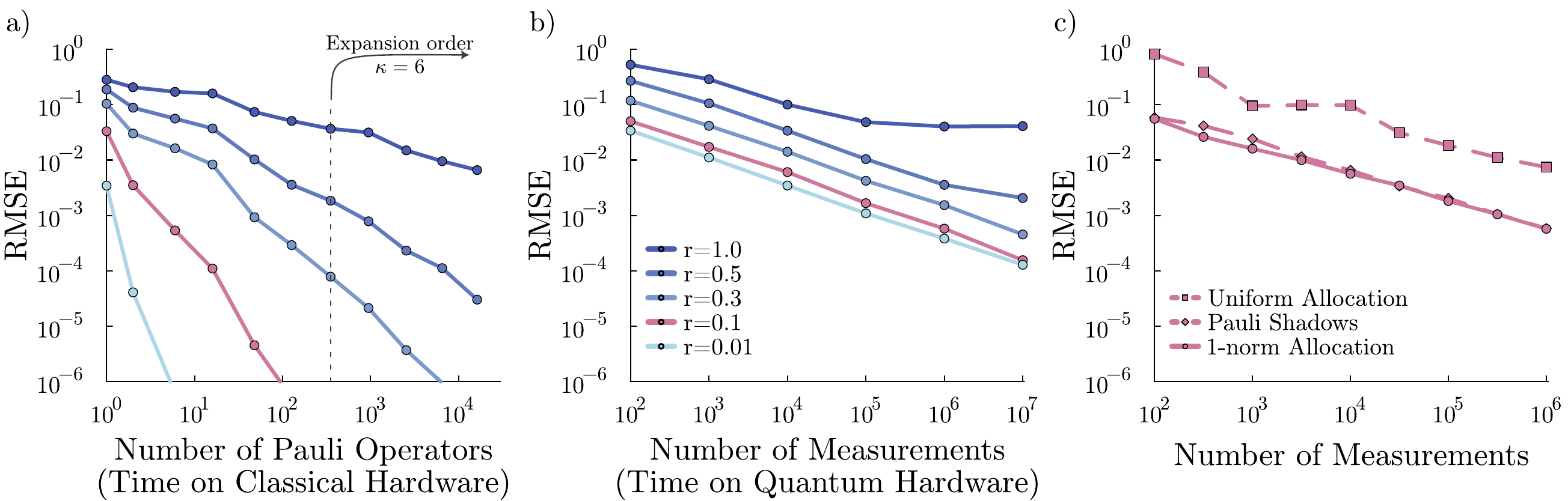}
    \caption{\textbf{Landscape and surrogation root mean square error (RMSE) as a function of classical and quantum resources.} We consider a 16-qubit Hamiltonian variational ansatz of a transverse field Ising (TFI) model on a 4x4 grid and a 1-local Pauli $Z$ expectation value near the middle of the grid. The initial state $\rho$ is the all-zero state evolved exactly by 4 Trotter layers of the TFI Hamiltonian at $dt=0.1$, and we surrogate the final 4 layers with free parameters. a) Truncation RMSE of our small-angle Pauli propagation surrogate for different $r$ as a function of the number of backpropagated Pauli operators. This number corresponds to increasing sine-expansion order from $\expansionorder=0$ to $\expansionorder=10$. b) Total RMSE for expectation values using measurements of the initial state $\rho$ for different $r$ as a function of the number of quantum measurements at $\expansionorder=6$. We use the effective 1-norm of the surrogate's coefficients described to allocate the measurements on $\rho$. c) Comparison of different measurement allocation approaches at $\expansionorder=6$ and $r=0.1$. The uniform allocation over the operators expectedly underperforms and Pauli classical shadows are in this case on-par with our effective 1-norm allocation. }
    \label{fig:error-bounds}
\end{figure*}

\subsection{Numerical resource and error scalings}\label{sec:exact_numerics}

In this section, we present numerical experiments that indicate that both the time complexity and error of Pauli Propagation methods equipped with small-angle truncation outperforms the guarantees provided by our bounds.

First we test the error incurred by the truncated Pauli propagation together and the shot noise in an exactly verifiable system. We focus on the case of the Hamiltonian Variational Ansatz (HVA)~\cite{wecker2015progress} for which reduced parameter range trainability guarantees were derived in Ref.~\cite{park2023hamiltonian}. This HVA family of circuits has the structure of the Trotterized time evolution under the Hamiltonian, but with mostly, or fully, free rotation angles. Here we choose a transverse field Ising (TFI) Hamiltonian 
\begin{equation}\label{eq:timeindependent-TFI}
    H_{\text{TFI}} = H_X + H_{ZZ}
\end{equation}
with
\begin{align}\label{eq:HX_HZZ}
    H_X = - \sum_{i}h_iX_i \,, \,\,\, 
    H_{ZZ} = - \sum_{\langle i,j\rangle}J_{ij}Z_iZ_j\,,
\end{align}
where $X, Z$ are Pauli operators, and $h_i$, $J_{ij}$ coefficients. The topology of the $ZZ$-interaction terms is that of nearest neighbors on a 4x4 grid in a 16-qubit system. 

To demonstrate the freedom in our approach, we choose the initial state $\rho$ as the exactly time-evolved all-zero state for 4 Trotter layers with $h_j = J_{ij} = 1$ and $dt=0.1$, and we surrogate the final 4 layers with the same HVA circuit ansatz but with independent parameters. The observable is a $Z$ observable in the middle of the grid. 

In Fig.~\ref{fig:error-bounds}a), we show the root mean square error (RMSE) of the truncated simulation as a function of the number of Pauli operators computed at increasing sine-truncation order $\expansionorder$, and for different angle ranges $r$. For example, the RMSE for $r=0.1$ goes below $10^{-6}$ at expansion order $\expansionorder=6$, which computes around $10^2$ Pauli operators. As a reference, the corresponding classical surrogation procedure takes a few milliseconds on a standard laptop with expectation evaluations being significantly faster. 
If the initial state is classically simulable this time complexity effectively quantifies the total simulation cost. If not, then we additionally need to account for the effect of shot noise. 

Moving onto the quantum resource requirements for non-classically simulable initial states, in Fig.~\ref{fig:error-bounds}b), we record the total error of the expectation value as a function of the number of measurements of $\rho$. We allocate the measurements according to the effective 1-norm allocation described before. The numerical results seem to indicate that the total error for $r\leq 0.5$ is dominated by shot noise even with $10^7$ measurements, as those uncertainties are orders of magnitudes larger than the classical approximation error for small $r$. This argument can be exemplified by the extreme case of $r=0$, where a single operator perfectly reconstructs the ``landscape''. Yet, the total error scales as $\sim \frac{1}{\sqrt{N_s}}$, where $N_s$ is the number of measurements. 

In Fig.~\ref{fig:error-bounds}c), we compare the effective 1-norm allocation to uniform allocation over all operators and to Pauli classical shadows. While uniform allocation understandably underperforms, Pauli classical shadows and the effective 1-norm allocation are on-par for this example. However, the setup considered here is among the best-case scenarios for Pauli classical shadows, with more global observables or patches around non-Identity unitaries deteriorating its relative performance.

These results are complemented by the plots shown in Fig.~\ref{fig:heisenberg_PP} in Appendix~\ref{apx:heisenberg_PP}. There we showcase a Heisenberg model HVA circuit structure instead of the TFI HVA in Fig.~\ref{fig:error-bounds}. The number of generated Pauli strings is larger, which, broadly speaking, increases the surrogation and measurement errors at the same classical or quantum resources, respectively. Due to the increased structure and number of commuting, overlapping terms in the Heisenberg Hamiltonian, the 1-norm estimation strategy underperforms without additional grouping of simultaneously measurable terms.
How to best pair commuting observables and allocate quantum computing resources for the simulation of quantum landscape patches remains an open problem for future work.

\subsection{Large-scale small-angle Pauli Propagation }\label{sec:sexynumerics}
In this section, we numerically demonstrate small-angle Pauli propagation on a large-scale problem. When picking a system to showcase the algorithm, there is a fine balance to be struck between picking a complex problem that is beyond the realm of exact simulation methods and yet one where we can be confident in the results obtained. To address this tension we will draw motivation from Ref.~\cite{miessen2024benchmarking}, where the authors used analytically known scalings for benchmarking. While Ref.~\cite{miessen2024benchmarking} takes this approach to benchmark a quantum simulation on a quantum device, we here apply the same logic to benchmark our approximate classical simulation at scale.

The system we consider is a time-dependent transverse field Ising (TFI) model with $127$ spins and on a heavy-hex entangling topology~\cite{kim2023evidence}. The time-dependent TFI Hamiltonian is related to the time-independent TFI Hamiltonian in Eq.~\eqref{eq:timeindependent-TFI} and reads
\begin{equation}\label{eq:TFI}
    H(t) = (1-g(t))H_X + g(t)H_{ZZ}\,.
\end{equation}
As before, we set the coefficients in $H_X$ and $H_{ZZ}$ (see Eq.~\eqref{eq:HX_HZZ}) to $h_i = J_{ij} = 1$. Importantly, the ramp function $g(t)$ is chosen such that it interpolates between the two Hamiltonians $H_X$ and $H_{ZZ}$ from time $t=0$ to the final time $t=t_f$, i.e., $g(0) = 0, g(t_f) = 1$. Given the initial state $\rho = |+\rangle\langle+|^{\otimes n}$, the evolution begins in the ground state of the transverse Hamiltonian. This initial state and other stabilizer states like the all-zero state $|0\rangle\langle0|$ allow for very fast and exact evaluation of the final overlaps $\Tr[\rho P]$ for any Pauli operator $P$.

The phenomenon we leverage to benchmark our classical surrogate algorithm is the \textit{Kibble-Zurek} (KZ) mechanism~\cite{PhysRevLett.95.105701, KIBBLE1980183,Zurek1985}. It predicts the density of defects in the evolved state relative to the ground state of the final Hamiltonian (here $H_{ZZ}$) as the system crosses a critical point with a finite annealing time $t_f$. Defects here are detected by non-zero values for quantities like $1-\langle Z_iZ_j\rangle$ between any two sites $i,j$. In the limit of infinitely slowly ramping between $H_X$ and $H_{ZZ}$, the state remains in the instantaneous ground state and there will be no defects. 
For a 1D TFI model, the density of defects $n_{\text{def}}$ has been shown to scale as $n_{\text{def}} \propto t_f^{-1/2}$. Interestingly, the authors in Ref.~\cite{miessen2024benchmarking} conjecture that experiments on the heavy-hex topology should follow a very similar scaling because the topology is closer to 1D than 2D. 

Here we provide numerical evidence for this conjecture by simulating the time evolution of the time-dependent TFI model in Eq.~\eqref{eq:TFI} with 127 spins on a heavy-hex topology. We use a smaller time step of $dt=0.3$, use the full heavy-hex lattice, and simulate to significantly longer times than possible with the quantum device in Ref.~\cite{miessen2024benchmarking}. In this simulation, we use a combination of \textit{weight} truncation~\cite{angrisani2024classically} and our sine-expansion order truncation. Concretely, we truncate Pauli operators with Pauli weight of $W > 5$ (i.e., more than 5 non-identity Paulis) and with a number of sines $\expansionorder > 21$. 
Constructing the classical surrogate for the $Z_{62}Z_{63}$ Pauli operator in the middle of the lattice with 50 Trotter layers took under 30 seconds using one CPU thread on a laptop. We then record the density of defects on these lattice sites via $n_{\text{def}} = 1-\langle Z_{62}Z_{63}\rangle$ on the surrogate landscape at points corresponding to different final evolution times and annealing ramps. This took under one second per evaluation. 

Our results can be seen in Fig.~\ref{fig:kz-scaling}. The linear ramp is defined via $g(t) = t/t_f$, the square ramp via  $g(t) = (t/t_f)^2$, and the tanh ramp via $g(t) = \frac12 \left[\tanh(-3\left(1-\frac{t}{t_f}\right) + 3\frac{t}{t_f})+ 1\right]$ which sweeps over the center part of a rescaled tanh function.
Not only do we observe that all three annealing ramps exhibit the analytically known scaling of the defect density, but we can verify the quality of our results by observing that the norm of the Heisenberg-evolved observable is very high between $84\%$ and $99\%$ its original 2-norm after 50 circuit layers. Thus we see that our Pauli propagation simulation algorithm vastly exceeds its theoretical guarantees and, arguably, the capabilities of most current quantum hardware.

It is evident that the classical simulations could be pushed significantly further, for example, to attempt exploring when the tree-like scaling on the heavy-hex geometry transitions into a 2D scaling. This is coincidentally the regime where the classical simulation should become substantially harder. Our work merely highlights that PP surrogates with their fast re-evaluation capabilities can be used to explore entire families of quantum protocols, particularly using our sine-truncation in the small-angle regime, for later deployment on quantum devices.

\begin{figure}
    \centering
    \includegraphics[width=0.99\linewidth]{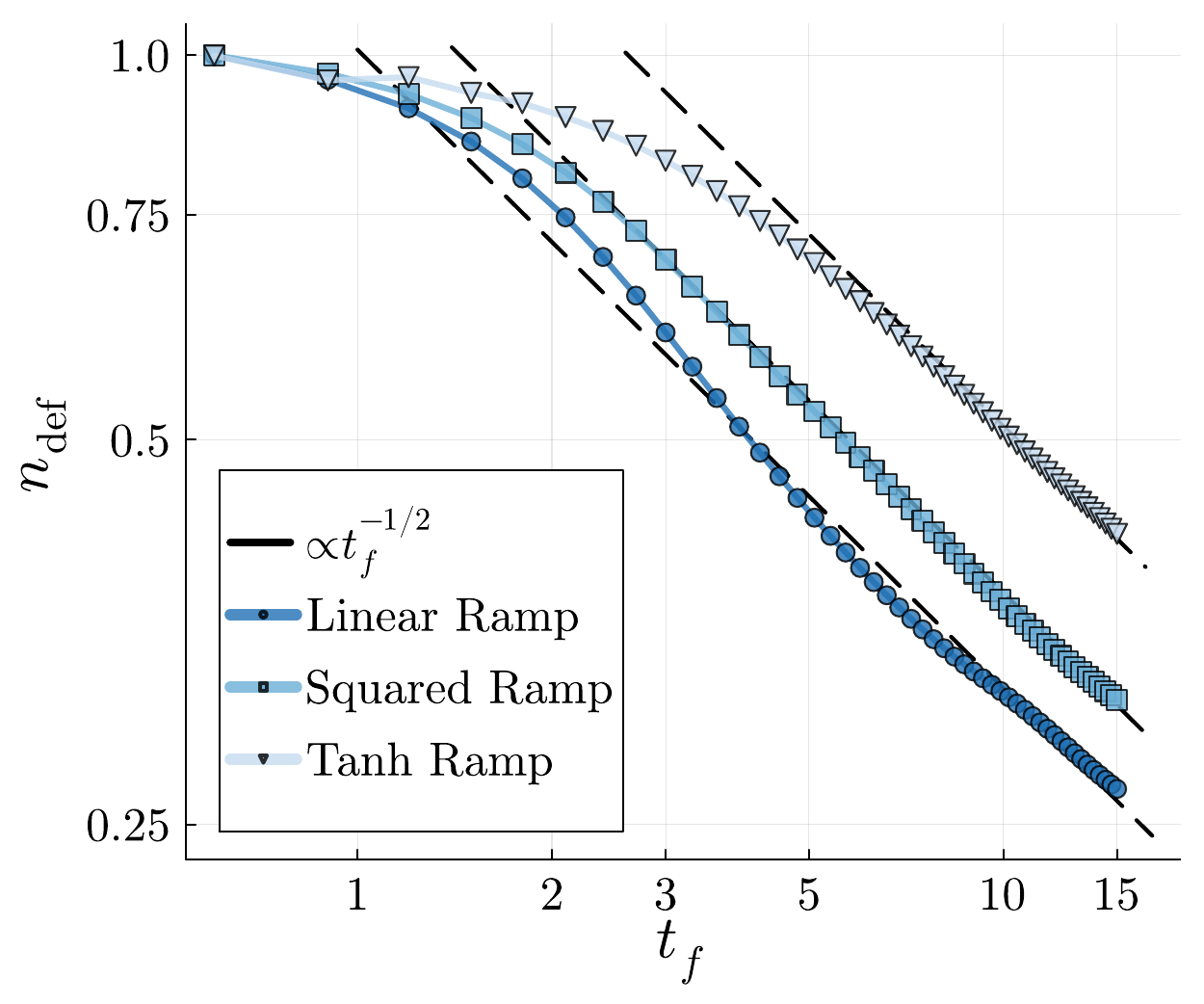}
    \caption{Testing the KZ scaling of defects $n_\text{def}$ in a time-dependent TFI model with 127 spins on a heavy-hex entangling topology. Using our Pauli propagation surrogate, we verify the conjectured scaling resembling $t_f^{-1/2}$ on the heavy-hex topology. The ramp function $g(t)$ in Eq.~\eqref{eq:TFI} is varied to confirm a similar scaling across various annealing protocols. The surrogation of the expectation landscape with 50 Trotter steps and $m=13550$ gates was done in under 30 seconds, with every expectation evaluation taking under one second.}
    \label{fig:kz-scaling}
\end{figure}

\section{Discussion}

This work described classically \textit{surrogating} a sub-region (dubbed a ``patch") of an expectation landscape produced by a parameterized quantum circuit. The surrogation procedure may require quantum resources but subsequent evaluations can be performed fully classically. This alleviates the load of quantum computers in a wide range of quantum algorithms, like VQAs, quantum machine learning, dynamical simulation, and even quantum sensing, by greatly reducing the necessary gate count and the number of measurements.

We presented three main scenarios demanding progressively more modest quantum resources. Section~\ref{sec:general_case} considered arbitrarily located patches in very general circuits acting on potentially unknown (but quantumly preparable) initial states. Here, the surrogation prescribes executing the deepest (inside this work) yet relatively undemanding, non-adaptive circuits and measurements. Section~\ref{sec:CliffordPert} considered patches centered around parameters for which the circuit is Clifford. This simplifies the task by enabling efficient classical simulation of the quantum circuit with our Pauli propagation algorithm. In this case, for states that cannot be classically simulated, but can be prepared on quantum hardware, our surrogates require only polynomially few Pauli measurements upon the initial state. And for classically tractable initial states $\rho$, whereby $\Tr[\rho P]$ with a Pauli string $P$ is efficiently classically evaluable, the surrogate can now be computed entirely classically.

We highlight that our scalings for efficient surrogatability and simulation align with prior work on the absence of barren plateau guarantees inside patches of width $\mathcal{O}(1/\sqrt{m})$~\cite{zhang2022escaping,park2023hamiltonian, wang2023trainability, park2024hardware, shi2024avoiding, puig2024variational, mhiri2024constrained}. Thus our findings here complement the conjecture connecting the provable absence of barren plateaus and efficient classical surrogatability~\cite{cerezo2023does}.

More positively, however, we have expanded the corpus of quantum-enhanced classical algorithms, consequently expanding the collection of problems that can be tackled with both classical and quantum resources. These hybrid approaches are emerging as a notable intermediary between fully classical and quantum settings, thus nuancing the threshold of practical quantum advantage~\cite{daley2022practical,herrmann2023quantum}. In particular, classical surrogation of a patch can augment a wide class of quantum algorithms, addressing shortcomings like that of insufficient coherence times and gate fidelities to reliably prepare ansatz states, but which can faithfully prepare and measure initial states. As quantum hardware improves, computations could be linked up \textit{in the middle} of the quantum circuit, as we have shown in Fig.~\ref{fig:error-bounds}, which in turn reduces the classical complexity. It may also alleviate the need for experimentally difficult parametrized unitaries that can be more straightforwardly applied classically. The surrogated patch also has utility as a \textit{symbolic} approximation to the parametrized landscape. In addition to being evaluated at particular points, it can be analytically studied and ergo integrated with, for example, symbolic and global minimization techniques~\cite{horst2000introduction}, as has been shown to improve VQA convergence~\cite{PhysRevResearch.4.023017}.

In general, we have shown that quantum preparation of a classical surrogate is often efficient and can be used to simplify an algorithm's prescribed quantum operations. 
We caution however that we have \textit{not} shown this process is necessarily faster than the original quantum algorithm running on compatible hardware. Constants and pre-factors may well mean, for example, that surrogation of a variational cost function can involve more total measurements and operations than direct quantum evaluation. This is indeed supported by our numerics in Appendix~\ref{sec:iterative-training-taylor-surrogate} where we find that Taylor surrogation outperforms our bounds, and may well have practical advantages, but has a broadly compatible performance to vanilla variational quantum algorithms. Even in the case of near-Clifford patches where the quantum resources required to surrogate are minimal, there could also end up being polynomial speed ups from running directly on quantum hardware when the initial state is quantum. Still, by adding these hybrid methods to our algorithmic arsenal and analyzing their performance, we hope to have \textit{patched} a hole in the community understanding of what is classically simulable, surrogatable, and worthwhile deploying to quantum computers.

\begin{acknowledgments}
RP acknowledges the support of the SNF Quantum Flagship Replacement Scheme (grant No. 215933). MSR acknowledges funding from the 2024 Google PhD Fellowship. AA and ZH acknowledge support from the Sandoz Family Foundation-Monique de Meuron program for Academic Promotion. TJ acknowledges support of the NCCR MARVEL, a National Centre of Competence in Research, funded by the Swiss National Science Foundation (grant number 205602). MC acknowledges supported by the Laboratory Directed Research and Development  program of Los Alamos National Laboratory (LANL) under project number 20230049DR, and by the Quantum Science Center (QSC), a National Quantum Information Science Research Center of the U.S. Department of Energy (DOE). This work was also supported by were supported by the U.S. DOE, Office of Science, Office of Advanced Scientific Computing Research, under Computational Partnerships program. ST acknowledges Exchange Faculty Travel Grant: TG168033 from Chulalongkorn University as well as funding from National Research Council of Thailand
(NRCT) [grant number N42A680126]. 
ST further acknowledges Thailand Science research and Innovation Fund Chulalongkorn University (IND\_FF\_69\_258\_2300\_062), and funding from The Office of the Permanent Secretary of the Ministry of Higher Education, Science, Research and Innovation.
This research has received funding support from the NSRF via the Program Management Unit for Human Resources \& Institutional Development, Research and Innovation [grant number B39G690073].

\end{acknowledgments}

\subsection*{Code availability}
The Pauli propagation simulations in this work were conducted with the \href{https://github.com/MSRudolph/PauliPropagation.jl}{PauliPropagation.jl} package publically available on GitHub. Code for other numerical experiments are available from the authors upon reasonable request.

\subsection*{Data availability}
The raw data for the numerical experiments are available from the authors upon reasonable request.

\bibliography{quantum,extrabib}

\onecolumngrid

\newpage
\appendix

\part{Appendix}
\parttoc 

\newpage

\section{Preliminaries}
\subsection{Notation}

    \noindent\begin{tabular}{ |p{2cm}|p{13.5cm}|  }
        \hline
        \multicolumn{2}{|c|}{\textbf{General notation table}} \\
        \hline
        Symbol & Definition  \\
        \hline
        $n$ & Number of qubits \\
        $\alv$ & A vector of variational real parameters and its $i^{\rm th}$ component \\
        $\alpha_i$ & $i^{\rm th}$ component of $\alv$ \\
        $\nparams$ & Number of variational parameters i.e., the size of $\alv$ \\
        $f(\alv)$ & The expectation value $\Tr[O \mathcal{E}_{\alv}(\rho)]$ with an observable $O$, an initial state $\rho$ and a parametrized quantum channel $\EC_{\alv}(\cdot)$ \\
        $\overline{f_{{\,\,}}}(\alv)$ & Approximate function that can be efficiently implemented.\\
        $\Delta f(\alv)$ & Difference between an approximate function and the original function $|\overline{f}(\alv) - f(\alv)|$ evaluated at $\alv$\\
        $\vol(\cenv, r)$ & Hypercube of width $2r$ around some fixed point $\cenv$ in the parameter space, $\vol(\cenv, r) := \{ \vec{\alpha} \} \; \; \text{such that} \; \; \alpha_i \in [\cen_i -r , \cen_i + r ]  \, , \, \forall \, \, i$ \\
        $ \uni(\cenv, r)$ & Uniform distribution of $\alv$ over $\vol(\cenv, r)$ also referred to as $\text{Unif}[\vol(\cenv, r)]$  \\
        $\fsur(\alv)$ & Surrogate of $f(\alv)$.\\
        $\widehat{f{{\,\,}}}(\alv)$ & Empirical etimator of $f(\alv)$. \\
        $\widehat{\fsur_{{\, \,}}}(\alv)$ & Empirical estimator of the surrogate.\\
        $f^{(P)}(\alv)$ & The subscript $(P)$ refers to $f^{(P)}(\alv) = \Tr[\rho U^\dagger (\alv) P U (\alv)]$.\\
        $\expansionorder$ & Order of expansion of a surrogate $\fsur(\alv)$ of $f(\alv)$.\\
        $N_s$ & Number of measurements.\\ 
        $\hat{o}$ & Empirical estimate of an observable with a state $\Tr[\rho O]$.\\
        $\hat{o}(\alv)$ & Empirical estimate of a back-propagated observable with a state $\Tr[\rho O(\alv)]$.\\
        $\mathcal{S}$ & Set of ``experimentally friendly'' observables. A typical choice is a set of Pauli i.e. $\SC\subseteq\PC_n$.\\
        $\Lambda$ & Upper-bound on the infinity norm of all the elements in $\mathcal{S}$, i.e. $\Lambda\geq \| P \|_\infty\; \forall P\in\mathcal{S}$.\\
        $\beta(\cdot)$ & $\beta: \mathcal{S}\to ]0,1]$ such that $\sum_{P\in\mathcal{S}}\beta(P) = 1$ and $\beta(P)>0\; \forall P\in \SC$.\\
        \hline
    \end{tabular}
\vspace{0.25 cm}

    \noindent\begin{tabular}{ |p{2cm}|p{13.5cm}|  }
        \hline
        \multicolumn{2}{|c|}{\textbf{General surrogate: notation table}} \\
        \hline
        Symbol & Definition  \\
        \hline
        $\gamma$& Constant such that the $k$-th partial derivatives are bounded as $|\partial_{i_1, i_2, ... i_k}\LC(\alv)| \leq  \gamma^k\norm{O}_\infty$\\
        $\mathcal{E}_{\alv}$ & Parametrised quantum channel.\\
        $\partial_{i_1,i_2,...,i_k}$ & Partial derivatives with respect to parameters $\{\alpha_{i_1},\alpha_{i_2},...,\alpha_{i_k}\}$ , i.e. $\partial_{i_1,i_2,...,i_k}:=\frac{\partial^k}{\prod_{j=1}^k\partial\alpha_{i_j}}$.\\
        $\partial^{\vec{k}}$& Partial derivatives of order $k_l$ with respect to each parameter $\alpha_l$ where $\vec{k}=(k_1,k_2,...,k_\nparams)$, i.e. $\partial^{\vec{k}}:=\prod_{l=1}^\nparams\frac{\partial^{k_l}}{\partial\alpha_{l}^{k_l}}$. \\
        $b_j$ & The set $\{ b_j\}$ is defined as  $\partial_{i_1, i_2, ..., i_k} f(\alv) = \sum_{j =1}^{N_{d} } b_j f(\alv_j) $.\\
        $N_d$ & Is the number of coefficients $b_j$.\\
        $b_0$ & $b_0 := \max_j |b_j|$.\\
        $\delta_i$ & $\delta_i = \alpha_i - \alpha_i^*$\\
        $\Delta_{i_1, ...,i_k}(\bm{x},\bm{x}^*) $ & $\Delta_{i_1, i_2, ...,i_k}(\bm{x},\bm{x}^*) = (x_{i_1} - x^*_{i_1})(x_{i_2} - x^*_{i_2}) ... (x_{i_k} - x^*_{i_k})$\\
        \hline
    \end{tabular}

   \noindent \begin{tabular}{ |p{2cm}|p{13.5cm}|  }
        \hline
        \multicolumn{2}{|c|}{\textbf{Pauli Propagation Surrogate: Notation table}} \\
        \hline
        Symbol & Definition  \\
        \hline
        $\mathcal{P}_n $ & $n$-qubit Pauli basis, i.e. $ \{I,X,Y,Z \}^{\otimes n}$.\\
        $\abs{P}$ & Weight of a Pauli $P\in\mathcal{P}_n$, i.e. number of qubits $P$ acts non-trivially on.\\
        $N_{\rm Paulis}$ & For $O= \sum_{P\in\PC_n} a_P P$, $N_{\rm Paulis}$ corresponds to the number of different $a_P\neq 0$.\\
        $a_P$ & Coefficients in which an observable is decomposed in the Pauli basis.\\
        $c_P(\alv)$ & Coefficients of propagated observable $O(\alv) = U(\alv)^\dagger O U(\alv)$ in the Pauli basis.\\
        ${O}(\alv)$ & Back-propagated observable. \\
        $\widetilde{O}(\alv)$ & Truncated version of the back-propagated observable. \\
        $O_{\omv}$ & Back-propagated observable along a certain path $\omv$.\\
        $\omv$ & Pauli-path.\\
        $\Phi_{\omv}(\alv)$ & Product of the coefficients obtained when propagating a Pauli $P$ through a Pauli path. \\
        $\phi_{\om_i}(\al_i)$ & Individual coefficients that are picked up after back-propagating a Pauli through one Pauli rotation. Check Eq.~\eqref{eq:pauli_path} for the possible values.\\
        $d_{\omv}$ & Overlap of the back-propagated Pauli along the path $\omv$ with the initial state $\Tr[\rho P_{\omv}]$.\\
        $n_{\rm pth}^{(\expansionorder)}$ & Number of paths found by the Pauli Propagation algorithm with truncation order $\expansionorder$.\\
        $T$ & Time that it takes to do each Pauli Propagation Step. \\
        \hline
    \end{tabular}

\subsection{Concentration of measure}
We recall some useful tail bounds which we will employ throughout the paper. For further details, see also\ \cite{dubhashi2009concentration} and references therein.
\begin{lemma}[Markov's inequality]
\label{lemma:markov}
    Let $X$ be a positive random variable (i.e. $X\geq 0$) and $\epsilon>0$. Then, we have

    \begin{equation}
        {\rm Pr}[X\geq \epsilon]\leq \frac{\Ebb[X]}{\epsilon}
    \end{equation}
\end{lemma}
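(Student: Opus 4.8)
The plan is to give the standard one-line proof of Markov's inequality by bounding the expectation from below using an indicator function. First I would fix $\epsilon > 0$ and observe that since $X \geq 0$, the random variable pointwise dominates the truncated version: $X \geq \epsilon \cdot \1[X \geq \epsilon]$, where $\1[\cdot]$ is the indicator of the event. This inequality holds because on the event $\{X \geq \epsilon\}$ the left side is at least $\epsilon$ and the right side equals $\epsilon$, while on the complement the left side is nonnegative and the right side vanishes.

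Next I would take expectations of both sides and use monotonicity and linearity of expectation, together with the fact that the expectation of an indicator is the probability of the event, i.e. $\Ebb[\1[X \geq \epsilon]] = \Pro[X \geq \epsilon]$. This yields
\begin{equation}
    \Ebb[X] \geq \epsilon \, \Pro[X \geq \epsilon] \, .
\end{equation}
Dividing through by $\epsilon > 0$ then gives the claimed bound $\Pro[X \geq \epsilon] \leq \Ebb[X]/\epsilon$.

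There is essentially no obstacle here: the only subtlety worth noting is that the statement is vacuous (but still true) when $\Ebb[X] = +\infty$, and that we need $X \geq 0$ for the pointwise domination step to hold. Since this is a textbook inequality quoted for later use, I would keep the proof to these two or three lines rather than elaborating further.
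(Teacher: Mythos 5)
Your proof is correct: the pointwise bound $X \geq \epsilon\,\1[X\geq\epsilon]$ followed by taking expectations is the standard argument, and the paper itself states this lemma without proof (deferring to a standard reference), so there is nothing to compare against. No gaps.
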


\begin{lemma}[Hoeffding's inequality ~\cite{hoeffding1963probability}] \label{lemma:hoeffdingsinequality}
Given $\epsilon,\delta \in (0,1), c>0$, let $N_s = \log(2/\delta)c^2/(2\epsilon^2)$
Suppose $X_1,X_2,\dots,X_{N_s}$ are independent random variables taking values in $[a,b]$ such that $\abs{a-b}\leq c$ . Let $X$ denote their mean.
Then for any $\epsilon>0$,
\begin{align}
    \Pr[\abs{X-\mathbb{E}[X]}\geq \epsilon] \leq \delta.
\end{align}
\end{lemma}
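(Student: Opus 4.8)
The plan is to establish this as a standard Chernoff bound: apply Markov's inequality (Lemma~\ref{lemma:markov}) to the exponential moment of the centered sum, control the moment generating function via Hoeffding's lemma, and optimize the free parameter. First I would set $S := \sum_{i=1}^{N_s}(X_i - \mathbb{E}[X_i])$, so that $X - \mathbb{E}[X] = S/N_s$, and reduce the two-sided claim to a one-sided one by a union bound, $\Pr[\abs{X-\mathbb{E}[X]}\geq\epsilon] \leq \Pr[S \geq N_s\epsilon] + \Pr[-S \geq N_s\epsilon]$, noting that the second term reduces to the first upon replacing each $X_i$ by $-X_i$ (which lies in an interval of the same length).

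For the one-sided bound, for any $\lambda>0$ I would apply Markov's inequality to the nonnegative random variable $e^{\lambda S}$ and use independence:
\begin{equation}
\Pr[S \geq N_s\epsilon] = \Pr\!\left[e^{\lambda S}\geq e^{\lambda N_s \epsilon}\right] \leq e^{-\lambda N_s\epsilon}\,\mathbb{E}\!\left[e^{\lambda S}\right] = e^{-\lambda N_s\epsilon}\prod_{i=1}^{N_s}\mathbb{E}\!\left[e^{\lambda(X_i-\mathbb{E}[X_i])}\right].
\end{equation}
The crucial ingredient is Hoeffding's lemma: if $Y$ is centered and takes values in an interval of length at most $c$, then $\mathbb{E}[e^{\lambda Y}]\leq e^{\lambda^2 c^2/8}$. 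I would prove this by convexity of $y\mapsto e^{\lambda y}$ on the interval (bounding it above by the chord through the endpoints), taking expectations and using $\mathbb{E}[Y]=0$ to write $\mathbb{E}[e^{\lambda Y}]\leq e^{\psi(\lambda)}$ for an explicit $\psi$ with $\psi(0)=\psi'(0)=0$ and $\psi''\leq c^2/4$, then a second-order Taylor expansion gives $\psi(\lambda)\leq \lambda^2 c^2/8$. Applying this to each $Y_i = X_i-\mathbb{E}[X_i]$, whose range has length at most $\abs{a-b}\leq c$, yields $\mathbb{E}[e^{\lambda S}]\leq e^{N_s\lambda^2 c^2/8}$ and hence $\Pr[S\geq N_s\epsilon]\leq e^{-\lambda N_s\epsilon + N_s\lambda^2 c^2/8}$.

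Finally I would optimize over $\lambda$, choosing $\lambda = 4\epsilon/c^2$ to minimize the exponent, which gives $\Pr[S\geq N_s\epsilon]\leq e^{-2N_s\epsilon^2/c^2}$, and likewise for $-S$, so that $\Pr[\abs{X-\mathbb{E}[X]}\geq\epsilon]\leq 2e^{-2N_s\epsilon^2/c^2}$. Substituting the prescribed value $N_s = \log(2/\delta)\,c^2/(2\epsilon^2)$ makes the exponent exactly $-\log(2/\delta)$, so the bound becomes $2\cdot(\delta/2)=\delta$, completing the proof. The only genuinely nontrivial step is Hoeffding's lemma, i.e.\ the convexity-plus-Taylor bound on the log-moment-generating function of a bounded centered variable; the remainder is routine manipulation of Markov's inequality and independence.
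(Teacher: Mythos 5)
Your proof is correct: it is the standard Chernoff-bound derivation (Markov's inequality applied to $e^{\lambda S}$, Hoeffding's lemma for the moment generating function of a bounded centered variable, optimization at $\lambda = 4\epsilon/c^2$), and the arithmetic checks out — the exponent becomes exactly $-\log(2/\delta)$ for the prescribed $N_s$. Note, however, that the paper does not prove this lemma at all; it states it as a known result with a citation to Hoeffding's original work, so there is no in-paper argument to compare against — your write-up simply supplies the textbook proof of a black-boxed ingredient.
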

\subsection{Central limit theorem and moments of a Normal distribution}
\label{sec:central_limit_th}
We present and explain the Central Limit theorem and discuss the moments of a random variable that follows a normal distribution. For more information we refer the reader to Ref.~\cite{billingsley1995probability}

\begin{theoremsup}[Lindeberg–Lévy: Central Limit theorem]\label{theorem:clt}
    Suppose that $\{\alpha_i\}_{i=1}^{\nparams}$ is an independent sequence of ${\nparams}$ random variables having the same distribution with mean $\Ebb[\alpha_i] = \mu$ and finite positive variance $\Var[\alpha_i] = \sigma_1^2$. If $S_{\nparams} = \sum_{i =  1}^{\nparams} \alpha_i$, then
    \begin{equation}
        X_{\nparams} = \frac{S_{\nparams} - {\nparams}\mu}{\sigma_1\sqrt{{\nparams}}} \underset{\nparams\to\infty}{\longrightarrow} \mathcal{N}(0,1)
    \end{equation}
    i.e. the random variable $X_{\nparams}$ tends to a normal distribution with mean $\mu_{\mathcal{N}} = 0$, and variance $\sigma_{\mathcal{N}} = 1$. 
\end{theoremsup}

Using \supt~\ref{theorem:clt} for a sequence of random variables $\{\alpha_i\}_{i=1}^{\nparams}$ with $\Ebb[\alpha_i] = 0$ we get that the probability distribution of the random variable $X_{\nparams}$ tends to behave as a normal distribution with variance $1$ and mean $0$, or in other words
\begin{equation}
        X_{\nparams} = \frac{S_{\nparams}}{\sigma_{\nparams}} \underset{\nparams\to\infty}{\longrightarrow} \mathcal{N}(0,1)
\end{equation}
where we have used that $\sigma_1 \sqrt{{\nparams}} = \sigma_{\nparams}$ as all the $\alpha_i$ are independent and have the same mean. 

The $k$-th moment of such distribution is defined as the expected value of the variable to the $k$-th power, i.e. $\Ebb[X_{\nparams}^k]$. This can be computed as follows~\cite{papoulis2002probability}
\begin{equation}
    \Ebb[X_{\nparams}^k] = \int_{-\infty}^\infty X_{\nparams}^k\frac{1}{\sqrt{2\pi}}e^{-\frac{X_{\nparams}^2}{2}} = 
    \begin{cases}
        0 \, {\rm if} \, k \, {\rm odd}\\
        (k-1)!! \, {\rm if} \, k \, {\rm even}
    \end{cases}
\end{equation}
where $a!!$ is the double factorial defined for $a\in \mathbb{N}$ as $a!!(a-1)!! = a!$, or in other words $a!! = a(a-2)(a-4)(a-6)...1$.

\subsection{Binomial inequalities}

In subsequent sections, we will make frequent use of the following binomial inequalities:
\begin{align}\label{eq:binom_ineq}
    \left(\frac{m}{k}\right)^k \leq \binom{m}{k} \leq \frac{m^k}{k!} < \left(\frac{em}{k}\right)^k \,,
\end{align}
where we assume $0\leq k\leq m$ and $e$ is the Euler number. 
In fact, the rightmost inequality is loose enough so that for $k \leq m$  we have
\begin{equation}\label{eq:cumbinom_ineq}
    \sum_{j=0}^k \binom{m}{j} \leq \left(\frac{em}{k}\right)^k \,.
\end{equation}
We also prove the following more general lemmas.

\begin{lemma}
For any $x\in[0,1]$ such that $k\leq mx$, we have
\begin{equation} \label{eq:binom_sum_first_powers_inequ}
    \sum_{j=0}^k\binom{m}{j}x^j \leq \left(\frac{emx}{k}\right)^k \;.
\end{equation}   
\end{lemma}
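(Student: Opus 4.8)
The plan is to reduce this to the already-stated inequality \eqref{eq:cumbinom_ineq}, namely $\sum_{j=0}^k\binom{m}{j}\le (em/k)^k$ for $k\le m$, by a rescaling argument, and then tighten it slightly to remove the Euler factor on the $x$-part. Write $S := \sum_{j=0}^k \binom{m}{j} x^j$. Since $x\in[0,1]$ and $j$ ranges from $0$ to $k$, each $x^j \ge x^k$ only goes the wrong way, so instead I would bound termwise using $\binom{m}{j}\le m^j/j!$ (the middle inequality of \eqref{eq:binom_ineq}) to get
\begin{equation}
    S \le \sum_{j=0}^k \frac{(mx)^j}{j!}\,.
\end{equation}
Now I want to show $\sum_{j=0}^k (mx)^j/j! \le (emx/k)^k$ whenever $k\le mx$. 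Set $y:=mx\ge k$. The claim is $\sum_{j=0}^k y^j/j! \le (ey/k)^k$.

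\textbf{Key step.} For the tail-of-exponential bound $\sum_{j=0}^k y^j/j! \le (ey/k)^k$ with $y\ge k$, I would argue as follows. Each term satisfies $y^j/j! \le y^k/k!$ when $j\le k\le y$, because the ratio of consecutive terms $\frac{y^{j+1}/(j+1)!}{y^j/j!} = \frac{y}{j+1}\ge \frac{y}{k}\ge 1$ for $j\le k-1$, so the sequence $y^j/j!$ is nondecreasing on $0\le j\le k$. Hence
\begin{equation}
    \sum_{j=0}^k \frac{y^j}{j!} \le (k+1)\frac{y^k}{k!}\,.
\end{equation}
This is slightly too weak by the factor $k+1$. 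A cleaner route: observe $\sum_{j=0}^k y^j/j! \le \sum_{j=0}^\infty y^j/j! = e^y$, but that is too lossy for large $y$. The right balance is to use $k!\ge (k/e)^k$ (from \eqref{eq:binom_ineq} with the standard $k!> (k/e)^k$), giving $y^k/k! \le (ey/k)^k$, and then absorb the $(k+1)$ prefactor by noting it is dominated once we are slightly more careful — in fact the standard fact is $\sum_{j=0}^k y^j/j!\le e^{k}(y/k)^{k}$ directly, which can be seen by writing $y^j/j! = (y/k)^k\cdot k^k y^{j-k}/j!$ and bounding $\sum_{j=0}^k k^k y^{j-k}/j! \le \sum_{i=0}^\infty k^i/i! \cdot (\text{telescoping})$... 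I would instead just invoke the elementary inequality $\sum_{j=0}^{k}\frac{y^j}{j!} \le \left(\frac{e y}{k}\right)^{k}$ for $y \ge k$, whose one-line proof is: the function $t\mapsto t^k/k!$ satisfies $y^j/j! \le \frac{y^k}{k!}\cdot\frac{k!}{j!\,y^{k-j}}$ and $\sum_{j\le k}\frac{k!}{j!y^{k-j}}\le \sum_{i\ge 0}\frac{k^i}{i!\,y^i}\le e^{k/y}\le e$ using $k!/j! \le k^{k-j}$ and $y\ge k$, then combine with $1/k! \le (e/k)^k$.

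\textbf{Main obstacle.} The only delicate point is squeezing the sum into exactly $(emx/k)^k$ rather than an extra polynomial or constant factor; the bookkeeping above ($k!/j!\le k^{k-j}$ together with $y\ge k$ to kill the residual geometric-type series by $e$) is what makes it tight. Everything else is routine: plug $y=mx$ back in, use $k\le mx$ which is exactly the hypothesis, and conclude $S\le (emx/k)^k$. I expect the authors' proof to be essentially this, possibly phrased via comparison with \eqref{eq:cumbinom_ineq} applied after the substitution $\binom{m}{j}x^j \le \binom{\lceil mx\rceil}{j}$-type monotonicity, but the exponential-tail route above is the cleanest.
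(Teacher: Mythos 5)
Your opening move is exactly the paper's: bound $\binom{m}{j}\le m^j/j!$ termwise to reduce to $\sum_{j=0}^k (mx)^j/j!\le (emx/k)^k$ for $mx\ge k$. But your completion of that reduction has a genuine flaw. The ``one-line proof'' you settle on rests on the inequality $\sum_{j\le k}\frac{k!}{j!\,y^{k-j}}\le \sum_{i\ge 0}\frac{k^i}{i!\,y^i}$, which requires $\frac{k!}{(k-i)!}\le \frac{k^i}{i!}$; that is false (you have conflated it with $\binom{k}{i}\le k^i/i!$ — the correct bound is only $\frac{k!}{(k-i)!}\le k^i$, and $\sum_i (k/y)^i$ is not bounded by $e$ as $y\downarrow k$). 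Indeed the intermediate claim you are driving at, $\sum_{j=0}^k y^j/j!\le e\,y^k/k!$, is itself false: at $k=y=3$ the left side is $13$ while the right side is $4.5e\approx 12.23$. (There is also a residual factor-of-$e$ mismatch in your final combination with $1/k!\le(e/k)^k$, which would need the sharper $k!\ge e(k/e)^k$ to disappear.) Your earlier attempt via $(k+1)\,y^k/k!$ you correctly diagnose as too weak, so as written no branch of the argument closes.

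The fix is the paper's much simpler step: write each term as $\frac{(mx)^j}{j!}=\left(\frac{mx}{k}\right)^j\frac{k^j}{j!}$, note that $mx/k\ge 1$ makes $\left(\frac{mx}{k}\right)^j\le\left(\frac{mx}{k}\right)^k$ for $j\le k$, pull that factor out of the sum, and bound the remainder by $\sum_{j=0}^k \frac{k^j}{j!}\le e^k$. This yields $\left(\frac{mx}{k}\right)^k e^k=\left(\frac{emx}{k}\right)^k$ with no delicate bookkeeping. In short: keep your first line, discard the exponential-tail machinery, and factor out $(mx/k)^k$ instead.
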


\begin{proof}
    \begin{equation}
        \sum_{j=0}^k\binom{m}{j}x^j \leq \sum_{j=0}^k \frac{(mx)^j}{j!}=\sum_{j=0}^k \left(\frac{mx}{k}\right)^j\frac{k^j}{j!} \leq \left(\frac{mx}{k}\right)^k \sum_{j=0}^k \frac{k^j}{j!} \leq \left(\frac{emx}{k}\right)^k\;,
    \end{equation}
        where we first use Eq.~\eqref{eq:binom_ineq}. The second inequality is obtained from the condition $k\leq xm$ i.e. $\left(\frac{xm}{k}\right)^j$ increases with $j$ so reaches its maximum for $j=k$. The last inequality is obtained by adding terms in the sum (up to $j=\infty$) and recognising the exponential series. 
\end{proof}
\begin{lemma}
\label{lem:bin-up}
 For any $x $ such that $x \leq k/m$ we have
\begin{equation}\label{eq:binom_sum_last_powers_ineq}
    \sum_{j=k}^m\binom{m}{j}x^j \leq \left(\frac{emx}{k}\right)^k \;.
\end{equation}   
\end{lemma}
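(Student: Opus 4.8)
The plan is to prove Lemma~\ref{lem:bin-up} by the same strategy as the preceding lemma, but now controlling the \emph{tail} of the binomial series rather than its head. First I would bound each term using the rightmost inequality in Eq.~\eqref{eq:binom_ineq}, namely $\binom{m}{j} \leq (em/j)^j$, which gives
\begin{equation}
    \sum_{j=k}^m \binom{m}{j} x^j \leq \sum_{j=k}^m \left(\frac{emx}{j}\right)^j \;.
\end{equation}
The crucial observation is that under the hypothesis $x \leq k/m$, i.e.\ $mx \leq k$, the quantity $emx/j$ is at most $ek/j \leq e$ for every $j \geq k$; more importantly, for $j \geq k$ the function $j \mapsto (emx/j)^j$ is decreasing, since its base $emx/j \leq emx/k \leq e$ stays bounded while the exponent grows and the base shrinks. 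Hence each term in the tail is at most the $j=k$ term, $(emx/k)^k$.

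The remaining work is to convert ``each term is at most $(emx/k)^k$'' together with the fact that the terms decay geometrically into the clean bound $(emx/k)^k$ with no extra constant. Here I would be slightly more careful: write $\rho := emx/k \leq e \cdot (mx/k) \leq e \cdot 1$, but actually I want a ratio strictly controlling consecutive terms. Setting $a_j := (emx/j)^j$, one checks $a_{j+1}/a_j = \frac{(emx)^{j+1}}{(j+1)^{j+1}} \cdot \frac{j^j}{(emx)^j} = \frac{emx}{j+1}\left(\frac{j}{j+1}\right)^j = \frac{emx}{j+1} \cdot \frac{1}{(1+1/j)^j}$. Since $(1+1/j)^j \geq 2$ for $j \geq 1$ and $emx/(j+1) \leq emx/k \cdot k/(j+1)$, for $j \geq k$ this ratio is at most $\frac{emx}{2(j+1)} \leq \frac{ek}{2(k+1)} < e/2 < \tfrac{3}{2}$, which is not quite $<1$.

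So a cleaner route: simply bound $\sum_{j=k}^m a_j \leq \sum_{j=k}^m (emx/k)^k$ is too lossy (it gives a factor $m-k+1$). Instead I would mimic the previous lemma's final step directly on the original sum: using $\binom{m}{j} \leq m^j/j!$ from Eq.~\eqref{eq:binom_ineq},
\begin{equation}
    \sum_{j=k}^m \binom{m}{j} x^j \leq \sum_{j=k}^m \frac{(mx)^j}{j!} = \sum_{j=k}^m \left(\frac{mx}{k}\right)^j \frac{k^j}{j!} \leq \left(\frac{mx}{k}\right)^k \sum_{j=k}^m \frac{k^j}{j!} \leq \left(\frac{mx}{k}\right)^k \sum_{j=0}^\infty \frac{k^j}{j!} = \left(\frac{mx}{k}\right)^k e^k = \left(\frac{emx}{k}\right)^k \;,
\end{equation}
where the second inequality uses that $mx/k \leq 1$ so $(mx/k)^j \leq (mx/k)^k$ for $j \geq k$, and the third extends the sum to infinity and recognizes the exponential series. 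This is exactly parallel to the proof of Eq.~\eqref{eq:binom_sum_first_powers_inequ}, with the roles of the monotone factor reversed: there $(xm/k)^j$ was increasing (since $k \leq xm$) and the factored-out piece was $(xm/k)^k$ as an upper bound at $j=k$; here $(mx/k)^j$ is decreasing (since $xm \leq k$) and the same factored piece upper-bounds every term in the range $j \geq k$. The main (minor) obstacle is just getting this monotonicity direction right and making sure the factored-out exponent is $k$ in both lemmas even though one sums low powers and the other high powers; once that bookkeeping is pinned down, the exponential-series tail bound closes the argument with no wasted constants.
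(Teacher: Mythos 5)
Your final argument is exactly the paper's proof: bound $\binom{m}{j}\leq m^j/j!$, factor out $(mx/k)^k$ using that $(mx/k)^j$ is decreasing in $j$ when $mx\leq k$, and extend $\sum_{j=k}^m k^j/j!$ to the full exponential series $e^k$. The first half of your proposal (via $\binom{m}{j}\leq(em/j)^j$ and a ratio test) is a detour you correctly abandoned; the route you settled on is correct and identical to the paper's.
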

\begin{proof}
The proof of Eq.~\eqref{eq:binom_sum_last_powers_ineq} is similar to Eq.~\eqref{eq:binom_sum_first_powers_inequ} and reads as follows
    \begin{align}
        \sum_{j=k}^m\binom{m}{j}x^j \leq \sum_{j=k}^m x^j \frac{m^j}{j!}
        = \sum_{j=k}^m  \left(\frac{xm}{k}\right)^j \frac{k^j}{j!}
        \leq \left(\frac{xm}{k}\right)^k \sum_{j=k}^m   \frac{k^j}{j!}
        \leq\left(\frac{emx}{k}\right)^k\;,
    \end{align}
    where we first use Eq.~\eqref{eq:binom_ineq}. The second inequality is obtained from the condition $xm\leq k$ i.e. $\left(\frac{xm}{k}\right)^j$ decreases with $j$ so reaches a maximum for $j=k$. The last inequality is obtained by adding terms in the sum (from $j=0$ to $j=\infty$) and recognizing the exponential series. 
\end{proof}

\subsection{Lagrange multipliers}
In this section we present the method of Lagrange multipliers to find the minima of a function subjected to some constraints. We refer the reader to Ref.~\cite{protter2012intermediate}. The optimisation algorithm works as follows. \\

\textbf{Lagrange multipliers method: optimisation algorithm.}\label{alg:lagrange}
\begin{enumerate}
    \item Let $f(\vec{x})$ to be the function to optimise.
    \item Let $g(\vec{x}) = 0$ to be the constrain function.
    \item Define the Lagrangian: $\mathcal{L}(\vec{x}, \lambda) = f(\vec{x}) + \lambda g(\vec{x})$. 
    \item Compute the derivatives: $\frac{\partial}{\partial x_i}\mathcal{L}(\vec{x},\lambda)$ and $\frac{\partial}{\partial \lambda}\mathcal{L}(\vec{x},\lambda)$.
    \item Solve the following system of equations
    $$      
                \begin{cases}
                   \frac{\partial}{\partial x_i}\mathcal{L}(\vec{x},\lambda) = 0\,,\\
                    \frac{\partial}{\partial \lambda}\mathcal{L}(\vec{x},\lambda) = 0\,.
                    \end{cases}
            $$
    \item Discard $\lambda^{(sol)}$.
    \item The points $\vec{x}^{\rm (sol)}$ are the solution to the problem.
\end{enumerate}

\subsection{Taylor remainder theorem}
We now introduce the Taylor Remainder Theorem, which will be used below to bound the error arising from making a $k$-order expansion in the expectation function. For further details on the Taylor reminder theorem we refer the reader to Ref.~\cite{taylorbook}.
\begin{theoremsup}[Taylor Reminder Theorem] \label{thm:reminder_taylor}
Consider a multivariate differentiable function $f(\vec{x})$ such that $f: \mathbb{R}^\nparams \rightarrow \mathbb{R}$ and some positive integer $\expansionorder$. The function $f(\vec{x})$ can be expanded around some fixed point $\vec{x}^*\in \mathbb{R}^\nparams$ as
\begin{align}
    f(\bm{x}) = \sum_{k = 0}^{\expansionorder} \, \sum_{i_1, i_2, ...,i_k = 1}^\nparams \frac{1}{k!} \partial_{i_1, i_2, ...,i_k}f(\bm{x}^*)\Delta_{i_1, i_2, ...,i_k}(\bm{x},\bm{x}^*) + \epsilon_{\expansionorder,\vec{x}^*}(\vec{x}) \;,
\end{align}
where we are denoting
\begin{align}
    &\partial_{i_1, i_2, ...,i_k}f(\bm{x}^*) := \left( \frac{\partial^k  f(\bm{x})}{\partial x_{i_1} \partial x_{i_2} ... \partial x_{i_k}} \right)\bigg|_{\vec{x} := \vec{x}^*}\,,\\
    &\Delta_{i_1, i_2, ...,i_k}(\bm{x},\bm{x}^*) = (x_{i_1} - x^*_{i_1})(x_{i_2} - x^*_{i_2}) ... (x_{i_k} - x^*_{i_k})\,.
\end{align}

Then, the remainder of this function $ \epsilon_{\expansionorder,\vec{x}^*}(\vec{x})$ can be expressed as follows
\begin{align}
     \epsilon_{\expansionorder,\vec{x}^*}(\vec{x}) = \sum_{i_1, i_2, ...,i_{\expansionorder+1} = 1}^\nparams \frac{1}{(\expansionorder+1)!} \partial_{i_1, i_2, ...,i_{\expansionorder+1}}f(\tilde{\bm{x}})\Delta_{i_1, i_2, ...,i_{\expansionorder+1}}(\bm{x},\bm{x}^*) \;,
\end{align}
with $\tilde{\bm{x}} = \tau \vec{x} + (1-\tau) \vec{x}^*$ for some $\tau \in [0,1]$. 
\end{theoremsup}

\section{Shot noise analysis for estimating observables with restricted measurements}
\label{app:estimating-observables}

In this section we analyse the effect of shot noise for different strategies of computing families of observables. We will use these results in the later appendices to compute the effect of shot noise for our different surrogation strategies. 

\medskip

Given an observable $O$ and a quantum state $\rho$, we consider the task of estimating the expectation values $\Tr[\rho O]$ with quantum experiments. In many practical settings, a direct measurement of the observable $O$ on the state $\rho$ might be experimentally unfeasible. Thus, we define a set $\mathcal{S}$ of ``experimentally friendly'' observables such that we can decompose $O$  as
\begin{align}
    O = \sum_{P\in \mathcal{S}} a_P P \, , 
\end{align}
where $|S|$ is of a manageable size and each $P$ is easily measurable. 
A typical choice of $ \mathcal{S}$ is a subset of the Pauli basis, i.e. $\mathcal{P}_n\coloneqq\{I,X,Y,Z\}^{\otimes n}$. While in Sections\ \ref{app:sine-cosine-surrogate-non-Clifford} and\ \ref{app:smallanglePP} we will precise consider such basis, we here make no assumption, and leave $\mathcal{S}$ as a general set of observables. 
By measuring each $P\in\mathcal{S}$ individually, we can estimate an approximate value of $\Tr[O\rho]$. We start by providing a general measurement protocol for estimating 
$\Tr[\rho O]$ up to a desired error $\epsilon$.

\begin{lemma}[Parametric estimator with restricted measurements]
\label{lemma:hoeffding-obs}
Let $O = \sum_{P\in \mathcal{S}} a_P P$, where $a_P$ are non-zero real coefficients and $\mathcal{S}$ is a set of observables, for which each element can be measured efficiently. Moreover, assume that $\norm{P}_\infty \leq \Lambda$ for all $P\in \mathcal{S}$. Let $\beta : \mathcal{S} \rightarrow ]0,1]$ be a probability distribution whose support is $\mathcal{S}$ (i.e., $\beta(\cdot)$ satisfies $\sum_{P\in \mathcal{S}} \beta(P) =1$ and $\beta(P)>0$ $\forall P\in\mathcal{S}$).  \begin{enumerate}
    \item There exists a randomized measurement protocol which uses $\Lambda^2/\epsilon^2$ measurements on different copies of an unknown state $\rho$, to output an estimate $\hat{o}$ satisfying
\begin{align}
    \sqrt{\Ebb[\left(\hat{o} - \Tr[O \rho] \right)^2]} <  \,   \epsilon \sqrt{\sum_{P\in\mathcal{S}}\frac{a_P^2}{\beta(P)}}.
\end{align}
    \item There exists a probabilistic measurement protocol which uses $2\Lambda^2 \log(2/\delta)/\epsilon^2$ measurements on different copies of an unknown state $\rho$, to output an estimate $\hat{o}$ satisfying
\begin{align}
    \abs{\hat{o} - \Tr[O \rho]} < \epsilon \cdot  \max_{P\in\mathcal{S} } \frac{\abs{a_{P}}}{\beta(P)},
\end{align}
with probability at least $1-\delta$.
\end{enumerate}

\end{lemma}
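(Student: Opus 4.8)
The plan is to build the estimator by importance sampling over $\mathcal{S}$ according to the distribution $\beta$, and then apply the concentration tools already recorded in the excerpt (Markov's inequality for the second-moment bound, Hoeffding's inequality for the high-probability bound). First I would define the basic single-shot random variable: draw $P \sim \beta$, measure $P$ on a fresh copy of $\rho$ obtaining outcome $\mu_P \in [-\Lambda, \Lambda]$ (an eigenvalue of $P$, so $\mathbb{E}[\mu_P \mid P] = \Tr[\rho P]$ and $|\mu_P| \leq \|P\|_\infty \leq \Lambda$), and set the single-shot estimator
\begin{equation}
    Y = \frac{a_P}{\beta(P)}\,\mu_P \, .
\end{equation}
A quick calculation gives unbiasedness, $\mathbb{E}[Y] = \sum_{P\in\mathcal{S}} \beta(P) \frac{a_P}{\beta(P)} \Tr[\rho P] = \Tr[\rho O]$, which is the key structural fact.

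For part 1, I would bound the variance of $Y$. Since $\mathrm{Var}[Y] \leq \mathbb{E}[Y^2] = \sum_{P\in\mathcal{S}} \beta(P) \frac{a_P^2}{\beta(P)^2} \mathbb{E}[\mu_P^2 \mid P] \leq \Lambda^2 \sum_{P\in\mathcal{S}} \frac{a_P^2}{\beta(P)}$, averaging $N_s = \Lambda^2/\epsilon^2$ i.i.d.\ copies $Y_1,\dots,Y_{N_s}$ and calling the mean $\hat{o}$ gives $\mathbb{E}[(\hat{o} - \Tr[\rho O])^2] = \mathrm{Var}[\hat o] = \mathrm{Var}[Y]/N_s \leq \epsilon^2 \sum_{P\in\mathcal{S}} \frac{a_P^2}{\beta(P)}$; taking square roots yields the stated bound (with a strict inequality coming from the strict inequality $\mathrm{Var}[\mu_P\mid P] < \Lambda^2$ whenever $P$ is not $\pm\Lambda$ times a projector-difference saturating the bound, or simply from $\mathrm{Var}[Y] < \mathbb{E}[Y^2]$).

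For part 2, instead of importance sampling the Pauli index, I would allocate the $N_s = 2\Lambda^2 \log(2/\delta)/\epsilon^2$ shots deterministically across $\mathcal{S}$ in proportion to $\beta(P)$ — i.e.\ spend $N_s \beta(P)$ shots on each $P$ — estimate each $\Tr[\rho P]$ by its empirical mean $\hat{p}_P$, and set $\hat{o} = \sum_P a_P \hat p_P$. Each $\hat p_P$ is an average of $N_s\beta(P)$ bounded i.i.d.\ variables in $[-\Lambda,\Lambda]$, so Hoeffding (Lemma~\ref{lemma:hoeffdingsinequality}) gives $|\hat p_P - \Tr[\rho P]| \leq \epsilon\, / (\Lambda\,\max_{P'} |a_{P'}|/\beta(P'))$ except with probability... actually the cleanest route is to phrase $\hat o$ directly as a single average of $N_s$ bounded variables: the $j$-th shot, assigned to Pauli $P_j$, contributes $\frac{a_{P_j}}{\beta(P_j)}\mu_{P_j} \cdot \beta(P_j) / (\text{normalization})$... to avoid this bookkeeping I would instead just re-use the randomized estimator $Y$ from part 1, which already lies in the bounded range $|Y| \leq \Lambda \max_{P\in\mathcal{S}} |a_P|/\beta(P) =: c$, and apply Hoeffding directly to the average of $N_s = 2c^2\log(2/\delta)/\epsilon^2 \cdot (1/\max \ldots)^2$... the constants in the statement correspond exactly to choosing the Hoeffding range as $2c$ with $c = \Lambda \max_P |a_P|/\beta(P)$, giving $\Pr[|\hat o - \Tr[\rho O]| \geq \epsilon \max_P |a_P|/\beta(P)] \leq \delta$ for $N_s \geq 2\Lambda^2\log(2/\delta)/\epsilon^2$. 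The main obstacle — really the only subtlety — is getting the bounded range of the single-shot estimator right so that the Hoeffding constant matches the stated $2\Lambda^2\log(2/\delta)/\epsilon^2$; everything else is unbiasedness plus a variance/range computation. I would present part 2 via the deterministic $\beta$-proportional allocation since that makes each measured quantity manifestly an average of variables in $[-\Lambda,\Lambda]$ and the scaling of the error by $\max_P |a_P|/\beta(P)$ transparent through the triangle inequality $|\hat o - \Tr[\rho O]| \leq \sum_P |a_P|\,|\hat p_P - \Tr[\rho P]|$.
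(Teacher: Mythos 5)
Your proposal is correct and matches the paper's proof essentially step for step: the same importance-sampled single-shot estimator $Y = a_P\mu_P/\beta(P)$, unbiasedness, the second-moment bound $\mathbb{E}[Y^2]\leq \Lambda^2\sum_P a_P^2/\beta(P)$ divided by $N_s$ for part 1, and Hoeffding applied to the same estimator with range $2\Lambda\max_P|a_P|/\beta(P)$ for part 2. The detour through a deterministic $\beta$-proportional shot allocation in part 2 is unnecessary (and would require a union bound over $\mathcal{S}$ that changes the constants); the randomized-estimator route you fall back on is exactly what the paper does.
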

\begin{proof}
Given $N_s\geq 1$,  for $i = 1,2,\dots, N_s$, we perform the following steps:
\begin{enumerate}
    \item Sample an operator $P_i\in \SC$ with probability $\beta(P_i)$\,.
    \item Measure $P_i$ on (an unused copy of) the state $\rho$\,.
    \item Denote the measurement outcome as $\hat x_i$\,.
    \item Define the random variable
    \begin{align}
        \widehat X_i = \frac{a_{P_i}}{\beta(P_i)} \hat x_i. 
    \end{align}
\end{enumerate}
We notice that $ \widehat X_i$ is an unbiased estimator for $\Tr[\rho O]$:
\begin{align}
    \mathbb{E}  \widehat X_i = \sum_{P\in \SC} \beta(P) \frac{a_{P}}{\beta(P)} \Tr[P\rho] = \Tr[O\rho].
\end{align}

We also note that
\begin{align}
    \abs{\widehat{X_i}} \leq \Lambda \max_{P\in\SC} \frac{\abs{a_P}}{\beta(P)},
\end{align}
and therefore $\widehat{X_i}$ takes value in an interval of size at most $2\Lambda\max_{P\in\SC}\left\{ \frac{\abs{a_P}}{\beta(P)}\right\}$.
Let $\hat{o}$ be the mean of $\widehat{X}_1,\widehat{X}_2,\dots,\widehat{X}_{N_s}$. By Lemma~\ref{lemma:hoeffdingsinequality} (Hoeffding's inequality), we have that
\begin{align}
     \Pr\left[\abs{\hat{o} -  \Tr[O\rho]} < \epsilon\cdot  \Lambda \max_{P\in\SC}\frac{\abs{a_P}}{\beta(P)}\right] \geq 1 - \delta,
\end{align}
provided that $N_s \geq 2\log(2/\delta)/\epsilon^2$ which proves the second statement of the lemma.

Moreover, we have 
\begin{align}
    \Ebb[\left(\hat{o} - \Tr[O \rho] \right)^2] &= \Var[\hat{o}] \\
    &= \frac{1}{N_s^2}\sum_{i=1}^{N_s}\Var[\widehat{X_i}] \\
    &\leq \frac{1}{N_s^2}\sum_{i=1}^{N_s}\Ebb[\widehat{X_i}^2] \\
    & = \sum_{P\in\SC}\frac{a_P^2}{\beta(P)}\frac{1}{N_s^2}\sum_{i=1}^{N_s}\Ebb[\hat{x}_i^2] \\
    &\leq \sum_{P\in\SC}\frac{a_P^2\Lambda^2}{\beta(P) N_s} \;, \label{eq:general-variance-single-estimator-no-params}
\end{align}
where the last inequality is due to $|\hat{x}_i|\leq \Lambda$. This proves the first statement of the lemma by setting $N_s=\Lambda^2/\epsilon^2$.
\end{proof}
For any given values of the coefficients $a_P$, one can analytically compute the distribution $\beta(P)$ minimizing the function 
\begin{align}
    \max_{P\in\SC} \frac{\abs{a_{P}}}{\beta(P)}\;,
\end{align} 
and run the corresponding measurement protocol, as we demonstrate in the following section. Notice that we assumed $\beta(P)>0$ for $a_P\neq 0$. This condition will naturally be met with the following settings for $\beta$.

\subsection{Warm-up: estimating a single observable}
\label{sec:warm-up}
The protocol presented in Lemma~\ref{lemma:hoeffding-obs}  is parameterized by a distribution \(\beta\), which defines the probability of measuring each observable \( P\in \SC \) in a single round of the protocol. After \( N_s = \Omega(\log(1/\delta)/\epsilon^2) \) rounds, each observable \( P\in \SC \) is measured \( \beta(P) \times N_s \) times, in expectation. Intuitively, more measurement shots should be allocated to the observables \( P \) that have larger coefficients \( a_P \) (in absolute value). The following corollary formalizes this intuition.
\begin{corollary}
 Let $O = \sum_{P\in \SC} a_P P$ be an observable.
There exists a randomized measurement protocol which uses $2\norm{\boldsymbol{a}}_1^2\log(2/\delta)\Lambda^2/\epsilon^2$ measurements on different copies of an unknown state $\rho$, which outputs an estimate $\hat{o}$ satisfying
\begin{align}
    \abs{\hat{o} - \Tr[O \rho]} < \epsilon.
\end{align}   
\label{cor:warm-up}
\end{corollary}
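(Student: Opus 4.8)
The plan is to apply the second statement of Lemma~\ref{lemma:hoeffding-obs} with a specific choice of the sampling distribution $\beta$ and then rescale the accuracy parameter. First I would set $\beta(P) = \abs{a_P}/\norm{\vec{a}}_1$ for each $P\in\SC$. This is a legitimate probability distribution on $\SC$: the coefficients $a_P$ are nonzero by hypothesis, so $\beta(P)>0$ everywhere on $\SC$, and $\sum_{P\in\SC}\beta(P) = \norm{\vec{a}}_1^{-1}\sum_{P\in\SC}\abs{a_P} = 1$.

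With this choice the error-controlling quantity appearing in Lemma~\ref{lemma:hoeffding-obs} simplifies to $\max_{P\in\SC}\abs{a_P}/\beta(P) = \norm{\vec{a}}_1$. A short balancing argument (equivalently, the method of Lagrange multipliers) shows that this $\beta$ in fact minimizes $\max_{P\in\SC}\abs{a_P}/\beta(P)$ over all distributions supported on $\SC$ — the maximum is smallest precisely when all the ratios $\abs{a_P}/\beta(P)$ are equal — so this is the optimal protocol within the family of Lemma~\ref{lemma:hoeffding-obs}; optimality is not actually needed for the statement, however.

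It then only remains to rescale. Invoking the second part of Lemma~\ref{lemma:hoeffding-obs} with target accuracy $\epsilon' := \epsilon/\norm{\vec{a}}_1$ in place of $\epsilon$, the protocol uses
\begin{align}
    N_s = \frac{2\Lambda^2\log(2/\delta)}{\epsilon'^2} = \frac{2\norm{\vec{a}}_1^2\Lambda^2\log(2/\delta)}{\epsilon^2}
\end{align}
measurements on fresh copies of $\rho$ and outputs $\hat o$ obeying $\abs{\hat o - \Tr[O\rho]} < \epsilon'\,\norm{\vec{a}}_1 = \epsilon$ with probability at least $1-\delta$, which is exactly the claim. There is no real obstacle here — the argument is pure bookkeeping of constants — but it is worth flagging that this same two-step template (choose $\beta$ proportional to the magnitudes of the coefficients, then rescale $\epsilon$) is precisely the one that will be reused for parametrized observables with the ``effective'' $1$-norm allocation in the later appendices.
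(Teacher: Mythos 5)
Your proposal is correct and matches the paper's proof exactly: the paper likewise invokes the second part of Lemma~\ref{lemma:hoeffding-obs} with $\beta(P) = \abs{a_P}/\norm{\boldsymbol{a}}_1$, for which $\max_P \abs{a_P}/\beta(P) = \norm{\boldsymbol{a}}_1$, and the rescaling of $\epsilon$ is the same bookkeeping. Your aside on optimality via Lagrange multipliers is also consistent with the paper's discussion in Appendix~\ref{app:optimal-beta-lagrange}.
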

\begin{proof}
    The corollary follows from  Lemma~\ref{lemma:hoeffding-obs}  by setting $\beta(\cdot)$ as
\begin{align}
    \beta(P) = \frac{|a_P|}{\sum_{P'\in \SC}|a_{P'}|} := \frac{|a_P|}{\norm{\boldsymbol{a}}_1}.
\end{align}
\end{proof}
We note that this shot allocation strategy is not novel and has been explored in a number of contexts including the variational quantum algorithms~\cite{arrasmith2020operator,rubin2018application}.
Crucially, this choice of \(\beta\) minimizes the loss function \( \mathcal{L}(\beta)=\max_P {|a_P|}/{\beta(P)}\), which can be verified using Lagrange multipliers (see Appendix~\ref{app:optimal-beta-lagrange}). One might wonder whether a simpler protocol, such as allocating an equal probability to all operators in $\SC$, could still achieve comparable performance. However, a uniform allocation of measurement shots becomes highly suboptimal when the observable \( O \) is concentrated around a few elements of $\SC$. In such cases, a uniform strategy would result in an exponentially larger sample complexity, as we demonstrate in the following discussion.

\bigskip

\noindent\textbf{Comparison with uniform allocation.} One of the simplest choices for $\beta(\cdot)$ is the uniform distribution over the set $\SC$, that is
\begin{align}
    \beta(P) = \begin{cases}
        \frac{1}{\abs{\SC}} & \text{ if $P\in \SC$,}
        \\ 0 & \text{otherwise},
    \end{cases}
\end{align}
which corresponds to measuring each observable (in $\SC$) $N_s/\abs{\SC}$ times in expectation. 
By Lemma~\ref{lemma:hoeffding-obs}, we have that $N_s=2\log(2/\delta)/\epsilon^2$ measurement shots are sufficient to ensure
\begin{equation}
    |\hat{o}_{\mathrm{uniform}}-\Tr[O\rho]|\leq \epsilon \abs{\SC} \max_P {\abs{c_{P}}}< \epsilon \Lambda \abs{\SC} \;, \label{eq:uniform} 
\end{equation}
with probability at least $1-\delta$.
Setting $\epsilon = \frac{\tilde{\epsilon}}{\Lambda\abs{\SC}}$, we obtain that $N_s=2\abs{\SC}^2\Lambda^2\log(2/\delta)/\tilde{\epsilon}^2$ many measurements are required to achieve an error $\tilde{\epsilon}$ with probability at least $1-\delta$.
To demonstrate the limitations of this measurement protocol, consider the following extreme example of an observable:
\[
    O_{\mathrm{peaked}} = Z^{\otimes n} + \tau \sum_{P \in \{I, X, Y, Z\}^{\otimes n} \setminus \{Z^{\otimes n}\}} \frac{P}{4^n - 1}.
\]
for some constant $\tau  = \Theta(1)$.
In the above expression, the observable is dominated by a single Pauli term \( Z^{\otimes n} \), while all other Pauli strings contribute only marginally with coefficients scaled by an exponentially small factor \(\tau/(4^n - 1)\).

In this case $\abs{\SC} = 4^n$ and $\Lambda=1$ (i.e. $\SC$ is the set of $n$ qubits Pauli), and thus the samples complexity for the uniform allocation scales as $\OC(16^n\log(1/\delta)/\tilde{\epsilon}^2)$.

On the other hand, we have that 
\begin{align}
    \norm{\boldsymbol{a}}_1 = 1 + \tau \in\Theta(1),
\end{align}
and therefore the sample complexity of the protocol described in Corollary~\ref{cor:warm-up} is $ \OC(\log(1/\delta)/\tilde\epsilon^2)$ for achieving error $\tilde{\epsilon}$ and success probability $1-\delta$. Crucially, the uniform allocation requires exponentially more measurements.

\subsection{Estimating parametrized observables}
\label{app:estimating-observables-parametrized}
Throughout this paper, we often encounter \emph{parametrized} observables $O(\boldsymbol\alpha)=\sum_{P\in \mathcal{S}}c_P{(\boldsymbol\alpha)} P$, where $\boldsymbol{\alpha}$ is sampled from a given distribution $\mathcal{D}$.
Here, we aim at estimating a function $\hat{o}(\boldsymbol{\alpha})$ minimizing the following \emph{mean square error}:
\begin{align}
    \Ebb_{\boldsymbol{\alpha} \sim \mathcal{D}} \left(\Tr[O(\boldsymbol\alpha) \rho] - \hat{o}(\boldsymbol{\alpha}) \right)^2.
\end{align}

In such case, designing an optimal measurement protocol is more challenging, since the vector $\vec{c}(\boldsymbol{\alpha})$ is now a random variable.
Consequently, the measurement distribution $\beta(\cdot)$ should depend on the distribution $\mathcal{D}$.

Broadly speaking, we show that, while the sample complexity for a single observable (i.e., a fixed value vector $\alv_0$) is dominated by the 1-norm $\norm{\vec{c}(\alv_0)}_1$, the sample complexity for the multiple observables cases can be described via a suitable ``effective 1-norm'', which we define in the following.
\begin{definition}[Effective 1-norms]
\label{def:effective-1-norms}
  Let $\boldsymbol{\alpha}$ be a parameter vector, and  let $O(\boldsymbol{\alpha} ) = \sum_{P\in \SC} c_P(\alv) P$ be a parametrized observable. 
      \begin{itemize}      
      \item  Let us assume that $\vec{\alpha}$ is sampled from a distribution $\mathcal{D}$. Then the associated average-case effective 1-norm is defined as
      \begin{align}
      \label{def:avg-case-effective-1-norm-mu}
           \norm{\boldsymbol{c}}_{1,\mathrm{avg}}:= \sum_{P\in \SC}\sqrt{\mathbb{E}_{\boldsymbol{\alpha} \sim \mathcal{D}} [c_P(\alv)^2]}.
      \end{align}
        \item Let us assume that $\vec{\alpha}$ lies in a set $\mathcal{A}$. Then the associated \emph{worst-case effective 1-norm} is defined as
      \begin{align}
      \label{def:worst-case-effective-1-norm-mu}
          \norm{\boldsymbol{c}}_{1,\mathrm{worst}}:=\sum_{P\in \SC} \max_{\boldsymbol{\alpha} \in \mathcal{A}} \abs{c_P(\alv)}.
      \end{align}
      \end{itemize}
\end{definition}

\begin{lemma}[Mean square error]
\label{lemma:multiple-observable-mean-squared-error}    
 Let us assume that $\vec{\alpha}$ is sampled from a distribution $\mathcal{D}$. 
There exists a randomized measurement protocol which uses $N_s=\Lambda^2\norm{\boldsymbol{c}}^2_{1,\mathrm{avg}}/\epsilon^2$ measurements on different copies of a state $\rho$, and outputs an estimate $\hat{o}(\alv)$ for each $\alv$ satisfying
    \begin{equation}
    \sqrt{\Ebb_{\boldsymbol{\alpha} \sim \mathcal{D},\MC} \left(\Tr[O(\boldsymbol\alpha) \rho] - \hat{o}(\boldsymbol{\alpha}) \right)^2} \leq \epsilon\;,
    \end{equation}
    where $\Ebb_{\MC}$ denotes an average over the measurement outcomes. 
\end{lemma}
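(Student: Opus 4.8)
The plan is to follow the same randomized estimator scheme as in Lemma~\ref{lemma:hoeffding-obs}, but now choosing the sampling distribution $\beta(\cdot)$ adapted to the \emph{average} behaviour of the coefficient vector $\vec{c}(\alv)$ rather than to a fixed instance. Concretely, I would set
\begin{equation}
    \beta(P) = \frac{\sqrt{\Ebb_{\alv\sim\DC}[c_P(\alv)^2]}}{\sum_{P'\in\SC}\sqrt{\Ebb_{\alv\sim\DC}[c_{P'}(\alv)^2]}} = \frac{\sqrt{\Ebb_{\alv\sim\DC}[c_P(\alv)^2]}}{\norm{\vec{c}}_{1,\mathrm{avg}}}\;,
\end{equation}
which is a valid probability distribution supported on all $P$ with $c_P\not\equiv 0$. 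The estimator is built exactly as in the proof of Lemma~\ref{lemma:hoeffding-obs}: in round $i$ sample $P_i\sim\beta$, measure $P_i$ on a fresh copy of $\rho$ obtaining outcome $\hat x_i$, and form $\widehat X_i(\alv) = \frac{c_{P_i}(\alv)}{\beta(P_i)}\hat x_i$, with the final estimator $\hat o(\alv)$ the empirical mean of $\widehat X_1(\alv),\dots,\widehat X_{N_s}(\alv)$.

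The key steps are then: (i) note that for every fixed $\alv$, $\Ebb_{\MC}[\widehat X_i(\alv)] = \Tr[O(\alv)\rho]$, so $\hat o(\alv)$ is unbiased and the mean square error at fixed $\alv$ equals $\Var_{\MC}[\hat o(\alv)] = \frac{1}{N_s}\Var_{\MC}[\widehat X_1(\alv)] \leq \frac{1}{N_s}\Ebb_{\MC}[\widehat X_1(\alv)^2]$; (ii) bound $\Ebb_{\MC}[\widehat X_1(\alv)^2] = \sum_{P\in\SC}\beta(P)\frac{c_P(\alv)^2}{\beta(P)^2}\Ebb_{\MC}[\hat x^2\mid P] \leq \Lambda^2\sum_{P\in\SC}\frac{c_P(\alv)^2}{\beta(P)}$, using $|\hat x|\leq\norm{P}_\infty\leq\Lambda$; (iii) take the expectation over $\alv\sim\DC$ and swap the (finite) sum with the expectation to get
\begin{equation}
    \Ebb_{\alv\sim\DC,\MC}\big(\Tr[O(\alv)\rho]-\hat o(\alv)\big)^2 \leq \frac{\Lambda^2}{N_s}\sum_{P\in\SC}\frac{\Ebb_{\alv\sim\DC}[c_P(\alv)^2]}{\beta(P)}\;;
\end{equation}
(iv) substitute the chosen $\beta(P)$, so that $\frac{\Ebb_{\alv}[c_P(\alv)^2]}{\beta(P)} = \sqrt{\Ebb_{\alv}[c_P(\alv)^2]}\cdot\norm{\vec{c}}_{1,\mathrm{avg}}$, and the sum collapses to $\norm{\vec{c}}_{1,\mathrm{avg}}^2$, giving an upper bound of $\Lambda^2\norm{\vec{c}}_{1,\mathrm{avg}}^2/N_s$; (v) set $N_s = \Lambda^2\norm{\vec{c}}_{1,\mathrm{avg}}^2/\epsilon^2$ and take the square root to conclude. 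One may optionally remark that this $\beta$ is optimal for the quantity $\sum_P \Ebb_\alv[c_P^2]/\beta(P)$ by a Cauchy--Schwarz / Lagrange-multiplier argument, mirroring the warm-up.

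I do not expect a genuine obstacle here — the argument is essentially the parametrized analogue of Lemma~\ref{lemma:hoeffding-obs} part 1. The only points requiring a little care are: the interchange of the finite sum over $\SC$ with $\Ebb_{\alv\sim\DC}$ in step (iii), which is justified since $\SC$ is finite (and one should implicitly assume the second moments $\Ebb_\alv[c_P(\alv)^2]$ are finite, which holds for the bounded trigonometric coefficients arising in our setting); and the fact that the \emph{same} measurement data (the sampled Paulis and their outcomes) is reused to define $\hat o(\alv)$ for all $\alv$ simultaneously, so that the only $\alv$-dependence enters through the deterministic rescaling factor $c_{P_i}(\alv)/\beta(P_i)$ — this is exactly what makes the single data-acquisition phase suffice for surrogating the whole patch. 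With these observations the bound follows immediately.
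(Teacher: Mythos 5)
Your proposal is correct and follows essentially the same route as the paper's proof: the same importance-sampling distribution $\beta(P)\propto\sqrt{\Ebb_{\alv\sim\DC}[c_P(\alv)^2]}$, the same reused measurement data with $\alv$-dependent rescaling $\widehat X_i(\alv)=c_{P_i}(\alv)\hat x_i/\beta(P_i)$, and the same variance bound followed by averaging over $\alv$ and substituting $\beta$. Nothing is missing.
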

\begin{proof}

Let $\beta(\cdot)$ be defined as
\begin{align}
    \beta(P) = \frac{\sqrt{\mathbb{E}_{\boldsymbol{\alpha} \sim \mathcal{D}} [c_P(\alv)^2]}}{\norm{\boldsymbol{c}}_{1,\mathrm{avg}}}.
\end{align}
 The measurement protocol here is similar to Lemma~\ref{lemma:hoeffding-obs}, but in the fourth step, we use the outcome of the third step, $\hat{x}_i$ to define the random variable 
 \begin{align}
     \widehat{X}_i(\alv)=\frac{c_{P_i}(\alv)}{\beta(P_i)}\hat{x}_i
 \end{align} 
 
 for each $\alv$. Recall that $\widehat{X}_i(\alv)$ is an unbiased estimator of $\Tr[\rho O(\alv)]$. Let 
 \begin{equation}
     \hat{o}(\alv) = \frac{1}{N_s}\sum_{i=1}^{N_s}\widehat{X}_i(\alv)\, .
 \end{equation}

 We have also seen in Eq.~\eqref{eq:general-variance-single-estimator-no-params} that
 \begin{align} \label{eq:variance_single_point_estimator_general_case}
     \Ebb_{\MC}[(\hat{o}(\alv)-\Tr[\rho O(\alv)])^2] \leq \sum_{P\in \SC}\frac{c_P(\alv)^2\Lambda^2}{\beta(P)N_s}\;,
 \end{align}
which is true for any $\alv$. Therefore, averaging over $\alv\sim\DC$ and inserting the definition $\beta(P)$ leads to the desired result i.e. 
  \begin{align}
     \Ebb_{\alv\sim\DC,\MC}[(\hat{o}(\alv)-\Tr[\rho O(\alv)])^2] &\leq \sum_{P\in \SC}\frac{\Ebb_{\alv\sim\DC}[c_P(\alv)^2]\Lambda^2}{\beta(P)N_s} \\
     &= \frac{\norm{\boldsymbol{c}}_{1,\mathrm{avg}}^2\Lambda^2}{N_s}\;.
 \end{align}

Finally, we complete the proof by setting $N_s=\frac{\Lambda^2\norm{\boldsymbol{c}}^2_{1,\mathrm{avg}} }{\epsilon^2}$.
\end{proof}

\begin{lemma}[Maximum error]
\label{lemma:multiple-observable-max-error} 
Let us assume that $\vec{\alpha}$ lies in the set $\mathcal{A}$.
There exists a randomized measurement protocol which uses $N_s=\frac{2\log(2/\delta)\Lambda^2\norm{\boldsymbol{c}}^2_{1,\mathrm{worst}}}{\epsilon^2}$ measurements on different copies of a state $\rho$, and output an estimate $\hat{o}(\alv)$ satisfying
    \begin{equation}
     \abs{\Tr[O(\boldsymbol\alpha) \rho] - \hat{o}(\boldsymbol{\alpha})}  < \epsilon\;,
    \end{equation}
    with probability $1-\delta$ (over the internal randomness of the measurement protocol).
\end{lemma}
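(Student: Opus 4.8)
The plan is to mirror the proof of Lemma~\ref{lemma:multiple-observable-mean-squared-error}, but replace the second-moment bound by a high-probability (Hoeffding) bound with respect to the worst-case effective 1-norm. First I would fix the sampling distribution
\[
    \beta(P) = \frac{\max_{\alv\in\mathcal{A}}\abs{c_P(\alv)}}{\norm{\boldsymbol{c}}_{1,\mathrm{worst}}}\,,
\]
which is a valid probability distribution on $\SC$ since the numerators are nonnegative and sum (by Definition~\ref{def:effective-1-norms}) to $\norm{\boldsymbol{c}}_{1,\mathrm{worst}}$. Note $\beta(P)>0$ whenever $c_P(\alv)$ is not identically zero on $\mathcal{A}$, so the protocol of Lemma~\ref{lemma:hoeffding-obs} applies. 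As in that lemma, for $i=1,\dots,N_s$ sample $P_i\sim\beta$, measure $P_i$ on a fresh copy of $\rho$ to get outcome $\hat{x}_i$ with $\abs{\hat{x}_i}\leq\Lambda$, and for a fixed target $\alv\in\mathcal{A}$ define $\widehat{X}_i(\alv) = \frac{c_{P_i}(\alv)}{\beta(P_i)}\hat{x}_i$; this is an unbiased estimator of $\Tr[\rho O(\alv)]$, and $\hat{o}(\alv)$ is its empirical mean.

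The key new step is the boundedness estimate: for every draw,
\[
    \abs{\widehat{X}_i(\alv)} \leq \Lambda\,\frac{\abs{c_{P_i}(\alv)}}{\beta(P_i)} \leq \Lambda\,\frac{\max_{\alv'\in\mathcal{A}}\abs{c_{P_i}(\alv')}}{\beta(P_i)} = \Lambda\,\norm{\boldsymbol{c}}_{1,\mathrm{worst}}\,,
\]
where the crucial point is that the $\max$ over $\mathcal{A}$ dominates the value at the particular $\alv$, which is exactly why the worst-case effective 1-norm (rather than a pointwise quantity) enters. Hence each $\widehat{X}_i(\alv)$ lies in an interval of width at most $2\Lambda\norm{\boldsymbol{c}}_{1,\mathrm{worst}}$, uniformly over $\alv\in\mathcal{A}$. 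Applying Hoeffding's inequality (Lemma~\ref{lemma:hoeffdingsinequality}) with $c = 2\Lambda\norm{\boldsymbol{c}}_{1,\mathrm{worst}}$ gives
\[
    \Pr\big[\,\abs{\hat{o}(\alv) - \Tr[\rho O(\alv)]} \geq \epsilon\,\big] \leq \delta
\]
as soon as $N_s \geq \frac{2\log(2/\delta)\,c^2}{4\epsilon^2} = \frac{2\log(2/\delta)\Lambda^2\norm{\boldsymbol{c}}^2_{1,\mathrm{worst}}}{\epsilon^2}$, which is the claimed shot count; this holds for each fixed $\alv\in\mathcal{A}$ over the internal randomness of the protocol, as stated.

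I do not expect a serious obstacle here: the only subtlety is making sure the uniform bound on $\abs{\widehat{X}_i(\alv)}$ is genuinely $\alv$-independent, which is guaranteed precisely by defining $\beta$ through the $\max$ over $\mathcal{A}$ — any pointwise choice of $\beta$ would fail to give a uniform Hoeffding interval. If one additionally wanted a statement holding simultaneously for all $\alv\in\mathcal{A}$ (rather than each fixed $\alv$), one would need a union bound or covering-net argument over $\mathcal{A}$, but the statement as written only claims the per-$\alv$ guarantee, so the plain Hoeffding argument suffices.
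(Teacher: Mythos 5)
Your proposal is correct and follows essentially the same route as the paper: the same sampling distribution $\beta(P)=\max_{\alv\in\mathcal{A}}|c_P(\alv)|/\norm{\boldsymbol{c}}_{1,\mathrm{worst}}$, the same uniform bound $|\widehat{X}_i(\alv)|\leq\Lambda\norm{\boldsymbol{c}}_{1,\mathrm{worst}}$, and the same application of Hoeffding's inequality yielding the stated shot count. The only cosmetic difference is that the paper routes the argument through the second statement of Lemma~\ref{lemma:hoeffding-obs} and then rescales $\tilde{\epsilon}=\epsilon/(\norm{\boldsymbol{c}}_{1,\mathrm{worst}}\Lambda)$, whereas you apply Hoeffding directly; these are the same computation.
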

\begin{proof}

    The measurement protocol is similar to Lemma~\ref{lemma:hoeffding-obs} with the sampling distribution $\beta(\cdot)$ defined as 
    \begin{align}
        \beta(P)=\frac{\max_{\alv\in\AC}|c_P(\alv)|}{\norm{\boldsymbol{c}}_{1,\mathrm{worst}}} \;.
    \end{align}
    Then, for any fixed $\alv\in\AC$ we can use Hoeffding's inequality (i.e., Lemma~\ref{lemma:hoeffdingsinequality}) as in Lemma~\ref{lemma:hoeffding-obs} to get
    \begin{align}
        \label{eq:worst-case-error-hoeffding-general-upperbound}
        \abs{\Tr[O(\boldsymbol\alpha) \rho] - \hat{o}(\boldsymbol{\alpha})} < \tilde{\epsilon}\cdot \Lambda \max_{P\in\SC}\frac{|c_P(\alv)|}{\beta(P)}\;,
    \end{align}
    with probability at least $1-\delta$ using $N_s=2\log(2/\delta)/\tilde{\epsilon}^2$. From previous definition of $\beta(P)$, we have $\max_P\frac{|c_P(\alv)|}{\beta(P)}\leq \norm{\boldsymbol{c}}_{1,\mathrm{worst}}$. Thus, substituting $ \tilde{\epsilon}=\frac{\epsilon}{\norm{\boldsymbol{c}}_{1,\mathrm{worst}}\Lambda}$  completes the proof.
\end{proof}

Notice that the sampling distribution, $\beta(.)$ used in Lemma~\ref{lemma:multiple-observable-max-error} leads to the same bound on the MSE as in Lemma~\ref{lemma:multiple-observable-mean-squared-error}, but with the average-case effective 1-norm being replaced by the worst-case one. 

\subsection{Optimal sampling distributions}
\label{app:optimal-beta-lagrange}
In order to find the distribution $\beta(P)$ that minimize the MSE in Eq.~\eqref{eq:variance_single_point_estimator_general_case}, we used Lagrange multipliers method (see Appendix~\ref{alg:lagrange}) to minimize the function

\begin{align}\label{eq:averagebound}
     \sum_{P\in \SC}\frac{\Ebb_{\alv}[c_P^2(\vec{\alpha})]}{\beta(P) }\;,
\end{align}
under the constraint $\sum_{P\in\SC}\beta(P) = 1$. The appropriate Lagrangian is given by
\begin{align}
    \mathcal{L} = \sum_{P\in\SC}\frac{\Ebb_{\alv}[c_P^2(\vec{\alpha})]}{\beta(P) } +\lambda\left(\sum_{P\in\SC}\beta(P)-1\right)\,,
\end{align}
where $\lambda$ is the Lagrangian multiplier. Next we compute the derivatives with respect to $\beta(P)$ and equate them to zero. The previous leads to  
\begin{equation}
    \frac{\partial\mathcal{L}}{\partial \beta(P)} = -\frac{\Ebb_{\alv}[c_P^2(\vec{\alpha})]}{\beta(P)^2 } + \lambda = 0\;.
\end{equation}

Then, the solution to the system of equations $\{\partial_{\beta(P)}\mathcal{L} = 0\}_{P\in\SC}$ with the normalization constraint is
\begin{equation}\label{eq:allocation_av}
\beta(P) = \frac{\sqrt{\Ebb_{\alv}[c_P^2(\vec{\alpha})]}}{\sum_{P\in\SC}\sqrt{\Ebb_{\alv}[c_P^2(\vec{\alpha})]}}=\frac{\sqrt{\Ebb_{\alv}[c_P^2(\vec{\alpha})]}}{\norm{\boldsymbol{c}}_{1,\mathrm{avg}}}\;.
\end{equation} 

The derivation of the optimal distribution $\beta(P)$ in Lemma~\ref{lemma:multiple-observable-max-error} is similar to Lemma~\ref{lemma:multiple-observable-mean-squared-error}, but here we need to minimize Eq.~\eqref{eq:worst-case-error-hoeffding-general-upperbound} for any choice of $\alv\in\AC$ which is equivalent to minimizing the function

\begin{equation}
    \max_P \max_{\alv} \frac{|c_P(\alv)|}{\beta(P)}\;.
\end{equation}

Let us define the vector $\vec{v}$ with entries $v_P = \max_{\alv} \frac{|c_P(\alv)|}{\beta(P)}$. So, we want to minimize $\norm{\vec{v}}_\infty$ under the constraint $\sum_{P\in\SC}\beta(P)=1$. Let us consider the $k$-norm of $\vec{v}$ and then we will take the limit $k\to\infty$. The Lagrangian is given by
\begin{equation}
    \mathcal{L}=\left(\sum_{P\in\SC}v_P^k\right)^{\frac{1}{k}}+\lambda\left(\sum_{P\in\SC}\beta(P)-1\right)\;,
\end{equation}
which is minimized if, for every $\beta(P)$, we have
\begin{equation}
    -\frac{(\max_{\alv}|c_P(\alv) |)^k}{\beta(P)^{k+1}}\left(\sum_{P\in\SC}v_P^k\right)^{\frac{1-k}{k}}+\lambda =0\;.
\end{equation}
The solution is given by
\begin{equation}
    \beta(P)=\frac{(\max_{\alv}|c_P(\alv) |)^{\frac{k}{k+1}}}{\sum_{P'\in\SC}(\max_{\alv}|c_{P'}(\alv) |)^{\frac{k}{k+1}}}\;.
\end{equation}
Finally, by taking the limit $k\to\infty$, we end up with 
\begin{equation}
\beta(P)=\frac{\max_{\alv}|c_P(\alv) |}{\sum_{P'\in\SC}\max_{\alv}|c_{P'}(\alv) |} =\frac{\max_{\alv}|c_P(\alv) |}{\norm{\vec{c}}_{1,{\rm worst}}} \;.
\end{equation}

\section{General surrogatability guarantee}\label{app:generalsurrogatetheorem}

\subsection{Summary of the key results}\label{app:generalsurrogatetheorem-1}

In this section we will start by summarising the main results of which our general surrogatability guarantee is comprised. Their proofs are then provided in subsequent sections of this appendix.

\subsubsection{Taylor surrogate construction}
We recall that we are interested in expectation functions of the general form
\begin{align}\label{eq:loss_app}
    \LC(\alv)=\Tr[\EC_{\alv}(\rho) O]\,,
\end{align}
where $\EC_{\alv}(\cdot)$ is an arbitrary parameterized quantum channel with ${\nparams}$ variational parameters $\alv$ acting on a $n-$qubit quantum state $\rho$ and $O$ is some Hermitian operator. We are interested in constructing the surrogate $\fsur_\expansionorder(\alv)$ in the hypercube region $\vol(\alv^*,r)$ i.e.,
\begin{equation}\label{eq:hypercube-region-appendix}
	\vol(\vec{\alpha}^*, r) := \{ \vec{\alpha} \} \; \; \text{such that} \; \; \alpha_i \in [\alpha_i^* -r , \alpha_i^* + r ]  \, \, \forall \, \, i \,.
\end{equation}
This expectation function can be expressed using a multivariate Taylor's expansion as
\begin{align}\label{eq:taylor_loss}
    \lf =  \sum_{k = 0}^{\infty} \frac{1}{k!}\sum_{i_1, i_2, ..,i_k=1}^{\nparams} \gradlk \dellk \;,
\end{align}
where we introduce two compressed notations for ease of notation
\begin{align}
    \gradlk & =  \left( \frac{\partial^k  \lf}{\partial \alpha_{i_1} \partial \alpha_{i_2} ... \partial \alpha_{i_k}} \right)\bigg|_{\vec{\alpha} = \vec{\alpha}^*}  \;, \\
    \dellk & = (\alpha_{i_1} - \alpha^*_{i_1})(\alpha_{i_2} - \alpha^*_{i_2}) ... (\alpha_{i_k} - \alpha^*_{i_k}) \;\;.\label{eq:def_deltath}
\end{align}

The $\expansionorder^{\rm th}$-order surrogate of the expectation function is made by simply truncating the infinite sum in  Eq.~\eqref{eq:taylor_loss} to  order $\expansionorder$ 
\begin{equation}\label{eq:surrogate_taylor}
    \fsur_\expansionorder(\alv) =  \sum_{k = 0}^{\expansionorder} \frac{1}{k!}\sum_{i_1, i_2, ..,i_k=1}^{\nparams} \gradlk \dellk \;.
\end{equation}
In order to construct the surrogate, we are required to obtain all partial derivatives of the expectation function up to order $\expansionorder$. These partial derivatives can be obtained by evaluating the expectation function $f(\alv)$ at different points on the landscape.

The following three key assumptions are made for the surrogatability guarantee in our proofs.
\begin{assumption}\label{assumption:bounded-derivatives}
The partial derivatives of $f(\alv)$ within the region are bounded as
\begin{align}\label{eq:assumption-bounded-derovatives}
     |\partial_{i_1, i_2, ... i_k}\LC(\alv)| \leq  \gamma^k\norm{O}_\infty \;\; \;,\;\; \forall \alv \in \vol(\alv^*, r)\;,
\end{align}
with some constant $\gamma \in \OC(1)$.
\end{assumption}

\begin{assumption}\label{assumption:bounded-loss-evaluations}
Computing each partial derivative (up to the order $\expansionorder$) requires at most $N_{d} $ evaluations of the expectation function which scales as
\begin{align}\label{eq:assumption-bounded-loss-evaluations}
    N_{d} \in \OC\left({\nparams}^\expansionorder\right) \;.
\end{align}
\end{assumption}

\begin{assumption}\label{assumption:loss-evaluations-expectations}
The partial derivative can be expressed as a linear combination of expectation functions evaluated at different parameters 
\begin{align}\label{eq:assumption-3-linear-combination}
    \partial_{i_1, i_2, ..., i_k} f(\alv) = \sum_{j =1}^{N_{d} } b_j f(\alv_j) \;,
\end{align}
where each $b_j$ is an associated coefficient to $f(\alv_j)$ such that $ b_{0} = \max_{j} |b_j| $ with some positive $b_{0}\in\OC(\poly(m))$, and can be efficiently computed classically. Furthermore, a set of $\{ \alv_j \}_{j=1}^{N_{d} }$ depends on $\alv$ and can be efficiently determined classically. Note that for each combination of the derivatives $\partial_{i_1,...,i_k}$ the dependence on $\{b_j\}_{j=1}^{N_d}$ and $\{ \alv_j \}_{j=1}^{N_{d} }$ can be different. However, we need the condition $ b_{0} = \max_{j} |b_j| $ to hold for all the derivatives. 
\end{assumption}
\noindent We note that two well-known approaches to compute gradients that satisfy Assumption~\ref{assumption:bounded-loss-evaluations} and Assumption~\ref{assumption:loss-evaluations-expectations} are parameter shift-rule and numerical finite difference. \\

In what follows, we present a series of formal statements with the ultimate aim of providing the surrogatability guarantee over the region. We present such result in the form of  \supt~\ref{coro:surrogate_quantum_quantum}, which is a formal version of Theorem~\ref{thm:surrogate-average-general} in the main text.

\subsubsection{Worst-case error}
The worst-case error of the $\expansionorder^{\rm th}$ order surrogate in the region $\vol(\alv^{*}, r)$ around the parameters $\alv^*$ is defined as
\begin{equation}
    \epsilon_{\expansionorder,\alv^*}^{{\rm WC}} := \max_{\alv\in\vol(\alv^{*}, r)}|\fsur_{\expansionorder}(\alv) - {\LC}(\alv)|
\end{equation}

The following proposition (proved below) tells how the worst-case error scales with respect to the surrogate order.
\begin{proposition}[Worst-case error of the surrogate]\label{th:wors_case_error_app}
Consider an expectation function $\LC(\alv)$ as defined in
Eq.~\eqref{eq:loss_app} with $\norm{O}_{\infty} \in\OC(1)$. Furthermore, consider the hypercube region $\vol(\alv^*,r)$ around some initial fixed point $\alv^*$ with the perturbation $r$ as defined in Eq.~\eqref{eq:hypercube-region-appendix} such that 
\begin{align}
    r \in \OC\left( \frac{1}{\nparams}\right) \;.
\end{align}
Assume that all the derivatives (up to order $k=\expansionorder+1$) are bounded as
\begin{align}\label{eq:assumption_gamma_wc_1}
     |\partial_{i_1, i_2, ... i_k}\LC(\alv)| \leq  \gamma^k\norm{O}_\infty \;\; \;,\;\; \forall \alv \in \vol(\alv^*, r)\;,
\end{align}
with $\gamma \in \OC(1)$ (i.e., Assumption~\ref{assumption:bounded-derivatives}).

Then, we have that $\fsur_{\expansionorder}(\alv)$ according to Eq.~\eqref{eq:surrogate_taylor}, with the $\expansionorder$ order of the truncation, is a classical surrogate of $\LC(\alv)$ over this regime with a worst-case error bounded by
\begin{align}
     \epsilon_{\expansionorder,\alv^*}^{{\rm WC}} \leq \frac{\norm{O}_\infty}{(\expansionorder+1)!}(\gamma r {\nparams})^{\expansionorder+1} \;.
\end{align}
In particular, an arbitrarily small but constant error $  \epsilon_{\expansionorder,\alv^*}^{{\rm WC}} \in \OC(1)$ can be achieved with $\expansionorder \in \OC(1)$.
\end{proposition}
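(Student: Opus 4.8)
The plan is to bound the Taylor remainder $\epsilon_{\expansionorder,\alv^*}(\alv) = \fsur_{\expansionorder}(\alv) - \LC(\alv)$ using \supt~\ref{thm:reminder_taylor} (Taylor Remainder Theorem) and then control the resulting sum with the derivative bound from Assumption~\ref{assumption:bounded-derivatives}. First I would write the remainder explicitly as
\begin{align}
    \epsilon_{\expansionorder,\alv^*}(\alv) = \sum_{i_1,\dots,i_{\expansionorder+1}=1}^{\nparams} \frac{1}{(\expansionorder+1)!}\,\partial_{i_1,\dots,i_{\expansionorder+1}}\LC(\tilde{\alv})\,\Delta_{i_1,\dots,i_{\expansionorder+1}}(\alv,\alv^*)\;,
\end{align}
for some $\tilde{\alv}$ on the segment between $\alv$ and $\alv^*$, hence $\tilde{\alv}\in\vol(\alv^*,r)$ so that the hypothesis~\eqref{eq:assumption_gamma_wc_1} applies at $\tilde{\alv}$.

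Next I would take absolute values and use the triangle inequality together with $|\partial_{i_1,\dots,i_{\expansionorder+1}}\LC(\tilde{\alv})| \leq \gamma^{\expansionorder+1}\norm{O}_\infty$ and $|\alpha_{i_j}-\alpha^*_{i_j}| \leq r$ for every $\alv\in\vol(\alv^*,r)$, which gives $|\Delta_{i_1,\dots,i_{\expansionorder+1}}(\alv,\alv^*)| \leq r^{\expansionorder+1}$. Since the multi-index $(i_1,\dots,i_{\expansionorder+1})$ ranges over $\nparams^{\expansionorder+1}$ terms, this yields
\begin{align}
    |\epsilon_{\expansionorder,\alv^*}(\alv)| \leq \frac{1}{(\expansionorder+1)!}\,\nparams^{\expansionorder+1}\,\gamma^{\expansionorder+1}\,\norm{O}_\infty\, r^{\expansionorder+1} = \frac{\norm{O}_\infty}{(\expansionorder+1)!}\,(\gamma r \nparams)^{\expansionorder+1}\;,
\end{align}
which holds uniformly over $\alv\in\vol(\alv^*,r)$, hence bounds $\epsilon_{\expansionorder,\alv^*}^{\rm WC}$.

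For the final claim, I would substitute $r\in\OC(1/\nparams)$ so that $\gamma r \nparams \in \OC(1)$, say $\gamma r \nparams \leq \eta$ for some constant $\eta$. Then the bound becomes $\frac{\norm{O}_\infty\,\eta^{\expansionorder+1}}{(\expansionorder+1)!}$, and since $\norm{O}_\infty\in\OC(1)$, choosing $\expansionorder$ to be a sufficiently large constant (depending only on $\eta$ and the target error) makes this arbitrarily small, using that $\eta^{\expansionorder+1}/(\expansionorder+1)! \to 0$ as $\expansionorder\to\infty$. So $\expansionorder\in\OC(1)$ suffices.

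There is not really a hard obstacle here — the proof is a direct application of Lagrange's form of the Taylor remainder combined with the naive counting of $\nparams^{\expansionorder+1}$ multi-indices and the crude per-term bound. The only subtlety worth stating carefully is that the Taylor Remainder Theorem requires the intermediate point $\tilde{\alv}$ to lie in the region where the derivative bound~\eqref{eq:assumption_gamma_wc_1} is assumed; since $\tilde{\alv}$ is a convex combination of $\alv\in\vol(\alv^*,r)$ and $\alv^*$, and the hypercube is convex, this is automatic. One could also note that the counting of $\nparams^{\expansionorder+1}$ (rather than $\binom{\nparams+\expansionorder}{\expansionorder+1}$) is wasteful but harmless since it only costs constants in the exponent scaling, and the factorial in the denominator is what ultimately drives the convergence.
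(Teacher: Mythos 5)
Your proposal is correct and follows essentially the same route as the paper's own proof: apply the Lagrange form of the Taylor remainder at an intermediate point $\tilde{\alv}\in\vol(\alv^*,r)$, invoke the derivative bound and $|\Delta_{i_1,\dots,i_{\expansionorder+1}}|\leq r^{\expansionorder+1}$ termwise, and count the $\nparams^{\expansionorder+1}$ multi-indices. Your explicit remark that convexity of the hypercube guarantees the intermediate point lies where the derivative bound applies is a point the paper leaves implicit, but otherwise the arguments coincide.
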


\subsubsection{Mean square error}

The mean square error (MSE) of the $\expansionorder^{\rm th}$ order surrogate associated with an uniform distribution $\uni(\alv^*,r)$ over the hypercube region $\vol(\alv^*,r)$ is defined as
\begin{equation}
    \left(\epsilon_{\expansionorder,\alv^*}^{{\rm MSE}}\right)^2 = \Ebb_{\alv \sim \uni(\alv^*,r)}\bigg[\left(\fsur_{\expansionorder}(\alv) - {\LC}(\alv)\right)^2\bigg]\,.
\end{equation}

The next proposition (proved below) analytically shows the MSE scaling with respect to the surrogate order.
\begin{proposition}[Mean Square Error of the surrogate]\label{th:mse_surrogated}
Consider an expectation function $\LC(\alv)$ as defined in Eq.~\eqref{eq:loss_app} with $\norm{O}_{\infty} \in\OC(1)$.
Furthermore, consider the hypercube region $\vol(\alv^*,r)$ around some initial fixed point $\alv^*$ with the perturbation $r$ as defined in Eq.~\eqref{eq:hypercube-region-appendix} such that 
\begin{align}
    r \in \OC\left( \frac{1}{\sqrt{\nparams}}\right) \;.
\end{align}
Assume that all the derivatives (of order $k>\expansionorder$) are bounded as (i.e., Assumption~\ref{assumption:bounded-derivatives})
\begin{align}\label{eq:assumption_gamma_wc_2}
     |\partial_{i_1, i_2, ... i_k}\LC(\alv)| \leq  \gamma^k\norm{O}_\infty \;\; \;,\;\; \forall \alv \in \vol(\alv^*, r)\;,
\end{align}
with $\gamma \in \OC(1)$.

Then, we have that $\fsur_{\expansionorder}(\alv)$ according to Eq.~\eqref{eq:surrogate_taylor}, with the $\expansionorder$ order of the truncation, is a classical surrogate of $\LC(\alv)$ over this regime with the MSE associated with the uniform distribution $\uni(\alv^*,r)$ bounded by
\begin{align}
     \epsilon_{\expansionorder,\alv^*}^{{\rm MSE}} & \leq \left( \frac{2 \gamma^2 {\nparams} r^2}{3} \right)^{\frac{\expansionorder+1}{2}}\frac{\norm{O}_\infty}{\sqrt{(\expansionorder +1 )!}} e^{\frac{\gamma^2 {\nparams} r^2}{3} } \;.
\end{align}
In particular, for an arbitrarily small but constant error $  \epsilon_{\expansionorder,\alv^*}^{{\rm MSE}}  \in \OC(1)$ can be achieved for $\expansionorder \in \OC(1)$.
\end{proposition}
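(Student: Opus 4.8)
\textbf{Proof plan for Proposition~\ref{th:mse_surrogated} (MSE of the surrogate).}
The plan is to control the MSE by the Taylor remainder $\epsilon_{\expansionorder,\alv^*}(\alv) = \LC(\alv) - \fsur_{\expansionorder}(\alv)$, and then average its square over $\alv \sim \uni(\alv^*, r)$. First I would write the remainder, using \supt~\ref{thm:reminder_taylor}, as the order-$(\expansionorder+1)$ term
\begin{equation}
    \epsilon_{\expansionorder,\alv^*}(\alv) = \sum_{i_1,\dots,i_{\expansionorder+1}=1}^{\nparams} \frac{1}{(\expansionorder+1)!}\, \partial_{i_1,\dots,i_{\expansionorder+1}}\LC(\tilde{\bm{\alpha}})\, \Delta_{i_1,\dots,i_{\expansionorder+1}}(\alv,\alv^*)
\end{equation}
for some $\tilde{\bm{\alpha}}$ on the segment joining $\alv^*$ and $\alv$ (hence in $\vol(\alv^*,r)$, so Assumption~\ref{assumption:bounded-derivatives} applies). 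Collecting terms by the multiplicities of repeated indices, this is $\sum_{\vec k:\,|\vec k|=\expansionorder+1} \frac{1}{\vec k!}\,\partial^{\vec k}\LC(\tilde{\bm\alpha})\prod_l \delta_l^{k_l}$ where $\delta_l = \alpha_l-\alpha_l^*$.

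The core idea: rather than bounding $|\epsilon_{\expansionorder,\alv^*}(\alv)|$ pointwise (which gives the worst-case bound of Proposition~\ref{th:wors_case_error_app}), I would keep the $\delta_l$'s as random variables and average \emph{before} taking absolute values of the polynomial part. Applying the derivative bound $|\partial^{\vec k}\LC(\tilde{\bm\alpha})| \le \gamma^{\expansionorder+1}\|O\|_\infty$ and the triangle inequality gives
\begin{equation}
    |\epsilon_{\expansionorder,\alv^*}(\alv)| \le \frac{\gamma^{\expansionorder+1}\|O\|_\infty}{(\expansionorder+1)!}\Big(\sum_{l=1}^{\nparams}|\delta_l|\Big)^{\expansionorder+1},
\end{equation}
using the multinomial theorem to resum $\sum_{\vec k}\binom{\expansionorder+1}{\vec k}\prod_l|\delta_l|^{k_l} = (\sum_l|\delta_l|)^{\expansionorder+1}$. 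Then
\begin{equation}
    \big(\epsilon_{\expansionorder,\alv^*}^{\rm MSE}\big)^2 \le \frac{\gamma^{2(\expansionorder+1)}\|O\|_\infty^2}{((\expansionorder+1)!)^2}\, \Ebb_{\alv\sim\uni(\alv^*,r)}\Big[\Big(\textstyle\sum_{l=1}^{\nparams}|\delta_l|\Big)^{2(\expansionorder+1)}\Big].
\end{equation}
Under the uniform distribution the $\delta_l$ are i.i.d.\ uniform on $[-r,r]$, so $|\delta_l|$ is uniform on $[0,r]$ with $\Ebb[|\delta_l|^j] = r^j/(j+1)$. The remaining task is to bound the $2(\expansionorder+1)$-th moment of $S := \sum_l |\delta_l|$.

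For the moment bound I would use a Gaussian/sub-exponential comparison: expand $\Ebb[S^{2q}]$ ($q=\expansionorder+1$) by multinomial, or more cleanly bound $\Ebb[S^{2q}] \le$ (something like) $(2q-1)!!\,(\Ebb[S^2])^q$ times a correction, exploiting that $S$ is a sum of $\nparams$ bounded i.i.d.\ terms with $\Ebb[S] = \nparams r/2$ and $\Var[S] = \nparams r^2/12$, so $\Ebb[S^2] = \nparams r^2/12 + \nparams^2 r^2/4$. With $r\in\OC(1/\sqrt{\nparams})$ the quantity $\gamma^2\nparams r^2$ is $\OC(1)$, which is exactly what makes the bound useful. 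The target form $\big(\tfrac{2\gamma^2\nparams r^2}{3}\big)^{(\expansionorder+1)/2}\tfrac{\|O\|_\infty}{\sqrt{(\expansionorder+1)!}}e^{\gamma^2\nparams r^2/3}$ suggests the authors centre $S$ (write $|\delta_l| = \tfrac r2 + \xi_l$ with $\xi_l$ mean-zero, $|\xi_l|\le r/2$), use a sub-Gaussian MGF bound $\Ebb[e^{t\sum\xi_l}]\le e^{t^2\nparams r^2/24}$ (Hoeffding), extract moments via $\Ebb[Y^{2q}] \le (2q)!/(2^q q!)\cdot(\text{variance proxy})^q = (2q-1)!!\,(\nparams r^2/12)^q$ type inequalities, and then combine with the deterministic $(\nparams r/2)$ part through a binomial split of $(\tfrac{\nparams r}{2} + \sum\xi_l)^{2q}$; the $e^{\cdot/3}$ factor comes from resumming the cross-terms. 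Finally use $(2q-1)!!/((2q)!)^{?}$ and $(2q)! \ge$ bounds together with $(\expansionorder+1)!$ in the denominator to land on $\tfrac{1}{\sqrt{(\expansionorder+1)!}}$, and take square roots.

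\textbf{Main obstacle.} The routine part is the remainder formula and the multinomial resummation; the delicate part is the high-moment estimate of $S = \sum_l|\delta_l|$ with \emph{both} the right constant ($2/3$ inside the base) and the $1/\sqrt{(\expansionorder+1)!}$ decay surviving. A crude bound $\Ebb[S^{2q}] \le (\nparams r)^{2q}$ would only reproduce the worst-case estimate and lose the $\sqrt{1/(2q)!}$ improvement that is the whole point of the average-case statement; one genuinely needs the concentration of $S$ around its mean (variance $\Theta(\nparams r^2)$ rather than $\Theta(\nparams^2 r^2)$) to be exploited at every moment order simultaneously. So the crux is choosing the right sub-Gaussian moment inequality for the centred sum and tracking constants carefully through the binomial split between the deterministic mean and the fluctuation — which is exactly why $r\sim1/\sqrt{\nparams}$ (not $1/\nparams$) is the natural scale here, mirroring the central-limit heuristic recorded in Appendix~\ref{sec:central_limit_th}.
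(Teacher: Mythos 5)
There is a genuine gap, and it sits exactly at the step you flag as ``routine'': the passage from the remainder to $\lvert\epsilon_{\expansionorder,\alv^*}(\alv)\rvert \le \frac{\gamma^{\expansionorder+1}\norm{O}_\infty}{(\expansionorder+1)!}\bigl(\sum_l\lvert\delta_l\rvert\bigr)^{\expansionorder+1}$ via the triangle inequality. That inequality is true, but once you have replaced the signed $\delta_l$ by $\lvert\delta_l\rvert$ no moment estimate for $S=\sum_l\lvert\delta_l\rvert$ can recover the claimed bound: by Jensen,
\begin{equation}
\Ebb\Bigl[\Bigl(\sum_l\lvert\delta_l\rvert\Bigr)^{2(\expansionorder+1)}\Bigr]\;\geq\;\Bigl(\Ebb\Bigl[\sum_l\lvert\delta_l\rvert\Bigr]\Bigr)^{2(\expansionorder+1)}=\Bigl(\frac{\nparams r}{2}\Bigr)^{2(\expansionorder+1)},
\end{equation}
so your upper bound on $(\epsilon^{\rm MSE}_{\expansionorder,\alv^*})^2$ is at least $\frac{(\gamma \nparams r/2)^{2(\expansionorder+1)}}{((\expansionorder+1)!)^2}\norm{O}_\infty^2$, which for $r=c/\sqrt{\nparams}$ grows like $\nparams^{\expansionorder+1}$, whereas the target is $\OC(1)$ in $\nparams$. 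The sub-Gaussian centring of $\lvert\delta_l\rvert$ around $r/2$ that you sketch cannot remove the deterministic mean contribution $(\nparams r/2)^{2(\expansionorder+1)}$; it is already fatal on its own. The cancellation you need lives in the \emph{signs} of the $\delta_l$, and your use of the Lagrange form of the remainder forces you to discard them: since $\partial^{\vec k}\LC(\tilde{\bm\alpha})$ depends on $\alv$ through $\tilde{\bm\alpha}$, you cannot pull the derivatives out of the expectation without taking absolute values of everything.

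The paper avoids this by working with the full infinite Taylor tail $\sum_{k\ge\expansionorder+1}$ with all derivatives evaluated at the \emph{fixed} centre $\alv^*$ (which is why the hypothesis demands derivative bounds at all orders $k>\expansionorder$, not just $\expansionorder+1$). Expanding the square of the tail, the derivative factors are constants and are bounded by $\gamma^{k+l}\norm{O}_\infty^2$, while the factors $\Ebb[\Delta_{i_1\dots i_k}\Delta_{j_1\dots j_l}]$ are products of moments of independent symmetric uniforms and hence non-negative, so the inequality direction is preserved \emph{without} ever writing $\lvert\delta_l\rvert$. The sums then resum to $\Ebb[(\sum_i\delta_i)^{k+l}]$ with the signed, zero-mean sum of variance $\sigma_{\nparams}^2=\nparams r^2/3$; its even moments are bounded by the Gaussian moments $(k+l-1)!!\,\sigma_{\nparams}^{k+l}$ via a monotonicity-in-$\nparams$ argument (valid precisely when $r\in\OC(1/\sqrt{\nparams})$), and a double-factorial/binomial resummation produces the base $2\gamma^2\nparams r^2/3$, the $1/\sqrt{(\expansionorder+1)!}$, and the exponential prefactor. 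Your central-limit intuition about where $r\sim1/\sqrt{\nparams}$ comes from is exactly right, but it must be applied to $\sum_i\delta_i$, not to $\sum_i\lvert\delta_i\rvert$.
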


\subsubsection{Scaling of the number of expectation function evaluations with truncation order}
The classical surrogate can be constructed from the partial derivatives at $\alv^*$ according to Eq.~\eqref{eq:surrogate_taylor}. These partial derivatives $\partial_{i_1, ..., i_k} \LC(\alv^*)$ can be computed by evaluating the expectation function $f(\alv)$ at different parameter values. Let $N$ denote the total number of expectation function evaluations to determine the partial derivatives required to construct the $\expansionorder^{\rm th}$ order surrogate. The following proposition bounds $N$ in terms of the surrogate order. 

\begin{proposition}[Scaling of the number of terms]\label{prop:number_of_terms}
The classical surrogate can be constructed by evaluating the expectation functions at $N$ parameter settings where,
\begin{align}
    N \leq N_{d} \left[\frac{ e({\nparams}+\expansionorder-1)}{\expansionorder}\right]^\expansionorder \;,
\end{align}
with $N_{d} $ an upper bound on the number of expectation function evaluations required to compute each partial derivative (up to $\expansionorder+1$ order).
\end{proposition}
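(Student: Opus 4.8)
The plan is to count, for each multi-index of derivative order $k \leq \expansionorder$, how many \emph{distinct} derivatives actually appear in the truncated Taylor sum \eqref{eq:surrogate_taylor}, and then multiply by the per-derivative evaluation budget $N_d$ from Assumption~\ref{assumption:bounded-loss-evaluations}. The key observation is that although the sum in \eqref{eq:surrogate_taylor} runs over ordered tuples $(i_1,\dots,i_k)\in\{1,\dots,\nparams\}^k$, the partial derivative $\partial_{i_1,\dots,i_k}\LC(\alv^*)$ is symmetric under permutations of its indices, so the number of genuinely different derivatives of order exactly $k$ is the number of multisets of size $k$ drawn from $\nparams$ elements, i.e. $\binom{\nparams+k-1}{k}$. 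Summing a geometric-type bound over $k$ from $0$ to $\expansionorder$ then controls the total.

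First I would write $N = N_d \cdot M$, where $M := \sum_{k=0}^{\expansionorder}\binom{\nparams+k-1}{k}$ is the total number of distinct partial derivatives of order at most $\expansionorder$ needed to assemble $\fsur_\expansionorder$. Next I would bound the sum $M$. Since $\binom{\nparams+k-1}{k}$ is nondecreasing in $k$ (for fixed $\nparams \geq 1$), the largest term is the last one, $\binom{\nparams+\expansionorder-1}{\expansionorder}$; more usefully, $M = \sum_{k=0}^{\expansionorder}\binom{\nparams+\expansionorder-1}{k}$ is \emph{not} quite right since the top argument varies, so instead I would note $\binom{\nparams+k-1}{k}\leq\binom{\nparams+\expansionorder-1}{k}$ for all $k\leq\expansionorder$, giving $M \leq \sum_{k=0}^{\expansionorder}\binom{\nparams+\expansionorder-1}{k}$. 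Now I can invoke the cumulative binomial bound \eqref{eq:cumbinom_ineq} with ``$m$'' $= \nparams+\expansionorder-1$ and ``$k$'' $= \expansionorder$, which yields $\sum_{k=0}^{\expansionorder}\binom{\nparams+\expansionorder-1}{k}\leq\left(\frac{e(\nparams+\expansionorder-1)}{\expansionorder}\right)^{\expansionorder}$. Combining, $N \leq N_d\left[\frac{e(\nparams+\expansionorder-1)}{\expansionorder}\right]^{\expansionorder}$, which is exactly the claimed bound.

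The one place that needs a little care is the hypothesis required for \eqref{eq:cumbinom_ineq}, namely $\expansionorder \leq \nparams+\expansionorder-1$, i.e. $\nparams\geq 1$, which always holds. I would also remark that this counting is an over-count in two harmless ways: it ignores that some low-order derivatives may be shared across the evaluation sets of different high-order derivatives, and it treats each of the $\binom{\nparams+k-1}{k}$ derivatives as requiring a fresh batch of $N_d$ evaluations; both slacken the inequality in the correct direction, so the stated upper bound is valid. Strictly speaking one should also note that $N_d$ is an upper bound uniform over \emph{all} the derivatives (this is built into Assumption~\ref{assumption:bounded-loss-evaluations}, where $N_d\in\OC(\nparams^\expansionorder)$ is stated as a single bound), so pulling it out of the sum is legitimate.

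The main obstacle is not really a technical one but a bookkeeping subtlety: one must be careful to distinguish the number of \emph{terms} in the formal sum \eqref{eq:surrogate_taylor} (which is $\sum_{k=0}^{\expansionorder}\nparams^k$, since the tuples are ordered) from the number of \emph{distinct expectation-function evaluations} needed (which is what $N$ counts, and which is governed by the unordered multiset count). Conflating these would give a weaker and less natural bound; using symmetry of mixed partials to pass to multisets is the crux. Everything after that is a one-line application of the cumulative binomial inequality already proved in the preliminaries.
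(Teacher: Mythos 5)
Your proof is correct and follows essentially the same route as the paper's: both identify the number of distinct order-$k$ partial derivatives with the multiset count $\binom{\nparams+k-1}{k}$ (by symmetry of mixed partials), sum over $k\leq\expansionorder$, and multiply by the uniform per-derivative budget $N_d$. The only difference is in the final estimate — the paper bounds $\sum_{k=0}^{\expansionorder}\binom{\nparams+k-1}{k}$ directly via $\binom{n}{k}\leq n^k/k!$ and the exponential series, whereas you first use $\binom{\nparams+k-1}{k}\leq\binom{\nparams+\expansionorder-1}{k}$ and then invoke Eq.~\eqref{eq:cumbinom_ineq}; both chains yield the identical bound.
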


 The way we evaluate the expectation function to compute the partial derivatives has not been specified yet. In some certain cases, the expectation function evaluations might be computable classically. 
However, in general we will need access to quantum computers to estimate expectation function values and in turn the partial derivatives. Since only statistical estimates of the expectation function can be obtained from quantum computers through measurement shots, this leads to an additional error for the surrogate due to the statistical fluctuation.

\subsubsection{General surrogation guarantee}
\label{sec:general-surrogate-guarantee}
The following \supt~presents the final surrogatability guarantee taking into account the presence of shot noise. Here we assume that the expectation function can be estimated on a quantum computer by measuring in the eigenbasis of the observable $O$. That is, given an eigendecomposition $O = \sum_{j} \lambda_j |\lambda_j \rangle\langle \lambda_j |$ where $\lambda_j$ is an eigenvalue with an associated eigenstate $|\lambda_j \rangle$, an unbiased estimator of the expectation function after $N_s$ measurement shots can be expressed as
\begin{align}\label{eq:taylor-loss-statistical-estimate}
    \widehat{f}(\alv) = \frac{1}{N_s} \sum_{i=1}^{N_s} \hat{o}_i \;,  
\end{align}
where $\hat{o}_i$ is an outcome of $i^{\rm th}$ measurement which takes a value $\lambda_j$ with a probability $p_j = \Tr[\EC_{\alv}(\rho) |\lambda_j \rangle\langle \lambda_j |]$. For observables of complex many-body systems it may not be straightforward to measure in the eigenbasis of $O$. In this case, it is instead natural to expand $O$ as a sum of `easier' operators to measure (e.g., a sum of Paulis). While we do not explicitly present this case here, the generalization is straightforward. 

\begin{theoremsup}[General Surrogatability Guarantee, Formal]\label{coro:surrogate_quantum_quantum}

Consider an expectation function $\LC(\alv)$ as defined in Eq.~\eqref{eq:loss_app} with $\norm{O}_\infty\in\OC(1)$. Assume that the derivatives are bounded as
\begin{align}\label{eq:assumption_gamma_wc}
     |\partial_{i_1, i_2, ... i_k}\LC(\alv)| \leq  \norm{O}_\infty\gamma^k \;\; \;,\;\; \forall \alv \in \vol(\alv^*, r)\;,
\end{align}
with $\gamma \in \OC(1)$ (i.e., Assumption~\ref{assumption:bounded-derivatives}). We additionally assume that each partial derivative (up to the order $\expansionorder$) can be written as a linear combination of at most $N_{d} $ expectation function evaluations such that
\begin{align}
    N_{d} \in \OC\left({\nparams}^\expansionorder\right) \;
\end{align}
(i.e., Assumption~\ref{assumption:bounded-loss-evaluations}) and where the biggest coefficient in absolute value scales at most 
 as $b_0\in \OC(\poly(m))$ (i.e., Assumption~\ref{assumption:loss-evaluations-expectations}). Moreover, we assume that each of these expectation function evaluations can be estimated by measuring in the eigenbasis on a quantum computer as in Eq.~\eqref{eq:taylor-loss-statistical-estimate}. 
\begin{itemize}
    \item Worst-case error: when the perturbation $r$ scales as
    \begin{align}
        r \in\OC\left( \frac{1}{\nparams} \right)\;,
    \end{align}
    with a number of measurement shots $N_s$ and probability at least $1-\delta$ (for any $\delta \in \Theta(1)$) we can construct $\widehat{\fsur_\expansionorder}(\alv)$ such that,
    \begin{align}
        \left|\widehat{\fsur_\expansionorder}(\alv) - f(\alv) \right| &\leq \norm{O}_\infty\left(b_0N_d e^{r\nparams}  \sqrt{\frac{2 \log(2/\delta)}{N_s}}  + \frac{( \gamma r {\nparams})^{\expansionorder+1}}{(\expansionorder+1)!} \right)\;. 
    \end{align}
    Thus the worst-case error can be reduced to an arbitrarily small constant,  $\left|\widehat{\fsur_\expansionorder}(\alv) - f(\alv) \right| \in \Theta(1)$ by increasing the order $\expansionorder$ with $\expansionorder \in \OC(1)$ and the number of shots with $N_s \in \OC(\poly(\nparams))$.    
    \item Average case error: when the perturbation $r$ scales as
    \begin{align}
        r \in\OC\left(\frac{1}{\sqrt{{\nparams}} }\right) \;,
    \end{align}    
      with $N_s$ measurement shots, we can construct $\widehat{\fsur_\expansionorder}(\alv)$ such that,
    \begin{align}
        \label{eq:total-error-taylor-surrogate-mse-quantum} 
    \Ebb_{\vec{\alpha} ,\MC}\left[\left(\widehat{\fsur_{\expansionorder}}(\alv)-f(\alv)\right)^2\right] &\leq 2\left(\frac{b_0 N_d(\expansionorder+1)(r\nparams)^{\expansionorder}\norm{O}_\infty}{\sqrt{N_s} \expansionorder !}\right)^2 + 2\left( \frac{2 \gamma^2 {\nparams} r^2}{3} \right)^{\expansionorder+1}\frac{\norm{O}_\infty^2}{(\expansionorder +1 )!} e^{\frac{ 2 \gamma^2 {\nparams} r^2}{3} }\;,
    \end{align}
    Thus the mean square error can be arbitrarily reduced to any constant in $\Theta(1)$ by increasing the order $\expansionorder$ with $\expansionorder \in \OC(1)$ and the number of shots $N_s$ with  $N_s \in \OC(\poly(\nparams))$.  Then by applying Markov's inequality~\ref{lemma:markov} we find the informal version of the theorem.
\end{itemize}
\end{theoremsup}
\supt~\ref{coro:surrogate_quantum_quantum} is the formal version of Theorem~\ref{thm:surrogate-average-general} from the main text. Indeed, we can set both the truncation and the empirical mean square errors to be $(\epsilon/2)^2$. We have seen that with $r \in\OC\left( \frac{1}{\sqrt{{\nparams}}}\right)$, the mean square truncation error can be made arbitrarily small by increasing $\expansionorder\in\OC(1)$. Then, equating both errors from the expression in Eq.~\eqref{eq:total-error-taylor-surrogate-mse-quantum} leads to
\begin{equation}\label{eq:Nshots-taylor-general}
    N_s \in \OC(\nparams^\expansionorder b_0^2 N_d^2)=\OC(\poly(\nparams))\;,
\end{equation}
where we used Assumptions~\ref{assumption:bounded-loss-evaluations} and~\ref{assumption:loss-evaluations-expectations} for the scaling of both $b_0$ and $N_d$. Therefore, using polynomial resources, we can get an arbitrarily small mean square error. A similar analysis can be made the worst-case error guarantee with $r\in\OC(1/\nparams)$ to get 
\begin{equation}
    N_s\in\OC(N_d^2 b_0^2 \log(1/\delta))=\OC(\poly(\nparams))\;.
\end{equation}

We note that in the proof of \supt~\ref{coro:surrogate_quantum_quantum}, the surrogate is expressed as the linear combination of the expectation function evaluated at different parameters. In order to construct a statistical estimate of the surrogate, we individually estimate each term in the sum and also the total measurement shout budget is evenly distributed to all the terms. This strategy is sub-optimal in general since it could be the case that there are terms in the sum that commute with each other and hence they can be simultaneously estimated and/or the total shot budget could be distributed in a more informative manner. Nevertheless, we now show that even with this sub-optimal strategy one can achieve a surrogate that achieves an arbitrarily small constant error with high probability with a polynomial number of measurement shots.

\textit{Note:} In principle, the analysis can be generalized to allow for independent perturbation radii for each parameter. The intuition is simple: in the derivation of the Taylor remainder in Eq.~\eqref{eq:variance_sm}, the dominant contribution arises from the variance of the sum of the parameter variations, suggesting that the expression 
$mr^2$ could be replaced by $\sum_{i=1}^m r_i^2$ where $\delta_i\in[-r_i,r_i]$. Nevertheless, establishing this generalization with full rigor would require considerably more technical work and would add little conceptual value to the main results. For this reason, and to keep the presentation focused, we restrict the analysis to hypercubes, while noting that the extension follows the same intuition.

\subsubsection{Guarantee that our assumptions are met by common parameterized quantum circuits}
\label{sec:guarantee-assumptions-theorem1}
The following proposition establishes that a broad family of parameterized quantum circuits (indeed, essentially all currently considered by the community currently with independent parameters between circuit layers) satisfy the assumptions of our bounds. 

\begin{proposition}[Bounded derivatives for expectation value with parametrized unitary channel] \label{prop:derivatives-upperbound-unitary}
Let us assume a expectation function of the form of Eq.~\eqref{eq:expectation} i.e. $f(\alv)=\Tr[\rho U^\dagger(\alv) O U(\alv)]$, where $U(\alv)=\prod_{l=1}^\nparams V_l U_l(\alpha_l)$ such that $V_l$ are arbitrary non-parametrized unitaries and each $U_l(\alpha_l)=\exp(-iH_l\alpha_l)$ is a parametrised gate with generator $H_l$ that is hermitian. Then, the $k$-th order derivatives of the form $\frac{\partial^k f(\alv)}{\prod_{l=1}^\nparams \partial\alpha_l^{k_l}}:=\partial^{\vec{k}}f(\alv)$ where $\sum_{l=1}^\nparams k_l=k$ and  $k_l\in\mathbb{N}$ for each $l$ are bounded by
\begin{align}
    |\partial^{\vec{k}}f(\alv)| &\leq \norm{O}_\infty \prod_{l=1}^\nparams (2\norm{H_l}_\infty)^{k_l} \\
    &\leq \norm{O}_\infty \left(2\max_{l}\norm{H_l}_\infty\right)^k
\end{align}

\end{proposition}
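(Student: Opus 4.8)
The plan is to differentiate the expectation function $f(\alv)=\Tr[\rho\, U^\dagger(\alv)\, O\, U(\alv)]$ directly, exploiting the product structure $U(\alv)=\prod_{l=1}^\nparams V_l U_l(\alpha_l)$ with $U_l(\alpha_l)=e^{-iH_l\alpha_l}$. First I would observe that each parameter $\alpha_l$ appears in exactly one factor $U_l$, so $\partial_{\alpha_l} U(\alv)$ is obtained by inserting a factor of $-iH_l$ next to $U_l$ inside the product; more generally, $\partial^{k_l}_{\alpha_l}U(\alv)$ inserts $(-iH_l)^{k_l}$ at that position. Since the different $\alpha_l$ sit in different factors, the mixed derivative $\partial^{\vec k} U(\alv) = \prod_l \partial^{k_l}_{\alpha_l} U(\alv)$ simply inserts each $(-iH_l)^{k_l}$ at its respective slot, and the result is again a product of unitaries interleaved with the (generally non-unitary) operators $(-iH_l)^{k_l}$.

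Next I would apply the Leibniz (product) rule to $f(\alv)=\Tr[\rho\, U^\dagger O\, U]$, which expresses $\partial^{\vec k} f(\alv)$ as a sum over ways of distributing, for each $l$, the $k_l$ derivatives between the $U^\dagger$ factor and the $U$ factor: schematically
\begin{align}
\partial^{\vec k} f(\alv) = \sum_{\vec j \leq \vec k} \binom{\vec k}{\vec j}\,\Tr\!\left[\rho\,\big(\partial^{\vec j}U(\alv)\big)^\dagger\, O\, \partial^{\vec k - \vec j}U(\alv)\right]\;,
\end{align}
where $\binom{\vec k}{\vec j}=\prod_l \binom{k_l}{j_l}$. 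Each term inside the trace is a product of the fixed unitaries $V_l$, the unitaries $U_l$, the operator $O$, the state $\rho$, and the inserted powers of $H_l$. I would then bound each term in magnitude using $|\Tr[\rho A]|\le \|A\|_\infty$ for a state $\rho$, together with submultiplicativity of the operator norm and unitary invariance ($\|U\|_\infty=1$), to get $|\Tr[\rho(\partial^{\vec j}U)^\dagger O\, \partial^{\vec k-\vec j}U]|\le \|O\|_\infty \prod_l \|H_l\|_\infty^{j_l}\,\|H_l\|_\infty^{k_l-j_l} = \|O\|_\infty\prod_l\|H_l\|_\infty^{k_l}$. Summing over $\vec j\le \vec k$ contributes a factor $\prod_l \sum_{j_l=0}^{k_l}\binom{k_l}{j_l}=\prod_l 2^{k_l}=2^k$, yielding $|\partial^{\vec k}f(\alv)|\le \|O\|_\infty\prod_l (2\|H_l\|_\infty)^{k_l}$. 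The second inequality follows immediately by replacing each $\|H_l\|_\infty$ with $\max_l\|H_l\|_\infty$ and using $\sum_l k_l = k$.

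I do not expect a serious obstacle here; the argument is essentially bookkeeping. The one point requiring a little care is the combinatorics of the multivariate Leibniz rule and verifying that the norm bound on the inserted generator powers really does factorize across the index $l$ despite the fixed unitaries $V_l$ sitting between them — but since $\|V_l\|_\infty=1$ and the norm is submultiplicative, interleaving unitaries never increases the bound, so this goes through cleanly. A second minor subtlety is that $H_l$ need not be traceless or bounded a priori, so the bound is genuinely in terms of $\|H_l\|_\infty$; for the single-qubit or few-qubit Pauli-type generators used in Eq.~\eqref{eq:CliffordVQAcircuit} (where $\|P_i/2\|_\infty = 1/2$) this recovers the constant $\gamma\in\OC(1)$ required by Assumption~\ref{assumption:bounded-derivatives}.
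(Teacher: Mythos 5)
Your proof is correct, and the final bounds match the proposition exactly, but your route differs from the paper's. The paper works layer by layer in the Heisenberg picture: it first reduces to bounding $\|\partial^{\vec k}(U^\dagger O U)\|_\infty$ via H\"older's inequality, then observes that each derivative in $\alpha_l$ of $U_l^\dagger A U_l$ produces $U_l^\dagger\, i[H_l,A]\,U_l$, so that $k_l$ derivatives yield a $k_l$-fold nested commutator; the factor $(2\|H_l\|_\infty)^{k_l}$ then comes from iterating $\|[H_l,A]\|_\infty\leq 2\|H_l\|_\infty\|A\|_\infty$, and an induction over $l$ threads the bound through the circuit. You instead apply the multivariate Leibniz rule globally to the product $U^\dagger O\,U$, bound each of the resulting terms by $\|O\|_\infty\prod_l\|H_l\|_\infty^{k_l}$ using submultiplicativity and unitary invariance, and recover the factor $2^{k_l}$ from $\sum_{j_l}\binom{k_l}{j_l}=2^{k_l}$. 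The two arguments are close cousins — expanding the nested commutator $[H_l,[H_l,\dots,[H_l,A]\dots]]$ reproduces exactly your $2^{k_l}$ binomially-weighted terms — but yours is arguably more elementary (no commutator identities or induction needed, just the product rule, H\"older, and $\|V_l\|_\infty=1$), while the paper's recursive formulation makes the per-layer contraction structure explicit. Your closing remarks on the role of the interleaved unitaries and on specializing to Pauli generators $H_l=P_l/2$ (giving $\gamma=1$) are both accurate and consistent with how the paper uses this proposition in Corollary~\ref{cor:surrogate-taylor-unitary}.
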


\begin{corollary}[Surrogation Guarantee for Unitary Channel]\label{cor:surrogate-taylor-unitary}
Consider a generic expectation function $f(\vec{\alpha})$ as defined in Eq.~\eqref{eq:expectation} with ${\nparams}$ independent parameters such that the spectral norm of each generators as well as the initial observable one are $\OC(1)$, and assume that (up to) order-$\expansionorder$ partial derivatives can be estimated efficiently with quantum computers. 
It is possible to efficiently classically surrogate $f(\vec{\alpha})$ over any uniformly sampled hypercube $\uni(\vec{\alpha}^*, r)$ around an arbitrary point $\vec{\alpha}^*$, with
\begin{align}
     r \in \OC \left( \frac{1}{\sqrt{\nparams}}\right) \; .
\end{align}
\end{corollary}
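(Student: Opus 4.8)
The plan is to obtain Corollary~\ref{cor:surrogate-taylor-unitary} as a direct specialization of the general surrogatability guarantee, \supt~\ref{coro:surrogate_quantum_quantum}: it suffices to verify that, for an expectation function of the form $f(\alv)=\Tr[\rho\, U^\dagger(\alv)\, O\, U(\alv)]$ with $U(\alv)=\prod_{l=1}^{m} V_l U_l(\alpha_l)$ and $U_l(\alpha_l)=e^{-iH_l\alpha_l}$, the three structural hypotheses Assumptions~\ref{assumption:bounded-derivatives}--\ref{assumption:loss-evaluations-expectations} hold automatically under the stated norm and estimability conditions, and then to read off the resulting scaling.

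First I would establish Assumption~\ref{assumption:bounded-derivatives}. This is precisely Proposition~\ref{prop:derivatives-upperbound-unitary}, which bounds every order-$k$ partial derivative of $f$ by $\norm{O}_\infty\bigl(2\max_l\norm{H_l}_\infty\bigr)^{k}$, a bound that is uniform over all of $\vol(\alv^*,r)$ since it does not depend on $\alv$. Under the hypotheses $\norm{O}_\infty\in\OC(1)$ and $\max_l\norm{H_l}_\infty\in\OC(1)$, this is Assumption~\ref{assumption:bounded-derivatives} with $\gamma=2\max_l\norm{H_l}_\infty\in\OC(1)$. Next I would check Assumptions~\ref{assumption:bounded-loss-evaluations} and~\ref{assumption:loss-evaluations-expectations}, i.e. that each partial derivative of order $k\le\kappa$ is a finitely supported, classically computable linear combination of expectation-function evaluations at shifted parameter points, with $N_d\in\OC(m^\kappa)$ terms and largest coefficient $b_0\in\OC(\poly(m))$. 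Here I would invoke the parameter-shift rule (or, for generators with more than two distinct eigenvalue gaps, the generalized parameter-shift rule, or simply numerical finite differences --- both flagged as canonical examples right after Assumption~\ref{assumption:loss-evaluations-expectations}): one differentiation with respect to $\alpha_l$ rewrites $f$ as a fixed linear combination of a constant number $s=\OC(1)$ of evaluations at points obtained by shifting $\alpha_l$, with bounded prefactors; iterating $k\le\kappa$ times gives at most $s^\kappa\in\OC(1)\subseteq\OC(m^\kappa)$ evaluations and a coefficient that is a product of $\kappa$ bounded prefactors, hence $b_0\in\OC(1)\subseteq\OC(\poly(m))$, and the shift points are explicitly computable. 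The order-$\kappa$ estimability hypothesis of the corollary is exactly the assumption that these evaluations can be obtained efficiently on a quantum computer.

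With all three assumptions in force I would apply the average-case branch of \supt~\ref{coro:surrogate_quantum_quantum}. Choosing $r\in\OC(1/\sqrt{m})$ makes $\gamma^2 m r^2\in\OC(1)$, so the truncation contribution $\bigl(\tfrac{2\gamma^2 m r^2}{3}\bigr)^{\kappa+1}\tfrac{\norm{O}_\infty^2}{(\kappa+1)!}\,e^{2\gamma^2 m r^2/3}$ decays super-exponentially in $\kappa$ and is driven below $(\epsilon/2)^2$ by a constant $\kappa\in\OC(1)$; the shot-noise contribution $\bigl(\tfrac{b_0 N_d(\kappa+1)(rm)^{\kappa}\norm{O}_\infty}{\sqrt{N_s}\,\kappa!}\bigr)^2$ is then $\le(\epsilon/2)^2$ for $N_s\in\OC\bigl(m^{\kappa} b_0^2 N_d^2\bigr)=\OC(\poly(m))$, and Markov's inequality (Lemma~\ref{lemma:markov}) turns this mean-square bound into the claimed $1-\delta$ guarantee for random $\alv\sim\uni(\alv^*,r)$. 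Efficiency of the classical side follows from Proposition~\ref{prop:number_of_terms}: the surrogate $\fsur_\kappa$ is assembled from $N\le N_d\,[e(m+\kappa-1)/\kappa]^{\kappa}\in\OC(\poly(m))$ expectation-function values, and evaluating the degree-$\kappa$ polynomial $\fsur_\kappa$ at any $\alv$ costs $\OC(\poly(m))$ time, which together give the "efficiently classically surrogate" conclusion.

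The only genuinely non-bookkeeping step is the second one: cleanly certifying that higher-order partial derivatives of a generic unitary expectation function admit a finitely supported, efficiently computable linear-combination-of-evaluations form with coefficient magnitude $b_0\in\OC(\poly(m))$. For two-eigenvalue generators this is the standard parameter-shift rule iterated $\kappa$ times; for general Hermitian $H_l$ it requires the generalized shift rule (whose number of shifts and coefficient sizes must be checked to remain $\OC(1)$ in the constant $\kappa$), or accepting the controlled bias of a finite-difference stencil. Once that representation is in hand, everything else is a mechanical substitution of $\gamma,\,N_d,\,b_0$ and $r$ into \supt~\ref{coro:surrogate_quantum_quantum}.
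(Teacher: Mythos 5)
Your proposal is correct and follows essentially the same route as the paper: establish Assumption~\ref{assumption:bounded-derivatives} via Proposition~\ref{prop:derivatives-upperbound-unitary} with $\gamma = 2\max_l\norm{H_l}_\infty \in \OC(1)$, then invoke the average-case branch of \supt~\ref{coro:surrogate_quantum_quantum}. You are somewhat more explicit than the paper in verifying Assumptions~\ref{assumption:bounded-loss-evaluations} and~\ref{assumption:loss-evaluations-expectations} via iterated parameter-shift rules (the paper folds this into the corollary's estimability hypothesis and a remark), but this is a welcome filling-in of detail rather than a different argument.
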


In our general surrogate guarantee, Assumptions~\ref{assumption:bounded-loss-evaluations} and~\ref{assumption:loss-evaluations-expectations} can be argued from a practical perspective. Let us clarify the scope of this theorem by first considering the following simple example. Assume a circuit composed by uncorrelated Pauli rotations and any non-parametrized gates (not restricted to near-Clifford here). In this case, the parameter-shift rule, which satisfies both Assumptions~\ref{assumption:bounded-loss-evaluations} and~\ref{assumption:loss-evaluations-expectations}, can be readily applied. Indeed, any order of derivatives with respect to a single parameters involves two loss evaluations with coefficients magnitude $1/2$. From this, we can directly state that $b_0\in\OC(1)$. For $k$-th  order partial derivatives, the largest number of loss evaluations correspond to derivatives with respect to $k$ distinct parameters which leads to up-to $2^k$ loss evaluations in total (with coefficients magnitude $1/2^k$)~\cite{cerezo2020impact}. This means that truncating at order $\kappa$ involves at most $N_d\in\OC(2^\kappa)$ loss evaluation for each partial derivatives considered in the surrogate. Therefore, both assumptions are satisfied. For this case, we provide a numerical study in Appendix~\ref{sec:iterative-training-taylor-surrogate} where we use Taylor surrogate iteratively to approximate a ground state.

The standard parameter-shift rule applies to gates whose generators have a two distinct eigenvalues, including single-qubit rotations and multi-qubit Pauli rotations. More generally, we can use the generalized parameter shift rule~ \cite{wierichs2022general} to compute derivatives of generators with more than two disctinct eigenvalues. One example are Givens rotations whose generators are sums of commuting Pauli operators. These require four circuit evaluations.

\subsubsection{How the polynomial guarantees break down for larger patch sizes}
\label{sec:extension-larger-patch}
For completeness, we note that if we want the resource requirements to scale at most polynomially with the system size and number of variational parameters, the only region for which we~\footnote{We do not rule out that others using a different proof strategy could not prove guarantees in a larger region~\cite{cerezo2020impact}, although we think this unlikely.} could prove generic guarantees for is the aforementioned $r\in\OC(1/\sqrt{m})$. Indeed, this ensures that the terms we require to compute and the MSE $\epsilon$ remain constant. 

We can see this as follows. We start by recalling  that the number of terms we need to compute, given by Proposition~\ref{prop:number_of_terms}, scales as 
\begin{equation}\label{eq:resources_beyond}
    N\in\OC(m^{w\kappa})\,,
\end{equation}
where $\kappa$ is the truncation order and $w\in[1,2]$ is some constant, and the MSE given in Proposition~\ref{th:mse_surrogated} is proportional to
\begin{equation}
    \epsilon^2\leq \frac{\left(c \nparams r^2\right)}{(\kappa+1)!}^{\kappa+1}e^{c \nparams r^2}\;,
\end{equation}
where $c=2\gamma^2/3\in\OC(1)$ is a constant. Next we will proceed to show what happens when we want to extend beyond the $r\in\OC(1/\sqrt{m})$.

Let us first simplify the MSE  expression using Stirling approximation for the factorial: \begin{equation}
    k!\geq \sqrt{2\pi k}\left(\frac{k}{e}\right)^k\geq \left(\frac{k}{e}\right)^k\;.
\end{equation}
Therefore, the MSE $\epsilon$ can be bounded as follows
\begin{equation}
    \epsilon^2\leq \left(\frac{ce \nparams r^2}{\kappa+1}\right)^{\kappa+1}e^{c \nparams r^2}\;.
\end{equation}

 From the previous equation, we see that in order to make the error bound arbitrarily small, we would necessarily need $cemr^2<\kappa+1$. Let us fix $\kappa+1=Kcemr^2\in\OC(mr^2)$ where $K>1$ is a constant. Then we can rewrite the upper bound in the MSE from the previous equation as
 \begin{equation}
     \epsilon^2\leq \left(\frac{1}{K}\right)^{Kecmr^2}e^{cmr^2}=\left(\frac{e}{K^{Ke}}\right)^{cmr^2}\;.
 \end{equation}

Since we are interested in extending $r$ to larger region (beyond $\order{1/\sqrt{m}}$), we can assume $r\in\OC(1/m^\beta)$ with $0\leq \beta < 1/2$. In this case, we would have $\kappa\in\OC(mr^2)=\OC(m^{1-2\beta})$ where $1-2\beta>0$. Substituting back the relation between $\kappa$ and $m$ into the number of resources required in Eq.~\eqref{eq:resources_beyond}, the count scales as
\begin{equation}
    N\in\OC\left(m^{w m^{1-2\beta}}\right)\;,
\end{equation}
which scales super-exponentially for $\beta=0$, where $r\in\order{1}$. 

More positively, we can reduce to \textit{sub-exponential} (albeit still {super-polynomial}) if $0<\beta<1/2$. Note that if $r$ decreases logarithmically with $m$, the resources also scales \textit{super-polynomially}. \\

\paragraph*{Pauli Propagation.}
 The previous observations can be equivalently shown for Pauli propagation truncation case (see Appendix~\ref{app:smallanglePP}).
In this case, the maximal resource requirement (see Eq.~\eqref{eq:number-paths}) is proportional to
\begin{equation}\label{eq:worst-case-resource-Pauli-extension}
    N\leq\left(\frac{em}{\kappa}\right)^\kappa\;,
\end{equation}
whereas the error will be (see Supplemental Theorem~\ref{thm:mse-avg-supp}) of the following form 
\begin{equation}
    \epsilon^2\leq\left(\frac{emr^2}{\kappa}\right)^\kappa\;.
\end{equation}
To make the error arbitrarily small, we would need  at least $\kappa\in\Omega(mr^2)$. So assuming $r\in\OC(1/m^{\beta})$ with $0\leq \beta <1/2$, we have $\kappa\in\OC(m^{1-2\beta})$.
Since the resource scales exponentially with $\kappa\in\OC(m^{1-2\beta})$, it will be exponential with $m$ for the case $\beta=0$, i.e., constant $r$ and sub-exponential, but super-polynomial for $0<\beta<1/2$. Similarly, if $\kappa$ decreases logarithmically with $m$, the resource scaling will be super-polynomial (but sub-exponential in general).
Obviously, the case $r\sim 1$ makes no sense here since sine can be larger than cosine. Also, setting $\kappa=m$, means that we are not truncating anymore. 

Within the Pauli-propagation framework, the required number of terms is inherently problem-dependent. While the worst-case complexity is super-polynomial, Pauli propagation can still be practical in specific settings. Eq.~\eqref{eq:worst-case-resource-Pauli-extension} should thus be interpreted as an upper bound arising from a maximal-splitting scenario. 

\subsection{Absolute error: Proof of Proposition~\ref{th:wors_case_error_app}}\label{sec:absolute_error}\label{app:generalsurrogatetheorem-3}

\begin{proof} The worst-case error of the surrogate  $\fsur_{\expansionorder}(\alv)$ of the order $\expansionorder$ is written as
\begin{equation}
    \epsilon_{\expansionorder,\alv^*}^{{\rm WC}} := \max_{\alv\in\vol(\alv^{*}, r)}|\fsur_{\expansionorder}(\alv) - {\LC}(\alv)|\,,
\end{equation}
where $\LC_{\alv}$ is the expectation function, and $\vol(\alv^{*}, r)$ is the hypercube in the parameter space such that
\begin{align}
    \alpha_i \in [\alpha^*_i - r, \alpha^*_i + r ]\,.
\end{align}

By invoking the Taylor Reminder Theorem (i.e., \supt~\ref{thm:reminder_taylor}), we see that this is equivalent to bounding the reminder of the Taylor expansion. In particular, by denoting the parameters corresponding to the worst case error
\begin{align}
\alv' = \argmax_{\alv\in\vol(\alv^*, r)}|\fsur_{\expansionorder}(\alv) - {\LC}(\alv)| \;,    
\end{align}
 we can use \supt~\ref{thm:reminder_taylor} to explicitly write the worst-case error as
\begin{align}
      \epsilon_{\expansionorder,\alv^*}^{{\rm WC}} & = |\fsur_{\expansionorder}(\alv') - {\LC}(\alv')| \\
      & = \left| \frac{1}{(\expansionorder+1)!}\sum_{i_1, \dots , i_{\expansionorder+1}=1}^{\nparams} \Delta_{i_1, \dots , i_{\expansionorder+1}}(\alv', \alv^*)\partial_{i_1, ... i_{\expansionorder+1}}\LC(\bm{\vartheta}) \right| \;,
\end{align}
where $\bm{\vartheta} = \tau \alv' + (1-\tau)\alv^*$ for some $\tau\in[0,1]$.

By applying the triangle inequality we can further simplify this expression to obtain
\begin{align}
   \epsilon_{\expansionorder,\alv^*}^{{\rm WC}} & = \left| \frac{1}{(\expansionorder+1)!}\sum_{i_1, \dots , i_{\expansionorder+1}=1}^{\nparams} \Delta_{i_1, \dots , i_{\expansionorder+1}}(\alv', \alv^*)\partial_{i_1, ... i_{\expansionorder+1}}\LC(\bm{\vartheta}) \right|\\
    & \leq \frac{1}{(\expansionorder+1)!}\sum_{i_1, \dots , i_{\expansionorder+1}=1}^{\nparams} \left|\Delta_{i_1, \dots , i_{\expansionorder+1}}(\alv', \alv^*)\partial_{i_1, ... i_{\expansionorder+1}}\LC(\bm{\vartheta}) \right|\\
    & =  \frac{1}{(\expansionorder+1)!}\sum_{i_1, \dots , i_{\expansionorder+1}=1}^{\nparams} \left|\Delta_{i_1, \dots , i_{\expansionorder+1}}(\alv', \alv^*)\right|\left|\partial_{i_1, ... i_{\expansionorder+1}}\LC(\bm{\vartheta}) \right|\label{eq:wc_bound_aftertriangle}
\end{align}
where in the last equality we use that the absolute value of the product is the product of absolute values, i.e., $|x y| = |x||y|$.

Next, let us  focus on the terms $\partial_{i_1, ... i_{\expansionorder+1}}\LC(\bm{\vartheta}) $. Since the assumption in Eq~\eqref{eq:assumption_gamma_wc_1} of Proposition~\ref{th:wors_case_error_app} holds, the derivatives in the previous sum are bounded as
\begin{align}
     |\partial_{i_1, i_2, ... i_k}\LC(\alv)| \leq  \norm{O}_\infty\gamma^{k} \;\; \;,\;\; \forall \alv \in \vol(\alv^*, r)\;.
\end{align}
Combining the previous result with  Eq.~\eqref{eq:wc_bound_aftertriangle}, we can simplify the bound as
\begin{align}
\epsilon_{\expansionorder,\alv^*}^{{\rm WC}} & \leq \frac{\norm{O}_\infty\gamma^{\expansionorder+1}}{(\expansionorder+1)!}\sum_{i_1, \dots , i_{\expansionorder+1}=1}^{\nparams} \left|\Delta_{i_1, \dots , i_{\expansionorder+1}}(\alv, \alv^*)\right|\,.
\end{align}

Next we can recall the definition of $\Delta_{i_1, \dots , i_{\expansionorder+1}=1}(\alv', \alv^*)$ in Eq.~\eqref{eq:def_deltath}
\begin{align}
    \Delta_{i_1, \dots , i_{\expansionorder+1}}(\alv', \alv^*) & = (\alpha'_{i_1} - \alpha^*_{i_1})(\alpha'_{i_2} - \alpha^*_{i_2}) ... (\alpha'_{i_{\expansionorder+1}} - \alpha^*_{i_{\expansionorder+1}}) \\
    & = \delta'_{i_1} \delta'_{i_2} ... \delta'_{i_{\expansionorder+1}}  \;,
\end{align}
where we denote $\delta'_{k} = \alpha'_k - \alpha^*_k$. Therefore, we have
\begin{align}
    \epsilon_{\expansionorder,\alv^*}^{{\rm WC}}   &\leq  \frac{\norm{O}_\infty\gamma^{\expansionorder+1}}{(\expansionorder+1)!}\sum_{i_1, \dots , i_{\expansionorder+1}=1}^{\nparams} \left|\delta'_{i_1}...\delta'_{i_{\expansionorder+1}}\right|\\
    &\leq \frac{\norm{O}_\infty\gamma^{\expansionorder+1}}{(\expansionorder+1)!}\sum_{i_1, \dots , i_{\expansionorder+1}=1}^{\nparams} r^{\expansionorder+1}\\
    &= \frac{\norm{O}_\infty}{(\expansionorder+1)!}(\gamma r {\nparams})^{\expansionorder+1}\,,
\end{align}
where in the second inequality we used that $\max_{\alpha_i} |\alpha_i-\alpha_i^*| = r$ for all the $\alpha_i$ and to reach the last line we use $\sum_{i=1}^{\nparams} 1 = {\nparams}$ for all indices. Notice that $r\in\OC(1/\nparams)$, $\gamma\in\OC(1)$ and $\norm{O}_\infty\in\OC(1)$ implies $\gamma r\nparams\in\OC(1)$ and therefore, the error is $\OC(1)$ and decreases with $\expansionorder$, which completes the proof of the proposition.
\end{proof}

\subsection{Mean-square error: Proof of Proposition~\ref{th:mse_surrogated}} \label{appx:surrogate-taylor-average}\label{app:generalsurrogatetheorem-4}

\begin{proof}
The MSE is of the form
    \begin{align}
        \left(\epsilon_{\expansionorder,\alv^*}^{{\rm MSE}}\right)^2 & = \Ebb_{\alv \sim \uni(\alv^*, r_{\nparams}) }\bigg[\left(\fsur_{\expansionorder}(\alv) - {\LC}(\alv)\right)^2\bigg] \\
        & = \Ebb_{\alv \sim \uni(\alv^*, r_{\nparams}) }\left[\left(\sum_{k = \expansionorder+1}^{\infty}\frac{1}{k!}\sum_{i_1, \dots , i_k=1}^{\nparams} \Delta_{i_1, \dots , i_k}(\alv, \alv^*)\partial_{i_1, ... i_k}\LC(\alv^*)\right)^2\right]
    \end{align}
where $\uni(\alv^*, r_{\nparams})$ is a uniform distribution of $\alv$ in the hypercube region, i.e., $\vol(\alv^*, r_{\nparams}) = \{ \alv \}$ such that $\alpha_i \in [\alpha_i^* -r_{\nparams} , \alpha_i^* + r_{\nparams} ] $ for all $i$. Note that we explicitly add the subscript ${\nparams}$ in $r_{\nparams}$ indicating that it is associated with $\alv$ with ${\nparams}$ parameters. This will come in handy later in our proof. To reach the second equality, we use the definition of the Taylor expansion on the expectation function in Eq.~\eqref{eq:taylor_loss} and the definition of the surrogate in Eq.~\eqref{eq:surrogate_taylor} to see that $\fsur_{\expansionorder}(\alv) - {\LC}(\alv)$ is an infinite sum of terms.

To further proceed, we introduce a perturbation vector $\dlv := \alv - \alv^*$, whose $i$-th components is  $\delta_i = \alpha_i - \alpha^*_i$ for any index $i$. Then, we also denote  the average of $\dlv$ over the perturbed region $\Ebb_{\dlv} [\cdot] := \Ebb_{\dlv \sim \uni(\vec{0},r_{\nparams})} [\cdot]$ where $\uni(\vec{0},r_{\nparams})$ is a uniform distribution over the region $\vol(\vec{0},r_{\nparams}) = \{ \dlv \}$ with $\delta_i \in [-r_{\nparams}, r_{\nparams}]$. More specifically,  $\uni(\vec{0},r_{\nparams})$ is obtained by sampled independently each component $\delta_i$ over $[-r_{\nparams}, r_{\nparams}]$. Then, the MSE can be written as
\begin{align}
   \left(\epsilon_{\expansionorder,\alv^*}^{{\rm MSE}} \right)^2& =\Ebb_{\alv \sim  \uni(\alv^*, r)}\Bigg( \sum_{k = \expansionorder+1}^{\infty}\frac{1}{k!} \sum_{i_1, \dots , i_k=1}^{\nparams}  \Delta_{i_1, \dots , i_k}(\alv, \alv^*)\partial_{i_1, ... i_k}\LC(\alv^*)\Bigg)^2 \\
    & = \sum_{k,l = \expansionorder+1}^{\infty}\frac{1}{l! k!}\sum_{i_1, \dots , i_k=1}^{\nparams} \sum_{j_1, \dots , j_l=1}^{\nparams} \Ebb_{\alv \sim \uni(\alv^*, r)}\Bigg[\Delta_{i_1, \dots , i_k}(\alv, \alv^*) \Delta_{j_1, \dots , j_l}(\alv, \alv^*)\Bigg]\partial_{i_1, ... i_k}\LC(\alv^*)\partial_{j_1, ... j_k}\LC(\alv^*)\\
    & \leq \sum_{k,l = \expansionorder+1}^{\infty}\frac{1}{l! k!}\sum_{i_1, \dots , i_k=1}^{\nparams} \sum_{j_1, \dots , j_l=1}^{\nparams} \Ebb_{\dlv}\left[ \delta_{i_1} ... \delta_{i_k} \delta_{j_1} ... \delta_{j_l} \right]\norm{O}_\infty^2\gamma^{k+l}\label{eq:put_back_assum} \\
    & =  \norm{O}_\infty^2\sum_{k,l = \expansionorder+1}^{\infty}\frac{\gamma^{k+l}}{l! k!} \Ebb_{\dlv}\left[ \sum_{i_1=1}^{\nparams} \delta_{i_1} \cdots \sum_{i_k = 1}^{\nparams} \delta_{i_k} \sum_{j_1=1}^{\nparams} \delta_{j_1} \cdots \sum_{j_l=1}^{\nparams} \delta_{j_l}\right] \\
    & = \norm{O}_\infty^2\sum_{k,l = \expansionorder+1}^{\infty}\frac{\gamma^{k+l}}{l! k!} \Ebb_{\dlv}\left[ \left( \sum_{i=1}^{\nparams} \delta_{i}\right)^{k+l}\right] \label{eq:contraction_sums}
\end{align}
where the inequality is by the assumption of the bounded derivatives as specified in Eq.~\eqref{eq:assumption_gamma_wc_2},  
\begin{align}
     |\partial_{i_1, i_2, ... i_k}\LC(\alv)| \leq  \norm{O}_\infty\gamma^k \;\; \;,\;\; \forall \alv \in \vol(\alv^*, r_{\nparams})\;, \label{eq:assumption_gamma_wc_proof_mse}
\end{align}
together with the fact that the averages of the perturbations are always non-negative due to the symmetry in the perturbations. In particular, since $\delta_k \in [-r_{\nparams},r_{\nparams}]$, we have that the uniform average results in $\Ebb_{\delta_k}[\delta^{2a+1}_k] = 0$ and $\Ebb_{\delta_k}[\delta^{2a}_k] \geq 0$ for any integer $a \in \mathbb{N}$. Hence, this justified Eq.~\eqref{eq:put_back_assum}. To reach the last equality, we simply massage the expression by pulling out the average and re-writing the sums.

From here onwards we focus on bounding the average term $\Ebb_{\dlv}\left[\left( \sum_{i=1}^{\nparams} \delta_i \right)^{k+l}\right]$. Particularly, one of our key ingredients is the central limit theorem. To see this, we begin with denoting
\begin{align}
    S_{\nparams} = \sum_{i=1}^{\nparams} \delta_i \;,
\end{align}
together with its variance 
\begin{align}\label{eq:variance_sm}
    \sigma_{\nparams}^2 := \Var_{\dlv}[S_{\nparams}] & = \frac{\nparams r_{\nparams}^2}{3} \;,
\end{align}
where we use the fact that the variables $\{ \delta_i \}$ are independent and $\Var_{\delta_i} [\delta_i] = r^2_{\nparams}/3$ for a uniform distribution. Furthermore, we can define a new random variable 
\begin{align}
    X_{\nparams} = \frac{S_{\nparams}}{\sigma_{\nparams}} \;,
\end{align}
together with its variance which is equal to $1$ i.e.,  
\begin{align}
    \Var_{\dlv} [X_{\nparams}] = 1 \;.
\end{align}

The central limit theorem can now be applied to $X_{\nparams}$ (see \supt~\ref{theorem:clt}), which essentially states that when ${\nparams}$ approaches infinity, the probability distribution of the random variable $X_{\nparams}$ approaches a normal distribution with the zero mean and  variance equal to one, i.e., $\mathcal{N}(0,1)$. As stated in Appendix~\ref{sec:central_limit_th}, the moments of a random variable under $\mathcal{N}(0,1)$ are 
\begin{equation}\label{eq:highmoments}
    \Ebb_{\dlv}[X^a_{{\nparams}\to\infty}] = 
    \begin{cases}
        0 &{\,\, \rm if}\,\, a \,\, {\rm is} \,\,{\rm odd}\\
        (a-1)!! &\,\, { \rm if} \,\, a \,\, {\rm is}\,\,  {\rm even}
    \end{cases}
\end{equation}
for any integer $a\in\mathbb{N}$ and $x!!$ being the double factorial function, defined for $x\in \mathbb{N}$ as $x!! = x(x-2)(x-4)(x-6)...1$. 

Next we will upper-bound $\Ebb[X^a_{{\nparams}}]$ with $\Ebb[X^a_{{\nparams}\to\infty}]$ for any ${\nparams}$. That is, the expected value of $X^a_{\nparams}$ tends to a maximum with increasing ${\nparams}$. To see this, first notice that $X_{{\nparams}+1}$ can be written in terms of $X_{\nparams}$ as
\begin{equation}\label{eq:xmp1}
    X_{{\nparams}+1} = \frac{S_{\nparams} + \delta_{{\nparams}+1}}{\sigma_{{\nparams}+1}}= \frac{\sigma_{\nparams}}{\sigma_{{\nparams}+1}}X_{\nparams} + \frac{\delta_{{\nparams}+1}}{\sigma_{{\nparams}+1}} \;.
\end{equation}

We can bound the $a$-th moment of $X_{{\nparams}+1}$ with the $a$-th moment of $X_{\nparams}$ as
\begin{align}
    \Ebb_{\dlv}\left[X^a_{{\nparams}+1}\right] & = \Ebb_{\dlv}\left[\left(\frac{\sigma_{\nparams}}{\sigma_{{\nparams}+1}}X_{\nparams} + \frac{\delta_{{\nparams}+1}}{\sigma_{{\nparams}+1}}  \right)^{a}\right] \\
    & =  \sum_{i=0}^a \binom{a}{i} \Ebb_{\dlv} \left[ X_{\nparams}^{i} \right]\left(\frac{\sigma_{\nparams}}{\sigma_{{\nparams}+1}}\right)^{i}\frac{\Ebb_{\delta_{{\nparams}+1}}\big[ \delta_{{\nparams}+1}^{a-i} \big]}{\sigma_{{\nparams}+1}^{a-i}} \\
    & \geq \left(\frac{\sigma_{\nparams}}{\sigma_{{\nparams}+1}}\right)^{a}\Ebb_{\dlv}\left[ X_{\nparams}^a \right] \;,
\end{align}
where in the second line we use the binomial expansion $(x+y)^a = \sum_{i=0}^a \binom{a}{i} x^i y^{a -i}$ and the linearity of the expectation, while in the inequality we dropped all the terms (which are all non-negative) except for $i = a$.

To proceed, we consider the scenario where $\frac{\sigma_{\nparams}}{\sigma_{{\nparams}+1}}\geq 1$ which implies that
\begin{align}\label{eq:bounding-moments}
    \Ebb_{\dlv}\left[X^a_{{\nparams}+1}\right] & \geq \Ebb_{\dlv}\left[X^a_{{\nparams}}\right] \;.
\end{align}

To determine when $\frac{\sigma_{\nparams}}{\sigma_{{\nparams}+1}}\geq 1$, we further consider $r_{\nparams}$ to be of the form $r_{\nparams} = c\nparams^{x}$ with some constant $c$. Then, we can see that $\frac{\sigma_{\nparams}}{\sigma_{{\nparams}+1}}\geq 1$ is valid only if $x \leq -1/2$. More precisely, by using Eq.~\eqref{eq:variance_sm} we have
\begin{equation}\label{eq:condition_r}
    \frac{\sigma_{\nparams}}{\sigma_{{\nparams}+1}} = \left(\frac{{\nparams}}{{\nparams}+1}\right)^{2x +1}\geq 1 \;\; {\rm iff} \;\; x\leq-\frac{1}{2} \;.
\end{equation}
Hence, from here onwards, we consider that 
\begin{align}
    r_{\nparams} \in \OC\left( \frac{1}{\sqrt{{\nparams}}}\right) \;,
\end{align}
justifying Eq.~\eqref{eq:bounding-moments} and leading to
\begin{equation}\label{eq:bound_clt}
    \Ebb_{\dlv}[X^a_{{\nparams}}]\leq \Ebb_{\dlv}[ X_{{\nparams}+1}^a ]\leq \dots \leq \Ebb_{\dlv}[ X_{{\nparams}\to \infty}^a ] = (a-1)!! \;\;\;\; {\rm if} \,\,  a \,\,  {\rm is}\,\,  {\rm even} \;,
\end{equation}
where in the last equality we used Eq.~\eqref{eq:highmoments}. We remind that only even $a$ terms do not vanish, since for odd $a$ we have $\Ebb_{\dlv}[X^a_{{\nparams}}] = 0$. 

We are now ready to revisit Eq.~\eqref{eq:contraction_sums} by applying the previously obtained bound. In particular, we have
\begin{align}
    \left( \epsilon_{\expansionorder,\alv^*}^{{\rm MSE}}\right)^2 &\leq \norm{O}_\infty^2\sum_{k,l = \expansionorder+1}^\infty\frac{\gamma^{k+l}}{k!l!}\Ebb_{\dlv}\left[\left( \sum_{i=1}^{\nparams} \delta_i \right)^{k+l}\right] \\
    &=\norm{O}_\infty^2\sum_{k,l= \expansionorder+1}^\infty\frac{\gamma^{k+l}}{k!l!}\sigma_{\nparams}^{k+l}\Ebb_{\dlv}\left[X_{\nparams}^{k+l}\right]\\
    & \leq \norm{O}_\infty^2\sum_{k,l = \expansionorder+1}^\infty\frac{(\gamma\sigma_{\nparams})^{k+l}}{k!l!}\Ebb_{\dlv}[X_{{\nparams}\to\infty}^{k+l}] \\
    & = \norm{O}_\infty^2\sum_{\substack{k,l = \expansionorder+1 \\ ;   k+l \,\, {\rm even}}}^\infty \frac{(\gamma\sigma_{\nparams})^{k+l}}{k!l!} (k + l -1)!! \,,\label{eq:boundeps_withclt}
\end{align}
where in the fist equality we used that $S_{\nparams} = X_{\nparams}\sigma_{\nparams}$. In the second inequality we used the bound derived in Eq.~\eqref{eq:bound_clt}.

In the next few steps, we will do a series of changes of indices, which leads to a more manageable expression. First, we denote $k =  \expansionorder+1 +i$, $l = \expansionorder+1+j$ and explicitly add a condition to force that $i+j$ is even. Particularly, we have
\begin{equation}
    \sum_{\substack{k,l = \expansionorder+1 \\ ;   k+l \,\, {\rm even}}}^\infty \frac{(\gamma\sigma_{\nparams})^{k+l}}{k!l!} (k + l -1)!!  = \sum_{i,j = 0}^\infty \frac{(2\expansionorder + 1 +i +j)!! }{(\expansionorder+1+i)!(\expansionorder+1+j)!}(\gamma\sigma_{\nparams})^{2\expansionorder + 2 + i + j} \left(\frac{1+(-1)^{i+j}}{2} \right)\label{eq:pre-change}
\end{equation}
where the last fraction ensures $i+j$ is even.

We can now make another change in the variables that will help with the computation. More precisely, for any $h_{ij}$ we will use
\begin{equation}
    \sum_{i,j = 0}^{\infty} h_{i,j} =\sum_{y' = 0}^{\infty}\sum_{ \substack{ i,j =0 \\  ;i+j = y'}}^{y'} h_{i,j} = \sum_{y' = 0}^{\infty}\sum_{x = 0}^{y'} h_{x,y'-x} \,,
\end{equation}
where in the first equality we are adding all the combinations of $i + j$ that add up to $y'$, and then in the second equality we substitute $i=x$ and $j = y'-x$ (and thus we keep $y' = i + j$). This last change in indices will allow us to write the sum more simply. Indeed if we apply it Eq.~\eqref{eq:pre-change} we find 
\begin{align}
      &\sum_{i,j = 0}^\infty \frac{(2\expansionorder + 1 +i +j)!! }{(\expansionorder+1+i)!(\expansionorder+1+j)!}(\gamma\sigma_{\nparams})^{2\expansionorder + 2 + i + j} \frac{1+(-1)^{i+j}}{2} \nonumber\\
      &=\sum_{y' = 0}^\infty\sum_{x  = 0}^{y'} \frac{(2\expansionorder + 1 + y')!! }{(\expansionorder+1+x)!(\expansionorder+1+y'-x)!}(\gamma\sigma_{\nparams})^{2\expansionorder + 2 +y'} \frac{1+(-1)^{y'}}{2}
\end{align}
It is clear from the previous expression that $y'$ has to be even, thus we only keep the terms of $y'$ that are divisible by 2. Enforcing this condition is equivalent to (i) replacing $y' = 2y$ in the terms and in the limit of the second sum, and (ii) keeping the limit of the first sum from $y = 0$ to $\infty$. Implementing (i) and (ii) leads to
\begin{align}
    \sum_{y' = 0}^\infty\sum_{x  = 0}^{y'} \frac{(2\expansionorder + 1 + y')!! }{(\expansionorder+1+x)!(\expansionorder+1+y'-x)!}(\gamma\sigma_{\nparams})^{2\expansionorder + 2 +y'} \frac{1+(-1)^{y'}}{2} & =
    \sum_{y = 0}^\infty\sum_{x  = 0}^{2y} \frac{(2\expansionorder + 1 + 2y)!! }{(\expansionorder+1+x)!(\expansionorder+1+2y-x)!}(\gamma\sigma_{\nparams})^{2\expansionorder + 2 +2y}  \,.
\end{align}

We have finished the process of changing the variables and will proceed to bound the factorial terms. We start by recalling the identity $(2k + 1)!! = \frac{(2k+2)!}{2^{k+1} (k+1)!}$ with $k = \expansionorder+y$, which leads to
\begin{align}
    \sum_{y = 0}^\infty\sum_{x  = 0}^{2y} \frac{(2\expansionorder + 1 + 2y)!! }{(\expansionorder+1+x)!(\expansionorder+1+2y-x)!}(\gamma\sigma_{\nparams})^{2\expansionorder + 2 +2y} 
    & =  \sum_{y = 0}^\infty \frac{(\gamma\sigma_{\nparams})^{2\expansionorder + 2 +2y}}{2^{\expansionorder + y +1}(\expansionorder + y +1 )!}  \sum_{x  = 0}^{2y} \frac{(2\expansionorder + 2 + 2y)!}{(\expansionorder+1+x)!(\expansionorder+1+2y-x)!} \,.\label{eq:change-variable-done}
\end{align}
Let us focus now on upper bounding the second sum involving $x$. We write the factorial terms in a binomial form as
\begin{align}
     \sum_{x  = 0}^{2y} \frac{(2\expansionorder + 2 + 2y)!}{(\expansionorder+1+x)!(\expansionorder+1+2y-x)!} = \sum_{x  = 0}^{2y}\binom{2\expansionorder + 2 + 2y}{\expansionorder+1+x} \leq \sum_{x  = -\expansionorder -1}^{2y + \expansionorder +1}\binom{2\expansionorder + 2 + 2y}{\expansionorder+1+x} = 2^{2\expansionorder + 2 + 2y}\,,
\end{align}
where in the inequality we use that all the terms in the summation are positive, meaning that we can increase the limits of the sum. Then, in the last equality we use that the sum of all this terms equal the exponential (as per the Binomial Theorem/Binomial expansion). Replacing this bound in Eq.~\eqref{eq:change-variable-done} leads to
\begin{align}
    \sum_{y = 0}^\infty \frac{(\gamma\sigma_{\nparams})^{2\expansionorder + 2 +2y}}{2^{\expansionorder + y +1}(\expansionorder + y +1 )!}  \sum_{x  = 0}^{2y} \frac{(2\expansionorder + 2 + 2y)!}{(\expansionorder+1+x)!(\expansionorder+1+2y-x)!} \leq & \sum_{y = 0}^\infty \frac{\left(2\gamma^2 \sigma_{\nparams}^{2} \right)^{\expansionorder + 1 +y}}{(\expansionorder + y +1 )!} \\
    \leq &  \sum_{y = 0}^\infty \frac{\left(2\gamma^2 \sigma_{\nparams}^{2} \right)^{\expansionorder + 1 +y}}{(\expansionorder +1 )! y!}\\
    =&\frac{\left(2\gamma^2 \sigma_{\nparams}^{2} \right)^{\expansionorder + 1}}{(\expansionorder +1 )!} e^{2\gamma^2\sigma_{\nparams}^{2} } \label{eq:bound-double-factorial-done}
\end{align}
where in the second inequality we use that $(\expansionorder+1+y)!\geq (\expansionorder+1)!y!$, and we use the fact that $\sum_{k=0}^{\infty} \frac{x^k}{k!} = e^x$.

Finally, we revisit Eq.~\eqref{eq:boundeps_withclt} by writing in terms of the new variables and bounding with Eq.~\eqref{eq:bound-double-factorial-done}, which leads to
\begin{align}
       \left( \epsilon_{\expansionorder,\alv^*}^{{\rm MSE}}\right)^2 & \leq \norm{O}_\infty^2\frac{\left(2 \gamma^2\sigma_{\nparams}^{2} \right)^{\expansionorder + 1}}{(\expansionorder +1 )!} e^{2\gamma^2\sigma_{\nparams}^{2} } \\
       & = \frac{\norm{O}_\infty^2}{(\expansionorder +1 )!}\left( \frac{2 {\nparams} \gamma^2 r^2_{\nparams}}{3} \right)^{\expansionorder+1} e^{\frac{2 {\nparams} \gamma^2r^2_{\nparams}}{3} } \;,
\end{align}
where to reach the last equality we use Eq.~\eqref{eq:variance_sm}. We remind again that we explicitly use the condition that $r_{\nparams} \in \OC\left(\frac{1}{\sqrt{{\nparams}}}\right)$ in Eq.~\eqref{eq:condition_r} for the bound to be valid. Moreover, assuming $\gamma\in\OC(1)$ implies $\nparams\gamma^2r_{\nparams}^2\in\OC(1)$. Therefore, we can see that the MSE above can be made arbitrarily small by increasing $\expansionorder$. Especially, if $\norm{O}_\infty\in\OC(1)$ we have $\epsilon_{\expansionorder,\alv^*}^{{\rm MSE}}\in\OC(1)$.
\end{proof}

\subsection{Scaling of the number of expectation function terms for given surrogation order: Proof of Proposition~\ref{prop:number_of_terms}}\label{app:number_of_terms}\label{app:generalsurrogatetheorem-5}
\begin{proof}
   We start this proof by recalling the expression of the surrogate in Eq.~\eqref{eq:surrogate_taylor} 
    \begin{equation}
        \fsur_{\expansionorder}(\alv) =\sum_{k = 0}^{\expansionorder} \frac{1}{k!}\sum_{i_1, i_2, ..,i_k}^{\nparams} \gradlk \dellk \;. 
    \end{equation}
Now, let us rewrite this expression as a sum over unique partial derivatives. Indeed, we can group terms using the symmetry of the partial derivatives (e.g. $\partial_{i,j}f = \partial_{j,i}f$) to give

\begin{align}
\fsur_{\expansionorder}(\alv) & = \sum_{k = 0}^{\expansionorder} \frac{1}{k!}\sum_{i_1, i_2, ..,i_k=1}^{\nparams} \partial_{i_1,..,i_k}f(\alv^*)
 \dellk \\
 &= \sum_{k = 0}^{\expansionorder} \frac{1}{k!}\sum_{i_1, i_2, ..,i_k=1}^{\nparams} \prod_{i_l=i_1}^{i_k}\left(\delta_{i_l}\partial_{i_l}\right)f(\alv^*)
\end{align}
where we noted $\dellk = \prod_{i_s=i_1}^{i_k}\delta_{i_s}$, with $\delta_{i_l} = \alpha_{i_l} - \alpha^*_{i_l}$. We can recognize that this expression contains all the terms of a multinomial expansion. Indeed, the sum of products $
    \sum_{i_1, i_2, ..,i_k=1}^{\nparams} \prod_{i_l=i_1}^{i_k}\left(\delta_{i_l}\partial_{i_l}\right) $
contains all the possible combinations of products $\left(\delta_{i_l}\partial_{i_l}\right)$ up to a maximum power of $k$. Therefore, we can just rewrite this expression in a more compact form making use of the multinomial expression. That is, as
\begin{align}
\fsur_{\expansionorder}(\alv) & = \sum_{k = 0}^{\expansionorder} \frac{\left(\sum_{i=1}^\nparams \delta_i\frac{\partial}{\partial \alpha_i}\right)^k} {k!}f(\alv^*)\\
 &=\sum_{k = 0}^{\expansionorder}\frac{1}{k!}\sum_{k_1+...+k_\nparams=k} \binom{k}{k_1,...,k_\nparams} \prod_{l=1}^\nparams\left(\delta_l^{k_l}\partial_{l}^{k_l}\right) f(\alv^*)\\
 &=\sum_{k = 0}^{\expansionorder}\sum_{k_1+...+k_\nparams=k} \prod_{l=1}^\nparams\left(\frac{\delta_l^{k_l}}{k_l!}\partial_{l}^{k_l}\right) f(\alv^*)\;, \label{eq:Pauli-surrogate-loss-unique-partial-sum}
\end{align}
where the second equality is obtained by expanding the multinomial. In this expression we have introduced $k_l\in\mathbb{N}$ which is the order of the derivative with respect to parameter $\alpha_l$ for each $l\in\{1,2,...,\nparams\}$. The last equality is obtained by simplifying the expression of the multinomial coefficient and the factorial term:
\begin{equation}
    \binom{k}{k_1,...,k_\nparams}\frac{1}{k!} = \prod_{l=1}^\nparams \frac{1}{k_l!}\, .
\end{equation}

The number of different terms $N$ that need to be evaluated in order to construct the surrogate/simulate the expectation function landscape corresponds to the number of different derivatives. Indeed, for example in the case of surrogation, each derivative would be a different term that would need to be estimated using a quantum computer. In other words, we assume that each partial derivative can be obtained by at worst measuring $N_{d} $ different parameter settings of the expectation function on a quantum computer. For the $k^{\rm th}$ order in the surrogate, there exists $\binom{{\nparams}+k-1}{k}$ unique partial derivatives (which is the number of possible $\vec{k}=(k_1,k_2,...,k_\nparams)\in\mathbb{N}^\nparams$ such that $\sum_{l=1}^\nparams k_l=k$), leading to at most $\binom{{\nparams}+k-1}{k}N_{d} $ expectation values to be measured in a quantum computer. Then we can readily compute an upper-bound in the total number of terms $N$ we will have to measure by summing all the orders in the surrogate.
\begin{equation} \label{eq:number-of-loss-term-taylor-surrogate}
 N = \sum_{k=0}^\expansionorder\binom{{\nparams}+k-1}{k}N_{d} \leq  N_{d} \left[\frac{ e({\nparams}+\expansionorder-1)}{\expansionorder}\right]^\expansionorder \;,
\end{equation}
where the inequality is obtained using $\binom{n}{k}\leq \frac{n^k}{k!}$ as follows
\begin{align}
    \sum_{k=0}^\expansionorder\binom{{\nparams}+k-1}{k} &\leq \sum_{k=0}^\expansionorder\frac{({\nparams}+k-1)^k}{k!} \\
    &=\sum_{k=0}^\expansionorder\frac{\expansionorder^k}{k!}\left(\frac{{\nparams}+k-1}{\expansionorder}\right)^k \\
    &\leq \left(\frac{{\nparams}+\expansionorder-1}{\expansionorder}\right)^\expansionorder \sum_{k=0}^\expansionorder\frac{\expansionorder^k}{k!} \\
    &\leq \left(\frac{e({\nparams}+\expansionorder-1)}{\expansionorder}\right)^\expansionorder \;,
\end{align}
where the second inequality is obtained by noticing that the term $\left(\frac{{\nparams}+k-1}{\expansionorder}\right)^k$ reaches its largest value when $k=\expansionorder$ (as it increases with $k$ for all $\expansionorder\leq {\nparams}-1$), and the last inequality is obtained by adding terms in the sum up to $k\to \infty$ and recognising the exponential series i.e. $\sum_{k=0}^\infty \frac{\expansionorder^k}{k!}=e^\expansionorder$. 
\end{proof}

\subsection{General surrogation guarantee: Proof of \supt~\ref{coro:surrogate_quantum_quantum}}\label{app:generalsurrogatetheorem-7}
\begin{proof}
Assumption~\ref{assumption:bounded-loss-evaluations} and Assumption~\ref{assumption:loss-evaluations-expectations} implies that the $\expansionorder$-order surrogate $\fsur_{\expansionorder}(\alv)$ can be expressed as a linear combination of expectation functions evaluated at different parameters. In particular, from Assumption~\ref{assumption:loss-evaluations-expectations}, we can express each partial derivatives in Eq.~\eqref{eq:Pauli-surrogate-loss-unique-partial-sum} as
\begin{align}
    \prod_{l}^k\frac{\partial\fsur_{\expansionorder}(\alv^*)}{\partial\alpha_l^{k_l}}=\sum_{j=1}^{N_d}b_j^{(\vec{k})}f(\alv_j^{(\vec{k})}) \;,
\end{align}
where $\vec{k}=(k_1,k_2,...,k_m)\in \mathbb{N}^\nparams$. Therefore, Eq.~\eqref{eq:Pauli-surrogate-loss-unique-partial-sum} can be rewritten as follows

\begin{align}
    \fsur_{\expansionorder}(\alv) & =\sum_{k = 0}^{\expansionorder}\sum_{k_1+...+k_\nparams=k}\prod_l\frac{\delta_l^{k_l}}{k_l!}\sum_{j=1}^{N_d}b_j^{(\vec{k})}f(\alv_j^{(\vec{k})}) \\
    &=\sum_{k = 0}^{\expansionorder}\sum_{|\vec{k}|=k}\sum_{j=1}^{N_d} \left(b_j^{(\vec{k})}\prod_{l=1}^{\nparams}\frac{\delta_l^{k_l}}{k_l!} \right) f(\alv_j^{(\vec{k})}) \\
    &=\sum_{|\vec{k}|\leq\expansionorder} \sum_{j=1}^{N_d} \left(b_j^{(\vec{k})}\prod_{l=1}^{\nparams}\frac{\delta_l^{k_l}}{k_l!} \right) f(\alv_j^{(\vec{k})}) \\
    &=\sum_{l=1}^{N} c_l(\alv) f(\alv_l), \label{eq:surrogate_taylor_expressed_linear_combination}
\end{align}
where in the last equality we use an index $l$ for each possible choice of $(\vec{k},j)$ such that $\vec{k}\in\mathbb{N}^\nparams$  with $|\vec{k}|=k_1+...+k_\nparams\leq\expansionorder$ and $j\in\{1,...,N_d\}$ (notice that there are $N$ possible choice of $(\vec{k},j)$). So, we defined $f(\alv_l)=f(\alv_j^{(\vec{k})})$ and $c_l(\alv)=b_j^{(\vec{k})}\prod_{j=1}^{\nparams}\frac{\delta_j^{k_j}}{k_j!} $. The total number of terms in the sum, $N$ is given in Eq.~\eqref{eq:number-of-loss-term-taylor-surrogate} as well as its upper bound (see Proposition~\ref{prop:number_of_terms}).

\noindent\underline{(i). The worst-case error with the perturbation $r \in \OC(1/\nparams)$.}

The total error can be upper bounded by the sum of both truncation error and shot noise error using the triangle inequality:
\begin{align} \label{eq:proof-total-error-presense-shotnoise}
    \left|\widehat{\fsur_\expansionorder}(\alv) - f(\alv) \right| \leq \left|\widehat{\fsur_\expansionorder}(\alv) - \fsur_{\expansionorder}(\alv)\right| + \left|\fsur_{\expansionorder}(\alv) - f(\alv)\right| \;.
\end{align}
The term $|\fsur_{\expansionorder}(\alv) - f(\alv)|$ can be bounded using Proposition~\ref{th:wors_case_error_app}. We now focus on the term $\left|\widehat{\fsur_\expansionorder}(\alv) - \fsur_{\expansionorder}(\alv)\right|$. 
Lemma~\ref{lemma:multiple-observable-max-error} can be applied to the surrogate of the expectation function expressed in the form of Eq.~\eqref{eq:surrogate_taylor_expressed_linear_combination} by noticing that for each $l$ we can define operator $O(\alv_l)$ such that $f(\alv_l)=\Tr[\rho O(\alv_l)]$ and $\norm{O(\alv_l)}_\infty \leq \norm{O}_\infty$ with associate coefficients $c_l(\alv)$. Let $\vec{c}_{\alv}$ be the vector of length $N$ with the $l^{\rm th}$ component $c_l (\alv)$. Therefore, the measurement strategy presented in Lemma~\ref{lemma:multiple-observable-max-error} can be applied to output an estimator $\widehat{\fsur_\expansionorder}(\alv)$ of $\fsur_\expansionorder(\alv)$ such that $N_s$ measurements on different copies of $\rho$ are sufficient to ensure 
\begin{equation} \label{eq:shot-noise-error-Taylor-surrogate-numerical}
    \left|\widehat{\fsur_\expansionorder}(\alv)-\fsur_\expansionorder(\alv)\right|< \norm{\vec{c}_{\alv}}_{1,{\rm worst}} \norm{O}_\infty \sqrt{\frac{2 \log(2/\delta)}{N_s}}\;,
\end{equation}
 with probability at least $1-\delta$. Now, using the assumption that the coefficients $b_j^{(\vec{k})}$ are bounded by $b_0$, i.e. $|b_j^{(\vec{k})}|\leq b_0$, we have 
\begin{align}
    \norm{\vec{c}_{\alv}}_1 &= \sum_{l=1}^N |c_l(\alv)| \\
    &=\sum_{|\vec{k}|\leq \expansionorder}\sum_{j=1}^{N_d} |b_j^{(\vec{k})}|\prod_{l=1}^\nparams \frac{|\delta_l|^{k_l}}{k_l!} \\
    &\leq b_0 N_d  \sum_{|\vec{k}|\leq \expansionorder}\prod_{l=1}^\nparams\frac{|\delta_l|^{k_l}}{k_l!} \\
    &\leq b_0 N_d  \sum_{|\vec{k}|\leq \expansionorder}\prod_{l=1}^\nparams\frac{r^{k_l}}{k_l!} \\ 
    &=b_0 N_d \sum_{k=0}^{\expansionorder} \frac{(r\nparams)^k}{k!} \;, 
    \label{eq:upper-bound-1-norm-b_alv-general}
\end{align}
where the second inequality follows from $|\delta_l|\leq r$ and the last equality is obtained by recognising the multinomial sum $\sum_{|\vec{k}|=k} \binom{k}{k_1,...,k_\nparams} \prod_{l=1}^\nparams r^{k_l}=(\nparams r)^k$ where $\binom{k}{k_1,...,k_\nparams}=\frac{k!}{\prod_{l=1}^\nparams k_l!}$. Notice that this bound is valid for any $\alv\in\vol(\alv^*,r)$ due to the second inequality, so this is also an upper bound of $\norm{\vec{c}_{\alv}}_{1,{\rm worst}}$.  Additionally, we can upper bound Eq.~\eqref{eq:upper-bound-1-norm-b_alv-general} by 
\begin{align}
    b_0 N_d \sum_{k=0}^{\expansionorder} \frac{(r\nparams)^k}{k!} &\leq b_0 N_d e^{r\nparams}\;,
\end{align}
which is obtained by extending the limit of the sum to $\infty$ and recognising the exponential series. Assuming that $r\nparams \in \OC(1)$ ensures that $e^{r\nparams}\sim\OC(1)$. By inserting previous bound on the effective 1-norm in  Eq.~\eqref{eq:shot-noise-error-Taylor-surrogate-numerical} to bound the shot noise error, and using Proposition~\ref{th:wors_case_error_app} to bound the truncation error, the total error is bounded as follows
\begin{align}
     \left|\widehat{\fsur_\expansionorder}(\alv) - f(\alv) \right| &\leq b_0N_d e^{r\nparams} \norm{O}_\infty \sqrt{\frac{2 \log(2/\delta)}{N_s}}  + \frac{\norm{O}_\infty}{(\expansionorder+1)!}(\gamma r {\nparams})^{\expansionorder+1} \;,
\end{align}
which happens with probability at least $1 - \delta$. Assuming that $b_0,N_d\in\OC({\rm poly}(\nparams))$, then we can see that using polynomial number of shots is enough to decreases the empirical error while the truncation error can be reduced by increasing $\expansionorder$.

\medskip

\noindent\underline{(ii). The average-case error with the perturbation $r \in \OC(1/\sqrt{\nparams})$.} 

First, let us bound the total mean square error by both contribution of the empirical error and the truncation error using that $(x+y)^2\leq 2(x^2+y^2)$ for any reals $x$ and $y$. This leads to
\begin{align}
\Ebb_{\vec{\alpha} ,\MC}\left[\left(\widehat{\fsur_{\expansionorder}}(\alv)-f(\alv)\right)^2\right]&\leq 2\Ebb_{\vec{\alpha}, \MC}\left[\left(\widehat{\fsur_{\expansionorder}}(\alv)-\fsur_{\expansionorder}(\alv)\right)^2\right] + 2\Ebb_{\vec{\alpha} }\left[\left(\fsur_\expansionorder(\alv)-f(\alv)\right)^2\right]\;.
\end{align}
Notice that the average here is over both the random parameters $\alv$ and the output of the measurement denoted by $\MC$ (except for the truncation error which depends only on $\alv$).

Now, the estimate $\widehat{\fsur_\expansionorder}(\alv)$ obtained by following the measurement protocol described in Lemma~\ref{lemma:multiple-observable-mean-squared-error} leads to the following upper bound on the empirical mean square error
\begin{align}
    \Ebb_{\alv,\MC}\left[\left(\widehat{\fsur_\expansionorder}(\alv) - \fsur_\expansionorder(\alv)\right)^2\right] &\leq \frac{\norm{O}_\infty^2 \norm{\vec{c}}_{1,{\rm avg}}^2}{N_s} \;,
\end{align}
where $\norm{\vec{c}}_{1,{\rm avg}}$ is the averaged effective 1-norm (see Definition~\ref{def:effective-1-norms}) of the vector $\vec{c}_{\alv}$.

Finally, combining this result together with the MSE upper bound of the surrogate from Proposition~\ref{th:mse_surrogated} leads to

\begin{align}
    \Ebb_{\vec{\alpha} ,\MC}\left[\left(\widehat{\fsur_{\expansionorder}}(\alv)-f(\alv)\right)^2\right] &\leq  2\frac{\norm{O}_\infty^2 \norm{\vec{c}}_{1,{\rm avg}}^2}{N_s} + 2\left( \frac{2 \gamma^2 {\nparams} r^2}{3} \right)^{\expansionorder+1}\frac{\norm{O}_\infty^2}{(\expansionorder +1 )!} e^{\frac{ 2\gamma^2{\nparams} r^2}{3} } \;.
\end{align}

Notice that Eq.~\eqref{eq:upper-bound-1-norm-b_alv-general} is also a valid upper bound for $\norm{\vec{c}}_{1,{\rm avg}}$ as it correspond to a worst case 1-norm obtained by upper bounding each $|c_l(\alv)|$. Therefore, the sum in Eq.~\eqref{eq:upper-bound-1-norm-b_alv-general} is bounded as follows
\begin{align}
    \sum_{k=0}^\expansionorder \frac{(rm)^k}{k!} \leq (\expansionorder+1)\frac{(rm)^\expansionorder}{\expansionorder !} \;,
\end{align}
under the condition that $r\nparams > \expansionorder$ from Eq.~\eqref{eq:binom_sum_first_powers_inequ} (since $r\in\OC(1/\sqrt{\nparams})$ , we have that $r\nparams\in\OC(\sqrt{\nparams})$). From the assumptions, the bound on the effective norm scales polynomially with the number of parameters. Finally, we have
\begin{align}
    \Ebb_{\vec{\alpha} ,\MC}\left[\left(\widehat{\fsur_{\expansionorder}}(\alv)-f(\alv)\right)^2\right] &\leq 2\left(\frac{b_0 N_d(\expansionorder+1)(r\nparams)^{\expansionorder}\norm{O}_\infty}{\sqrt{N_s} \expansionorder !}\right)^2 + 2\left( \frac{2 \gamma^2 {\nparams} r^2}{3} \right)^{\expansionorder+1}\frac{\norm{O}_\infty^2}{(\expansionorder +1 )!} e^{\frac{ 2 \gamma^2 {\nparams} r^2}{3} } \;.
\end{align}

Notice that both measurement strategies from Lemma~\ref{lemma:multiple-observable-mean-squared-error} and Lemma~\ref{lemma:multiple-observable-max-error} can be used to get previous bound as we upper-bounded the average-case 1-norm by the worst-case one.
\end{proof}

\subsection{Parametrised quantum circuits that satisfy assumptions: Proof of Proposition~\ref{prop:derivatives-upperbound-unitary} and Corollary~\ref{cor:surrogate-taylor-unitary}}
Corollary~\ref{cor:surrogate-taylor-unitary} is a direct consequence of \supt~\ref{coro:surrogate_quantum_quantum} and Proposition~\ref{prop:derivatives-upperbound-unitary}. Let us first derive Proposition~\ref{prop:derivatives-upperbound-unitary}.
\begin{proof}
     The expectation function considered here is of the form of Eq.~\eqref{eq:loss_app} with unitary channel, i.e. $f(\alv)=\Tr[\rho U^\dagger(\alv)OU(\alv)]$, where $U(\alv)=\prod_{l=1}^\nparams V_l U_l(\alpha_l)$ such that $V_l$ are arbitrary non-parametrized unitaries and each $U_l(\alpha_l)=\exp(-iH_l\alpha_l)$ is a parametrised gate with Hermitian generator $H_l$. Now, let us bound the $k$-th order derivatives of the form $\frac{\partial^k f(\alv)}{\prod_{l=1}^\nparams \partial\alpha_l^{k_l}}:=\partial^{\vec{k}}f(\alv)$ where $\sum_{l=1}^\nparams k_l=k$ and  $k_l\in\mathbb{N}$ for each $l$.
    First, let us use Hölder inequality as follows
    \begin{align}
      |\partial^{\vec{k}}f(\alv)|&=|\Tr[\rho\partial^{\vec{k}}U^\dagger(\alv)OU(\alv) ]| \\
      &\leq \norm{\rho}_1\norm{\partial^{\vec{k}}U^\dagger(\alv)OU(\alv)}_\infty \\
      &=\norm{\partial^{\vec{k}}(U^\dagger(\alv)OU(\alv))}_\infty\;,
    \end{align}
    where the last equality is given by the normalisation of quantum states.
    Now, the goal is to show that 
    \begin{align}
        \norm{\partial^{\vec{k}}(U^\dagger(\alv)OU(\alv))}_\infty \leq \norm{O}_\infty \prod_{l=1}^\nparams \left(2\norm{H_l}_\infty\right)^{k_l}\;,
    \end{align}
    by induction on $l$.
    First notice that the operator we want to bound can be rewritten as follows
    \begin{align}
        \label{eq:split-derivatives-reccursive-proofthm1}
        \partial^{\vec{k}}\left((U^\dagger(\alv)OU(\alv)\right) &= \frac{\partial^{k_1}}{\partial\alpha_1^{k_1}}\left(U_1^\dagger(\alpha_1) V_1^\dagger\left(...\frac{\partial^{k_\nparams}}{\partial\alpha_\nparams^{k_\nparams}}\left(U_\nparams^\dagger(\alpha_\nparams)V_\nparams^\dagger O V_\nparams U_\nparams(\alpha_\nparams)\right)...\right)V_1U_1(\alpha_1)\right)\;.
    \end{align}
     Then, each derivative consists in applying a commutator with the generator or more formally for any operator $A_{l+1}$ (which will be specify later), we have 
     \begin{align}
         \frac{\partial^{k_l}}{\partial\alpha_l^{k_l}}\left(U_l^\dagger(\alpha_l)A_{l+1} U_l(\alpha_l)\right) &= \frac{\partial^{k_l-1}}{\partial\alpha_l^{k_l-1}}\left(U_l^\dagger(\alpha_l)i[H_l,A_{l+1}] U_l(\alpha_l)\right) \\
         &= \frac{\partial^{k_l-2}}{\partial\alpha_l^{k_l-2}}\left(U_l^\dagger(\alpha_l)i[H_l,i[H_l,A_{l+1}]] U_l(\alpha_l)\right) \\
         &=...\\
         &=i^{k_l}U_l^\dagger(\alpha_l)\underbrace{[H_l,...,[H_l}_{k_l  \text{ times}},A_{l+1}]...] U_l(\alpha_l)\;.
     \end{align}
     From this result and the unitarily invariance properties of the norm, we have
     \begin{align}
     \norm{\frac{\partial^{k_l}}{\partial\alpha_l^{k_l}}\left(U_l^\dagger(\alpha_l)V_l^\dagger A_{l+1} V_l U_l(\alpha_l)\right)}_\infty &= \norm{[H_l,...,[H_l,V_l^\dagger A_{l+1} V_l]...]}_\infty \\
     &\leq (2\norm{H_l}_\infty)^{k_l}\norm{V_l^\dagger A_{l+1} V_l}_\infty \\
     &= (2\norm{H_l}_\infty)^{k_l}\norm{ A_{l+1} }_\infty \label{eq:upper-bound-derivative-single-parameter-proofthm1} \;,
     \end{align}
     where the last equality follows from unitarily invariance of the norm, and the inequality is obtained by induction using the property that for any operators $A$ and $B$, we have \begin{align}
         \norm{[A,B]}_\infty &\leq  \norm{AB}_\infty+\norm{BA}_\infty \\
         &\leq 2\norm{A}_\infty\norm{B}_\infty \;,
     \end{align}
     where we first use triangle inequality, and then sub-multiplicativity of the spectral norm. Now, using Eq.~\eqref{eq:upper-bound-derivative-single-parameter-proofthm1} with $A_l=\frac{\partial^{k_l}}{\partial\alpha_l^{k_l}}\left(U_l^\dagger(\alpha_l) V_l^\dagger\left(...\frac{\partial^{k_\nparams}}{\partial\alpha_\nparams^{k_\nparams}}\left(U_\nparams^\dagger(\alpha_\nparams)V_\nparams^\dagger O V_\nparams U_\nparams(\alpha_\nparams)\right)...\right)V_lU_l(\alpha_l)\right)$ gives $\norm{A_l}_\infty\leq (2\norm{H_l}_\infty)^{k_l}\norm{ A_{l+1} }_\infty$. Finally, using it recursively starting from Eq.~\eqref{eq:split-derivatives-reccursive-proofthm1}  leads to the desired result.
     \begin{align}
      \norm{\partial^{\vec{k}}\left((U^\dagger(\alv)OU(\alv)\right)}_\infty &=\norm{A_1}_\infty \\
      &\leq (2\norm{H_1}_\infty)^{k_1}\norm{ A_{2}}_\infty \\
      &\leq ...\\
      &\leq \norm{O}_\infty \prod_{l=1}^\nparams \left(2\norm{H_l}_\infty\right)^{k_l} \;,
     \end{align}
     where we used the convention that $A_{\nparams+1}=O$ in the last step. Now, we can simply upper bound the norm of each generator by the largest one, i.e., by $\max_{l}\norm{H_l}_\infty$ which finally leads to 

     \begin{align}
        |\partial^{\vec{k}}f(\alv)| &\leq \norm{O}_\infty \prod_{l=1}^\nparams \left(2\norm{H_l}_\infty\right)^{k_l} \\
        &\leq \norm{O}_\infty \left(2\max_l\norm{H_l}_\infty\right)^k\;.
     \end{align}
     \end{proof}

     Finally, we can prove Corollary~\ref{cor:surrogate-taylor-unitary} as follows.
     \begin{proof}
         Now, from the assumption that each generator has a spectral norm of order $\OC(1)$, then there exist a constant $\gamma\in\OC(1)$ such that $2\norm{H_l}_\infty\leq \gamma$ for each $l$. Therefore, we recover Assumption~\ref{assumption:bounded-derivatives} using Proposition~\ref{prop:derivatives-upperbound-unitary}, which allows us to apply \supt~\ref{coro:surrogate_quantum_quantum}. Therefore, the bounded assumption for the observable i.e. $\norm{O}_\infty\in\OC(1)$ together with \supt~\ref{coro:surrogate_quantum_quantum} completes the proof of the corollary.
     \end{proof}

\subsection{Application to circuits with Pauli generators}
\label{app:sine-cosine-surrogate-non-Clifford}
In this appendix we apply our surrogation technique to the special case when the circuit generators are Pauli strings. Furthermore, we will also explore the connection between the Taylor expansion based surrogate at the heart of Theorem~\ref{thm:surrogate-average-general} and the Pauli propagation surrogation we will discuss in Appendix~\ref{app:smallanglePP}. Concretely, we will focus on a circuit with the following general structure
\begin{equation}\label{eq:circuit_pauli}
	U\left( \alv \right) = \prod_{j=1}^\nparams V_j U_j(\al_j)\,,
\end{equation}
where $V_j $ are a set of fixed unitary matrices (but not necessarily Clifford) and $U_j(\al_j) = e^{-i\frac{\al_j}{2} P_j}$ are parametrized rotations with $P_j $ Pauli strings. We will further suppose that the parameters $\al_j$ are uncorrelated. 

Under the previous constraints, we first can compute the Fourier expansion of the expectation function. Indeed, as it will be explicitly shown below (Appendix~\ref{app:fourier_exp}), the expectation function can be naturally expressed in terms of sine/cosine functions  as 
\begin{align}\label{eq:fourier_exp}
    f(\alv)&= \sum_{\vec{\omega}\in\{0,1,2\}^{\nparams}}\Apath_{\vec{\omega}}(\vec{\delta}) \Tr[O_{\vec{\omega}}(\alv^*)\rho_0] \;,
\end{align}
where $\vec{\delta}=\alv-\alv^*$ with $\alv \in\vol(\alv^{*}, r)$ a hypercube of the form of Eq.~\eqref{eq:hypercube} sampled uniformly as in Eq.~\eqref{eq:uniformdist}.
Here $\omv = (\om_1, \om_2, ...,\om_{\nparams} )$ is an index vector with each $\om_j$ taking a value of $0$, $1$ or $2$ and the Fourier coefficients are given by
\begin{align}
    &\Apath_{\vec{\omega}}(\vec{\delta})=\prod_{i=1}^{\nparams}\apath_{\omega_j}(\delta_j) \;\;\;\;\;\; \text{where,} \\
    &\apath_{\omega_j}(\delta_j) = \left\{
    \begin{array}{ll}
        \cos^2\left(\frac{\delta_j}{2}\right) & \mbox{if } \om_j =0 \;, \\
        2\sin\left(\frac{\delta_j}{2}\right)\cos\left(\frac{\delta_j}{2}\right) & \mbox{if } \om_j=1 \;, \\
        \sin^2\left(\frac{\delta_j}{2}\right) & \mbox{if } \om_j=2 \;.
    \end{array}
\right.\label{eq:decompositionapath}
\end{align}
The operators $O_{\omv}(\thv^*)$ are defined recursively and can be computed efficiently. The recursive method to obtain them is detailed in Eq.~\eqref{eq:operator-transformation-Fourrier-path} in Appendix~\ref{app:fourier_exp}.

For an individual term in the sum associated with $\vec{\omega}$, denote $n_{1}(\vec{\omega})$ and $n_{2}(\vec{\omega})$ as the number of times $\om_j = 1$ and $\om_j = 2$ appear in a given $\omega$. Then we can define the order of each term as
\begin{align}
    K(\vec{\omega}) = n_{1}(\vec{\omega}) + 2 n_{2} (\vec{\omega}) \;.
\end{align}
Notice that this count the number of sine contributions to the corresponding coefficient.
We can then consider an order $\expansionorder$ surrogate of the expectation function that is built by keeping only terms with $K(\vec{\omega}) \leq \expansionorder$. That is, we have
\begin{align}\label{eq:surrogate-sine-cosine}
    \fsur_{\expansionorder}(\alv) := \sum_{\vec{\omega} \;;\; K(\vec{\omega}) \leq \expansionorder}  \Apath_{\vec{\omega}}(\vec{\delta}) \Tr[O_{\vec{\omega}}(\alv^*)\rho_0] \, .
\end{align}

This surrogate is roughly equivalent to  the Taylor surrogate (i.e., Eq.~\eqref{eq:surrogate_taylor}). Indeed, for a given order, the error of the two techniques. Trivially, we can compare the surrogate via Taylor and the Taylor expansion of the $\sin/\cos$ surrogate. We see that these are equivalent, as the terms dropped in the $\sin/\cos$ surrogate are also dropped in the Taylor one (this comes from the fact that $\cos\alpha\sim1,\, \sin\alpha\sim \alpha$). Therefore, the only difference between the two surrogate methods lies in the difference between the term $\Apath(\vec{\delta})$ and its Taylor expansion at order $\expansionorder$. This difference is of order $\expansionorder+1$. \\

Now that we have established the connection between the Taylor and the $\sin/\cos$ surrogation techniques, we devote the rest of this section to explaining how the latter connects with the Pauli Propagation technique. To do so, let us start by assuming that the measurement operator is given by a single Pauli, i.e., $O = P$ (we will later generalize to the case of $O = \sum_{i}a_i P_i$.). Furthermore, let us now assume that the $V_i$ gates are Clifford gates. 

Pauli Propagation (which we will detail further in Appendix~\ref{app:smallanglePP}) is understood in the Heisenberg picture. Here, we  study the backwards evolved measurement operator
\begin{equation}\label{eq:propagated_pauli}
    U^\dagger (\alv) P U(\alv) = \prod_{j=1,l = 1}^m V_j^\dagger U_j^\dagger(\alpha_j) P U_l(\alpha_l) V_l \,  .
\end{equation}
Let us start by using Eq.~\eqref{eq:exp_pauli_sincos} to compute the effect of applying the layer closest to the Pauli observable. For a general observable this would look like
\begin{align}\label{eq:pauli_toprop}
    e^{i\frac{\al_{\nparams}}{2} P_m} P e^{-i\frac{\al_{\nparams}}{2} P_m} = v_0(\al_m) P +  v_1(\al_m)\frac{i}{2}[P_m,P] + v_2(\al)P_m P P_m\, .
\end{align}
However, Pauli strings either commute or anti-commute. Therefore, we can use this to our advantage. Indeed, if $P_m P = P P_m$ we can trivially simplify the previous expression to 
\begin{align}
    e^{i\frac{\al_{\nparams}}{2} P_m} P e^{-i\frac{\al_{\nparams}}{2} P_m} = \left[v_0(\al_m)+v_2(\al_m)\right]P=P\;.
\end{align}
 Furthermore, in the case that $P_m P = - P P_m$, we can also simplify the terms in Eq.~\eqref{eq:pauli_toprop}  using $\frac{i}{2}[P_m,P]  =  iP_m P$ and $  P_m P P_m  = -P$ to give
\begin{align}
    e^{i\frac{\al_{\nparams}}{2} P_m} P e^{-i\frac{\al_{\nparams}}{2} P_m} = \left[v_0(\al_m)-v_2(\al_m)\right] P + v_1(\al_m)iP_m P \, ,
\end{align}
which by simple trigonometric identities (and by recalling Eq.~\eqref{eq:decompositionapath}) can be simplified to  $v_0(\al_m)-v_2(\al_m) = \cos(\al_m)$, and $v_1(\al_m) = \sin(\al_m)$. This means that Eq.~\eqref{eq:pauli_toprop} can be rewritten as 
\begin{align}\label{eq:almost_pauli}
    e^{i\frac{\al_\nparams}{2} P_m} P e^{-i\frac{\al_\nparams}{2} P_m} = 
    \begin{cases}
        P \, &{\rm iff} \, [P_m,P] = 0\\
        \cos(\al_m) P + \sin(\al_m) P' \, &{\rm iff} \, \{P_m,P\}
    \end{cases}
\end{align}
where $P' := iP_m P$ is a Pauli matrix (up to a sign). We can now define the Pauli Path coefficients defined in Eq.~\eqref{eq:pauli_path} that we rewrite here:
\begin{equation}
    \phi_{\omega_i}(\alpha_i)= \begin{cases}
    1,                      & \text{if } \omega_i = 0\\
    \cos(\alpha_i),        & \text{if } \omega_i = 1\\
    \sin(\alpha_i),        & \text{if } \omega_i = -1 \ .
\end{cases}
\end{equation}
Thus we obtain
\begin{align}
    e^{i\frac{\al_{\nparams}}{2} P_m} P e^{-i\frac{\al_{\nparams}}{2} P_m} = 
    \begin{cases}
        \phi_{0}(\al_\nparams)P = P \, &{\rm iff} \, [P_m,P] = 0\\
        \phi_{1}(\al_m) P +  \phi_{2}(\al_m)  P' \, &{\rm iff} \, \{P_m,P\}
    \end{cases}
\end{align}
Now it is worth noting that when we apply any Clifford gate into any Pauli matrix, it returns a Pauli matrix. Therefore, we can repeat this procedure for all the circuit in Eq.~\eqref{eq:propagated_pauli} to obtain
\begin{align}
    U^\dagger (\alv) P U(\alv) = \sum_{\vec{\omega}} \Phi_{\omega}(\alv)P_{\vec{\omega}}\,,
\end{align}
where $P_{\vec{\omega}}$ are the propagated Pauli matrices. In other words, the final Pauli matrices obtained after applying the commutation rules for all of the gates in the circuit.

Finally, it is worth stressing that if $O = \sum_i a_i P_i$ instead of a single Pauli, due to the linearity of the expectation function the results still apply. Indeed, 
if we start from such observable, the expectation function can be written as
\begin{equation}
    f(\alv) = \Tr[\rho U^\dagger(\alv) O U(\alv)] = \sum_i a_i \Tr[\rho U^\dagger(\alv) P_i U(\alv)]\,,
\end{equation}
and we can apply the propagation technique for each $P_i$. This previous leads to
\begin{equation}
    f(\alv) = \sum_{i}\sum_{\vec{\omega}}a_i \Phi_{\vec{\omega}}(\alv)\Tr[\rho P_{\vec{\omega}}] = \sum_{\vec{\omega}}c_{\vec{\omega}}(\alv)\Tr[\rho P_{\omv}]\,,
\end{equation}
where we have introduced the notation $c_{\vec{\omega}}(\alv) = \sum_{i}a_i \Phi_{\omv}(\alv)$. 

As the error associated with this Pauli propagation surrogation strategy are of the same order as the Taylor surrogate (i.e., Eq.~\eqref{eq:surrogate_taylor}), Theorem~\ref{thm:surrogate-average-general} can be used to indirectly bound the efficiency of Pauli Propagation. However, substantially tighter bounds can be obtained by analysing Pauli propagation directly as we will proceed to do in the next full appendix (Appendix~\ref{app:smallanglePP}). 

\subsubsection{Fourier expansion}\label{app:fourier_exp}
Let us now  study the Fourier expansion in Eq.~\eqref{eq:fourier_exp}. We will explicitly show that the expectation function can naturally be expressed in terms of sine and cosine functions. In particular, since  $P_j^2 = \1$, we find 
\begin{align}\label{eq:exp_pauli_sincos}
    e^{-i \frac{\al_j}{2} P_j} = \cos\left(\frac{\al_j}{2}\right) \1 - i \sin\left(\frac{\al_j}{2}\right) P_j \;.
\end{align}
In addition, the rotation gates can be re-expressed as perturbations $\vec{\alpha}$ around any point $\alv \in\vol(\alv^{*}, r)$ for all $j$. This can be trivially seen by just writing $\alv = \alv^* + \vec{\delta}$
\begin{align}
     U\left( \alv \right) 
     & = \prod_{j=1}^\nparams  U_j (\delta_j) U_j(\al^*_j) V_j  \\
     & = \prod_{j=1}^\nparams  U_j (\delta_j) \widetilde{V}_j \;,
\end{align}
where the first equality holds due to $e^{-i \frac{\al_j}{2} P_j} = e^{- i \frac{\al^*_j}{2} P_j} e^{- i \frac{\delta_j}{2} P_j}$ and in the second equality we denote $\widetilde{V}_j := V_j e^{- i \frac{\al_j^*}{2} P_j }$. Also, we introduce the notation
\begin{align}
    \rho_{l} = \prod_{j = 1}^l  U_j (\alpha_j) \widetilde{V}_j \; \rho_0 \left( \prod_{j = 1}^l  U_j (\alpha_j) \widetilde{V}_j \right)^\dagger
\end{align}

Now, the expectation function as defined in Eq.~\eqref{eq:expectation} can be expanded as
\begin{align}
    \LC(\alv) = \;& \Tr\left[\rho_{\nparams-1} \widetilde{V}^\dagger_{\nparams} U^\dagger_\nparams(\delta_\nparams) O U_\nparams(\delta_\nparams) \widetilde{V}_{\nparams} \right] \\
    =\;& \Tr\left[ \rho_{\nparams-1} \widetilde{V}^\dagger_{\nparams} \left( \cos\left(\frac{\delta_\nparams}{2}\right) \1 + i\sin\left(\frac{\delta_\nparams}{2}\right) P_\nparams \right)  O \left( \cos\left(\frac{\delta_\nparams}{2}\right) \1 - i\sin\left(\frac{\delta_\nparams}{2}\right) P_\nparams \right) \widetilde{V}_{\nparams} \right] \\
    =\;& \cos^2\left(\frac{\delta_\nparams}{2}\right)\Tr\left[\rho_{\nparams-1}\left(\widetilde{V}^\dagger_{\nparams} O \widetilde{V}_{\nparams} \right)\right] + 2 \sin\left(\frac{\delta_\nparams}{2}\right) \cos\left(\frac{\delta_\nparams}{2}\right) \Tr\left[\rho_{\nparams-1}\left(\widetilde{V}^\dagger_{\nparams}\frac{i}{2} \left[ P_\nparams, O \right] \widetilde{V}_{\nparams} \right)\right] \\
    & + \sin^2\left(\frac{\delta_\nparams}{2}\right) \Tr\left[  \rho
_{\nparams-1} \left( \widetilde{V}^\dagger_{\nparams} P_\nparams O P_\nparams \widetilde{V}_{\nparams}\right) \right] \\
=\;& \apath_{0}(\delta_\nparams) \Tr\left[\rho_{\nparams-1} O_0 \right] + \apath_{1}(\delta_\nparams) \Tr\left[\rho_{\nparams-1} O_1 \right] + \apath_{2}(\delta_\nparams) \Tr\left[ \rho_{\nparams-1} O_2\right] \;,
\end{align}
where we defined 
\begin{align}
    O_0 =& \widetilde{V}^\dagger_{\nparams} O \widetilde{V}_{\nparams} \\
    O_1 =&\widetilde{V}^\dagger_{\nparams}\frac{i}{2} \left[ P_\nparams, O \right] \widetilde{V}_{\nparams}\\
    O_3 =&  \widetilde{V}^\dagger_{\nparams} P_\nparams O P_\nparams \widetilde{V}_{\nparams}
\end{align}
and 
\begin{align}
    \apath_0(\delta_\nparams) = &  \cos^2\left(\frac{\delta_\nparams}{2}\right)  \\
    \apath_1(\delta_\nparams) = &   2 \sin\left(\frac{\delta_\nparams}{2}\right) \cos\left(\frac{\delta_\nparams}{2}\right) \\
    \apath_2(\delta_\nparams) =&\sin^2\left(\frac{\delta_\nparams}{2}\right)\,.
\end{align}
By iteratively repeating the previous steps, we obtain the form of the expectation function by Fourier expansion as
\begin{align}
    f(\alv)&= \sum_{\vec{\omega}\in\{0,1,2\}^{\nparams}}\Apath_{\vec{\omega}}(\vec{\delta}) \Tr[O_{\vec{\omega}}(\alv^*)\rho_0] \;,
\end{align}
where $\omv = (\om_1, \om_2, ...,\om_{\nparams})$ is an index vector with each $\om_j$ taking a value of $0$, $1$ or $2$. We also introduce a notation $\idx{i}{j} = (\om_i, \om_{i+1}, ..., \om_{j})$ with $i\leq j$ and $\omv = \idx{1}{\nparams}$. Now, the operators $O_{\idx{1}{\nparams}} := O_{\omv}(\thv^*)$ are defined recursively by 
\begin{align}
\label{eq:operator-transformation-Fourrier-path}
    O_{\idx{j}{\nparams}} = \left\{
    \begin{array}{ll}
        \tilde{V}_j^\dagger  O_{\idx{j+1}{\nparams }} \tilde{V}_j & \mbox{if } \om_j=0 \;, \\
        \frac{i}{2} \tilde{V}_j^\dagger \left[P_l, O_{\idx{j+1}{\nparams}}\right]  \tilde{V}_j& \mbox{if } \om_j =1 \;, \\
        \tilde{V}_j^\dagger P_j O_{\idx{j+1}{\nparams}} P_j  \tilde{V}_j & \mbox{if } \om_j =2 \;,
    \end{array}
\right.
\end{align}
with $\tilde{V}_j = e^{-i\frac{\al_j^*}{2} P_j} V_j$, as defined above,
and the Fourier coefficients are given by $\Apath_{\vec{\omega}}(\vec{\delta})=\prod_{j=1}^\nparams \apath_{\omega_j}(\delta_j)$, where
\begin{align}
    \apath_{\omega_j}(\delta_j) = \left\{
    \begin{array}{ll}
        \cos^2\left(\frac{\delta_j}{2}\right) & \mbox{if } \om_j =0 \;, \\
        2\sin\left(\frac{\delta_j}{2}\right)\cos\left(\frac{\delta_j}{2}\right) & \mbox{if } \om_j=1 \;, \\
        \sin^2\left(\frac{\delta_j}{2}\right) & \mbox{if } \om_j=2 \;.
    \end{array}
\right.
\end{align}

\section{Iterative training strategy with patch surrogate}
\label{sec:iterative-training-taylor-surrogate}
In this section, we study an iterative training procedure for parameterized quantum circuits based on patch surrogate. Here, instead of employing a standard hybrid classical-quantum optimization strategy where a loss gradient is estimated at each training iteration, we construct a surrogate of loss gradients around a patch and then perform several classical gradient steps within this patch. After few training iterations, we need to reset the surrogate around a new patch. We investigate whether this iterative strategy can be advantageous. 
The section is structured as follows:
\begin{itemize}
    \item In Appendix~~\ref{sec:surrogate-approach-resource}, we provide a detailed description of the iterative surrogate approach.
    \item In Appendix~\ref{sec:summary-results-taylor-gradient}, we numerically compare the surrogate with the standard gradient descent.
    \item In Appendix~\ref{sec:taylor-iterative-numerical-details}, further numerical details are provided including a detailed description of our shot allocation strategy.
    \item Lastly, in Appendix~\ref{sec:analytical-guarantee-second-order}, the analytical results relating shot noise and shot allocation are derived.
\end{itemize}

\subsection{Description of a surrogate approach}
\label{sec:surrogate-approach-resource}

We consider a quantum surrogate based strategy where the loss gradients are surrogated and then trained purely classically. More specifically, we consider a Taylor surrogate of the loss gradients \( \nabla f(\alv) \) around a point \( \alv^* \) and train this surrogate classically using gradient descent for a small number of steps. To ensure the faithfulness of the training, the process of re-surrogating the loss gradients is then repeated using the newly obtained point as the next expansion center. 

For simplicity, we consider a second-order Taylor expansion of the gradient of the form
\begin{equation} \label{eq:grad-surrogate-order-2}
    \nabla \fsur(\alv) = \nabla f(\alv^*) + H(\alv^*)(\alv - \alv^*)\;,
\end{equation}
where \( H(\alv^*) \) is the Hessian of the loss at the expansion point. The individual components read:
\begin{equation}
    \partial_l \fsur(\alv) = \partial_l f(\alv^*) + \sum_{k=1}^{m} \partial_{l,k} f(\alv^*) \, \delta_k, \quad \delta_k = \alpha_k - \alpha_k^*.
\end{equation}
We note that it is straightforward to extend this construction to higher-order surrogate models. Algorithm~\ref{alg:iter} summarizes the overall protocol. Moreover, whereas other parts of our work primarily focus on surrogating the loss itself, in practical gradient-based training it is often more appropriate to surrogate the loss gradient directly. A related strategy, termed Quantum Analytic Descent, was introduced in Ref.~\cite{koczor2020quantum}; it is based on a second-order truncation of the sine coefficients in the Fourier expansion (see Appendix~\ref{app:fourier_exp}). 

Since standard GD is unbiased as opposed to the surrogate strategy which suffers from Taylor error, we are interested in comparing the performance on both methods by fixing an equivalent total shot budget. This is obtained by choosing
\begin{equation}
    N_{\rm grad}\approx \frac{N_{\rm surrogate}}{N_{\rm iters}}\;\;,
\end{equation}
where $N_{\rm grad}$ is the number of measurement shots for one iteration of standard GD, $N_{\rm iters}$ is the total number of training iterations for standard GD and $N_{\rm surrogate}$ is the total number of measurement shots used in the surrogate training strategy. In this case, the shot noise with standard GD will increase by a factor $N_{\rm iters}$, the number of iterations between surrogate updates. We aim to study how, in the task of estimating the ground state of a given Hamiltonian, the additional shot noise incurred by standard gradient descent balances against the Taylor truncation error introduced by the surrogate, and how this trade-off impacts overall optimization performance.

\begin{algorithm}[H]
\SetAlgoLined
\KwIn{Initial parameters \( \alv_1 \), number of surrogates \( N_{\mathrm{surr}} \), steps per surrogate \( N_{\mathrm{iters}} \)}
\For{$i = 1$ \KwTo $N_{\mathrm{surr}}$}{
    \uIf{$i = 1$}{
        Set \( \alv^* = \alv_1 \)
    }
    \Else{
        Set \( \alv^* = \alv_i \)
    }
    Construct Taylor surrogate \(\grad  \fsur(\alv) \) around \( \alv^* \) by estimating \(\grad  f(\alv^*) \) and \(H(\alv^*)\)\\
    Perform $N_{\mathrm{iters}}$ steps of classical gradient descent with \( \grad\fsur(\alv) \) starting from \( \alv^* \) \\
    Let \( \alv_{i+1} \) be the final iterate
}

\caption{Iterative Training via Taylor Surrogate }
\label{alg:iter}
\end{algorithm}

\subsection{Summary of the results: Comparison study with standard gradient descent}
\label{sec:summary-results-taylor-gradient}

In this section, we empirically investigate the performance of the surrogate approach and compare its resource consumption with that of the standard VQA approach.

\subsubsection{Set-up}
We consider the task of estimating the ground state of a Hamiltonian. The target Hamiltonian is a rescaled Heisenberg chain given by
\begin{equation}
    O = \frac{1}{3(n-1)} \sum_{i=1}^{n-1} \left( X_i X_{i+1} + Y_i Y_{i+1} + Z_i Z_{i+1} \right).
\end{equation}
The variational ansatz consists of \(L\) layers of single-qubit rotations followed by two-qubit \(ZZ\) gates of the form
\begin{equation}
    U(\alv) = \prod_{l=1}^L \left( \prod_{i=1}^n R_{ZZ}(\alpha_{i+2n,l}) \prod_{i=1}^n R_Z(\alpha_{i+n,l}) R_X(\alpha_{i,l}) \right),
\end{equation}
with \(\ket{\psi(\alv)} = U(\alv)\ket{0}^{\otimes n}\). The total number of parameters is \(m = 3 n L\).

We explore different strategies for allocating measurement shots for the surrogate; details are provided in Appendix~\ref{sec:iterative-taylor-order2-shot-allocation}. One approach is straightforward and easily generalizable to higher-order expansions, while the second is slightly more optimized for practical implementation. The key difference is that the optimized strategy evaluates each loss only once for repeated Hessian diagonal elements, whereas the naive approach recomputes the same loss separately for each diagonal element. In Appendix~\ref{sec:iterative-taylor-order2-shot-allocation}, we also compare these methods with standard gradient descent under an equivalent total shot budget. 
Additionally, we employ a two-phase training schedule for the surrogate, where the number of iterations between surrogate updates, \(N_{\rm iters}\), is adjusted once during the simulation (see Appendix~\ref{sec:two-phase-surrogate-schedule}).

\subsubsection{Results discussion}

\paragraph*{Numerical details in simulations:} 
Let \emph{Surrogate 1} and \emph{Surrogate 2} denote the na\"ive and the optimized surrogate strategies, respectively, and let \emph{standard GD} denote standard gradient descent with an equivalent quantum resource budget.
As detailed in Appendix~\ref{sec:iterative-taylor-order2-shot-allocation}, our shot allocation is fully determined by fixing the number of shots \textit{per loss evaluation}, \(F_2\) - that is, the total number of shots to compute the gradient is given by $2m F_2$. In our experiments, we set  \(F_2 = 600\). During the first training phase, we use \(N_{\rm iters}^{(1)} = 10\) and \(N_{\rm surr}^{(1)} = 20\). In the second phase, we set \(N_{\rm iters}^{(2)} = 8\) and \(N_{\rm surr}^{(2)} = 200\), corresponding to a total of \(1800\) GD steps. At each GD step, the test loss is computed as the difference between the exact expectation value of the observable and the ground state energy obtained numerically using \textit{scipy.sparse.linalg.eigsh}.
All methods are run independently 10 times with different random initial parameters. Importantly, the three methods share the same initial parameters in each run. The learning rate is fixed at \(\eta = 0.1\).

\medskip \paragraph*{Results and Discussion:}

In Fig.~\ref{fig:iterative-taylor-10qubits-l3-6}, we show the test loss as a function of the number of shots (at depths (a) \(L=3\) and (c) \(L=6\)) and the number of iterations ((b) \(L=3\) and (d) \(L=6\)) for a 10-qubit Heisenberg Hamiltonian. We selected examples where the best final test loss is achieved either by standard GD (\(L=6\)) or by Surrogate 2 (\(L=3\)); however, the differences are small, particularly in terms of the interquartile range (IQR).  
We also compute the number of shots required to reduce the test loss below \(10^{-1}\) for \(L=2,3,4,5,6\). The results are shown in Fig.~\ref{fig:shots_to_threshold} for \(n=8\) (a) and \(n=10\) (b) qubits. Again, there is no clear distinction in performance between the strategies.

\begin{figure*}[h!]
    \includegraphics[width=0.8\linewidth]{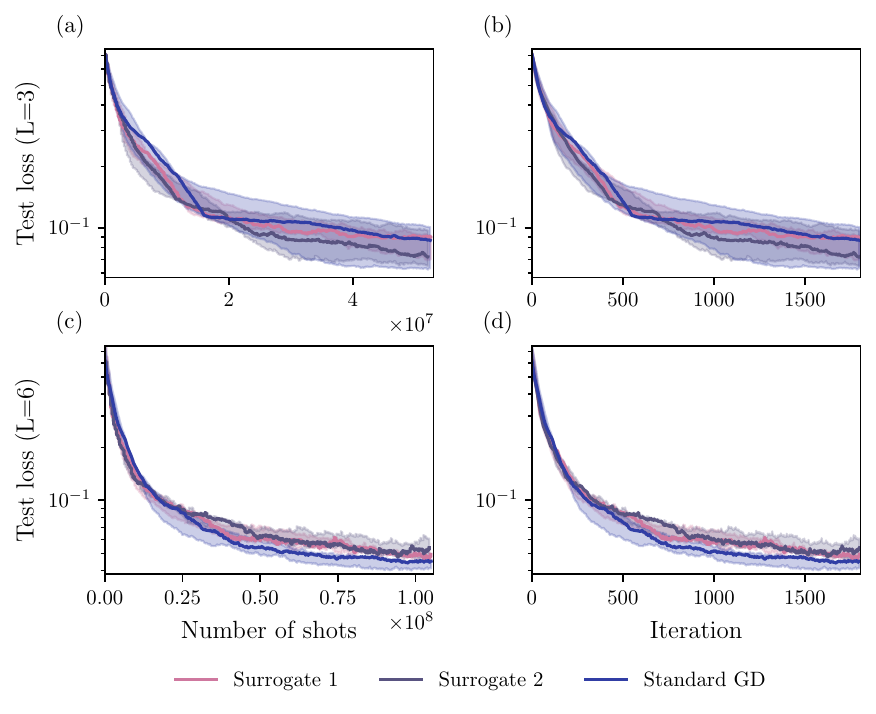}
    \caption{Test loss during training for the 10-qubit Heisenberg Hamiltonian as a function of quantum shots (left) and iterations (right) for \(L=3\) (top) and \(L=6\) (bottom). Surrogate 1 and 2 correspond to the na\"ive and optimized shot allocation, respectively, while standard GD uses an equivalent shot budget. Solid lines show the median test loss over 10 runs, and the shaded area indicates the interquartile range.}
    \label{fig:iterative-taylor-10qubits-l3-6}
\end{figure*}

\begin{figure*}[h!]
    \includegraphics[width=0.8\linewidth]{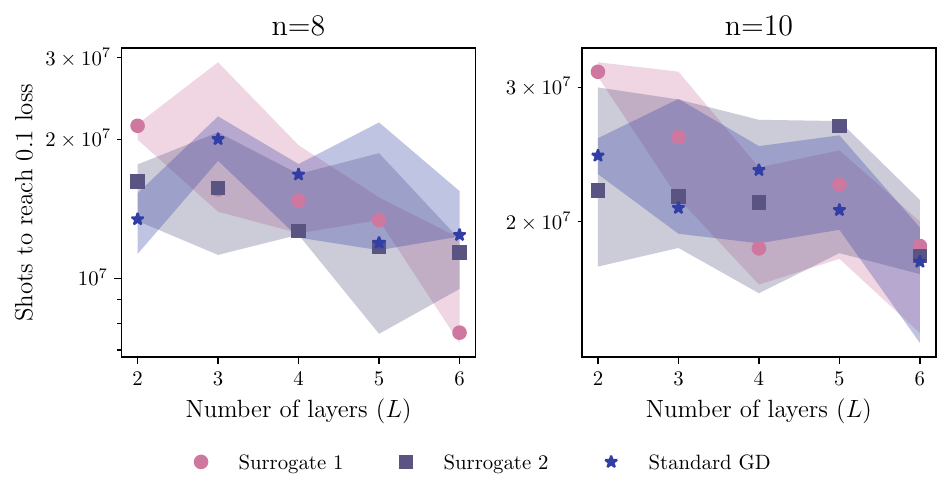}
    \caption{Number of shots required to reach \(f(\alv)\leq 0.1\) as a function of \(L\) for (a) \(n=8\) and (b) \(n=10\) qubits in the Heisenberg Hamiltonian. Surrogate 1 and 2 correspond to the na\"ive and optimized shot allocation, respectively. Standard GD uses an equivalent shot budget. Shaded areas indicate the interquartile range.}
    \label{fig:shots_to_threshold}
\end{figure*}

The empirical results observed from Fig.~\ref{fig:iterative-taylor-10qubits-l3-6} and Fig.~\ref{fig:shots_to_threshold} show that gradient descent (GD) and the surrogate-based approaches exhibit comparable performance in terms of quantum resource usage, although both methods achieve good accuracy for the tasks considered. 
Intuitively, while standard GD suffers from larger shot noise, its stochastically unbiased nature compensates for this drawback. 
Nevertheless, this does not rule out the possibility that more sophisticated surrogate approaches could yield a distinct performance advantage. Indeed, early evidence for a Fourier-based surrogate method has been reported in Ref.~\cite{koczor2020quantum}. A plausible explanation is that such approaches preserve the periodicity of the loss function and retain partial contributions from higher-order perturbative terms, although this remains a hypothesis. 

From a practical perspective, the key distinction is that GD consumes quantum resources at every step, whereas the surrogate expends a larger quantum budget up front and then reuses the resulting information across multiple iterations. This can be advantageous in two separate ways. First, it can better tolerate noise drift on real devices: by amortizing quantum evaluations into fewer, more substantial data-collection phases, it reduces the need to repeatedly query the hardware under slowly changing calibration conditions. As a result, successive optimization updates are less exposed to time-dependent variations in gate errors and measurement bias that can otherwise introduce inconsistent gradient estimates. Second, on cloud-based quantum platforms, it can mitigate the impact of variable queueing delays and inflated wall-clock runtimes that arise when jobs are repeatedly submitted.

\subsubsection{Further numerical experiment: Training inside a single patch}
\label{sec:training-inside-single-patch}

We now focus on a single surrogate and investigate how many GD steps can be performed from random initial parameters before the test loss begins to increase. The goal is to identify practically relevant values for $N_{\rm iters}$.

\paragraph*{Simulation details.} 
We fix \(F_2 = 600\) and \(\eta = 0.1\) as in previous simulations and consider the na\"ive surrogate method. The gradient is surrogated around random initial parameters \(\alv^*\), and GD is performed classically until the test loss starts to increase. We record the maximal value of \(N_{\rm iters}\) for each run. 
We set \(n = L\) and perform \(n_{\rm run} = 20\) independent runs for each \(n = L \in \{2,3,\dots,10\}\). 

\paragraph*{Results and discussion.} 
The results are shown in Fig.~\ref{fig:classical_training_one_patch}, which plots the optimal number of iterations \(N_{\rm iters}\) as a function of the number of parameters \(m\). The shaded area corresponds to the minimum–maximum range over all \(n_{\rm run} = 20\) random initializations, the diamond markers indicate the mean, and the solid line shows a power-law fit (linear fit in log–log scale).  
The minimal value observed is \(N_{\rm iters} = 9\), suggesting that the choice of \(N_{\rm iters} = 8\text{--}10\) in previous simulations was reasonable. These values were set heuristically based on preliminary tests. In practice, the test loss would not be accessible and so how to obtain the optimal \(N_{\rm iters}\) remains an open problem.
The corresponding distances to the center of the patch are shown in Fig.~\ref{fig:training-distance-patch} in the main text. One practical approach for selecting the number of iterations could be to restrict parameters within the patch of radius \(r \in \OC(1/\sqrt{m})\), where theoretical guarantees exist. However, Fig.~\ref{fig:training-distance-patch} indicates that the surrogate can be trained beyond this patch, meaning such a restriction would be suboptimal.

\begin{figure*}[h!]
    \includegraphics[width=0.6\linewidth]{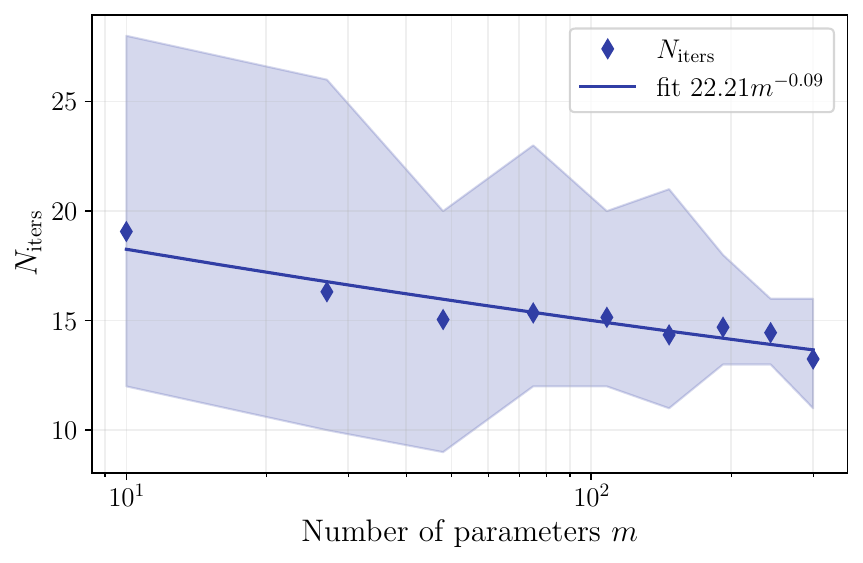}
    \caption{Maximal number of iterations that can be  performed using the surrogate before the test loss increases (while the surrogate loss continues to decrease). Shaded area shows the min–max range over \(n_{\rm run} = 20\) different initial random parameters, diamond markers indicate the mean, and the solid line is a power-law fitting curve (linear fit in log–log scale).}
    \label{fig:classical_training_one_patch}
\end{figure*}

\subsection{Further numerical details for Taylor surrogate}
\label{sec:taylor-iterative-numerical-details}

\subsubsection{Shot allocation in simulation}
\label{sec:iterative-taylor-order2-shot-allocation}

To account for finite sampling noise, we assign a number of measurement shots \(N(\delta\alv)\) to each loss evaluation \(f(\alv^*+\delta\alv)\). 
 Instead of explicitly sampling loss evaluations according to the optimal randomized measurement distribution of Lemma~\ref{lemma:multiple-observable-max-error}, we adopt a deterministic shot-allocation scheme. Specifically, we approximate the optimal randomized measurement process by fixing the shot allocation to its average over the optimal sampling distribution,
which preserves the same asymptotic scaling where the local patch size scales as \(r \sim \OC(1/m)\) while being marginally simpler to implement numerically. We fix a reference number of shots \(F_2\) for gradient evaluations (i.e., for shifts of the form $\pm \frac{\pi}{2} \vec{e}_l$) and define the remaining allocations proportionally:
\begin{align}
F_3 &\equiv \frac{F_2}{m}, \quad \text{(Hessian-related shifts)} \\
F_1 &\equiv \frac{F_2}{\sqrt{m}}, \quad \text{(central point evaluation)}.
\end{align}
This reflects the intuition that larger offsets or higher-order terms can be estimated more coarsely.

\paragraph*{(1) Na\"ive strategy (without periodicity)}
In this straightforward approach, all loss evaluations required by the parameter-shift rules are treated independently. In particular, the Hessian is estimated using four separate evaluations for each element:
\begin{equation}
    f\left(\alv^* \pm \frac{\pi}{2} (\vec{e}_l \pm \vec{e}_k)\right), \quad k,l=1,\dots,m.
\end{equation}
For \(k=l\), this includes multiple evaluations of \(f(\alv^*)\). Denoting \(N_{l,i}(0)\) as the number of shots used for the two independent central-point evaluations corresponding to the \(l\)-th diagonal term (\(i=1,2\)), we set \(N_{l,i}(0) = F_3\) for simplicity. The total number of shots required to construct one surrogate is then:
\begin{align}
N_{\text{surrogate}}^{\text{naive}} 
&= 2 m F_2 + 4 \cdot \frac{m(m+1)}{2} F_3 \\
&= 2 m F_2 + 2 m (m+1) F_3.
\end{align}
Substituting \(F_3 = F_2/m\) gives
\begin{equation}
    N_{\text{surrogate}}^{\text{naive}} = (4 m + 2) F_2.
\end{equation}

\paragraph*{(2) Optimized strategy (with periodicity)}
By exploiting the periodicity property
\begin{equation}
    f(\alv + 2 \pi \vec{e}_l) = f(\alv),
\end{equation}
the diagonal Hessian elements can be computed more efficiently:
\begin{equation}
    \partial_l^2 f(\alv^*) = \frac{f(\alv^* + \pi \vec{e}_l) - f(\alv^*)}{2}.
\end{equation}
Each diagonal element requires only one additional loss evaluation, while the central point \(f(\alv^*)\) is reused across all terms. The total number of shots per surrogate is then
\begin{align}
N_{\text{surrogate}}^{\text{opt}}
&= F_1 + 2 m F_2 + F_3 \left[m + 4 \cdot \frac{m(m-1)}{2} \right] \\
&= F_1 + 2 m F_2 + F_3 m (2 m - 1).
\end{align}
Substituting \(F_1 = F_2/\sqrt{m}\) and \(F_3 = F_2/m\) gives
\begin{equation}
    N_{\text{surrogate}}^{\text{opt}} = \left(4 m - 1 + \frac{1}{\sqrt{m}}\right) F_2.
\end{equation}

\medskip \paragraph*{Summary}
Both strategies scale similarly as \(\OC(m F_2)\), but the optimized strategy is slightly more shot-efficient due to reuse of the central evaluation and compression of diagonal Hessian evaluations. Nevertheless, the naive approach can slightly reduce shot noise in practice, resulting in comparable performance in terms of quantum resources.

\medskip \paragraph*{Standard gradient descent}
Standard gradient descent requires estimating the gradient at each iteration, which involves \(2 m\) loss evaluations via the parameter-shift rule. Let \(F_2^{\rm GD}\) denote the number of shots per evaluation. To ensure a fair comparison with the surrogate methods under an equivalent total shot budget, we set
\begin{equation} \label{eq:F2GD-general}
    F_2^{\rm GD} \approx \frac{N_{\text{surrogate}}^{\text{naive}}}{2 m N_{\rm iters}} 
    = \frac{\left(2 + \frac{1}{m}\right) F_2}{N_{\rm iters}}.
\end{equation}

\subsubsection{Two-phase surrogate schedule}
\label{sec:two-phase-surrogate-schedule}

We adopt a two-phase training schedule in which the frequency of surrogate updates changes over the course of optimization. Empirically, this heuristic improves the performance of the surrogate. Concretely, we define:
\begin{itemize}
    \item \( N_{\mathrm{surr}}^{(1)} \): number of surrogate refreshes in the first phase,
    \item \( N_{\mathrm{iters}}^{(1)} \): number of classical gradient steps between surrogates in the first phase,
    \item \( N_{\mathrm{surr}}^{(2)} \): number of surrogate refreshes in the second phase,
    \item \( N_{\mathrm{iters}}^{(2)} \): number of classical gradient steps between surrogates in the second phase,
\end{itemize}
with the typical choice
\[
N_{\mathrm{iters}}^{(1)} > N_{\mathrm{iters}}^{(2)}, \quad N_{\mathrm{surr}}^{(1)} < N_{\mathrm{surr}}^{(2)}.
\]
This adaptive schedule balances exploration and local accuracy: larger patches in the early phase encourage movement out of flat regions, while smaller patches in the later phase ensure the Taylor surrogate remains accurate near convergence.

When comparing the surrogate strategy to standard gradient descent under an equivalent total shot budget, we need to adapt Eq.~\eqref{eq:F2GD-general} to account for the two different values of \(N_{\rm iters}\). Specifically, we enforce
\begin{equation}
 N_{\text{surrogate}}^{\text{naive}}\,(N_{\text{surr}}^{(1)}+N_{\text{surr}}^{(2)}) 
 = 2 m F_2^{\rm GD1} \left( N_{\text{surr}}^{(1)} N_{\text{iters}}^{(1)} + N_{\text{surr}}^{(2)} N_{\text{iters}}^{(2)} \right).
\end{equation}
Since this equation may yield a non-integer value for \(F_2^{\rm GD}\), we round it as
\begin{equation}
     F_2^{\rm GD} = \left\lfloor (2 + 1/m) F_2 \frac{N_{\text{surr}}^{(1)} + N_{\text{surr}}^{(2)}}{N_{\text{surr}}^{(1)} N_{\text{iters}}^{(1)} + N_{\text{surr}}^{(2)} N_{\text{iters}}^{(2)}} \right\rfloor.
\end{equation}
This is equivalent to Eq.~\eqref{eq:F2GD-general} with \(N_{\rm iters}\) replaced by an effective average over the simulation.

\subsection{Analytical guarantees}
\label{sec:analytical-guarantee-second-order}
We now analyze how shot noise propagates through the surrogate-based \textit{gradient} estimation (not the shot noise analysis for the Taylor surrogate of the loss directly was covered in Appendix~\ref{sec:summary-results-taylor-gradient}). To understand the impact of statistical uncertainty, we construct a Taylor expansion of the surrogate gradient $\nabla \fsur(\alv)$ around a chosen expansion point $\alv^*$:
\begin{equation}
    \nabla \fsur(\alv) = \nabla f(\alv^*) + H(\alv^*)(\alv - \alv^*)\;,
\end{equation}
where $H(\alv^*)$ represents the Hessian of the loss function. Decomposing this into individual components for each parameter $l$, we obtain:
\begin{equation}
    \partial_l \fsur(\alv) = \partial_l f(\alv^*) + \sum_{k=1}^{m} \partial_{l,k} f(\alv^*) \, \delta_k, \quad \delta_k = \alpha_k - \alpha_k^*.
\end{equation}

To estimate these derivatives on a quantum processor, we employ the parameter-shift rule. This approach allows us to compute exact gradients and second-order derivatives by evaluating the circuit at specific shifts:
\begin{align}
    \partial_l f(\alv^*) &= \frac{f(\alv^* + \frac{\pi}{2}\vec{e}_l) - f(\alv^* - \frac{\pi}{2}\vec{e}_l)}{2}, \\
    \partial_{l,k} f(\alv^*) &= \frac{f(\alv^* + \frac{\pi}{2}(\vec{e}_l + \vec{e}_k)) - f(\alv^* + \frac{\pi}{2}(\vec{e}_l - \vec{e}_k)) - f(\alv^* - \frac{\pi}{2}(\vec{e}_l - \vec{e}_k)) + f(\alv^* - \frac{\pi}{2}(\vec{e}_l + \vec{e}_k))}{4}.
\end{align}
We define the shot noise error on the full gradient vector as the expected $L_2$-norm distance between the empirical estimate and the exact surrogate gradient (its mean):
\begin{equation}
    \Ebb\left[\norm{\grad\tilde{f}(\alv)-\Ebb\left[\grad\tilde{f}(\alv)\right]}_2^2\right]=\sum_{l=1}^m \Var[\partial_l\fsur(\alv)]\;.
\end{equation}

Assuming independent sampling noise for each evaluation, we propagate the variance through the parameter-shift expressions. Summing the contributions from the first-order shifts, the Hessian diagonals, and the off-diagonal cross-terms, we obtain 

\begin{align}
\label{eq:variance-shot-noise-2-nd-taylor-optimized}
    \sum_{l=1}^m \Var[\partial_l\fsur(\alv)]=&\sum_{l=1}^m\frac{\Var[f(\alv^*+\frac{\pi}{2}\vec{e}_l)]+\Var[f(\alv^*-\frac{\pi}{2}\vec{e}_l)]}{4}+\sum_{l=1}^m\sum_{k=1}^m\delta_k^2\frac{\sum_{\pm, \pm} \Var[f(\alv^* \pm \frac{\pi}{2}\vec{e}_l \pm \frac{\pi}{2}\vec{e}_k)]}{16}\;.
\end{align}

\noindent In the following, we perform the shot-noise analysis over these different scenarios:
\begin{itemize}
    \item Periodicity shot allocation for patch surrogate
    \item Naive shot allocation for patch surrogate, without periodicity,
    \item Comparison to standard gradient descents.
\end{itemize}

\subsubsection{Periodicity shot allocation for patch surrogate}

Here, we analyze the shot allocation in Eq.~\eqref{eq:variance-shot-noise-2-nd-taylor-optimized} using the periodicity condition. That is, by utilizing the periodicity of the loss, $f(\alv + 2\pi \vec{e}_k) = f(\alv)$, we can simplify the diagonal elements of the Hessian, reducing the number of required quantum circuit evaluations:
\begin{equation}
    \partial^2_l f(\alv^*) = \frac{f(\alv^* + \pi \vec{e}_l) - f(\alv^*)}{2}. \label{eq:2-nd-order-tylor-diagonal-hessian-periodicity}
\end{equation}

Let $N(\delta\alv)$ be the number of shots allocated for measuring $f(\alv^*+\delta\alv)$. The variance of the estimate is bounded by the single-shot variance:
\begin{equation} \label{eq:single-shot-variance-upperbound-taylor}
\Var[f(\alv^*+\delta\alv)]=\frac{\Var_1[f(\alv^*+\delta\alv)]}{N(\delta\alv)}\leq \frac{\norm{O}_\infty^2}{N(\delta\alv)}\;,
\end{equation}
where $\Var_1[\cdot]$ is upper-bounded by the squared norm of the observable $O$:
\begin{align}
\Var_1[f(\alv)]&=\bra{\psi(\alv)}O^2\ket{\psi(\alv)} - \bra{\psi(\alv)}O\ket{\psi(\alv)}^2 \\
&\leq\bra{\psi(\alv)}O^2\ket{\psi(\alv)} \\
& \leq\norm{O}_\infty^2\;.
\end{align}

By inserting Eq.~\eqref{eq:single-shot-variance-upperbound-taylor} and assuming a maximum patch radius $|\delta_l|\leq r$ into Eq.~\eqref{eq:variance-shot-noise-2-nd-taylor-optimized}, we find:
\begin{align}
    \frac{\sum_{l=1}^m \Var[\partial_l\fsur(\alv)]}{\norm{O}_\infty^2}&\leq \sum_{l=1}^m\left(\frac{1}{4N(\frac{\pi}{2}\vec{e}_l)}+\frac{1}{4N(-\frac{\pi}{2}\vec{e}_l)}+\frac{r^2}{4N(\pi\vec{e}_l)}\right) + \frac{mr^2}{4N(0)} \nonumber \\
    &+\sum_{l=1}^m\sum_{k=l+1}^m\frac{r^2}{8}\left(\sum_{\pm,\pm} \frac{1}{N(\pm\frac{\pi}{2}(\vec{e}_l\pm\vec{e}_k)))}\right)\;. 
\end{align}

Applying the shot allocation $(F_1, F_2, F_3)$ denoted in Appendix~\ref{sec:iterative-taylor-order2-shot-allocation}, the expression simplifies to:
\begin{align}
    \frac{\sum_{l=1}^m \Var[\partial_l\fsur(\alv)]}{\norm{O}_\infty^2}&\leq \frac{m}{2F_2}+\frac{m^2 r^2}{4 F_3}+\frac{mr^2}{4F_1} = \frac{m}{2F_2}\left(1+\frac{m^2r^2}{2}+\frac{\sqrt{m}r^2}{2}\right) \;. 
\end{align}
Considering a scaling of $r\in \OC(1/m)$, the shot noise scales as:
\begin{equation}
    \sum_{l=1}^m \Var[\partial_l\fsur(\alv)] \in\OC\left(\frac{\norm{O}_\infty^2\nparams}{F_2}\right)=\OC\left(\frac{\norm{O}_\infty^2\nparams^2}{N_{\rm surrogate}^{\rm opti}}\right)\;.
\end{equation}

\subsubsection{Naive shot allocation for patch surrogate without periodicity}
If we ignore periodicity and recompute the loss at the fixed point $\alv^*$ whenever it is required, the diagonal Hessian elements require 4 independent evaluations each. In this naive case, the total variance becomes:
\begin{align}
    \sum_{l=1}^m \Var[\partial_l \fsur(\alv)] &= \sum_{l=1}^m \frac{\Var[f(\alv^* + \frac{\pi}{2} \vec{e}_l)] + \Var[f(\alv^* - \frac{\pi}{2} \vec{e}_l)]}{4} + \sum_{l=1}^m \sum_{k=1}^m \delta_k^2 \cdot\frac{\sum_\pm\sum_\pm\Var[f(\alv^* \pm \frac{\pi}{2}\vec{e}_l \pm \frac{\pi}{2}\vec{e}_k)]}{16}.
\end{align}
Applying the same bounds and the allocation from Appendix~\ref{sec:iterative-taylor-order2-shot-allocation}, the error bound is:
\begin{align}
       \frac{\sum_{l=1}^m \Var[\partial_l\fsur(\alv)]}{\norm{O}_\infty^2} \leq \frac{m}{2F_2}+\frac{m^2 r^2}{4 F_3} = \frac{m}{2F_2}\left(1+\frac{m^2r^2}{2}\right) \;. 
\end{align}
While the bound is slightly lower, the total number of shots required is significantly higher. Thus, for $r\sim 1/m$, the shot noise maintains the same asymptotic scaling:
\begin{equation}
    \sum_{l=1}^m \Var[\partial_l\fsur(\alv)] \in\OC\left(\frac{\norm{O}_\infty^2\nparams^2}{N_{\rm surrogate}^{\rm naive}}\right)\;.
\end{equation}

\subsubsection{Comparison to standard GD}
For standard gradient descent (GD), we assume that each of the $2m$ loss evaluations for the gradient is estimated using $F_2^{\rm GD}$ shots. The total variance is upper-bounded by:
\begin{equation}
    \sum_{l=1}^m \Var[\partial_l f(\alv)] \leq \frac{\norm{O}_\infty^2m}{2F^{\rm GD}_2}\;.
\end{equation}
Recalling that $F_2^{\rm GD}\approx \frac{2}{N_{\rm iters}}F_2$, we see that the variance for a standard gradient evaluation is approximately $N_{\rm iters}/2$ times larger than that of the surrogate gradient:
\begin{equation}
    \sum_{l=1}^m \Var[\partial_l f(\alv)]\in\OC\left(\frac{\norm{O}_\infty^2\nparams N_{\rm iters}}{F_2}\right)\;.
\end{equation}

\section{Small angle Pauli Propagation}\label{app:smallanglePP}
This section is devoted to the classical simulation and surrogation of quantum circuits in the near-Clifford regime.

\subsection{The Pauli Propagation framework}
\label{sec:def-Pauli-path-expectation-function}

We start by briefly reviewing the Pauli-path formalism and how it can employed for classically estimating expectation values of quantum circuits. For further details, see also Refs.~\cite{fontana2023classical, rudolph2023classical, aharonov2022polynomial, angrisani2024classically} and references therein.

\medskip

Let $U(\vec{\alpha})$ be a quantum circuit written as parameterized perturbations from Clifford circuits. That is, of the form
\begin{equation}\label{eq:CliffordVQAcircuit-Appendix}
	U(\alv) = \prod_{i=1}^\nparams C_i U_i(\alpha_i)\,,
\end{equation}
where $\left\{ C_i \right\}_{i=1}^\nparams$ are a set of fixed Clifford operators and $\left\{ U_i(\alpha_i) = e^{-i\alpha_i P_i/2} \right\}_{i=1}^\nparams$ are parameterized rotations generated by the Pauli operators $\{ P_i \}_{i=1}^\nparams$. 

Given an observable $O = \sum_{P\in \mathcal{P}_n} a_P P$  and an initial state $\rho$, we want to approximate the following expectation value:
\begin{align}
    f(\vec{\alpha}) = \Tr[OU(\alv)\rho U^\dag( \alv )]\,.
\end{align}

For the sake of simplicity,  we will consider observables consisting of a single Pauli term. In the following lemma, we show that this assumption comes with no loss of generality.

\begin{lemma}[Transferring guarantees from Pauli observables]
\label{lem:trans}
Let $\mathcal{S}\subseteq\{I,X,Y,Z\}^{\otimes n}$ be a subset of Pauli operators and let $O = \sum_{P\in \mathcal{S}} a_P P$ be an observable.
\begin{enumerate}
\item \textbf{Worst-case error.} For all $P \in \mathcal{S}$ let $\fsur^{(P)}(\vec{\alpha})$ be a function satisfying
\begin{align}
    \abs{\fsur^{(P)}(\vec{\alpha}) - \Tr[P U(\vec{\alpha})\rho U(\vec{\alpha})^\dag]}\leq \epsilon.
\end{align}
Then the function $\fsur(\vec{\alpha}) = \sum_{P \in \mathcal{S}} a_P  \fsur^{(P)}(\vec{\alpha}) $ satisfies the following
\begin{align}
     \abs{\fsur(\vec{\alpha}) - f(\vec{\alpha})} \leq \norm{\boldsymbol{a}}_1 \epsilon.
\end{align}
\item \textbf{Mean Square Error.} Given a distribution $\mathcal{D}$ over $[-\pi, \pi]^m$, for all $P \in \mathcal{S}$ let $\fsur^{(P)}(\vec{\alpha})$ be a function satisfying
\begin{align}
    \Ebb_{\vec{\alpha}\sim \mathcal{D}} \abs{\fsur^{(P)}(\vec{\alpha})- \Tr[P U(\vec{\alpha})\rho U(\vec{\alpha})^\dag]}^2 \leq \epsilon^2.
\end{align}
Then the function $\fsur(\vec{\alpha}) = \sum_{P \in \mathcal{S}} a_P  \fsur^{(P)}(\vec{\alpha})$ satisfies the following
\begin{align}
    \Ebb_{\vec{\alpha} \sim \mathcal{D}} \abs{\fsur(\vec{\alpha}) - f(\vec{\alpha})}^2 \leq \norm{\boldsymbol{a}}_1^2 \epsilon^2.
\end{align}
\end{enumerate}
\end{lemma}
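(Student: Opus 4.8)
The plan is to reduce both statements to elementary inequalities after invoking linearity of the trace. The starting observation is that, since $O = \sum_{P \in \mathcal{S}} a_P P$ and the trace is linear, we have the exact decomposition
\begin{equation}
    f(\vec{\alpha}) = \Tr[O U(\vec{\alpha}) \rho U(\vec{\alpha})^\dag] = \sum_{P \in \mathcal{S}} a_P \Tr[P U(\vec{\alpha}) \rho U(\vec{\alpha})^\dag]\,.
\end{equation}
Combining this with the definition $\fsur(\vec{\alpha}) = \sum_{P \in \mathcal{S}} a_P \fsur^{(P)}(\vec{\alpha})$, the per-Pauli error $g_P(\vec{\alpha}) := \fsur^{(P)}(\vec{\alpha}) - \Tr[P U(\vec{\alpha}) \rho U(\vec{\alpha})^\dag]$ satisfies $\fsur(\vec{\alpha}) - f(\vec{\alpha}) = \sum_{P \in \mathcal{S}} a_P\, g_P(\vec{\alpha})$. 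The whole proof then amounts to bounding this weighted sum of errors.

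For the worst-case statement, I would simply apply the triangle inequality for the absolute value: for any fixed $\vec{\alpha}$,
\begin{equation}
    \abs{\fsur(\vec{\alpha}) - f(\vec{\alpha})} = \Big| \sum_{P \in \mathcal{S}} a_P\, g_P(\vec{\alpha}) \Big| \leq \sum_{P \in \mathcal{S}} \abs{a_P}\, \abs{g_P(\vec{\alpha})} \leq \Big( \sum_{P \in \mathcal{S}} \abs{a_P} \Big) \epsilon = \norm{\boldsymbol{a}}_1 \epsilon\,,
\end{equation}
using the hypothesis $\abs{g_P(\vec{\alpha})} \leq \epsilon$ for every $P$ in the penultimate step.

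For the mean-square statement, the cleanest route is Minkowski's inequality in $L^2(\mathcal{D})$ (equivalently, the triangle inequality for the $L^2$ norm): writing $\norm{h}_{L^2} := (\Ebb_{\vec{\alpha}\sim\mathcal{D}} \abs{h(\vec{\alpha})}^2)^{1/2}$, we get $\norm{\fsur - f}_{L^2} \leq \sum_{P\in\mathcal{S}} \abs{a_P}\, \norm{g_P}_{L^2} \leq \norm{\boldsymbol{a}}_1 \epsilon$, and squaring gives the claim. Alternatively one can apply a weighted Cauchy--Schwarz inequality pointwise, $\big| \sum_P a_P g_P \big|^2 \leq \big( \sum_P \abs{a_P} \big)\big( \sum_P \abs{a_P}\, g_P^2 \big)$, and then take the expectation and use $\Ebb[g_P^2] \leq \epsilon^2$; both yield $\Ebb_{\vec{\alpha}\sim\mathcal{D}} \abs{\fsur(\vec{\alpha}) - f(\vec{\alpha})}^2 \leq \norm{\boldsymbol{a}}_1^2 \epsilon^2$. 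There is no real obstacle here; the only point requiring a moment's thought is recognizing that the error amplification factor is $\norm{\boldsymbol{a}}_1$ (not, say, $\norm{\boldsymbol{a}}_1^2$ or $\abs{\mathcal{S}}$) in the worst-case bound and $\norm{\boldsymbol{a}}_1^2$ in the squared bound, which is exactly what the triangle/Minkowski inequality delivers and is tight when all the $g_P$ are aligned.
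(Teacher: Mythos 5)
Your proof is correct and matches the paper's argument: part 1 is the same triangle-inequality computation, and your alternative weighted Cauchy--Schwarz route for part 2 (splitting $\abs{a_P}=\sqrt{\abs{a_P}}\sqrt{\abs{a_P}}$ pointwise and then taking the expectation) is exactly what the paper does. Your primary Minkowski-in-$L^2$ route is an equally valid, slightly cleaner variant of the same idea, so there is nothing to fix.
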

\begin{proof}
    The first part of the lemma follows by a simple application of the triangle inequality.
    \begin{align}
         &\abs{\fsur(\vec{\alpha}) - f(\vec{\alpha})}  = \abs{\sum_{P \in \mathcal{S}} a_P  \left(\fsur^{(P)}(\vec{\alpha})- \Tr[P U(\vec{\alpha})\rho U(\vec{\alpha})^\dag] \right)}
         \\\leq &\sum_{P \in \mathcal{S}} \abs{a_P} \abs{ \fsur^{(P)}(\vec{\alpha})- \Tr[P U(\vec{\alpha})\rho U(\vec{\alpha})^\dag] } \leq  \norm{\boldsymbol{a}}_1 \epsilon.
    \end{align}
As for the second part, we have
\begin{align}
     \abs{\fsur(\vec{\alpha}) - f(\vec{\alpha})}^2 
    =&\abs{\sum_{P \in \mathcal{S}} a_P  \left(\fsur^{(P)}(\vec{\alpha})- \Tr[P U(\vec{\alpha})\rho U(\vec{\alpha})^\dag] \right)}^2
    \\\leq &\left(\sum_{P \in \mathcal{S}} {\abs{a_P}}  \abs{\fsur^{(P)}(\vec{\alpha})- \Tr[P U(\vec{\alpha})\rho U(\vec{\alpha})^\dag]}\right)^2
    \\= &\left(\sum_{P \in \mathcal{S}} \sqrt{\abs{a_P}} \sqrt{\abs{a_P}}  \abs{\fsur^{(P)}(\vec{\alpha})- \Tr[P U(\vec{\alpha})\rho U(\vec{\alpha})^\dag]}\right)^2
    \\ \leq &\left(\sum_{P \in \mathcal{S}} \abs{a_P} \right) \left(\sum_{P \in \mathcal{S}}  \abs{a_P} \abs{\fsur^{(P)}(\vec{\alpha})- \Tr[P U(\vec{\alpha})\rho U(\vec{\alpha})^\dag]}^2\right)
    \\ = & \norm{\boldsymbol{a}}_1 \left(\sum_{P \in \mathcal{S}}  \abs{a_P} \abs{\fsur^{(P)}(\vec{\alpha})- \Tr[P U(\vec{\alpha})\rho U(\vec{\alpha})^\dag]}^2\right),
\end{align}
where in the second step we used the triangle inequality and in the fourth step we used the Cauchy-Schwarz inequality -- i.e. $( \sum_i x_i y_i)^2 \leq (\sum_i x_i^2)(\sum_i y_i^2)$.
We conclude the proof by taking the expectation over $\boldsymbol{\alpha}$.
\begin{align}
   \Ebb_{\vec{\alpha}\sim \mathcal{D}}  \abs{\fsur(\vec{\alpha}) - f(\vec{\alpha})}^2 = &
   \Ebb_{\vec{\alpha}\sim \mathcal{D}} \norm{\boldsymbol{a}}_1 \left(\sum_{P \in \mathcal{S}}  \abs{a_P} \abs{\fsur^{(P)}(\vec{\alpha})- \Tr[P U(\vec{\alpha})\rho U(\vec{\alpha})^\dag]}^2\right)
   \\= &\norm{\boldsymbol{a}}_1 \left(\sum_{P \in \mathcal{S}}  \abs{a_P} \underbrace{\Ebb_{\vec{\alpha}\sim \mathcal{D}}\abs{\fsur^{(P)}(\vec{\alpha})- \Tr[P U(\vec{\alpha})\rho U(\vec{\alpha})^\dag]}^2}_{\leq \epsilon^2}\right)
   \\\leq &  \norm{\boldsymbol{a}}_1^2 \epsilon^2,
\end{align}
where in the second step we used the fact that the expectation is linear.
\end{proof}

Thus, given a Pauli operator $P$, we consider the back-propagated (i.e., Heisenberg-evolved) observable $U(\vec{\alpha})^\dag P U(\vec{\alpha})$ and express it in the Pauli-path formalism
\begin{align}
    U(\vec{\alpha})^\dag P U(\vec{\alpha}) = \sum_{\omv} \Phi_{\omv}(\vec{\alpha}) P_{\omv},
\end{align}
where $\omv\in\{-1, 0, 1\}^m$ is the so-called \textit{frequency vector} or \textit{path vector} that uniquely enumerates a Pauli path, and $P_{\omv}$ is the backpropagated Pauli along the path $\omv$ .
If at the Pauli gate $U_i$  in Eq.~\eqref{eq:CliffordVQAcircuit} the backpropagating Pauli operator commutes with with $P_i$, then $\omega_i=0$. If it anticommutes, $\omega_i= 1$ or $\omega_i= -1$ depending on whether the Pauli path picked up a cosine or sine coefficient, respectively. $\Phi_{\omv}(\thv)$ thus represents the coefficient of the backpropagated Pauli path and is defined as 
\begin{equation}
    \Phi_{\omv}(\thv) = \prod_{i=1}^\nparams \phi_{\omega_i}(\alpha_i),
\end{equation}
where
\begin{equation}\label{eq:pauli_path}
    \phi_{\omega_i}(\alpha_i)= \begin{cases}
    1,                      & \text{if } \omega_i = 0\\
    \cos(\alpha_i),        & \text{if } \omega_i = 1\\
    \sin(\alpha_i),        & \text{if } \omega_i = -1 
\end{cases}\,.
\end{equation}
Furthermore, we define
\begin{equation}\label{eq:definition_d_omega}
    d_{\omv} := \Tr[\rho P_{\omv}]\in[-1, 1]
\end{equation}
as the overlap of the backpropagated Pauli along the path $\omv$ with the initial state, and we denote the expectation function as
\begin{align}
    f^{(P)}(\thv) = \Tr[P U(\thv)\rho U^\dag(\thv)] = \sum_{\omv} \Phi_{\omv}(\vec{\alpha}) d_{\omv}.
\end{align}
We additionally define
\begin{equation}\label{eq:morethanksins}\mathcal{S}_\expansionorder^{\nparams}:=\{\text{ paths } \omv |\ \omv \text{ contains more than \expansionorder negative entries} \} \subseteq \{-1,0,1\}^{\nparams}
\end{equation}
as the set of paths $\omv$ with more than $\expansionorder$ entries of $-1$, i.e., the set of paths $\omv$ with more than $\expansionorder$ sine coefficients. For simplicity, we define 
\begin{equation} \label{eq:num_sins}
    \#\sin(\omv)\; := \text{number of negative entries of } \omv\;,
\end{equation}
which is also the number of sine coefficients of the path $\omv$.
Finally, we introduce the truncated function $\fsur_{\expansionorder}^{(P)}(\thv)$ which contains only the terms with at most $\expansionorder$ sines
\begin{align}
    \fsur_{\expansionorder}^{(P)}(\thv) \coloneqq \sum_{\omv \not\in \mathcal{S}_\expansionorder^{\nparams}} \Phi_{\omv}(\vec{\alpha}) d_{\omv}\;.
\end{align}
Similarly, we will use the notation $\widetilde{P}(\alv)$ to denote the truncated version of the back-propagated operator $P$ given by
\begin{equation}
    \widetilde{P}(\alv) := \sum_{\omv \not\in \mathcal{S}_\expansionorder^{\nparams}} \Phi_{\omv}(\vec{\alpha}) P_{\omv}\;.
\end{equation}
In the following, we will consider two regimes:
\begin{itemize}
    \item \textbf{Average-case.} When the parameter vector $\thv$ is equipped with a suitable distribution $\mathcal{D}$ over the hypercube $\vol(\mathbf{0}, r) $, we consider the mean square error
    \begin{align}\label{eq:l2error}
    \Ebb_{\thv \sim \mathcal{D}}\left[ \left(f^{(P)}(\thv) - \fsur_{\expansionorder}^{(P)}(\thv)\right)^2 \right]
    =\Ebb_{\thv \sim \mathcal{D}} \left[\left(\sum_{\omv \in \mathcal{S}_\expansionorder^{\nparams}} \Phi_{\omv}(\vec{\alpha}) d_{\omv}\right)^2\right].
\end{align}
\item \textbf{Worst-case.} When the parameter vector $\thv$ is an arbitrarily chosen point in the hypercube $\vol(\mathbf{0}, r) $, we consider the worst-case error
\begin{align}
    \sup_{\vol(\bold{0}, r) } \abs{f^{(P)}(\thv) - \fsur_{\expansionorder}^{(P)}(\thv)} 
    =\sup_{\vol(\bold{0}, r) } \abs{\sum_{\omv \in \mathcal{S}_\expansionorder^{\nparams}}  \Phi_{\omv}(\vec{\alpha}) d_{\omv}}.
\end{align}
\end{itemize}

\subsection{Uncorrelated parameters}\label{apx:average_uncorr}
In this section we consider the case where all the parameters $\alpha_i$ are sampled independently from some probability distributions $\mathcal{D}_i$ supported in $[-\pi, \pi]$, and therefore  the parameter vector $\thv = (\alpha_1,\alpha_2,\dots, \alpha_m)$ is sampled from the product distribution $\mathcal{D}:=\mathcal{D}_1 \times \mathcal{D}_2 \times \dots \mathcal{D}_m$.
For all $i$, we make following key assumptions on the distribution $\mathcal{D}_i$.
\begin{enumerate}
    \item \textbf{Symmetric distribution.} 
    We assume $\mathcal{D}_i$ to be symmetric and centered around zero, i.e. for all $0\leq a\leq b \leq \pi$
    \begin{align}
        \Pr_{\alpha_i \sim \mathcal{D}_i}\bigg[\alpha_i \in [a,b]\bigg] =\Pr_{\alpha_i \sim \mathcal{D}_i}\bigg[\alpha_i \in [-b,-a]\bigg] 
    \end{align}
    This assumption ensures that the odd function $\cos(\alpha_i)\sin(\alpha_i)$ has mean zero
    \begin{align}
        \Ebb_{_{\alpha_i \sim \mathcal{D}_i}} \cos(\alpha_i)\sin(\alpha_i) = 0.
    \end{align}
    \item \textbf{Variance bound.} The variance of the function $\sin(\alpha_i)$ is bounded by an appropriately chosen value $\sigma^2$
    \begin{align}
        &\Ebb_{_{\alpha_i \sim \mathcal{D}_i}} \sin^2(\alpha_i)\leq \sigma^2.
    \end{align}
\end{enumerate}
In particular, these two assumptions are satisfied by a truncated Gaussian distribution with mean $0$ and variance $\sigma^2$ supported in $[-\pi, \pi]$, and by any symmetric distribution supported entirely in the range $[-\sqrt{3}\sigma, \sqrt{3}\sigma]$.

Crucially, the symmetry assumption ensures that the different paths are ``orthogonal'', in the sense that the associated coefficients are uncorrelated
\begin{align}
    \omv'\neq \omv \implies \Ebb_{\thv \sim \mathcal{D}} \, \Phi_{\omv}(\thv) \Phi_{\omv'}(\thv) = 0.
\end{align}
This occurs when for any $i$ one has $\omega_i = 1, \omega^\prime_i = -1$, or vice versa. Given that all paths stem from the same Pauli operator in the Pauli decomposition of the observable $O$, two different paths must have split off from another, collecting opposing coefficients. This implies that $ \Ebb_{\thv \sim \mathcal{D}} \, \Phi_{\omv}(\thv) \Phi_{\omv'}(\thv) = C_{\omv \setminus \omv_i} \cdot \Ebb_{_{\alpha_i \sim \mathcal{D}_i}} \cos(\alpha_i)\sin(\alpha_i)=0$, where $C_{\omv \setminus \omv_i}$ are the integrals over all other independent parameters $\alpha_j, j\neq i$.

The above observation allows us to drastically simplify the expression of the mean square error:
\begin{align}
    \Ebb_{\thv \sim \mathcal{D}} \left(\sum_{\omv \in \mathcal{S}_\expansionorder^{\nparams}} \Phi_{\omv}(\vec{\alpha}) d_{\omv}\right)^2
    & = \sum_{\omv \in \mathcal{S}_\expansionorder^{\nparams}} \Ebb_{\thv \sim \mathcal{D}} \Phi_{\omv}^2(\vec{\alpha}) d_{\omv}^2 \\
    &\leq \sum_{\omv \in \mathcal{S}_\expansionorder^{\nparams}} \Ebb_{\thv \sim \mathcal{D}} \Phi_{\omv}^2(\vec{\alpha}),
\end{align}
where we used the fact that  $d_{\omv}^2 = \Tr[P_{\omv}\rho]^2\leq 1$ for all $\omv$.
Leveraging the variance bound, we can upper bound the mean square error as follows.

\begin{theoremsup}[Mean square error for uncorrelated parameters]
\label{thm:mse-avg-supp}
Let $\mathcal{D}$ satisfy the assumptions stated above with $\sigma^2 \leq \expansionorder/\nparams$. Then we have
\begin{align}
    \Ebb_{\thv \sim \mathcal{D}} \left[\left(f^{(P)}(\thv) - \fsur_{\expansionorder}^{(P)}(\thv)\right)^2\right]\leq \left(\frac{e\nparams \sigma^2}{\expansionorder}\right)^{\expansionorder}.
\end{align}
\end{theoremsup}
\begin{proof}
  We start by observing that
\begin{align}
    \Ebb_{\thv \sim \mathcal{D}} \Phi_{\omv}^2(\vec{\alpha})  =\prod_{i=1}^{\nparams} \begin{cases}
    1                      & \text{if } \omega_i = 0\\
    \Ebb_{\alpha_i \sim \mathcal{D}_i} \cos(\alpha_i)^2 \leq 1        & \text{if } \omega_i = 1\\
   \Ebb_{\alpha_i \sim \mathcal{D}_i} \sin(\alpha_i)^2 \leq \sigma^2       & \text{if } \omega_i = -1 
\end{cases}\,.
\end{align}

In order to upper bound the mean square error, we need to estimate how many sine terms are contained in each path $\omv$. Unfortunately, this is highly problem-dependent. For a generic upper bound, we thus choose a maximally-splitting problem, where any truncation induces the largest loss in probability mass. That is, we assume that the splitting tree yields $2^{\nparams}$ paths with all combinations of $\sin$ and $\cos$ prefactors. We can thus write
\begin{align}
    \Ebb_{\thv \sim \mathcal{D}} \left[\left(f^{(P)}(\thv) - \fsur_{\expansionorder}^{(P)}(\thv)\right)^2\right] & \leq \sum_{\omv \in \mathcal{S}_\expansionorder^{\nparams}} \Ebb_{\thv \sim \mathcal{D}} \Phi_{\omv}^2(\vec{\alpha}) 
    \leq \sum_{\substack{\omv\in \{1, -1\}^{\nparams}\\ \#\sin(\omv) \geq \expansionorder}} \Ebb_{\thv \sim \mathcal{D}} \Phi_{\omv}^2(\vec{\alpha}) 
    \\&\leq  \sum_{i=\expansionorder}^{\nparams} \binom{\nparams}{i} \left(\sigma^{2}\right)^i
    \leq  \left(\frac{e\nparams \sigma^2}{\expansionorder}\right)^{\expansionorder},
\end{align}
where $\#\sin(\omv)$ is defined in Eq.~\eqref{eq:num_sins} and the last step follows from Lemma~\ref{lem:bin-up} since by assumption $\sigma^2\leq \expansionorder/\nparams$.
  
\end{proof}

\paragraph*{Uniformly sampled parameters.} We now consider the special case of particular interest when all the parameters $\alpha_i$ are sampled independently from the uniform distribution over the symmetric interval $[-r,r]$, where $r>0$ is a suitable small constant.

Averaging over the small angle range $[-r, r]$ leads to 
\begin{align}
\Ebb_{\alpha_i \sim \mathcal{D}_i} \sin(\alpha_i)^2  =  \frac{1}{2r}\int_{-r}^r \sin(\alpha_i)^2  d\alpha_i = 
\frac{1}{2}\left(1 - \frac{\sin(2r)}{2r}\right) \leq \frac{r^2}{3}\,.
\end{align}
From the previous, we can upper bound the mean square error by setting $\sigma^2 = r^2/3$ and invoking \supt~\ref{thm:mse-avg-supp}.
\begin{corollary}[Mean square error for the uniform distribution]
\label{cor:MSE-uniform-uncorrelated}
Let $\mathcal{D}$ be the uniform distribution over $[-r,r]^m$, where $r^2 \leq 3\expansionorder/\nparams$. Then we have
\begin{align}
    \Ebb_{\thv \sim \mathcal{D}} \left(f^{(P)}(\thv) - \fsur_{\expansionorder}^{(P)}(\thv)\right)^2\leq \left(\frac{e\nparams r^2}{3\expansionorder}\right)^{\expansionorder}.
\end{align}    
\end{corollary}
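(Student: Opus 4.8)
The plan is to obtain Corollary~\ref{cor:MSE-uniform-uncorrelated} directly from \supt~\ref{thm:mse-avg-supp} by checking that the uniform distribution on $[-r,r]$ satisfies the two assumptions required there, namely that each $\mathcal{D}_i$ is symmetric about zero and that $\Ebb_{\alpha_i\sim\mathcal{D}_i}\sin^2(\alpha_i)\leq\sigma^2$ for a suitable $\sigma^2$. Symmetry is immediate since the density $\frac{1}{2r}\mathbbm{1}_{[-r,r]}$ is even, so the first assumption holds and in particular $\Ebb\cos(\alpha_i)\sin(\alpha_i)=0$. The work is therefore all in the variance estimate.

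First I would compute the second moment of $\sin(\alpha_i)$ exactly:
\begin{equation}
\Ebb_{\alpha_i\sim\mathcal{D}_i}\sin^2(\alpha_i)=\frac{1}{2r}\int_{-r}^{r}\sin^2(\alpha)\,d\alpha=\frac{1}{2}\left(1-\frac{\sin(2r)}{2r}\right)\,.
\end{equation}
Then I would bound this by $r^2/3$. The cleanest route is to use the Taylor expansion $\sin(2r)=2r-\frac{(2r)^3}{6}+\frac{(2r)^5}{120}-\cdots$, so that $1-\frac{\sin(2r)}{2r}=\frac{(2r)^2}{6}-\frac{(2r)^4}{120}+\cdots=\frac{2r^2}{3}-\frac{2r^4}{15}+\cdots$, an alternating series with decreasing terms (for $r$ in the relevant small range), whence $\frac12\left(1-\frac{\sin(2r)}{2r}\right)\leq\frac{r^2}{3}$. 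Alternatively one can use the elementary pointwise bound $\sin^2(\alpha)\leq\alpha^2$ and integrate, which gives $\Ebb\sin^2(\alpha_i)\leq\frac{1}{2r}\int_{-r}^r\alpha^2\,d\alpha=\frac{r^2}{3}$ without any series manipulation; I would use this since it is shortest and valid for all $r$. This identifies $\sigma^2=r^2/3$.

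Next I would invoke \supt~\ref{thm:mse-avg-supp} with this choice of $\sigma^2$. That theorem requires the hypothesis $\sigma^2\leq\expansionorder/\nparams$, which under $\sigma^2=r^2/3$ translates exactly to the stated condition $r^2\leq 3\expansionorder/\nparams$. Applying it gives
\begin{equation}
\Ebb_{\thv\sim\mathcal{D}}\left(f^{(P)}(\thv)-\fsur_\expansionorder^{(P)}(\thv)\right)^2\leq\left(\frac{e\nparams\sigma^2}{\expansionorder}\right)^\expansionorder=\left(\frac{e\nparams r^2}{3\expansionorder}\right)^\expansionorder\,,
\end{equation}
which is precisely the claim. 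There is no real obstacle here: the only mildly delicate point is making sure the variance bound $\Ebb\sin^2\leq r^2/3$ is tight enough to be compatible with the regime $\sigma^2\leq\expansionorder/\nparams$ under which \supt~\ref{thm:mse-avg-supp} applies, and that is handled by the exact integral together with the pointwise inequality $\sin^2(\alpha)\leq\alpha^2$. Everything else is a direct substitution into the already-proven theorem.
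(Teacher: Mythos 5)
Your proposal is correct and follows exactly the paper's route: verify symmetry, compute $\Ebb\sin^2(\alpha_i)=\tfrac12\bigl(1-\tfrac{\sin(2r)}{2r}\bigr)\leq r^2/3$, set $\sigma^2=r^2/3$, and invoke \supt~\ref{thm:mse-avg-supp} (whose hypothesis $\sigma^2\leq\expansionorder/\nparams$ becomes the stated $r^2\leq 3\expansionorder/\nparams$). The only difference is that you spell out a justification for the inequality $\tfrac12\bigl(1-\tfrac{\sin(2r)}{2r}\bigr)\leq r^2/3$ (the pointwise bound $\sin^2\alpha\leq\alpha^2$ is the cleanest), which the paper asserts without proof.
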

In particular, for any $r \leq \sqrt{{1}/{\nparams}}$, this bound decreases faster than exponentially with $\expansionorder$.

\subsection{Worst-case analysis}\label{apx:worst}
In this section, we consider the parameter vector $\vec{\alpha}=(\alpha_1, \alpha_2,\dots, \alpha_m)$ to be arbitrarily chosen -- possibly adversarially -- in the hypercube $[-r, r]^m$.  Thus, we drop the ``symmetry assumption'' and we don't assume the different $\alpha_i$ to be independently sampled.

We consider the maximum error over the hypercube:

\begin{align}
     \max_{\vec{\alpha}\in[-r,r]^\nparams}\left|f^{(P)}(\thv) - \fsur_{\expansionorder}^{(P)}(\thv)\right|.
\end{align}

We provide the following upper bound.
\begin{theoremsup}[Worst-case error]
\label{thm:worst-case-error-single-Pauli}
Let  $r \leq \expansionorder/\nparams$. Then we have
\begin{align}
    \max_{\vec{\alpha}\in[-r,r]^\nparams}\left|f^{(P)}(\thv) - \fsur_{\expansionorder}^{(P)}(\thv)\right| \leq \left(\frac{e\nparams r}{\expansionorder}\right)^{\expansionorder}.
\end{align}
Thus, for any $r\leq 1/\nparams$, the absolute error decreases faster than exponentially in $\expansionorder$.
\end{theoremsup}

\begin{proof}
  We start by upper bounding the error as follows
  \begin{align}
     \left|f^{(P)}(\thv) - \fsur_{\expansionorder}^{(P)}(\thv)\right| =  \left| \sum_{\omv\in \mathcal{S}_{\expansionorder}^{\nparams}} \Phi_{\omv}(\thv)d_{\omv} \right| 
    \leq \sum_{\omv\in \mathcal{S}_{\expansionorder}^{\nparams}} \left|\Phi_{\omv}(\thv)\right| \,,
\end{align}
where we used the triangle inequality and the fact $|d_{\omv}| =\Tr[P_{\omv}\rho] \leq 1$.

Proceeding as in the proof of \supt~\ref{thm:mse-avg-supp}, we obtain
\begin{align}
    &\sum_{\omv \in \mathcal{S}_\expansionorder^{\nparams}} \abs{\Phi_{\omv}(\vec{\alpha})} 
    \leq \sum_{\substack{\omv\in \{1, -1\}^{\nparams}\\ \#\sin(\omv) \geq \expansionorder}}\abs{\Phi_{\omv}(\vec{\alpha})} 
    \\\label{eq:upperbound-path-coefficient-sum}  \leq  &\sum_{\substack{\omv\in \{1, -1\}^{\nparams}\\ \#\sin(\omv) \geq \expansionorder}} r^{\#\sin(\omv)} =\sum_{i=\expansionorder}^{\nparams} \binom{\nparams}{i} r^i
    \leq  \left(\frac{e\nparams r}{\expansionorder}\right)^{\expansionorder},
\end{align}
where we used the fact that $\abs{\sin(\alpha_i)}\leq r$, $\abs{\cos(\alpha_i)}\leq 1$ and we applied Lemma~\ref{lem:bin-up} in the last step. Recall that $\#\sin(\omv)$ is defined in Eq.~\eqref{eq:num_sins}. Notice that the proof is also valid if the $\alpha_i$'s are correlated and provides also a bound for the $1$-norm of the coefficients of the paths.

\end{proof}

\subsection{Circuits with correlated parameters}
\label{apx:average_correlated}

Given the apparent difference in asymptotic scalings between the average case of parameters randomly sampled from a distribution and the worst case, we can ask the question: \textit{How hard is it to simulate fundamentally interesting circuits?} Here we consider the example of Trotter circuits for simulating quantum dynamics. The scenario is that of perfectly correlated angles, as the same  parameter  is used across all of the gates.  This is the case when all coefficients in a Hamiltonian are equal. Clearly, our earlier worst-case error also bounds this case, but we will now consider the \textit{average case of correlated parameters}.

\begin{proposition}[Improved bound for correlated parameters]
Let $\mathcal{D}$ the distribution over $\vol(\cenv, r)$ obtained by sampling a parameter $\alpha_1$ uniformly at random over $[-r,r]$ and setting $\alv = (\alpha_1,\alpha_1,\dots, \alpha_1)$.
Then the average error satisfies the following upper bound:
\begin{align}
    \Ebb_{\alv} \left|f^{(P)}(\alv) - \fsur_{\expansionorder}^{(P)}(\alv)\right| \leq \frac{\left(\frac{e\nparams r}{\expansionorder}\right)^{\expansionorder}}{\expansionorder+1}.
\end{align}
\end{proposition}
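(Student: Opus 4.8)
The plan is to reduce the correlated-parameter average-error bound to the worst-case path-coefficient estimate already proved in Supplemental Theorem~\ref{thm:worst-case-error-single-Pauli}, but with a crucial refinement: when all gates share the single parameter $\alpha_1$, each truncated path $\omv$ contributes a coefficient whose magnitude is exactly $|\Phi_{\omv}(\alv)| = |\cos(\alpha_1)|^{\#\cos(\omv)}\,|\sin(\alpha_1)|^{\#\sin(\omv)} \le |\sin(\alpha_1)|^{\#\sin(\omv)} \le |\alpha_1|^{\#\sin(\omv)}$, and for $\omv \in \mathcal{S}_\expansionorder^{\nparams}$ we have $\#\sin(\omv) \ge \expansionorder+1$ (more than $\expansionorder$ negative entries). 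So after triangle inequality and $|d_{\omv}| \le 1$ we get, pointwise in $\alpha_1$,
\begin{align}
    \left|f^{(P)}(\alv) - \fsur_{\expansionorder}^{(P)}(\alv)\right| \le \sum_{\omv \in \mathcal{S}_\expansionorder^{\nparams}} |\alpha_1|^{\#\sin(\omv)} \le \sum_{i=\expansionorder+1}^{\nparams} \binom{\nparams}{i} |\alpha_1|^{i}\,,
\end{align}
where I bounded the number of paths with exactly $i$ sines by $\binom{\nparams}{i}$ (the maximally-splitting-tree argument from the proof of Supplemental Theorem~\ref{thm:mse-avg-supp}).

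Next I would take the expectation over $\alpha_1 \sim \mathrm{Unif}[-r,r]$. Since each term is even in $\alpha_1$, $\Ebb_{\alpha_1}|\alpha_1|^{i} = \frac{1}{r}\int_0^r t^i\,dt = \frac{r^{i}}{i+1}$. Hence
\begin{align}
    \Ebb_{\alv}\left|f^{(P)}(\alv) - \fsur_{\expansionorder}^{(P)}(\alv)\right| \le \sum_{i=\expansionorder+1}^{\nparams} \binom{\nparams}{i}\frac{r^{i}}{i+1} \le \frac{1}{\expansionorder+1}\sum_{i=\expansionorder+1}^{\nparams}\binom{\nparams}{i} r^{i}\,,
\end{align}
using $i+1 \ge \expansionorder+1$ for every surviving term to pull the factor out. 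The remaining sum is exactly the one controlled by Lemma~\ref{lem:bin-up}: since the hypothesis $r \le \expansionorder/\nparams$ ensures $r \le (\expansionorder+1)/\nparams$ (and a fortiori the lemma applies with threshold $\expansionorder$, or directly with $\expansionorder+1$), we get $\sum_{i=\expansionorder+1}^{\nparams}\binom{\nparams}{i}r^{i} \le (e\nparams r/\expansionorder)^{\expansionorder}$ — here I would either invoke the lemma with $k=\expansionorder$ after noting $\sum_{i \ge \expansionorder+1} \le \sum_{i \ge \expansionorder}$, or re-run the short computation in the lemma's proof. Combining yields the claimed bound $\frac{(e\nparams r/\expansionorder)^{\expansionorder}}{\expansionorder+1}$.

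The only subtlety — and the one place to be careful rather than the "hard part" — is bookkeeping the index of summation: the set $\mathcal{S}_\expansionorder^{\nparams}$ consists of paths with \emph{more than} $\expansionorder$ negative entries, i.e. $\#\sin(\omv) \ge \expansionorder+1$, which is precisely what produces the extra $\frac{1}{\expansionorder+1}$ factor relative to the worst-case bound (which only used $\#\sin(\omv) \ge \expansionorder$ for the exponent but not for a denominator). I should make sure the factor pulled out of $\frac{1}{i+1}$ is the tight one, and that Lemma~\ref{lem:bin-up} is applied with a threshold consistent with the stated hypothesis $r \le \expansionorder/\nparams$; using the weaker tail $\sum_{i=\expansionorder+1}^{\nparams} \le \sum_{i=\expansionorder}^{\nparams}$ before invoking the lemma sidesteps any off-by-one worry at the cost of nothing in the final constant. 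No genuinely hard step is expected; this is essentially the worst-case argument with one extra integration that gains the $1/(\expansionorder+1)$.
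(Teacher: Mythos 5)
Your proposal is correct and follows essentially the same route as the paper's own proof: triangle inequality with $|d_{\omv}|\leq 1$, the bound $|\Phi_{\omv}(\alv)|\leq|\alpha_1|^{\#\sin(\omv)}$, the maximally-splitting-tree count $\binom{\nparams}{i}$, the integral $\frac{1}{r}\int_0^r t^i\,dt = \frac{r^i}{i+1}$, pulling out $\frac{1}{\expansionorder+1}$, and finishing with Lemma~\ref{lem:bin-up}. Your index bookkeeping (starting the sum at $i=\expansionorder+1$ per the definition of $\mathcal{S}_\expansionorder^{\nparams}$ and then relaxing to $i\geq\expansionorder$ before invoking the lemma) is if anything slightly more careful than the paper's, which sums from $i=\expansionorder$ directly; both yield the identical final bound.
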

\begin{proof}
  Here we assume a quantum circuit consisting of Clifford and Pauli gates, where the Pauli gates are all parametrized by the identical angle $\alpha_1$. The average error is
\begin{align}
     \frac{1}{2r}\int_{-r}^r \left|f^{(P)}(\alv) - \fsur_{\expansionorder}^{(P)}(\alv)\right| d\alpha_1 &= \frac{1}{2r}\int_{-r}^r \left| \sum_{\omv\in \mathcal{S}_{\expansionorder}^{\nparams}} \Phi_{\omv}(\alv)d_{\omv} \right| d\alpha_1 \\
    &\leq \frac{1}{2r}\int_{-r}^r\sum_{\omv\in \mathcal{S}_{\expansionorder}^{\nparams}}  \left|\Phi_{\omv}(\alv)d_{\omv} \right| d\alpha_1\\
    &\leq \sum_{\omv\in \mathcal{S}_{\expansionorder}^{\nparams}} \frac{1}{2r}\int_{-r}^r |\alpha_1|^{\#\text{sin}(\omv)} d\alpha_1\\
    &= \sum_{\omv\in \mathcal{S}_{\expansionorder}^{\nparams}} \frac{1}{r}\int_{0}^r  \alpha_1^{\#\text{sin}(\omv)} d\alpha_1\,,
\end{align}
where we used the triangle inequality in the first inequality, and $|\sin(\alpha_1)| \leq |\alpha_1|, |\cos(\alpha_1)| \leq 1, d^2_{\omv}\leq 1$ for the second inequality with the same notation for the number of sins per path as in Eq.~\eqref{eq:num_sins}. In the last equality we use the fact that the integral is symmetric over the absolute value $\alpha_1$, to compute the integral over a positive value range.
Again assuming a maximally splitting tree behavior as the worst case yields
\begin{align}
    \sum_{\substack{\omv\in \{1, -1\}^{\nparams}\\ \#\sin(\omv) \geq \expansionorder}} \frac{1}{r}\int_{0}^r  \alpha_1^{\#\text{sin}(\omv)} d\alpha_1
    &\leq \sum_{i=\expansionorder}^\nparams \binom{\nparams}{i} \frac{1}{r}\int_{0}^r \alpha_1^i d\alpha_1\\
    &= \sum_{i=\expansionorder}^\nparams \binom{\nparams}{i} \frac{r^i}{i+1} \\
    &\leq \sum_{i=\expansionorder}^\nparams \binom{\nparams}{i} \frac{r^i}{\expansionorder+1} \;,
\end{align}
where in the end we used that $\frac{1}{i+1}\leq \frac{1}{\expansionorder+1}$ for $i\geq \expansionorder$. Finally, we can use Lemma~\ref{lem:bin-up} to get 
\begin{align}
    \frac{1}{2r}\int_{-r}^r \left|f^{(P)}(\alv) - \fsur_{\expansionorder}^{(P)}(\alv)\right| d\alpha_1 &\leq \frac{\left(\frac{e\nparams r}{\expansionorder}\right)^{\expansionorder}}{\expansionorder+1}\;,
\end{align}
which corresponds to the upper bound of the worst-case multiplied by a factor $\frac{1}{\expansionorder+1}$ (see \supt~\ref{thm:worst-case-error-single-Pauli}).
\end{proof}

Looking ahead to Appendix~\ref{apx:worst}, we see that the error scaling is the same for the worst case as in the average case with fully correlated angles. Whether this is a feature of our bounds or indicative of something more fundamental is left as an open question.

\subsection{Time complexity of the Pauli Propagation algorithm}\label{apx:runtim}
Recall that the circuit ansatz defined in Eq.~\eqref{eq:CliffordVQAcircuit} is of the form
\begin{align}
    U(\alv) = \prod_{i=1}^m C_i U_i(\alpha_i).
\end{align}
We start by upper bounding the time complexity of the Pauli Propagation algorithm as a function of the system size $n$, the number of Pauli rotations (and Clifford gates) $m$ and the truncation order $\expansionorder$.

\medskip

\noindent\underline{Upper bound on the number of paths}: In a quantum circuit with $\nparams$ Pauli rotations as in Eq.~\eqref{eq:CliffordVQAcircuit}, the total number of Pauli paths $n_{pth}^{(\nparams )}$ is upper bounded by $n_{pth}^{(\nparams)} \leq 2^\nparams$. This is achieved when the backpropagating Pauli operators split into two at every gate. As used throughout this appendix, the distribution of the number of sines in the $2^\nparams$ paths is given via
\begin{equation}
    n_{pth}^{(\nparams)} \leq \sum_{i=0}^\nparams \binom{\nparams}{i} = 2^\nparams  \,,
\end{equation}
where the index $i$ represents the number of sines. 

Given that we truncate paths with more than $\expansionorder$ sines, the number of paths $n_{pth}^{(\expansionorder)}$ that we need to compute is upper-bounded like
\begin{align}
    n_{pth}^{(\expansionorder)} \leq \sum_{i=0}^\expansionorder \binom{\nparams}{i} \leq \left( \frac{e \nparams}{\expansionorder}\right)^\expansionorder \, \label{eq:number-paths},
\end{align}
as stated in Eq.~\eqref{eq:cumbinom_ineq}. 

\medskip

\noindent\underline{Runtime $T$ for updating the Pauli operators:}
The runtime of the Pauli propagation algorithm is necessarily upper-bounded by a number proportional to the number of paths $n_{pth}^{(\expansionorder)}$. We call this proportionality factor $T$, and it includes the time complexity of updating a propagating Pauli operator when a gate is applied to it. More concretely, let $V$ be either a Clifford $C_i$ or a Pauli rotation $U_i(\alpha_i)$.
Given a Pauli operator $P\in \mathcal{P}_n$
we need to upper bound the time complexity for computing $V^\dag P V$.

Throughout the main text, we have assumed $T\in \OC(1)$ for simplicity, but  we here include the explicit dependence on $T$ for preciseness. Interestingly, the case of $T\in \OC(1)$ can be accurate under certain implementation and simulation assumptions. For example, $T\in \OC(1)$ is possible if $V$ acts on $\OC(1)$ qubits and the Pauli operator $P$ can be updated in-place on a subset of qubits. If in-place manipulation is not done, then the full operator needs to be copied which takes $\OC(n)$ time. In general, global Clifford gates can be decomposed into $\OC(n^2/\log(n))$ local Clifford gates~\cite{aaronson2004improved,gottesman1998heisenberg}, each taking $\OC(1)$ or $\OC(n)$ time (in-place or not). Furthermore, one could define $C_i$ as an arbitrary long concatenation of local Clifford gates without re-compilation, which can take an arbitrary large amount of time to execute. However, even global Pauli rotations can be applied in time $\OC(n)$. Thus, in summary, in the best case we have $T \in \OC(1)$ and in the worst $T \in \OC(n^3)$. Given that we assumed $m\in \OC(\in \text{poly}(n))$, the extra factor of $n^3$ can be absorbed by big $\mathcal{O}$ into our final time complexity. Hence, in the main text we take $T\in \OC(1)$.

\medskip

\noindent\underline{Runtime for a truncation order $\expansionorder$}:
The total number of gates applied to all the paths is
\begin{equation}\label{eq:runtime-single-Pauli_1}
     \nparams n_{pth}^{(\expansionorder)} \leq \nparams \left( \frac{e\nparams}{\expansionorder}\right)^\expansionorder.
\end{equation}
As we assumed that the time for computing the new Pauli terms is $T$, this leads to a total runtime $t$ of 
\begin{equation}\label{eq:runtime-single-Pauli}
     t \leq T\nparams  n_{pth}^{(\expansionorder)} \leq T\nparams \left( \frac{e\nparams}{\expansionorder}\right)^\expansionorder.
\end{equation}

\medskip
In the following, we provide an upper bound on the runtime required to achieve an error of $\epsilon$, considering both the cases of uncorrelated and worst-case parameters. Moreover, we will now consider more general observable $O=\sum_{P\in\PC_n}a_P P$ by combining previous analysis for single Paulis together with Lemma~\ref{lem:trans}.

\begin{theoremsup}[Time complexity of small-angle Pauli propagation (uncorrelated angles), Formal)]\label{thm:time-complexity-PP-classical-avg}
Let $O= \sum_{P\in\mathcal{P}_n} a_P P$ be an observable containing $N_{\mathrm{Paulis}}$ different Pauli terms and satisfying $\norm{\vec{a}}_1 = \OC(1)$. Furthermore, assume that given some $P$, we can efficiently estimate  $\Tr[\rho P]$. Let $T$ be the time necessary to apply a gate to a propagating Pauli operator.
Then small-angle Pauli propagation can be used to simulate a quantum circuit of the form of Eq.~\eqref{eq:CliffordVQAcircuit} over uniformly sampled zero-centered hypercube of width $2r$, i.e. $\uni(\vec{0}, r)$ with $r\in\OC(1/\sqrt{\nparams})$.
To ensure a mean square error \begin{equation}
   \Ebb_{\vec{\alpha}}\left[\left(f(\alv)-\fsur_{\expansionorder}(\alv) \right)^2\right]\leq \norm{\vec{a}}_1^2\left(\frac{e\nparams r^2}{3\expansionorder}\right)^{\expansionorder} =\epsilon^2\;, 
\end{equation} 
it suffices to run the algorithm for time
\begin{equation}
    t=N_{\mathrm{Paulis}} T \nparams \left(\frac{e\nparams}{\expansionorder}\right)^\expansionorder \in  \OC\left(N_{\mathrm{Paulis}} T \, {\nparams}^{\log(1/\epsilon^2)}\right). \, 
\end{equation}

\end{theoremsup}

\begin{proof}
To approximate the expectation value of the observable $O$, we will run the Pauli Propagation algorithm for all the Pauli operators in the Pauli expansion of $O$.
Let $\tilde{t}$ be the time required to estimate the expectation value of a single Pauli observable $P$ up to mean square error $\tilde{\epsilon}^2 = \epsilon^2/\norm{\vec{a}}_1^2$. 
Then, invoking Lemma~\ref{lem:trans}, we find that the time required to  to estimate the expectation value of $O$ up to mean square error $\epsilon^2$ is 
\begin{align}
    \tilde{t}\cdot N_{\mathrm{Paulis}}. 
\end{align}

    Given $r\in\OC(1/\sqrt{\nparams})$,  we introduce a constant $c\in\OC(1)$ such that $r=c/\sqrt{\nparams}$ and consider the distribution $\vec{\alpha}$ randomly sampled from $ \uni(\bold{0}, c/\sqrt{\nparams})$. Then, as long as the truncation order satisfies
\begin{align}
    \expansionorder \geq \frac{c^2}{3},
\end{align}
Corollary~\ref{cor:MSE-uniform-uncorrelated} implies that, for any given Pauli operator $P$, the Pauli Propagation algorithm returns an operator $\widetilde{P}_{\thv}$, which is a weighted sum of Pauli operators, satisfying:
\begin{align}
    \Ebb_{\vec{\alpha}\sim \uni(\vec{0}, r)} \abs{\Tr[ P U(\vec{\alpha}) \rho U^\dag(\vec{\alpha}) - \widetilde{P}_{\thv}\rho]}^2
      \leq \left(\frac{ec^2}{3\expansionorder}\right)^{\expansionorder}.
\end{align}
In order to achieve the desired error, we set
\begin{align}
    &\tilde{\epsilon}^2:= \left(\frac{ec^2}{3\expansionorder}\right)^{\expansionorder}
    \\&\implies \expansionorder \log \expansionorder \in \Theta(\log(1/\tilde{\epsilon}^2)).
\end{align}

As observed in Eq.~\eqref{eq:number-paths}, the number of paths found by the Pauli Propagation algorithm is at most
\begin{align}
    n_{pth}\leq \left(\frac{e\nparams}{\expansionorder}\right)^\expansionorder = \nparams^\expansionorder \cdot \tilde{\epsilon}^2\cdot\left(\frac{3} {c^2}\right)^\expansionorder \in  \OC(\nparams^\expansionorder \tilde{\epsilon}^2).
\end{align}
Therefore, the runtime for a single Pauli operator in Eq.~\eqref{eq:runtime-single-Pauli} is at most
\begin{align}
    \tilde{t}\leq T \nparams \left(\frac{e\nparams}{\expansionorder}\right)^\expansionorder\in\OC(T\nparams^{\expansionorder+1} \tilde{\epsilon}^2) = T\nparams^{\OC(\expansionorder)}.
\end{align}
We further observe that, for sufficiently large constant $\expansionorder$, 
\begin{align}
    \nparams^{\OC(\expansionorder)} = \nparams^{\frac{1}{\log(\expansionorder)}\OC\left(\expansionorder \log (\expansionorder )\right) } 
= \nparams^\frac{\OC\left(\log(\norm{\vec{a}}_1^2/\epsilon^2)\right)}{\log(\expansionorder)} = \OC\left(\nparams^{\log(1/\epsilon^2)}\right)\,,
\end{align}
where we recall that $\norm{\vec{a}}_1^2\in\OC(1)$. Thus, we conclude that the runtime of the Pauli Propagation algorithm for the observable $O$ can be upper bounded by
\begin{align}
    t \leq N_{\mathrm{Paulis}} \cdot  \tilde{t} \in \OC\left(N_{\mathrm{Paulis}} T \, {m}^{\log(1/\epsilon^2)}\right).
\end{align}
\end{proof}

\begin{theoremsup}[Time complexity of small-angle Pauli propagation (worst-case)]\label{thm:time-complexity-worst-PP-classical}
Let $O= \sum_{P\in\mathcal{P}_n} a_P P$ be an observable containing $N_{\mathrm{Paulis}}$ different Pauli terms and satisfying $\norm{\vec{a}}_1 = \OC(1)$. Furthermore, assume that given some $P$, we can efficiently estimate  $\Tr[\rho P]$. Let $T$ be the time necessary to apply a gate to a propagating Pauli operator
Then small-angle Pauli propagation can be used to both classically simulate \textit{and} surrogate a quantum circuit of the form of Eq.~\eqref{eq:CliffordVQAcircuit} over sampled zero-centered hypercube of width $2r$, i.e. $\vol(\vec{0}, r)$ with $r\in\OC(1/\nparams)$.
In particular, running the algorithm for time
\begin{equation}
       t=N_{\mathrm{Paulis}} T \nparams \left(\frac{e\nparams}{\expansionorder}\right)^\expansionorder \in  \OC\left(N_{\mathrm{Paulis}} \,T\, {\nparams}^{\log(1/\epsilon)}\right)\,  ,
\end{equation}
we can ensure that
\begin{align}
   \max_{\vec{\alpha} \in \vol(\vec{0}, r)} \left|f(\alv)-\fsur_{\expansionorder}(\alv)\right| \leq \norm{\vec{a}}_1\left(\frac{er\nparams}{\expansionorder}\right)^\expansionorder =\epsilon\,.
\end{align}
\end{theoremsup}

\begin{proof}
This proof is similar to that of \supt~\ref{thm:time-complexity-PP-classical-avg}, but here we are considering the worst-case error upper bound. Let us rewrite some part of previous proof for completeness. First, we fix $c\in\OC(1)$ such that $r=c/\nparams$. Let $\tilde{t}$ be the time required to estimate the expectation value of a single Pauli observable $P$ up to error $\tilde\epsilon = \epsilon/\norm{\vec{a}}_1$ for any points $\alv\in\vol(\vec{0}, c/\nparams)$. 
Then, invoking Lemma~\ref{lem:trans}, we find that the time required to estimate the expectation value of $O$ up to error $\epsilon$ is 
\begin{align}
    \tilde{t}\cdot N_{\mathrm{Paulis}}. 
\end{align}

 Then, as long as the truncation order satisfies
\begin{align}
    \expansionorder \geq c,
\end{align}
\supt~\ref{thm:worst-case-error-single-Pauli} implies that, for any given Pauli operator $P$, the Pauli Propagation algorithm returns an operator $\widetilde{P}_{\thv}$ satisfying:
\begin{align}
    \abs{\Tr[ P U(\vec{\alpha}) \rho U^\dag(\vec{\alpha}) - \widetilde{P}_{\thv}\rho]}
      \leq \left(\frac{ec}{\expansionorder}\right)^{\expansionorder}.
\end{align}
In order to achieve the desired error, we set
\begin{align}
    &\widetilde\epsilon:= \left(\frac{ec}{\expansionorder}\right)^{\expansionorder}
    \\&\implies \expansionorder \log \expansionorder \in \Theta(\log(1/\widetilde\epsilon)).
\end{align}

As observed in Eq.~\eqref{eq:number-paths}, the number of paths found by the Pauli Propagation algorithm is at most
\begin{align}
    n_{pth}\leq \left(\frac{e\nparams}{\expansionorder}\right)^\expansionorder = \nparams^\expansionorder \cdot \tilde\epsilon\cdot\left(\frac{1} {c}\right)^\expansionorder \in  \OC(\nparams^\expansionorder \widetilde\epsilon).
\end{align}

And therefore the runtime for a single Pauli operator in Eq.~\eqref{eq:runtime-single-Pauli} is at most
\begin{align}
    \tilde{t}\leq  T \nparams\left(\frac{e\nparams}{\expansionorder}\right)^\expansionorder\in\OC(T \nparams^{\expansionorder+1} \widetilde\epsilon) = T \nparams^{\OC(\expansionorder)}.
\end{align}
We further observe that, for sufficiently large constant $\expansionorder$, 
\begin{align}
    \nparams^{\OC(\expansionorder)} = \nparams^{\frac{1}{\log(\expansionorder)}\OC(\expansionorder \log (\expansionorder )) } 
= \nparams^\frac{\OC\left(\log(\norm{\vec{a}}_1/\epsilon)\right)}{\log(\expansionorder)} = \OC(\nparams^{\log(1/\epsilon)})\,.
\end{align}
Thus, we conclude that the runtime of the Pauli Propagation algorithm for the observable $O$ can be upper bounded by
\begin{align}
    t \leq N_{\mathrm{Paulis}} \cdot  \tilde{t} \in \OC\left(\, N_{\mathrm{Paulis}} \, T\, {m}^{\log(1/\epsilon)}\right).
\end{align}
\end{proof}

An interesting corollary of our theorem is that in the cases where the worst-case results hold, our simulation method can reduce Trotter error in quantum dynamics simulations with ``only'' a polynomial slow-down. 
To reduce Trotter error, one needs to reduce $dt$ and in turn inversely proportionally increases the number of layers $L$ to leave the total simulation time $dt L$ unchanged. 
That is, if we reduce $dt$ by a factor $1/c$ we need to increase the number of layers by a factor of $c$, which in turn increases the number of parameters by a factor $c$. The maximum $d t$ we can handle with small-angle Pauli Propagation with worst case error guarantees is determined by $r$. As the error for small-angle Pauli propagation scales in the worse case simulation as $m r$ (this is easiest to see from our formal statement in Theorem~\ref{thm:time-complexity-worst-PP-classical}) we see that reducing Trotter error by reducing $dt$ by a factor $1/c$ and increasing the total number of layers leaves the final simulation error unchanged. However, the total simulation time increases polynomially in $m$ and so we obtain a polynomial overhead in time complexity of small-angle Pauli Propagation resulting from the total number of parameters increasing by a factor of $c$.  More crudely speaking, if our algorithm can accurately simulate to time $t$ with large time step $dt$ in polynomial time, then it can simulate to time $t$ with small time step $dt$ in polynomial time.

\subsection{Sample complexity of the Pauli Propagation algorithm}
\label{app:pauli-surrogate-quantum-measurement}
In the previous section, we demonstrated that Pauli Propagation enables efficient simulation of quantum circuits composed of Clifford gates and Pauli rotations within a small angle range. To upper bound the computational complexity, we crucially assumed that the input state \(\rho\) is classically simulable, meaning that for any Pauli operator \(P\), we can efficiently estimate \(\Tr[P\rho]\).

In the most general case, $\rho$ may be an unknown state that we can access through quantum measurements. 
In this section, we combine the Pauli Propagation guarantee from Appendix~\ref{app:smallanglePP} with the randomized measurement protocol from Appendix~\ref{app:estimating-observables}, achieving a polynomial sample complexity for arbitrarily small constant error, i.e., $\epsilon\in\Theta(1)$. We further show that, if the final back-propagated Pauli operators have low Pauli weight, then the number of required Pauli measurements scales logarithmically with the system size.

\subsubsection{The truncated Heisenberg-evolved observable}
\label{sec:truncated-propagated-observable}
\noindent\textbf{Pauli observables.} Given a Pauli operator $Q\in \mathcal{P}_n$ and a circuit $U(\vec{\alpha})$ as input, the Pauli-path simulation algorithm returns the observable
\begin{align}
    \widetilde{Q}(\vec{\alpha}) = \sum_{\omv \not\in \mathcal{S}_\expansionorder^{\nparams}} \Phi_{\omv}(\vec{\alpha}) Q_{\omv} = \sum_{P\in\mathcal{P}_n} c^{(Q)}_P(\vec{\alpha}) P,
\end{align}
where we merged different Pauli paths by introducing the parametrized coefficients
\begin{align}
\label{eq:definition-coeff-c_P(Q)}
    c^{(Q)}_P(\vec{\alpha}) = \sum_{\substack{\omv \not\in \mathcal{S}_\expansionorder^{\nparams} \\ \text{ s. t. } Q_{\omv} = P}}   \Phi_{\omv}(\vec{\alpha}).
\end{align}
Recall that $Q_{\omv}$ in previous expressions is the Pauli obtained after back-propagating $Q$ along the path $\omv$.
For our sample complexity analysis, we need to upper bound the effective 1-norms $\norm{\vec{c}^{(Q)}}_{1, \mathrm{avg}}$ and $\norm{\vec{c}^{(Q)}}_{1, \mathrm{worst}}$ introduced in Definition~\ref{def:effective-1-norms}. We start by providing a loose, yet polynomial in the number of gates, upper bound. 
First, we notice that the average-case effective 1-norm is always upper bounded by the worst-case effective 1-norm. Let $\mathcal{D}$ be a distribution over the set $\mathcal{A}$. We have
\begin{align}
   \norm{\vec{c}^{(Q)}}_{1, \mathrm{avg}}= \sum_{P}\sqrt{\mathbb{E}_{\boldsymbol{\alpha} \sim \mathcal{D}} [c_P^{(Q)}(\alv)^2]} \leq 
   \sum_P \max_{\alv \in \mathcal{A}} \abs{c_P^{(Q)}(\alv)} = \norm{\vec{c}^{(Q)}}_{1, \mathrm{worst}}.
\end{align}

Furthermore, we can upper bound $\norm{\vec{c}^{(Q)}}_{1, \mathrm{worst}}$ as done in the proof of \supt~\ref{thm:worst-case-error-single-Pauli}:
\begin{align}
\label{eq:effective-1-norm-upperbound-single-Pauli}
    \norm{\vec{c}^{(Q)}}_{1, \mathrm{avg}} \leq \norm{\vec{c}^{(Q)}}_{1, \mathrm{worst}} \leq \left(\frac{e\nparams r}{\expansionorder}\right)^\expansionorder
    = \Theta\left((r\nparams)^\expansionorder\right).
\end{align}
Indeed, this is obtained by applying triangle inequality on $\abs{c_P^{(Q)}(\alv)}$ in Eq.~\eqref{eq:definition-coeff-c_P(Q)} and upper bounding each $|\Phi_{\omv}(\vec{\alpha})|$ by $r^{\#\sin(\omv)}$ as done in Eq.~\eqref{eq:upperbound-path-coefficient-sum}.
\medskip

\noindent\textbf{General observables.}
Let us consider a general observable expressed in the Pauli basis 
\begin{align}
    O = \sum_{P\in\mathcal{P}_n} a_P P,
\end{align}
with $\norm{\boldsymbol{a}}_1 \in \OC(1)$.
The final back-propagated observables equals
\begin{align}
       \widetilde{O}(\vec{\alpha}) =  \sum_{Q\in\{I,X,Y,Z\}^{\otimes n}}a_Q\widetilde{Q}(\vec{\alpha})
       = \sum_{Q,P\in\{I,X,Y,Z\}^{\otimes n}} a_Q c^{(Q)}_P(\vec{\alpha}) P
       := \sum_{P\in\{I,X,Y,Z\}^{\otimes n}}  c_P(\vec{\alpha}) P , \label{eq:general-back-obs}
\end{align}
where we introduced the parametrized coefficients
\begin{align}\label{eq:c_P-coeff-multiple-Pauli-definition}
    c_P(\vec{\alpha}) = \sum_{Q\in\{I,X,Y,Z\}^{\otimes n}} a_Q c^{(Q)}_P(\vec{\alpha}).
\end{align}

Then, we can upper-bound the effective norms as follows:

\begin{equation} \label{eq:upperbound-effective-1-norms-multi-Paulis}
  \norm{\boldsymbol{c}}_{1,\mathrm{avg}} \leq   \norm{\boldsymbol{c}}_{1,\mathrm{worst}} \leq \norm{\boldsymbol{a}}_1 \max_Q \norm{\boldsymbol{c}^{(Q)}}_{1,\mathrm{worst}}
    \in \Theta((r\nparams)^\expansionorder). 
\end{equation}
\begin{proof}
    From Eq.~\eqref{eq:c_P-coeff-multiple-Pauli-definition}, we can use triangle inequality to get
\begin{align}
     \max_{\alv\in\AC}\left|c_P(\alv)\right| &\leq \max_{\alv\in\AC} \sum_{Q\in\PC_n} |a_Q| |c_P^{(Q)}(\alv)| \\
     &\leq  \sum_{Q\in\PC_n} |a_Q| \max_{\alv\in\AC} |c_P^{(Q)}(\alv)|\;.
\end{align}
Now, we can use this result to upper bound $\norm{\boldsymbol{c}}_{1,\mathrm{worst}}$ as follows
\begin{align}
    \norm{\boldsymbol{c}}_{1,\mathrm{worst}} &\leq \sum_{P\in\PC_n}\sum_{Q\in\PC_n} |a_Q| \max_{\alv\in\AC} |c_P^{(Q)}(\alv)| \\
    &=\sum_{Q\in\PC_n} |a_Q| \norm{\boldsymbol{c}^{(Q)}}_{1,\mathrm{worst}} \\
    &\leq \norm{\vec{a}}_1 \max_Q\norm{\boldsymbol{c}^{(Q)}}_{1,\mathrm{worst}}\;,
\end{align}
where the last inequality is simply obtained by upper bounding each $\norm{\boldsymbol{c}^{(Q)}}_{1,\mathrm{worst}}$ in the sum by $\max_Q\norm{\boldsymbol{c}^{(Q)}}_{1,\mathrm{worst}}$.
The average-case effective 1-norm is always upper bounded by the worst-case one. Therefore, we recover Eq.~\eqref{eq:upperbound-effective-1-norms-multi-Paulis} using Eq.~\eqref{eq:effective-1-norm-upperbound-single-Pauli} and the assumption that $\norm{\vec{a}}_1\in\OC(1)$.
\end{proof}

\subsubsection{Uncorrelated parameters}
\label{sec:uncorrelated-parameters-PP-surrogate-guarantee-quantum}
Here, we assume that the each rotation parameter $\alpha_i$ is sampled independently from the uniform distribution over the interval $[-r,r]$. However, we remind the reader that this assumption can be relaxed, as the same results hold for any symmetric distribution satisfying the variance bound presented in Appendix~\ref{apx:average_uncorr}.

\begin{theoremsup}[Average-case Pauli surrogation guarantee]
\label{thm:avg-guarantee-PP-uncorr-quantum}

Consider an observable $O = \sum_{P\in \mathcal{P}_n} a_P P$, satisfying  $\norm{\vec{a}}_1 \in\mathcal{O}(1)$ and containing $N_{\mathrm{Paulis}}$ distinct Pauli terms. Then running a classical Pauli Propagation algorithm in time 
\begin{align}
     t\leq  N_{\rm Paulis}T\nparams  \, \left(\frac{e\nparams}{\expansionorder}\right)^{\expansionorder} \in \OC\left(N_{\rm Paulis} T \nparams^{\log(1/\epsilon^2)}\right)\;,
\end{align} 
and performing 
\begin{align}
     N_s = \left(\frac{3e\nparams}{\expansionorder}\right)^{\expansionorder} \in \OC\left( \nparams^{\log(1/\epsilon^2)}\right)\;
\end{align} 
randomized Pauli measurements, 
it is possible to estimate a function $\widehat{\fsur_{\expansionorder}}(\vec{\alpha})$ such that for all $r \in \OC(1/\sqrt{m})$, we have 
\begin{align}
    \Ebb_{\MC,\vec{\alpha}}\left[\left(\widehat{\fsur_{\expansionorder}}(\vec{\alpha}) - {f}(\vec{\alpha})\right)^2\right] \leq \epsilon^2,
\end{align}
where the expectation is both over the randomness of the parameters $\vec{\alpha}\sim  \uni(\bold{0}, r)$ and the internal randomness of the measurement protocol denoted by $\MC$.
\end{theoremsup}
\begin{proof}

First, let us bound the total mean square error by both contribution of the empirical error and the truncation error using that $(x+y)^2\leq 2(x^2+y^2)$ for any reals $x$ and $y$. This leads to
\begin{align}
\Ebb_{\vec{\alpha} ,\MC}\left[\left(\widehat{\fsur_{\expansionorder}}(\alv)-f(\alv)\right)^2\right]&\leq 2\Ebb_{\vec{\alpha}, \MC}\left[\left(\widehat{\fsur_{\expansionorder}}(\alv)-\fsur_{\expansionorder}(\alv)\right)^2\right] + 2\Ebb_{\vec{\alpha} }\left[\left(\fsur_\expansionorder(\alv)-f(\alv)\right)^2\right]\;.
\end{align}
Notice that the truncation error has no dependence on the measurement output. Therefore, the proof consists in two steps: first, we upper bound the runtime of Pauli Propagation for a given truncation error, and then we upper bound the number of required Pauli measurements for a given empirical error.

\noindent\textbf{Pauli propagation step.} 
\medskip

\supt~\ref{thm:time-complexity-PP-classical-avg} ensures that the averaged truncation error in the Pauli Propagation step is at most
\begin{equation}
\label{eq:MSE-upperbound-for-truncation-PP-proof}
  \Ebb_{\vec{\alpha} \sim  \uni(\bold{0}, r)}\left[\left(\fsur_\expansionorder(\alv)-f(\alv)\right)^2\right] \leq  \norm{\vec{a}}_1^2\left(\frac{e\nparams r^2}{3\expansionorder}\right)^{\expansionorder}
\end{equation}
with time 
\begin{equation}
    t\leq N_{\rm Paulis}T\nparams  \left(\frac{e\nparams}{\expansionorder}\right)^{\expansionorder}\;.
\end{equation}
\noindent\textbf{Surrogation step.} It remains to upper bound the number of Pauli measurements. The observable of interest can be written as
\begin{align}
    \widetilde{O}(\vec{\alpha}) = \sum_{P} a_P \widetilde{P}(\vec{\alpha}).
\end{align}
The average-case effective 1-norm of each observable $\widetilde{P}(\vec{\alpha})$ is upper bounded in Eq.~\eqref{eq:effective-1-norm-upperbound-single-Pauli}, and therefore from Eq.~\eqref{eq:upperbound-effective-1-norms-multi-Paulis} 
 average-case effective 1-norm of $\widetilde{O}(\vec{\alpha})$ is at most
 \begin{align}
     \norm{\boldsymbol{a}}_1 \cdot \left(\frac{e\nparams r}{\expansionorder} \right)^\expansionorder\,.
 \end{align}
 Invoking Lemma~\ref{lemma:multiple-observable-mean-squared-error}, we find that
 \begin{align}
 \label{eq:MSE-empirical-1-norm-Pauli-surrogate-proof}
     \Ebb_{\vec{\alpha} \sim  \uni(\bold{0}, r)}\left[\left(\widehat{\fsur_{\expansionorder}}(\alv)-\fsur_{\expansionorder}(\alv)\right)^2\right] \leq \frac{\norm{\boldsymbol{a}}_1^2}{N_s}  \left(\frac{e\nparams r}{\expansionorder} \right)^{2\expansionorder}\;,
 \end{align}
where $N_s$ is the number of Pauli measurements (i.e., the number of measurement shots). Now, let us impose that mean square error upper bounds from both Eq.~\eqref{eq:MSE-upperbound-for-truncation-PP-proof} and Eq.~\eqref{eq:MSE-empirical-1-norm-Pauli-surrogate-proof} are equal to $(\epsilon/2)^2$ (to ensure a total mean square error $\epsilon^2$)
\begin{equation}
    \left(\frac{\epsilon}{2}\right)^2=\frac{\norm{\boldsymbol{a}}_1^2}{N_s}\left(\frac{e\nparams r}{\expansionorder} \right)^{2\expansionorder}=\norm{\vec{a}}_1^2\left(\frac{e\nparams r^2}{3\expansionorder}\right)^{\expansionorder}\;.
\end{equation}
From this equation, we can solve for $N_s$, leading to 
\begin{equation}
    N_s=\left(\frac{3e\nparams}{\expansionorder}\right)^{\expansionorder}\;.
\end{equation}
Finally, we follow the same  analysis as in the proof of \supt~\ref{thm:time-complexity-PP-classical-avg} concerning the scaling of the quantity of interest. Therefore, for the time complexity, we have 
\begin{equation}
    t\leq N_{\rm Paulis}T\nparams  \left(\frac{e\nparams }{\expansionorder}\right)^{\expansionorder} \in \OC\left(N_{\rm Paulis} T\nparams^{\log(1/\epsilon^2)}\right)\;.
\end{equation}

For the sample complexity, we notice that $\OC(N_s)\sim \OC(t/(N_{\rm Paulis} T))$ as
\begin{equation}
    N_s = \left(\frac{3e\nparams}{\expansionorder}\right)^{\expansionorder} \in \OC\left( \nparams^{\log(1/\epsilon^2)}\right)\;.
\end{equation}
\end{proof}

Next, we consider the  special case of particular interest when the truncated, back-propagated observable contains only Pauli terms with low weight, i.e., Pauli terms acting non-trivially on few qubits. In this case, we can employ the machinery of classical shadow tomography~\cite{huang2020predicting, elben2022randomized} to obtain an alternative bound on the number of randomized Pauli measurements.

\begin{theoremsup}[Classical shadow tomography using randomized Pauli measurements (Theorem 1 and Lemma 3 in Ref.\ \cite{huang2020predicting})] \label{thm:classical-shadow}
    Given an unknown $n$-qubit state $\rho$ and a Pauli operator $P$ of weight $\abs{P}=k$, after $N_s$ randomized Pauli measurements on copies of $\rho$ satisfying
    \begin{equation}
        N_s \in \OC\left( \frac{ 3^k\log(1/\delta) }{\epsilon^2} \right),
    \end{equation}
    we can estimate $\Tr(P \rho)$ to $\epsilon$ error with probability $1-\delta$.
\end{theoremsup}

\begin{theoremsup}[Shadow Surrogation Guarantee, Formal]
\label{thm:shadow-surrogation-guarantee-formal}
Under the same hypotheses of \supt~\ref{thm:avg-guarantee-PP-uncorr-quantum}, let us further assume that the truncated backpropagated observable contains only Pauli terms with constant Pauli weight, i.e.
\begin{align}
        \widetilde{O}(\vec{\alpha}) =  \sum_{\substack{P\in \mathcal{P}_n\\\abs{P}\in\OC(1)}}  c_P(\vec{\alpha}) P.
\end{align}
Then the number of required Pauli measurements improves to
\begin{align}
    N_s \in  \min \bigg( \log(n), \log(m \,N_{\mathrm{Paulis}})\bigg) \cdot \OC\left(\frac{\log(1/\delta)}{\epsilon}\right)
\end{align}
with probability $1-\delta$ over the internal randomness of the measurement protocol.
\end{theoremsup}

\begin{proof}
In the first part of this proof, we restrict our attention to the case where the initial observable is a Pauli $O=Q$ with Pauli weight $\abs{Q} \in \OC(1)$.
The final, back-propagated observable is
\begin{align}
    \widetilde{Q}(\vec{\alpha}) = \sum_{\omv \not\in \mathcal{S}_\expansionorder^{\nparams}} \Phi_{\omv}(\vec{\alpha}) Q_{\omv} ,
\end{align}
where for  all $\omv \not\in \mathcal{S}_\expansionorder^{\nparams}$, the operator $Q_{\omv} $ satisfies $\abs{Q_{\omv} }\in \OC(1)$.
We notice that the observable $\widetilde{Q}(\vec{\alpha}) $ contains at most 
\begin{align}
    \min\bigg\{\poly(n), \poly(m) \bigg\}
\end{align}
different Paulis. This follows from the fact that it contains $\OC(\poly(m))$ different Pauli paths, and that, given a constant $k$, there exists only $\OC(n^k)$ Paulis of weight at most $k$.

Then, we can use a classical shadows protocol for predicting some estimate $\hat{o}_{\omv}$ such that
\begin{align}
    \forall \omv \not\in \mathcal{S}_\expansionorder^{\nparams} : \abs{\Tr[Q_{\omv} \rho] - \hat{o}_{\omv}}^2\leq \epsilon,
\end{align}
with probability at least $1-\delta$. By \supt~\ref{thm:classical-shadow}, this requires a number of Pauli measurement scaling as
\begin{align}
    N_s \in \OC\left(\log(\frac{\log(\min(m,n))}{\delta}) \frac{1}{\epsilon} \right)
\end{align}
which in turn leads to
\begin{align}
    \Ebb_{\boldsymbol{\alpha}}\left(\Tr[ \widetilde{Q}(\vec{\alpha})\rho] - \sum_{\omv\in \not\in \mathcal{S}_\expansionorder^{\nparams} }  \Ebb_{\boldsymbol{\alpha}}\Phi_{\omv}(\boldsymbol{\alpha}) \hat{o}_{\omv} \right)^2
    = &\sum_{\omv\in\not\in \mathcal{S}_\expansionorder^{\nparams} } \Ebb_{\boldsymbol{\alpha}} \Phi_{\omv}^2(\boldsymbol{\alpha}) \left(\Tr[Q_{\omv}\rho]- \hat{o}_{\omv}\right)^2
    \\\leq &\max_{\omv\in\not\in \mathcal{S}_\expansionorder^{\nparams} } \left(\Tr[Q_{\omv}\rho]- \hat{o}_{\omv}\right)^2  \Ebb_{\boldsymbol{\alpha}}\norm{\Phi(\boldsymbol{\alpha})}_2^2 \leq \epsilon.
\end{align}
In the first step above we used the fact that different Fourier coefficients are uncorrelated, i. e. $ \Ebb_{\boldsymbol{\alpha}} \Phi_{\omv}(\boldsymbol{\alpha})\Phi_{\omv'}(\boldsymbol{\alpha})$ for $\omv \neq \omv'$, and in the  last step we used the fact that 
    \begin{equation}
\sum_{\substack{\omv \\ \#\sin(\omv)>\expansionorder}}\Ebb_{\alv}\left[\Phi_{\omv}(\alv)^2\right] \leq \left(\frac{e\nparams r^2}{3\expansionorder}\right)^\expansionorder\leq 1\;,
    \end{equation}
    which is provided by \supt~\ref{thm:mse-avg-supp} or more precisely Corollary~\ref{cor:MSE-uniform-uncorrelated}.
\medskip

We can upper bound the error for a general observable composed of $N_{\mathrm{Paulis}}$ different Pauli terms invoking Lemma~\ref{lem:trans}.
In particular, we need to rescale the error $\epsilon$ according to the 1-norm $\norm{\vec{a}}_1$, which we assumed to be $\OC(1)$. Moreover, the number of different Pauli terms is at most
\begin{align}
    \min\bigg\{\poly(n), N_{\mathrm{Paulis}}\poly(m) \bigg\},
\end{align}
and therefore the required number of measurements is 
\begin{align}
    N_s \in \OC\left(\log(\frac{\min(m\, N_{\mathrm{Paulis}},n)}{\delta}) \frac{1}{\epsilon} \right).
\end{align}
\end{proof}
\subsubsection{Worst-case parameters}
\label{sec:worst-case-PP-surrogate-guarantee-quantum}
\begin{theoremsup}[Worst-case Pauli surrogation guarantee]
\label{thm:worst-case-guarantee-PP-quantum}

Consider an observable $O = \sum_{P\in \{I,X,Y,Z\}^{\otimes n}} a_P P$, 
satisfying  $\norm{\vec{a}}_1 \in\mathcal{O}(1)$ and $\norm{\vec{a}}_0 = N_{\mathrm{Paulis}}$, i.e. $O$ contains $N_{\mathrm{Paulis}}$ Pauli terms.

Then running a classical Pauli Propagation algorithm in time 
\begin{align}
     t\leq N_{\rm Paulis}T\nparams  \left(\frac{e\nparams}{\expansionorder}\right)^{\expansionorder} \in \OC\left(N_{\rm Paulis} T \nparams^{\log(1/\epsilon)}\right)\;,
\end{align} 
and performing  
\begin{align}
     N_s=2\log(2/\delta) \in\OC(\log(1/\delta))\;
\end{align} 
randomized Pauli measurements, 
it is possible to estimate a function $\widehat{\fsur_{\expansionorder}}(\vec{\alpha})$ such that for all $r \in \OC(1/\nparams)$, we have 
\begin{align}
    \abs{\widehat{\fsur_{\expansionorder}}(\vec{\alpha}) - {f}(\vec{\alpha})} \leq \epsilon,
\end{align}
with probability at least $1-\delta$ over both the randomness of $\vec{\alpha}\sim \uni(\bold{0}, r)$ and the internal randomness of the measurement protocol.
\end{theoremsup}
\begin{proof}
 The proof is analogous to the one of \supt~\ref{thm:avg-guarantee-PP-uncorr-quantum}. We start by split the total error as both truncation and empirical error contributions using triangle inequality, leading to
 \begin{equation}
     \abs{\widehat{\fsur_{\expansionorder}}(\vec{\alpha}) - {f}(\vec{\alpha})}\leq \abs{\widehat{\fsur_{\expansionorder}}(\vec{\alpha}) - {\fsur_\expansionorder}(\vec{\alpha})}+\abs{\fsur_{\expansionorder}(\vec{\alpha}) - {f}(\vec{\alpha})}\;.
 \end{equation}
Next, the proof consists of two steps. First, we upper bound the runtime of Pauli Propagation for a given truncation error, and second,  we upper bound the number of required Pauli measurements for a given empirical error.

\medskip

\noindent\textbf{Pauli propagation step.} \supt~\ref{thm:time-complexity-worst-PP-classical} ensures that the truncation error in the Pauli Propagation step is at most
\begin{equation}
  \abs{\fsur_{\expansionorder}(\vec{\alpha}) - {f}(\vec{\alpha})} \leq \norm{\vec{a}}_1\left(\frac{e\nparams r}{\expansionorder}\right)^\expansionorder
\end{equation}
with time 
\begin{equation}
    t\leq N_{\rm Paulis}T\nparams \left(\frac{e\nparams }{\expansionorder}\right)^{\expansionorder}\;.
\end{equation}
\noindent\textbf{Surrogation step.} It remains to upper bound the number of Pauli measurements. The observable of interest can be written as
\begin{align}
    \widetilde{O}(\vec{\alpha}) = \sum_{P} a_P \widetilde{P}(\vec{\alpha}).
\end{align}
The worst-case effective 1-norm related to each observable $\widetilde{P}(\vec{\alpha})$ is upper bounded in Eq.~\eqref{eq:effective-1-norm-upperbound-single-Pauli}, and therefore from Eq.~\eqref{eq:upperbound-effective-1-norms-multi-Paulis} 
 worst-case effective 1-norm related to $\widetilde{O}(\vec{\alpha})$ is at most
 \begin{align}
     \norm{\boldsymbol{a}}_1 \cdot \left(\frac{e\nparams r}{\expansionorder} \right)^\expansionorder\,.
 \end{align}
 Invoking Lemma~\ref{lemma:multiple-observable-max-error}, ensures
 \begin{align}
 \label{eq:WC-empirical-1-norm-Pauli-surrogate-proof}
     \left|\widehat{\fsur_{\expansionorder}}(\alv)-\fsur_{\expansionorder}(\alv)\right| &\leq \sqrt{\frac{2\log(2/\delta)}{N_s}}\norm{\boldsymbol{a}}_1  \left(\frac{e\nparams r}{\expansionorder} \right)^\expansionorder\;,
 \end{align}
 with probability at least $1-\delta$ (over both parameters and measurement output) where $N_s$ is the number of Pauli measurements (i.e. shots).

Now, let us assume that both errors are upper bounded by $\epsilon/2$ (to ensure a total error $\epsilon$), so that
\begin{equation}
    \frac{\epsilon}{2}=\sqrt{\frac{2\log(2/\delta)}{N_s}}\norm{\boldsymbol{a}}_1  \left(\frac{e\nparams r}{\expansionorder} \right)^\expansionorder=\norm{\vec{a}}_1\left(\frac{e\nparams r}{\expansionorder}\right)^\expansionorder\;.
\end{equation}
From the previous,  we can solve for $N_s$ to get
\begin{equation}
    N_s=2\log(2/\delta)\;.
\end{equation}
Finally, we follow the same  analysis as in the proof of \supt~\ref{thm:time-complexity-worst-PP-classical} concerning the scaling of the quantity of interest. Therefore, for the time complexity, we have 
\begin{equation}
    t\leq N_{\rm Paulis}T\nparams  \left(\frac{e\nparams}{\expansionorder}\right)^{\expansionorder} \in \OC\left(N_{\rm Paulis} T \nparams^{\log(1/\epsilon)}\right)\;.
\end{equation}

Here, the sample complexity only depends on the probability $\delta$. Indeed, both errors are similar given that $N_s\sim 2\log(2/\delta)$, and $r\in\OC(1/\nparams)$ is small enough to ensure an arbitrary small error.  
\end{proof}

\subsubsection{Even allocation of shots}
\label{sec:even-allocation-PP-surogate-guarantee-quantum}
Here, we provide a proof of the surrogate guarantee based on \supt~\ref{thm:avg-guarantee-PP-uncorr-quantum} using even allocation of shots between the Paulis that needs to be measured instead of the effective $1$-norm allocation strategy used in \supt~\ref{thm:avg-guarantee-PP-uncorr-quantum}. As opposed to the warm-up example presented in Appendix~\ref{sec:warm-up}, here we fix the number of shots $n_s$ for each Pauli that needs to be measured in order to estimate the surrogate of the expectation function. Therefore, the total number of shots is at most $N_s=N_{\rm Paulis} n_{pth}^{(\expansionorder)} n_s$  recalling that $N_{\rm Paulis}$ is the number of Paulis contained by $O$ and $n_{pth}^{(\expansionorder)}$ is the number of path as mentioned previously.

\begin{theoremsup}[Surrogation via Direct Pauli Measurements, Formal]
\label{thm:avg-guarantee-PP-uncorr-quantum-even-allocation}

Consider an observable $O = \sum_{P\in \{I,X,Y,Z\}^{\otimes n}} a_P P$, 
satisfying  $\norm{\vec{a}}_1 \in\mathcal{O}(1)$ and $\norm{\vec{a}}_0 = N_{\mathrm{Paulis}}$, i.e. $O$ contains $N_{\mathrm{Paulis}}$ Pauli terms.

Then running a classical Pauli Propagation algorithm in time 
\begin{align}
     t\leq N_{\rm Paulis}T\nparams  \left(\frac{e\nparams}{\expansionorder}\right)^{\expansionorder} \in \OC\left(N_{\rm Paulis} T\nparams^{\log(1/\epsilon^2)}\right)\;,
\end{align} 
and performing 
\begin{align}
     N_s = N_{\rm Paulis} \left(\frac{e\nparams}{\expansionorder}\right)^{\expansionorder} \in \OC\left(N_{\rm Paulis} \nparams^{\log(1/\epsilon^2)}\right)\;
\end{align} 
randomized Pauli measurements, 
it is possible to estimate a function $\widehat{\fsur_{\expansionorder}}(\vec{\alpha})$ such that for all $r \in \OC(1/\sqrt{m})$, we have 
\begin{align}
    \label{eq:MSE-even-allocation-thm-app}
\Ebb_{\vec{\alpha},\MC}\left[\left(\widehat{\fsur_{\expansionorder}}(\vec{\alpha}) - {f}(\vec{\alpha})\right)^2\right] \leq \epsilon^2,
\end{align}
where the expectation is both over the randomness of $\vec{\alpha}\sim  \uni(\bold{0},r)$ and the internal randomness of the measurement protocol denoted by $\MC$.
\end{theoremsup}
Notice that applying Markov's inequality together with Eq.~\eqref{eq:MSE-even-allocation-thm-app}, we can ensure 
\begin{equation}
\abs{\widehat{\fsur_{\expansionorder}}(\vec{\alpha}) - {f}(\vec{\alpha})} < \tilde{\epsilon}\;,
\end{equation}
with probability at least $1-\delta$ (for any $\delta\in\Theta(1)$) over both $\alv\sim\uni(\bold{0},r)$ and the measurement output $\MC$ where $\tilde{\epsilon}=\frac{\epsilon}{\sqrt{\delta}}\in\Theta(1)$ can be made arbitrarily small by increasing  $\expansionorder\in\OC(1)$. This leads to the formal version of Theorem~\ref{thm:paulisurrogate-average} stated in the main text.
\begin{proof}
The proof here is the same as in \supt~\ref{thm:avg-guarantee-PP-uncorr-quantum} except the surrogation step (i.e., empirical error estimation). In the following, we show that for uniform allocation of shots Eq.~\eqref{eq:MSE-empirical-1-norm-Pauli-surrogate-proof} can be replaced by
        \begin{equation}
        \Ebb_{\alv,\MC}\left[\left(\widehat{\fsur_{\expansionorder}}(\alv)-\fsur_{\expansionorder}(\alv)\right)^2\right] \leq \frac{N_{\rm Paulis} \norm{\vec{a}}_2^2}{N_s}\left(\frac{e\nparams r}{\sqrt{3}\expansionorder}\right)^{2\expansionorder}\;.
    \end{equation}

    The propagated observable can be written as
    \begin{equation}
        \widetilde{O}(\alv) = \sum_{\substack{P\in\PC_n \\
        a_P\neq 0}}\sum_{\substack{\omv \\ \#\sin(\omv)>\expansionorder}} a_P\Phi_{\omv}(\alv) P_{\omv}\;,
    \end{equation}
    where $P_{\omv}$ 
 is the Pauli obtained after back-propagating $P$ along the path $\omv$. A worst-case scenario where each $P_{\omv}$ are different Paulis is considered here (that is, we don't group coefficients that corresponds to the same Pauli). So, for each Pauli $P$ and each path ${\omv}$ considered (i.e. each $P_{\omv}$), we will use $n_s$ shots to estimate $\Tr[\rho P_{\omv}]$. Let $\hat{x}_{P,{\omv}}^{(i)}$ denotes the $i$-th measurement outcome obtained by measuring $\Tr[\rho P_{\omv}]$ with the quantum computer. Then, our empirical estimate of $\Tr[\rho P_{\omv}]$ for each $P_{\omv}$ is simply the average over the $n_s$ corresponding measurement outcomes
    \begin{equation}
        \frac{1}{n_s}\sum_{i=1}^{n_s}\hat{x}_{P,\omv}^{(i)}\;.
    \end{equation}
    Therefore, the empirical estimate for the surrogate of the expectation function is given by
    \begin{equation}
        \widehat{\fsur_{\expansionorder}}(\alv)=\sum_{\substack{P\in\PC_n \\
        a_P\neq 0}}\sum_{\substack{\omv \\ \#\sin(\omv)>\expansionorder}}\sum_{i=1}^{n_s} \frac{a_P\Phi_{\omv}(\alv)}{n_s} \hat{x}_{P,\omv}^{(i)}\;,
    \end{equation}
    which is an unbiased estimator for $\fsur_{\expansionorder}(\alv)$. Moreover, notice that the $\hat{x}_{P,\omv}^{(i)}$ are uncorrelated (independent measurement). From the previous,  the mean square error (over both the parameters $\alv$ and the measurement process denoted by $\MC$) is given by
    \begin{align}
        \Ebb_{\MC,\alv}\left[\left(\widehat{\fsur_{\expansionorder}}(\alv)-\fsur_{\expansionorder}(\alv)\right)^2\right] &= \Ebb_{\alv}\left[\Var_{\MC}\left[\widehat{\fsur_{\expansionorder}}(\alv)\right]\right]\\
        &=\Ebb_{\alv}\left[\sum_{\substack{P\in\PC_n \\
        a_P\neq 0}}\sum_{\substack{\omv \\ \#\sin(\omv)>\expansionorder}}\sum_{i=1}^{n_s} \frac{a_P^2\Phi_{\omv}(\alv)^2}{n_s^2} \Var_{\MC}\left[\hat{x}_{P,\omv}^{(i)}\right] \right] \\
        &\leq \Ebb_{\alv}\left[\sum_{\substack{P\in\PC_n \\
        a_P\neq 0}}\sum_{\substack{\omv \\ \#\sin(\omv)>\expansionorder}}\sum_{i=1}^{n_s} \frac{a_P^2\Phi_{\omv}(\alv)^2}{n_s^2}\right] \\
        &=\frac{1}{n_s}\sum_{P\in\PC_n} a_P^2 \sum_{\substack{\omv \\ \#\sin(\omv)>\expansionorder}}\Ebb_{\alv}[\Phi_{\omv}(\alv)^2]\;,
    \end{align}
    where the inequality is obtained by noticing that $\Var_{\MC}\left[\hat{x}_{P,\omv}^{(i)}\right]=1-\Tr[\rho P_{\omv}]^2\leq 1$. Moreover, from \supt~\ref{thm:mse-avg-supp} or more precisely from Corollary~\ref{cor:MSE-uniform-uncorrelated}, we have
    \begin{equation}
\sum_{\substack{\omv \\ \#\sin(\omv)>\expansionorder}}\Ebb_{\alv}\left[\Phi_{\omv}(\alv)^2\right] \leq \left(\frac{e\nparams r^2}{3\expansionorder}\right)^\expansionorder\;.
    \end{equation}

    The total number of shots is given by $N_s=n_sN_{\rm Paulis}\; n_{pth}^{(\expansionorder)}$ where we recall that $N_{\rm Paulis}$ is the number of initial Paulis (i.e., the number of Pauli $P$ such that $a_P\neq 0$) and $n_{pth}^{(\expansionorder)}$ is the number of Pauli path in the surrogate (the number of path $\omv$ such that $\#\sin(\omv)>\expansionorder$). From Eq.~\eqref{eq:number-paths}, we have
    \begin{equation}
        n_{pth}^{(\expansionorder)}\leq \left(\frac{e\nparams}{\expansionorder}\right)^\expansionorder\;.
    \end{equation}
    Therefore, combining everything together leads to the following upper bound for the mean square empirical error
    \begin{equation}
        \Ebb\left[\left(\widehat{\fsur_{\expansionorder}}(\alv)-\fsur_{\expansionorder}(\alv)\right)^2\right] \leq \frac{N_{\rm Paulis}\; \norm{\vec{a}}_2^2}{N_s}\left(\frac{e\nparams r}{\sqrt{3}\expansionorder}\right)^{2\expansionorder}\;.
    \end{equation}

    Finally, substituting this in the proof of \supt~\ref{thm:avg-guarantee-PP-uncorr-quantum}, we get that the total number of shots required for even allocation is given by
    \begin{equation}
        N_s = \frac{\norm{\vec{a}}_2^2N_{\rm Paulis}}{\norm{\vec{a}}_1^2} \left(\frac{e\nparams}{\expansionorder}\right)^{\expansionorder}\;.
    \end{equation}
    Therefore, by taking at least 
    \begin{equation}
        N_s = N_{\rm Paulis} \left(\frac{e\nparams}{\expansionorder}\right)^{\expansionorder} \;,
    \end{equation}
    is sufficient to get the same error guarantee as in \supt~\ref{thm:avg-guarantee-PP-uncorr-quantum}.
\end{proof}

Notice here that with even allocation of shots we get a 2-norm instead of a 1-norm of the coefficients of the initial observable. However, we also get the additional factor $N_{\rm Paulis}$. Let us consider the warm-up example presented in Appendix~\ref{sec:warm-up} where the initial observable is given by
\begin{equation}
    O=Z^{\otimes n}+\frac{\tau}{4^n-1}\sum_{P\in \PC_n\backslash  \{Z^{\otimes n}\}}P\;,
\end{equation}
where $\tau\in\OC(1)$.
Therefore, we have $\norm{\vec{a}}_1^2=(1+\tau)^2\in\OC(1)$ and $\norm{\vec{a}}_1^2=1+\frac{\tau^2}{4^n-1}\in\OC(1)$, but $N_{\rm Paulis}=4^n$. Therefore, the number of shots required with even allocation (provided by the bound) is exponentially larger than the effective 1-norm allocation. In fact, this will highly depends on the number of Paulis in the initial observable. As we are mainly considering $\OC(\poly(n))$ Pauli in the observable, both bounds remains polynomial with $n$ and $\nparams$.

\subsection{Additional Numerical Results}\label{apx:heisenberg_PP}

\begin{figure}
    \centering
    \includegraphics[width=0.99\linewidth]{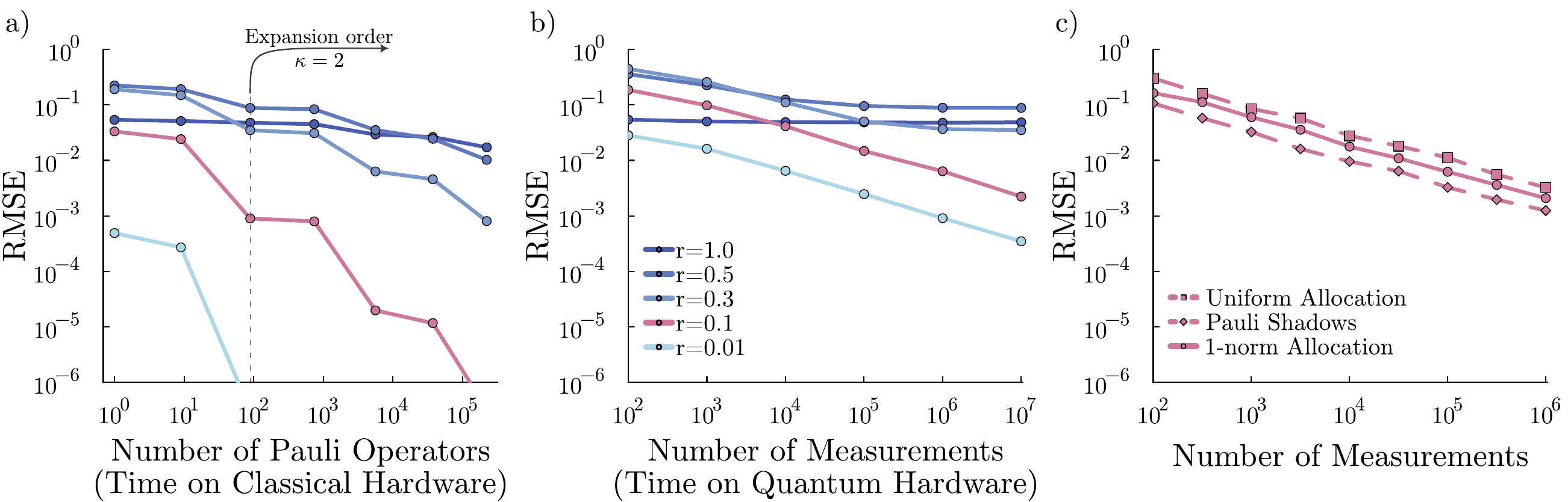}
    \caption{ \textbf{Landscape and surrogation RMSE as a function of classical and quantum resources with the Heisenberg Hamiltonian.} We adapt the 16-qubit setup of Fig.~\ref{fig:error-bounds} to quantum circuit representing a Heisenberg model HVA, and the results in panels b and c are for order $\expansionorder=2$. Other details remain the same. Overall, the Heisenberg HVA generates more Pauli strings than the TFI, which puts more strain on both the classical and quantum components. Furthermore, the Pauli shadow estimation strategy now outperforms the 1-norm allocation.}
    \label{fig:heisenberg_PP}
\end{figure}
To better elucidate the system-dependence of the PP surrogation in Fig.~\ref{fig:heisenberg_PP}, we here showcase the same numerical experiment with a Heisenberg model Hamiltonian variational ansatz (HVA) instead of a transverse-field Ising (TFI) HVA. This ansatz not only contains more Pauli rotations, but also more and larger groups of commuting operators. On the one hand, this leads to more Pauli strings being created during the classical simulation, in turn putting more strain on both the classical and quantum components. Generally speaking, the errors at the same resources are larger than in Fig.~\ref{fig:error-bounds}. However, we also see the Pauli shadow estimation strategy becoming superior to the 1-norm strategy, likely due to the increased structure and number of commuting terms. A notable difference between shadows and our 1-norm allocation is that the latter does not group commuting Pauli strings together for simultaneous estimation, which the shadow strategy implicitly does. This structure can be leveraged in sophisticated estimation strategies to design the best quantum-enhanced classical surrogate computation.

\end{document}